\documentclass[11pt]{article}
\usepackage{fullpage}
\usepackage{amsmath,amsfonts,amsthm,amssymb}

\usepackage{url}
\usepackage{color}
\usepackage[usenames,dvipsnames,svgnames,table]{xcolor}
\usepackage[colorlinks=true, linkcolor=red, urlcolor=blue, citecolor=gray]{hyperref}
\usepackage{epsfig}
\usepackage{algorithm}
\usepackage{algorithmic}

\usepackage{tikz}
\usetikzlibrary{arrows}

\newcommand{\plog}{\mathop\mathrm{polylog}}

\DeclareMathOperator{\tr}{tr}

\DeclareMathOperator{\fl}{fl}

\newcommand{\R}{\mathbb{R}}

\DeclareMathOperator*{\barA}{\bv{\bar A}}
\DeclareMathOperator*{\barT}{\bv{\bar T}}
\DeclareMathOperator*{\alga}{\mathcal{A}}
\DeclareMathOperator*{\argmin}{arg\,min}

\DeclareMathOperator*{\mach}{\epsilon_{mach}}
\DeclareMathOperator*{\gmach}{\gamma_{mach}}
\DeclareMathOperator*{\lmin}{\lambda_{min}}
\DeclareMathOperator*{\lmax}{\lambda_{max}}

\DeclareMathOperator*{\smax}{\sigma_{max}}
\DeclareMathOperator*{\rmin}{r_{min}}
\DeclareMathOperator*{\rmax}{r_{max}}
\newcommand{\poly}{\mathop\mathrm{poly}}
\newcommand{\diag}{\mathop\mathrm{diag}}
\newcommand{\wh}{\widehat}

\newcommand{\ol}{\overline}
\DeclareMathOperator{\nnz}{nnz}
\DeclareMathOperator{\mv}{mv}

\newcommand{\eqdef}{\mathbin{\stackrel{\rm def}{=}}}
\newcommand{\norm}[1]{\|#1\|}

\newcommand{\bs}[1]{\boldsymbol{#1}}
\newcommand{\bv}[1]{\mathbf{#1}}
\usepackage{etoolbox}

\newcommand\blfootnote[1]{%
	\begingroup
	\renewcommand\thefootnote{}\footnote{#1}%
	\addtocounter{footnote}{-1}%
	\endgroup
}

\makeatletter
\def\blfootnote{\gdef\@thefnmark{}\@footnotetext}
\makeatother

\newtheorem{theorem}{Theorem}
\newtheorem{lemma}[theorem]{Lemma}
\newtheorem{corollary}[theorem]{Corollary}

\newtheorem{claim}[theorem]{Claim}
\newtheorem{definition}{Definition}
\newtheorem{requirement}{Requirement}

\newtheorem*{rep@theorem}{\rep@title}
\newcommand{\newreptheorem}[2]{%
\newenvironment{rep#1}[1]{%
 \def\rep@title{#2 \ref{##1}}%
 \begin{rep@theorem}}%
 {\end{rep@theorem}}}
\makeatother
\newreptheorem{theorem}{Theorem}
\newreptheorem{claim}{Claim}


\setcounter{secnumdepth}{2}

  \usepackage{nth}
  \usepackage{intcalc}

\title{Stability of the Lanczos Method \\ for Matrix Function Approximation}

\author{
Cameron Musco\\MIT\\ \texttt{cnmusco@mit.edu}
\and
Christopher Musco\\MIT\\ \texttt{cpmusco@mit.edu}
\and
Aaron Sidford\\Stanford\\ \texttt{sidford@stanford.edu}
}

\date{August 28, 2017}

\begin{document}
\maketitle

\begin{abstract}
Theoretically elegant and ubiquitous in practice, the Lanczos method can approximate $f(\bv{A})\bv{x}$ for any symmetric matrix $\bv{A} \in \R^{n\times n}$, vector $\bv{x}\in \R^n$, and function $f$. In \emph{exact arithmetic}, the method's error after $k$ iterations is bounded by the error of the best degree-$k$ polynomial uniformly approximating the scalar function $f(x)$ on the range $[\lmin(\bv{A}), \lmax(\bv{A})]$. However, despite decades of work, it has been unclear if this powerful guarantee holds in finite precision.

 We resolve this problem, proving that when $\max_{x \in [\lmin, \lmax]}|f(x)| \le C$, Lanczos essentially matches the exact arithmetic guarantee if computations use roughly $\log(nC\norm{\bv{A}})$ bits of precision. Our proof extends work of Druskin and Knizhnerman \cite{druskinRussian}, leveraging the stability of the classic Chebyshev recurrence to bound the stability of any polynomial approximating $f(x)$.

We also study the special case of $f(\bv{A}) = \bv{A}^{-1}$ for positive definite $\bv{A}$, where stronger guarantees hold for Lanczos. In exact arithmetic the algorithm performs as well as the best polynomial approximating $1/x$ \emph{at each of $\bv{A}$'s eigenvalues}, rather than on the full range $[\lmin(\bv{A}), \lmax(\bv{A})]$. In seminal work, Greenbaum gives a natural approach to extending this bound to finite precision: she proves that finite precision Lanczos and the related conjugate gradient method match any polynomial approximating $1/x$ \emph{in a tiny range around each eigenvalue} \cite{greenbaum1989behavior}.

For $\bv{A}^{-1}$, Greenbaum's bound appears stronger than our result. However, we exhibit matrices with condition number $\kappa$ where exact arithmetic Lanczos converges in $\plog(\kappa)$ iterations, but Greenbaum's bound predicts at best $\Omega(\kappa^{1/5})$ iterations in finite precision. It thus cannot offer more than a polynomial improvement over the $O(\kappa^{1/2})$ bound achievable via our result for general $f(\bv{A})$. Our analysis bounds the power of \emph{stable approximating polynomials}
and raises the question of if they fully characterize the behavior of finite precision Lanczos in solving linear systems. If they do, convergence in less than $\poly(\kappa)$ iterations cannot be expected, even for matrices with clustered, skewed, or otherwise favorable eigenvalue distributions.
\end{abstract}

\blfootnote{This document was edited on November 18th, 2024 to restate Lemmas 9,10, and 11 in a slightly tighter way.
 }

\thispagestyle{empty}
\clearpage
\setcounter{page}{1}

\section{Introduction}
The Lanczos method for iteratively tridiagonalizing a Hermitian matrix is one of the most important algorithms in numerical computation. Introduced for computing eigenvectors and eigenvalues \cite{lanczos1950iteration}, it remains the standard algorithm for doing so over half a century later \cite{saad2011numerical}. It also underlies state-of-the-art iterative solvers for linear systems
\cite{hestenes1952methods,saad2003iterative}. 

More generally, the Lanczos method can be used to iteratively approximate any function of a matrix's eigenvalues. Specifically, given $f:\R \mapsto \R$, symmetric $\bv{A} \in \mathbb{R}^{n \times n}$ with  eigendecomposition $\bv{V} \bs{\Lambda} \bv{V}^T$, and vector $\bv{x} \in \R^n$, it approximates $f(\bv{A})\bv{x}$, where: 
\begin{align*}
f(\bv{A}) \eqdef \bv{V} f(\bs{\Lambda}) \bv{V}^T.
\end{align*}
$f(\bs{\Lambda})$ is the result of applying $f$ to each diagonal entry of $\bs{\Lambda}$, i.e., to the eigenvalues of $\bv{A}$.
In the special case of linear systems, $f(x) = 1/x$ and $f(\bv{A}) = \bv{A}^{-1}$.
Other important matrix functions include the matrix log, the matrix exponential, the matrix sign function, and the matrix square root \cite{highamBook}. These functions are broadly applicable in scientific computing, and are increasingly used in theoretical computer science \cite{arora2007combinatorial,matrixExp,sachdeva2014faster} and machine learning \cite{han2015large,frostig2016principal,ubaru2016fast,allen2016faster,tremblay2016compressive}. In theses areas, there is interest in obtaining worst-case, end-to-end runtime bounds for approximating $f(\bv{A})\bv{x}$ up to a given precision.

The main idea behind the Lanczos method is to iteratively compute an orthonormal basis $\bv{Q}$ for the rank-$k$ \emph{Krylov subspace} $\mathcal{K}_k = [\bv{x}, \bv{A}\bv{x}, \bv{A}^2\bv{x}, \ldots, \bv{A}^{k-1}\bv{x}]$. The method then approximates $f(\bv{A})\bv{x}$ with a vector in $\mathcal{K}_k$ -- i.e. with $p(\bv{A})\bv{x}$ for some polynomial $p$ with degree $< k$.

Specifically, along with $\bv{Q}$, the algorithm computes $\bv{T} = \bv{Q}^T \bv{A} \bv{Q}$ and approximates $f(\bv{A}) \bv{x}$ with $\bv{y} = \norm{\bv{x}} \cdot \bv{Q} f(\bv{T}) \bv{e}_1$.\footnote{Here $\bv{e}_1$ is the first standard basis vector. There are a number of variations on the Lanczos method, especially for the case of solving linear systems, however we consider just this simple, general version. 
} Importantly, $\bv{y}$ can be computed efficiently: iteratively constructing $\bv{Q}$ and $\bv{T}$ requires just $k-1$ matrix-vector multiplications with $\bv{A}$. Furthermore, due to a special iterative construction, $\bv{T}$ is tridiagonal. It is thus possible to accurately compute its eigendecomposition, and hence apply arbitrary functions $f(\bv{T})$, including $\bv{T}^{-1}$, in $\tilde O(k^2)$ time. 

Note that $\bv{y} \in \mathcal{K}_k$ and so can be written as $p(\bv{A})\bv{x}$ for some polynomial $p$. While this is not necessarily the polynomial minimizing $\norm{p(\bv{A})\bv{x}- f(\bv{A})\bv{x}}$, for the Euclidean norm $\|\cdot \|$, $\bv{y}$ satisfies:

\begin{align}\label{introBound}
\norm{f(\bv{A})\bv{x} - \bv{y}} \le 2\norm{\bv{x}} \cdot
\min_{\substack{\text{polynomial $p$}\\ \text{with degree $<k$} }} \left(\max_{x \in [\lmin(\bv{A}),\lmax(\bv{A})]} \left | f(x) - p(x) \right| \right).
\end{align}
where $\lmax(\bv{A})$ and $\lmin(\bv{A})$ are the largest and smallest eigenvalues of $\bv{A}$ respectively. That is, up to a factor of 2, the error of Lanczos in approximating $f(\bv{A})\bv{x}$ is bounded by the uniform error of the best polynomial approximation to $f$ with degree $< k$.
Thus, to bound the performance of Lanczos after $k$ iterations, it suffices to prove the existence of any degree-$k$ polynomial approximating the scalar function $f$, even if the explicit polynomial is not known

\section{Our contributions}

Unfortunately, as has been understood since its introduction, the performance of the Lanczos algorithm in exact arithmetic does not  predict its behavior when implemented in finite precision. Specifically, it is well known that the basis $\bv{Q}$ loses orthogonality. This leads to slower convergence when computing eigenvectors and values, and a wide range of reorthogonalization techniques have been developed to remedy the issue (see e.g. \cite{parlett1979lanczos,simon1984lanczos} or \cite{parlett1998symmetric,meurant2006lanczos} for surveys).

However, in the case of matrix function approximation, these remedies appear unnecessary. Vanilla Lanczos continues to perform well in practice, despite loss of orthogonality. In fact, it even converges when $\bv{Q}$ has numerical rank $\ll k$ and thus cannot span $\mathcal{K}_k$. Understanding when and why the Lanczos algorithm runs efficiently in the face of numerical breakdown has been the subject of intensive research for decades -- we refer the reader to \cite{meurant2006lanczos} for a survey. Nevertheless, despite experimental and theoretical evidence, no iteration bounds comparable to the exact arithmetic guarantees were known for general matrix function approximation in finite precision.


\subsection{General function approximation in finite precision}
 Our main positive result closes this gap for general functions by showing that a bound nearly matching \eqref{introBound} holds even when Lanczos is implemented in finite precision.
In Section \ref{lanczos_in_finite_precision} we show:
\begin{theorem}[Function Approximation via Lanczos in Finite Arithmetic]\label{mainLemmaFullRuntime}
Given real symmetric $\bv{A}\in \R^{n \times n}$, $\bv{x} \in \R^{n}$, $\eta \le \norm{\bv{A}}$, $\epsilon \le 1$, and any function $f$ with $|f(x)| <C$ for $x \in [\lmin(\bv{A})-\eta,\lmax(\bv{A}) + \eta]$, let $B = \log \left (\frac{ nk \norm{\bv{A}}}{\epsilon\eta} \right )$. The Lanczos algorithm run on a floating point computer with $\Omega (B)$ bits of precision for $k$ iterations returns $\bv{y}$ satisfying:
\begin{align}\label{eq:finitePrecisionGuarantee1}
\norm{f(\bv{A}) \bv{x} - \bv{y}} \le (7k \cdot \delta_{k} + \epsilon C) \norm{\bv{x}}
\end{align}
where
\begin{align*}
\delta_{k} \eqdef \min_{\substack{\textnormal{polynomial $p$} \\ \textnormal{ with degree $<k$} }} \left ( \max_{x \in [\lmin(\bv{A})-\eta,\lmax(\bv{A})+\eta]} |p(x)-f(x)| \right ).
\end{align*}
If basic arithmetic operations on floating point numbers with $\Omega (B)$ bits of precision have runtime cost $O(1)$, the algorithm's runtime is $O(\mv(\bv{A})k + k^2B + kB^2)$, where $\mv(\bv{A})$ is the time required to multiply the matrix $\bv{A}$ with a vector.
\end{theorem}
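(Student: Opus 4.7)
The plan is to follow the Druskin--Knizhnerman framework, reducing the analysis of finite precision Lanczos applied to $f$ to the stability of computing $p(\bv{A})\bv{x}$ for an arbitrary polynomial $p$ of degree less than $k$, and then specializing to the best uniform approximating polynomial $p^*$ of $f$ that achieves the error $\delta_k$. Writing the algorithm's output as $\bv{y} = \norm{\bv{x}} \bv{Q} f(\bv{T}) \bv{e}_1$, I would split the error by the triangle inequality
$$\norm{f(\bv{A})\bv{x} - \bv{y}} \le \norm{f(\bv{A})\bv{x} - p^*(\bv{A})\bv{x}} + \norm{p^*(\bv{A})\bv{x} - \norm{\bv{x}}\bv{Q} p^*(\bv{T}) \bv{e}_1} + \norm{\bv{x}}\cdot\norm{\bv{Q}(f(\bv{T}) - p^*(\bv{T}))\bv{e}_1}.$$
The first summand is at most $\delta_k\norm{\bv{x}}$ by definition of $\delta_k$. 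For the third, the plan is to show that, provided $B$ is sufficiently large, the eigenvalues of the computed $\bv{T}$ lie inside the extended interval $[\lmin(\bv{A})-\eta,\lmax(\bv{A})+\eta]$ (this is precisely the role of the $\eta$-widening), so that $|f(\mu) - p^*(\mu)| \le \delta_k$ at every eigenvalue $\mu$ of $\bv{T}$; combined with a mild bound on $\norm{\bv{Q}\bv{e}_1}$ (which grows only by a small constant factor in finite precision), this contributes $O(k\delta_k\norm{\bv{x}})$.

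The heart of the argument is bounding the middle term. I would expand $p^*$ in the Chebyshev basis rescaled to $[\lmin(\bv{A}) - \eta, \lmax(\bv{A}) + \eta]$, giving $p^*(x) = \sum_{j=0}^{k-1} c_j T_j(x)$. By linearity it suffices to bound each per-term error $\norm{T_j(\bv{A})\bv{x} - \norm{\bv{x}}\bv{Q} T_j(\bv{T}) \bv{e}_1}$ and then combine with $\sum_j |c_j|$. The key structural observation is that both expressions can be produced by the \emph{same} three-term Chebyshev recurrence $T_{j+1}(z) = 2z T_j(z) - T_{j-1}(z)$, applied on one side with $z \mapsto \bv{A}$ acting on $\bv{x}$ and on the other with $z\mapsto \bv{T}$ acting on $\bv{e}_1$ and then lifted through $\bv{Q}$. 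The discrepancy between the two propagations is driven entirely by the finite precision Lanczos residual $\bv{A}\bv{Q} - \bv{Q}\bv{T}$, which by a Paige-style floating point analysis is bounded by $O(2^{-B}\cdot k \cdot \norm{\bv{A}})$ in operator norm---crucially, this bound holds even when $\bv{Q}$ has badly lost orthogonality.

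The main technical obstacle is making this Chebyshev induction quantitatively tight. Because the rescaled Chebyshev recurrence multiplies by an operator of spectral norm $O(1)$, accumulated error grows only polynomially in $k$, yielding a per-term error of $O(\mathrm{poly}(k)\cdot 2^{-B} \cdot \norm{\bv{A}} \cdot \norm{\bv{x}})$ that the stated choice $B = \Omega(\log(nk\norm{\bv{A}}/(\epsilon\eta)))$ drives below $\epsilon\norm{\bv{x}}/k$. Since the rescaled Chebyshev coefficients obey $|c_j| \le 2\norm{p^*}_\infty \le 2(C+\delta_k)$, summing over $j$ bounds the middle term by $O(\epsilon C\norm{\bv{x}})$. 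Combining all three terms produces the claimed $(7k\delta_k + \epsilon C)\norm{\bv{x}}$ guarantee, with the explicit constants absorbing the Chebyshev coefficient sum, the mild growth of $\norm{\bv{Q}\bv{e}_1}$, and the spectral-perturbation slack used to push eigenvalues of $\bv{T}$ into the $\eta$-widened interval.

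For the runtime, each of the $k$ Lanczos iterations performs one matrix-vector product with $\bv{A}$ (cost $\mv(\bv{A})$) together with $O(n)$ additional floating-point operations at precision $B$, dominated by the $\mv(\bv{A})k$ term. After iterating, the $k\times k$ tridiagonal $\bv{T}$ is eigendecomposed to the required accuracy in $\tilde O(k^2)$ arithmetic operations at precision $B$, contributing the $k^2 B$ term, while each high-precision operation on the small scalar quantities (inner products and recurrence coefficients $\alpha_j,\beta_j$) costs up to $B^2$ bit operations, contributing the $kB^2$ term. Assembling $\norm{\bv{x}}\bv{Q} f(\bv{T}) \bv{e}_1$ is dominated by the preceding costs, yielding the stated total $O(\mv(\bv{A})k + k^2 B + kB^2)$.
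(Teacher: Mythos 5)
Your high-level plan — split the error by triangle inequality around the optimal uniform approximant $p^*$, control the third term via a Paige-style interval containment for $\lambda(\bv{T})$, and attack the middle term by expanding $p^*$ in the shifted Chebyshev basis and propagating the Lanczos residual through the three-term recurrence — is indeed the skeleton of the paper's argument. However, there are two gaps that the proposal does not close.

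First, the claim that ``$\bv{A}\bv{Q} - \bv{Q}\bv{T}$ ... is bounded by $O(2^{-B}\cdot k \cdot \norm{\bv{A}})$ in operator norm'' is false as stated. Paige's identity is
$\bv{A}\bv{Q} = \bv{Q}\bv{T} + \beta_{k+1}\bv{q}_{k+1}\bv{e}_k^T + \bv{E}$, and the spillover term $\beta_{k+1}\bv{q}_{k+1}\bv{e}_k^T$ typically has operator norm $\Theta(\norm{\bv{A}})$, not $O(\mach\norm{\bv{A}})$; only $\bv{E}$ is small. The recurrence argument only works because the spillover term is annihilated along the trajectory $T_j(\bar{\bv{T}})\bv{e}_1$: since $\bar{\bv{T}}$ is tridiagonal, $\bv{e}_k^T T_j(\bar{\bv{T}})\bv{e}_1 = 0$ for all $j < k-1$, so when you unroll the recurrence for a polynomial of degree $<k$ the $\bv{e}_k^T$ factor kills that term and you are genuinely left with $\bv{E}$ alone. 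This structural cancellation is not a technical footnote; without it your per-term error bound degenerates to $\Omega(\norm{\bv{A}}\cdot\mathrm{poly}(k))$ and the whole estimate collapses.

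Second, you analyze the exact quantity $\norm{\bv{x}}\bv{Q}f(\bv{T})\bv{e}_1$ as if it were the algorithm's output, but the returned $\bv{y}$ is the \emph{finite precision} evaluation of that expression: one must eigendecompose $\bv{T}$, apply $f$ to the approximate eigenvalues, and multiply back by $\bv{Q}$, all in floating point. Controlling the resulting error is not free — the point is that a small backward perturbation of $\bv{T}$ can substantially change $f(\bv{T})\bv{e}_1$ unless $f$ is smooth, and the paper quantifies ``smooth'' via the same $\delta_k$ and $C$ that appear in the statement, reusing a Chebyshev perturbation bound for $p(\bv{T}) - p(\bar{\bv{T}})$ in place of $p(\bv{A}) - \bv{Q}p(\bv{T})\bv{e}_1$. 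Your runtime paragraph names the $\tilde O(k^2)$ tridiagonal eigensolver but the error paragraph never accounts for its effect, so as written the proof establishes a bound only for an idealized intermediate quantity, not for the algorithm's output. (As a minor remark, in the cost model of the theorem each arithmetic operation at $\Omega(B)$ bits is unit cost, so the $kB^2$ term is the $k\log^2(k/\epsilon_1)$ contribution of the divide-and-conquer eigensolver, not a bit-complexity overhead on the scalar recurrence coefficients.)
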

The bound of  \eqref{eq:finitePrecisionGuarantee1} matches \eqref{introBound} up to an $O(k)$ factor along with a small $\epsilon C$ additive error term, which decreases exponentially in the bits of precision available.
 For typical functions, the degree of the best uniform approximating polynomial depends logarithmically on the desired accuracy. So the $O(k)$ factor equates to just a logarithmic increase in the degree of the approximating polynomial, and hence the number of iterations required for a given accuracy. The theorem requires a uniform approximation bound on the slightly extended range $[\lmin(\bv{A})-\eta,\lmax(\bv{A})+\eta]$, however in typical cases this has essentially no effect on the bounds obtainable.

In Section~\ref{sec:applications} we give several example applications of Theorem~\ref{mainLemmaFullRuntime} that illustrate these principles. We show how to stably approximate the matrix sign function, the matrix exponential, and the top singular value of a matrix. Our runtimes all either improve upon or match state-of-the-art runtimes, while holding rigorously under finite precision computation. They demonstrate the broad usefulness of the Lanczos method and our approximation guarantees for matrix functions.
%

\subsubsection{Techniques and comparison to prior work}
We begin with
the groundbreaking work of Paige \cite{paigeThesis,paige1976error,paige1980error}, which gives a number of results on the behavior of the Lanczos tridiagonalization process in finite arithmetic. Using Paige's bounds, we demonstrate that if $f(x)$ is a degree $< k$ Chebyshev polynomial of the first kind, Lanczos can apply it very accurately. This proof, which is the technical core of our error bound, leverages the well-understood stability of the recursive formula for computing Chebyshev polynomials \cite{clenshaw}, \emph{even though this formula is not explicitly used when applying Chebyshev polynomials via Lanczos}.

To extend this result to general functions, we first show that Lanczos will effectively apply the `low degree polynomial part' of $f(\bv{A})$, incurring error depending on the residual $\delta_k$ (see Lemma \ref{mainLemmaDetailed}). So we just need to show that this polynomial component can be applied stably. To do so, we appeal to our proof for the special case of Chebyshev polynomials via the following argument, which appears formally in the proof of Lemma \ref{poly_with_error}:
If $|f(x)| \le C$ on $[\lmin(\bv{A}),\lmax(\bv{A})]$, then the optimal degree $k$ polynomial approximating $f(x)$ on this range is bounded by $2C$ in absolute value since it must have uniform error $<C$, the error given by setting $p(x) = 0$. Since its magnitude is bounded, this polynomial has coefficients bounded by $O(C)$ when written in the Chebyshev basis. Accordingly, by linearity, Lanczos only incurs error $O(C)$ times greater than what is obtained when applying Chebyshev polynomials. This yields the additive error bound $\epsilon C$ in Theorem \ref{mainLemmaFullRuntime}, proving that, \emph{for any bounded function, Lanczos can apply the optimal approximating polynomial accurately}. 

Ultimately, our proof can be seen as a more careful application of the techniques of
Druskin and Knizhnerman \cite{druskinRussian,druskinreview}.
They also use the stability of Chebyshev polynomials to understand stability for more general functions, but give an error bound which depends on a coarse upper bound for $\delta_k$. Additionally, their work ignores stability issues that can arise when computing the final output $\bv{y} = \norm{\bv{x}} \cdot \bv{Q}f(\bv{T})\bv{e}_1$. We provide a complete analysis by showing that $\bv{y}$ can be computed stably whenever $f(x)$ is well approximated by a low degree polynomial, and hence give the first end-to-end runtime bound for Lanczos in finite arithmetic.

Our work is also similar to that of Orecchia, Sachdeva, and Vishnoi, who give accuracy bounds for a slower variant of Lanczos with re-orthogonalization that requires $\sim O(\mv(\bv{A})k + k^3)$ time, in contrast to the $\sim O(\mv(\bv{A})k + k^2)$ time required for our Theorem \ref{mainLemmaFullRuntime}  \cite{matrixExp}.
Furthermore, their results require a bound on the coefficients of the polynomial $p(x)$. Many optimal approximating polynomials, like the Chebyshev polynomials, have coefficients which are exponential in their degree. Accordingly, \cite{matrixExp} requires that the number of bits used to match such polynomials with Lanczos grows polynomially (rather than logarithmically) with the approximating degree. In fact, as shown in \cite{frostig2016principal}, any degree $k$ polynomial with coefficients bounded by $C$ can be well approximated by a polynomial with degree $O(\sqrt{k \log (kC)})$. So \cite{matrixExp} only gives good bounds for polynomials that are inherently suboptimal. Additionally, like Druskin and Knizhnerman, \cite{matrixExp} only addresses roundoff errors that arise during matrix vector multiplication with $\bv{A}$, assuming stability for other components of their algorithm.

%
%
\subsection{Linear systems in finite precision}
\label{intro_linear_systems}
Theorem \ref{mainLemmaFullRuntime} shows that for general functions, the Lanczos method performs nearly as accurately in finite precision as in exact arithmetic: after $k$ iterations, it still nearly matches the accuracy of the best degree $< k$ uniform polynomial approximation to $f(x)$ over $\bv{A}$'s eigenvalue range. 

However, in
%
the important special case of solving positive definite linear systems, i.e., when $\bv{A}$ has all positive eigenvalues and $f(\bv{A}) = \bv{A}^{-1}$, it is well known that \eqref{introBound} can be strengthened in exact arithmetic. Lanczos performs as well as the best polynomial approximating $f(x) = 1/x$ \emph{at each of $\bv{A}$'s eigenvalues} rather than over the full range $[\lmin(\bv{A}),\lmax(\bv{A})]$. Specifically,\footnote{Note that slightly stronger bounds where $p$ depends on $\bv{x}$ are available. We work with \eqref{introBound2} for simplicity since it only depends on $\bv{A}$'s eigenvalues.}
\begin{align}\label{introBound2}
\norm{\bv{A}^{-1}\bv{x} - \bv{y}_k} \le \sqrt{\kappa(\bv{A})}\cdot \norm{\bv{x}} \cdot\min_{\substack{\text{polynomial $p$}\\ \text{with degree $<k$}}} \max_{x \in \{\lambda_{1}(\bv{A}),\lambda_2(\bv{A}),...,\lambda_{n}(\bv{A})\}} \left | p(x) -1/x\right|
\end{align}
where $\kappa(\bv{A}) = \norm{\bv{A}} \norm{\bv{A}^{-1}}$ is $\bv{A}$'s condition number.
\eqref{introBound2} is proven in Appendix \ref{sec:linsystems_app}. It can be much stronger than \eqref{introBound}, and correspondingly Theorem \ref{mainLemmaFullRuntime}. Specifically, the best bound obtainable from \eqref{introBound} is that after $\tilde O(\sqrt{\kappa(\bv{A})})$ iterations, $\bv{y} \approx \norm{\bv{A}^{-1}\bv{x} }$. In contrast, \eqref{introBound2} shows that even when $\kappa(\bv{A})$ is very large, 
$n$ iterations are enough to compute $\bv{A}^{-1} \bv{x}$ exactly: $p(x)$ can be set to the polynomial which exactly interpolates $1/x$ at each of $\bv{A}$'s eigenvalues. \eqref{introBound2} also gives improved bounds for matrices with clustered, skewed, or otherwise favorable eigenvalue distributions \cite{axelsson1986rate,harveyPaper}. For example, assuming exact arithmetic, it can be used to analyze  preconditioners for graph Laplacians, which induce heavily skewed eigenvalue distributions \cite{spielman2009note,pengExperiment}. It can also be applied to algorithms for solving asymmetric Laplacian systems corresponding to \emph{directed graphs} \cite{directed_lap}.


Understanding whether \eqref{introBound2} carries over to finite precision is an important open question, which has actually received more attention than the general matrix function problem.
In seminal work, Greenbaum \cite{greenbaum1989behavior} gives a natural finite precision extension of \eqref{introBound2}: performance can be bounded by the error in approximating $1/x$ in a \emph{tiny range around each eigenvalue}. 
Here ``tiny'' means essentially on the order of machine precision -- the approximation need only be over ranges of width $\eta$ as long as the bits of precision used is $\gtrsim \log(1/\eta)$.

Greenbaum's bound applies to the conjugate gradient (CG) method, a somewhat optimized way of applying Lanczos to linear systems. A precise version of Theorem 3 in  \cite{greenbaum1989behavior} can be summarized as follows (see Appendix \ref{sec:linsystems_app} for a detailed discussion):

%
%
\begin{theorem}[Conjugate Gradient in Finite Arithmetic \cite{greenbaum1989behavior}]\label{thm:greenbaum}
Given positive definite $\bv{A} \in \R^{n \times n}$ and $\bv{x} \in \R^n$, after $k$ iterations, the conjugate gradient algorithm run on a computer with $\Omega \left (\log \frac{nk\|\bv{A}\|}{\min(\eta,\lmin(\bv{A}) )} \right)$ bits of precision returns $\bv{y}$ satisfying: 
$$\norm{\bv{A}^{-1} \bv{x} - \bv{y}} \le 2\kappa(\bv{A}) \cdot \bar\delta_{k}\norm{\bv{x}}$$
where 
\begin{align*}
\bar\delta_{k} \eqdef \min_{\substack{\textnormal{polynomial $p$}\\ \textnormal{ with degree $<k$} }} \left ( \max_{x \in \bigcup_{i=1}^n [\lambda_i(\bv{A})-\eta, \lambda_i(\bv{A})+\eta]} |p(x)-1/x| \right ).
\end{align*}
The CG algorithm run for $k$ iterations requires $O(\mv(\bv{A}) k + nk)$ time, where $\mv(\bv{A})$ is the time required to multiply $\bv{A}$ by a vector. 
\end{theorem}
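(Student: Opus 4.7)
The plan is to reduce to the exact arithmetic bound \eqref{introBound2} applied to a \emph{surrogate} matrix, by way of Greenbaum's backward-stability viewpoint. Specifically, I would first invoke the main technical result of \cite{greenbaum1989behavior}: provided the machine precision is at most $\eta/\poly(n,k,\|\bv{A}\|)$ (which is exactly what $B = \Omega(\log(nk\|\bv{A}\|/\min(\eta,\lmin(\bv{A}))))$ bits guarantees), the tridiagonal coefficients generated by $k$ steps of finite precision CG on $(\bv{A},\bv{x})$ coincide with those produced by $k$ steps of \emph{exact} Lanczos applied to a symmetric positive definite ``surrogate'' matrix $\bar{\bv{A}}$ of dimension $N \le nk$, from a starting vector $\bar{\bv{x}}$ with $\|\bar{\bv{x}}\| = \|\bv{x}\|$, every eigenvalue of which lies in $\bigcup_{i=1}^n [\lambda_i(\bv{A})-\eta,\lambda_i(\bv{A})+\eta]$. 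Because the CG iterate is determined by these Lanczos coefficients and the norm of the initial vector, the finite precision output $\bv{y}$ inherits the error behavior of the exact CG iterate $\bar{\bv{y}}$ on $(\bar{\bv{A}},\bar{\bv{x}})$.

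Next, I would apply the exact arithmetic bound \eqref{introBound2} to $(\bar{\bv{A}},\bar{\bv{x}})$, obtaining $\|\bar{\bv{A}}^{-1}\bar{\bv{x}} - \bar{\bv{y}}\| \le \sqrt{\kappa(\bar{\bv{A}})}\,\|\bar{\bv{x}}\|\cdot\min_p \max_{x \in \{\lambda_1(\bar{\bv{A}}),\ldots,\lambda_N(\bar{\bv{A}})\}} |p(x)-1/x|$, where $p$ ranges over polynomials of degree less than $k$. Since every eigenvalue of $\bar{\bv{A}}$ lies in $\bigcup_i [\lambda_i(\bv{A})-\eta,\lambda_i(\bv{A})+\eta]$, the inner maximum is dominated by the worst-case value on this union of intervals, so the minimum is at most $\bar\delta_k$. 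The precision hypothesis forces $\eta \le \lmin(\bv{A})$ up to constants, so the eigenvalues of $\bar{\bv{A}}$ differ from those of $\bv{A}$ by at most a constant multiplicative factor, giving $\kappa(\bar{\bv{A}}) = O(\kappa(\bv{A}))$. Putting these together and converting the bound from the surrogate iterate $\bar{\bv{y}}$ back to the actual CG iterate $\bv{y}$ (which introduces the extra $\sqrt{\kappa(\bv{A})}$ needed to pass from CG's natural $\bv{A}$-norm error to the Euclidean-norm bound in the statement) yields the desired $2\kappa(\bv{A})\bar\delta_k\|\bv{x}\|$ guarantee. The runtime claim is immediate: each CG iteration performs one matrix-vector multiply and $O(n)$ work on vectors, totaling $O(\mv(\bv{A})k + nk)$.

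The main obstacle is establishing Greenbaum's backward-stability statement with the stated quantitative precision. Her construction of the surrogate $\bar{\bv{A}}$ proceeds by an inductive analysis of the three-term Lanczos recurrence in floating point, absorbing the per-step rounding errors into an enlarged tridiagonal matrix whose eigenvalues cluster in $\eta$-neighborhoods of those of $\bv{A}$, and requires showing that these cluster radii do not blow up over $k$ iterations provided the per-step error is at most $\eta/\poly(n,k,\|\bv{A}\|)$. Appendix \ref{sec:linsystems_app} will presumably re-derive this carefully, and will also need to justify the precise form of the conversion from the surrogate CG iterate back to the finite precision CG iterate on $\bv{A}$, which is the step responsible for the full $\kappa(\bv{A})$ (rather than $\sqrt{\kappa(\bv{A})}$) factor in the final bound. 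Everything else is a direct application of the exact-arithmetic theory to the surrogate matrix.
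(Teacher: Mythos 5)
Your overall strategy is exactly the paper's: invoke Greenbaum's backward-stability result to replace the finite-precision CG run on $(\bv{A},\bv{x})$ with an exact-arithmetic run on a surrogate $(\bar{\bv{A}},\bar{\bv{x}})$ whose eigenvalues lie in $\eta$-neighborhoods of $\bv{A}$'s, then apply the exact-arithmetic analysis to the surrogate and convert between norms. However, you mis-state Greenbaum's result in a way that matters for the $\kappa$ accounting: her Theorem 3 guarantees $\|\bar{\bv{x}}\|_{\bar{\bv{A}}} = \|\bv{x}\|_\bv{A}$, \emph{not} $\|\bar{\bv{x}}\| = \|\bv{x}\|$, and her comparison of residuals is $\|\bv{A}^{-1}\bv{x}-\bv{y}\|_{\bv{A}} \le 1.2\,\|\bar{\bv{A}}^{-1}\bar{\bv{x}}-\bar{\bv{y}}\|_{\bar{\bv{A}}}$, i.e., it lives in the respective $\bv{A}$-norms rather than Euclidean norms. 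Your chain applies the Euclidean-norm bound \eqref{introBound2} to the surrogate (which already carries a $\sqrt{\kappa(\bar{\bv{A}})}$ factor), and then you would still have to convert to $\bar{\bv{A}}$-norm to use Greenbaum's comparison and convert the resulting $\bv{A}$-norm error back to Euclidean — a net cost of $\sqrt{\lmax(\bar{\bv{A}})/\lmin(\bv{A})} \approx \sqrt{\kappa(\bv{A})}$. With the corrected fact $\|\bar{\bv{x}}\|_{\bar{\bv{A}}} = \|\bv{x}\|_\bv{A}$, one additionally pays $\|\bar{\bv{x}}\| \lesssim \sqrt{\kappa(\bv{A})}\|\bv{x}\|$, and your chain would yield $\kappa^{3/2}(\bv{A})$, not $\kappa(\bv{A})$. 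Your incorrect $\|\bar{\bv{x}}\|=\|\bv{x}\|$ assumption happens to numerically cancel this overcounting, producing the right answer by accident.

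The paper avoids this by \emph{not} passing through the surrogate's Euclidean bound: it uses the $\bar{\bv{A}}$-norm inequality $\|\bar{\bv{A}}^{-1}\bar{\bv{x}}-\bar{\bv{y}}\|_{\bar{\bv{A}}} \le \sqrt{\lmax(\bar{\bv{A}})}\,\bar\delta_k\|\bar{\bv{x}}\|$ directly (the intermediate step inside the proof of Theorem~\ref{exact_lanczos_linear_systems}, not the final \eqref{introBound2}), chains it with Greenbaum's $\bv{A}$-norm-to-$\bar{\bv{A}}$-norm comparison, converts the left side to Euclidean via $1/\sqrt{\lmin(\bv{A})}$, and then uses $\|\bar{\bv{x}}\| \le \sqrt{\lmax(\bv{A})/\lmin(\bar{\bv{A}})}\,\|\bv{x}\|$ derived from the $\bv{A}$-norm equality. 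Each of these contributes one $\sqrt{\kappa(\bv{A})}$-type factor, yielding the tight $O(\kappa(\bv{A}))$ bound. If you replace your use of \eqref{introBound2} with the $\bar{\bv{A}}$-norm bound and correct the statement about $\bar{\bv{x}}$, your argument becomes the paper's.
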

Theorem \ref{thm:greenbaum} does not apply to general matrix functions but,
at least for the special case of $f(\bv{A}) = \bv{A}^{-1}$, it is stronger than our Theorem \ref{mainLemmaFullRuntime}. It is natural to ask by how much.

\subsubsection{Lower bound}

Surprisingly, we show that Greenbaum's bound is much weaker than the exact arithmetic guarantee \eqref{introBound2}, and in fact is
not significantly more powerful than Theorem \ref{mainLemmaFullRuntime}. Specifically, in Section \ref{sec:lower} we prove  that for any $\kappa$ and interval width $\eta$, there is a natural class of matrices with condition number $\kappa$ and just $O(\log \kappa \cdot \log 1/\eta)$ eigenvalues for which any `stable approximating polynomial' of the form required by Theorem \ref{thm:greenbaum} achieving $\bar{\delta}_k \le 1/6$ must have degree $\Omega(\kappa^c)$ for a fixed constant $c \ge 1/5$.

\begin{theorem}[Stable Approximating Polynomial Lower Bound] \label{thm:lb}There exists a fixed constant $1/5 \le c \le 1/2$ such that for any $\kappa \ge 2$, $0 < \eta \le \frac{1}{20 \kappa^2}$, and $n \ge \lfloor \log_2 \kappa\rfloor \cdot  \lceil \ln 1/\eta \rceil$, there is a positive definite $\bv{A} \in \R^{n \times n}$ with condition number $ \le \kappa$, such that for any $k < \lfloor \kappa^{c}/377 \rfloor$:
\begin{align*}
\bar{\delta}_k \eqdef \min_{\substack{\text{polynomial $p$}\\ \text{ with degree $<k$}}} \left ( \max_{x \in \bigcup_{i=1}^n [\lambda_i(\bv{A})-\eta, \lambda_i(\bv{A})+\eta]} |p(x)-1/x| \right ) \ge 1/6.
\end{align*}
\end{theorem}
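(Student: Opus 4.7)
The plan is to exhibit an explicit hard diagonal matrix whose eigenvalues are organized into geometric ``clusters'' and argue via extremal polynomial theory that no low-degree polynomial can simultaneously satisfy the tight interval approximation requirements at every cluster.

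Let $m \eqdef \lfloor \log_2 \kappa \rfloor$ and $r \eqdef \lceil \ln 1/\eta \rceil$, and take $\bv{A}$ to be the diagonal matrix with $n = mr$ eigenvalues grouped into $m$ clusters placed at the geometric scales $c_i \eqdef 2^{-i-1}$ for $i = 0, 1, \ldots, m-1$. Within the $i$-th cluster, place $r$ eigenvalues in arithmetic progression with common spacing exactly $2\eta$ and centered at $c_i$, so that their closed $\eta$-neighborhoods are consecutive and exactly cover a contiguous interval $J_i$ of width $W \eqdef 2r\eta$. The assumption $\eta \le \frac{1}{20\kappa^2}$ together with $r = \Theta(\log 1/\eta)$ makes $W$ much smaller than the minimum inter-scale gap $c_{m-2} - c_{m-1} = 2^{-m}$, so the $J_i$'s are pairwise disjoint, and one verifies directly that $\lambda_{\max}(\bv{A})/\lambda_{\min}(\bv{A}) \le \kappa/2 \le \kappa$.

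Suppose for contradiction that some polynomial $p$ of degree $< k$ achieves $\bar{\delta}_k \le 1/6$, i.e.\ $|p(x) - 1/x| \le 1/6$ on every $x \in E \eqdef \bigcup_i J_i$. I would study the \emph{residual polynomial} $q(x) \eqdef 1 - x\,p(x)$, which has degree $\le k$ and satisfies $q(0) = 1$ together with $|q(x)| = x\cdot |p(x) - 1/x| \le x/6 \le c_i/5$ for every $x \in J_i$. The task therefore reduces to showing that no degree-$k$ polynomial with $k < \lfloor \kappa^c/377\rfloor$ can simultaneously take value $1$ at the origin and be this small on each of the $m$ clusters.

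The main analytic tool is the extremal growth bound for Chebyshev polynomials: if a degree-$k$ polynomial is bounded by $M$ on an interval of width $W$ centered at $c$, then $|q(y)| \le M\cdot T_k\bigl(2|y-c|/W\bigr)$ at any exterior point $y$. Applied at each scale this yields $1 = |q(0)| \le (c_i/5)\, T_k(2c_i/W)$. Taken individually these inequalities give only polylogarithmic lower bounds on $k$, since a single-scale Chebyshev growth factor is controlled by $k\log(c_i/W)$ and a lone constraint is satisfiable with $k = O(\log\kappa/\log(1/\eta))$. The $\Omega(\kappa^c)$ bound must instead come from combining the constraints across all $m = \Theta(\log \kappa)$ scales simultaneously; the natural route is to view $E$ as a compact subset of $\R$ and lower-bound $\max_E|q|/|q(0)|$ via the logarithmic capacity of $E$ with respect to the origin, or equivalently exhibit a dual extremal measure on $E$ with large transform at $0$. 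The geometric spacing of the clusters with tiny width $W$ keeps this capacity polynomially (rather than exponentially) small in $\kappa$, forcing a polynomial-in-$\kappa$ lower bound on $k$.

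The main obstacle is extracting the sharp exponent $c \ge 1/5$. The Chebyshev or capacity calculation must delicately balance three interrelated parameters: the cluster width $W = 2r\eta$, the geometric positions $c_i = 2^{-i-1}$, and the number of scales $m = \Theta(\log \kappa)$. The specific value $c = 1/5$ and the concrete constant $377$ in the statement suggest that the tightest bound comes from optimizing over \emph{which} cluster drives the Chebyshev growth estimate, or from a product-of-Chebyshev candidate certifying the capacity lower bound. Executing this optimization cleanly, while absorbing the additive slack coming from the $\eta$-neighborhoods and the small loss in bounding $|q(x)| \le c_i/5$ rather than $0$, is the technical heart of the proof.
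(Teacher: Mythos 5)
Your reduction to the residual polynomial $q(x) = 1 - x\,p(x)$ matches the paper's first step, and the high-level potential/capacity intuition is in the right spirit. However, the hard instance you propose does not actually work. You pack the $r = \lceil\ln 1/\eta\rceil$ eigenvalues of each cluster contiguously with spacing $2\eta$, so their $\eta$-neighborhoods tile an interval $J_i$ of width $W = 2r\eta$ around $c_i = 2^{-i-1}$. But then the degree-$m$ polynomial $q^*(x) = \prod_{i=0}^{m-1}(1 - x/c_i)$, where $m = \lfloor\log_2\kappa\rfloor$, has $q^*(0)=1$, and on $J_j$ the factor $|1 - x/c_j|$ is at most $r\eta/c_j$ while the remaining $m-1$ factors multiply to at most $2^{O(m^2)} = \kappa^{O(\log\kappa)}$. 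Since the theorem allows $\eta$ arbitrarily small, taking, e.g., $\eta \le \kappa^{-\log_2\kappa}$ makes $r\eta$ small enough that $|q^*| \le 1/3$ on all of $\bigcup_j J_j$, so your instance admits a degree-$O(\log\kappa)$ approximating polynomial, far below the claimed $\Omega(\kappa^c)$. The paper avoids this by spreading the $z = \lceil\ln 1/\eta\rceil$ eigenvalues evenly across each dyadic interval $[2^{-i}, 2^{-(i-1)}]$, so the spacing $\frac{1}{z2^i}$ between consecutive eigenvalues is far larger than $\eta$; that separation is essential to prevent a single root from covering a whole cluster of $\eta$-neighborhoods.

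You also explicitly leave unexecuted the step you identify as the technical heart. The paper implements the capacity idea concretely: it sets the weight $w(x) = 2^{ic}$ on the $i$-th dyadic band, lower bounds $\max_{\mathcal{R}}\ln|p|$ by the $w$-weighted average, and shows \emph{per root} $r$ that $\int_{\mathcal{R}} w(x)\ln|1-x/r|\,dx \ge -377\eta z$ by splitting the integral into an upper region (strictly positive contribution from $x$ well above $r$), a center region around $r$, and a lower region, and balancing the three with explicit geometric-series estimates. The admissible value of $c$ and the constant $377$ both drop out of this balance -- the positive high-region contribution must dominate the negative contributions, which constrains how fast $w$ can grow. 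Without carrying out a per-root calculation of this kind (or a comparably explicit potential-theoretic estimate), the exponent cannot be extracted, and the proposal remains a sketch rather than a proof.
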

%
Theorem \ref{thm:lb} immediately gives a strong lower bound against Greenbaum's result, even if we only require constant factor error. Setting $\log(1/\eta) = n/\log(\kappa)$ we have:
\begin{corollary}\label{cor:lb}
There exists a fixed $c \ge 1/5$ such that for any $\kappa \ge 2$, there is a positive definite $\bv{A} \in \R^{n \times n}$ with condition number $\le \kappa$ such that Theorem \ref{thm:greenbaum} predicts that CG must run for $\Omega(\kappa^c)$ iterations to guarantee $\norm{\bv{A}^{-1}\bv{x}-\bv{y}} \le \frac{\kappa \cdot \norm{\bv{x}}}{3}$ if $o(n/\log \kappa)$ bits of precision are used.
\end{corollary}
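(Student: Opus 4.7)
The plan is to apply Theorem~\ref{thm:lb} with the smallest $\eta$ allowed by $n$, and then transfer the resulting lower bound on $\bar\delta_k$ to \emph{any} larger $\eta$ that is compatible with the precision budget in Theorem~\ref{thm:greenbaum}. The key structural observation I would exploit is that $\bar\delta_k$ is monotone non-decreasing in $\eta$: enlarging $\eta$ only enlarges the union $\bigcup_i [\lambda_i(\bv{A})-\eta,\lambda_i(\bv{A})+\eta]$, so for each fixed polynomial $p$ the $\sup$ in the definition of $\bar\delta_k$ cannot decrease, and hence neither can the outer $\inf$ over $p$.

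Concretely, given $\kappa \ge 2$ and $n$ (which I will take to satisfy $n = \Omega(\log^2 \kappa)$, the regime in which the corollary is nontrivial), I would set $\eta_0 \eqdef 2^{-\lceil n/\log_2 \kappa \rceil}$ so that $\log(1/\eta_0) \approx n/\log \kappa$, exactly as suggested in the statement. For $n$ in this regime one checks $\eta_0 \le 1/(20\kappa^2)$ and $n \ge \lfloor \log_2 \kappa \rfloor \cdot \lceil \ln 1/\eta_0 \rceil$, so the hypotheses of Theorem~\ref{thm:lb} hold. That theorem then yields a positive definite $\bv{A} \in \R^{n\times n}$ with $\kappa(\bv{A}) \le \kappa$ such that $\bar\delta_k \ge 1/6$ \emph{at the choice $\eta = \eta_0$} whenever $k < \lfloor \kappa^c/377 \rfloor$.

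Next I would use the precision assumption. If CG is run with $B = o(n/\log \kappa)$ bits, then Theorem~\ref{thm:greenbaum} applies only for values of $\eta$ with $B = \Omega\!\left(\log \frac{nk\|\bv{A}\|}{\min(\eta,\lmin(\bv{A}))}\right)$. Solving gives $\log(1/\eta) \le O(B) + O(\log(nk\|\bv{A}\|)) = o(n/\log \kappa)$, using that $\|\bv{A}\|$ and $k$ are at most polynomial in $n$ and $\kappa$ and hence $\log(nk\|\bv{A}\|) = O(\log n + \log \kappa)$. In particular $\log(1/\eta) < \log(1/\eta_0)$, i.e.\ $\eta > \eta_0$, and by the monotonicity observation $\bar\delta_k \ge 1/6$ for this $\eta$ as well, for every $k < \lfloor \kappa^c / 377\rfloor$. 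Substituting into the upper bound of Theorem~\ref{thm:greenbaum} gives $2\kappa \bar\delta_k \norm{\bv{x}} \ge \tfrac{\kappa \norm{\bv{x}}}{3}$, so that bound does not certify error below $\tfrac{\kappa \norm{\bv{x}}}{3}$ unless $k = \Omega(\kappa^c)$.

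The main obstacle, and the part that requires the most care, is simply the bookkeeping that turns $B = o(n/\log \kappa)$ into $\log(1/\eta) = o(n/\log \kappa)$: one must verify that the additive $O(\log(nk\|\bv{A}\|))$ arising from Greenbaum's precision requirement is swamped by $n/\log \kappa$, and handle the degenerate regime where $n$ is so small that $\eta_0 \le 1/(20\kappa^2)$ fails -- in that regime $o(n/\log \kappa)$ bits cannot even represent $\bv{A}$'s eigenvalues, so the claim is vacuous. Aside from this, the corollary is essentially a change-of-variables plus monotonicity argument on top of Theorem~\ref{thm:lb}.
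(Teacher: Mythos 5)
Your proof is correct and follows essentially the same route the paper gestures at (``setting $\log(1/\eta) = n/\log\kappa$''), with the useful explicit step of noting that $\bar\delta_k$ is monotone non-decreasing in $\eta$, which is exactly what lets the single matrix constructed at $\eta_0$ serve for every larger $\eta$ that the precision budget forces on Theorem~\ref{thm:greenbaum}. The bookkeeping that turns $o(n/\log\kappa)$ bits into $\log(1/\eta) = o(n/\log\kappa)$ and the handling of the small-$n$ degenerate regime are both sound.
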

As a consequence, if we set $\kappa = n^d$ for arbitrarily large constant $d$, Theorem \ref{thm:greenbaum} only guarantees a $\Omega(n^{cd})$ iteration bound, even when the precision used is \emph{nearly exponential} in $n$. Since  $O(\kappa^{1/2}) = O(n^{d/2})$ is already achievable via Theorem \ref{mainLemmaFullRuntime} with $O(\log n)$ bits of precision, Greenbaum's bound is not a significant improvement, except in very high precision regimes. While our constant $c$ is  $< 1/2$, we believe the proof can be tightened to show that $\sim \kappa^{1/2}$ degree is necessary.

Corollary \ref{cor:lb} can also be interpreted as showing the existence of matrices with $O(\log^2 \kappa)$ eigenvalues for which Theorem \ref{thm:greenbaum} requires $\Omega(\kappa^c)$ iterations for convergence if $O(\log \kappa)$ bits of precision are used. This is nearly exponentially worse than the exact arithmetic case, where \eqref{introBound2} gives convergence to perfect accuracy in $O(\log^{2}\kappa)$ iterations. 

Theorem \ref{thm:lb} seems damning for establishing iteration bounds on the Lanczos and CG methods in finite precision that go significantly beyond uniform approximation of $1/x$. Informally, all known bounds improving on $O(\sqrt{\kappa})$ iterations, including those for clustered or skewed eigenvalue distributions, require a polynomial that stably approximates $1/x$ on some small subset of poorly conditioned eigenvalues. We rule out the existence of such polynomials.

However, Theorem \ref{thm:lb} is not a general lower bound on the performance of finite precision Lanczos methods for solving linear systems. It is possible that these methods do something ``smarter'' than applying a fixed stable polynomial. Thus, we see our result as pointing to two possibilities:

\noindent
\hspace{.04\textwidth}\begin{minipage}{.92\textwidth}
\vspace{1em}
\textbf{Optimistic:} Bounds comparable \eqref{introBound2} can be proven for finite precision Lanczos or conjugate gradient, but are out of the reach of current techniques. Proving such bounds may require looking beyond a ``polynomial'' view of these methods.

\vspace{.5em}
\noindent\textbf{Pessimistic:} For finite precision Lanczos methods to converge in $k$ iterations, there must essentially exist a stable degree $k$ polynomial approximating $1/x$ in small ranges around $\bv{A}$'s eigenvalues. If this is the case, our lower bound could be extended to an unconditional lower bound on the number of iterations required for solving $\bv{A}^{-1}\bv{x}$ with such methods.
\end{minipage}

\section{Notation and linear algebra preliminaries}
\textbf{Notation} \hspace{.5em} We use bold uppercase letters for matrices and lowercase letters for vectors (i.e. matrices with multiple rows, 1 column). A lowercase letter with a subscript is used to denote a particular column vector in the corresponding matrix. E.g. $\bv{q}_5$ denotes the $5^\text{th}$ column in the matrix $\bv{Q}$.
Non-bold letters denote scalar quantities. A superscript $^T$ denotes the transpose of a matrix or vector. $\bv{e}_i$ denotes the $i^\text{th}$ standard basis vector, i.e. a vector with a 1 at position $i$ and 0's elsewhere. Its length will be clear from context. We use $\bv{I}_{k\times k}$ to denote the $k\times k$ identity matrix, removing the subscript when it is clear from context. When discussing runtimes, we occasionally use $\tilde{O}(x)$ as shorthand for $O(x\log^c x)$, where $c$ is a fixed positive constant.

\medskip
\noindent\textbf{Matrix Functions} \hspace{.5em} The main subject of this work is matrix functions and their approximation by matrix polynomials. We define matrix functions in the standard way, via the eigendecomposition:
\begin{definition}[Matrix Function]
\label{def:matrix_func}
For any function $f:\R\to\R$, for any real symmetric matrix $\bv{M}$, which can be diagonalized as $\bv{M} = \bv{V}\bv{\Lambda}\bv{V}^T$, we define the matrix function $f(\bv{M})$ as:
	\begin{align*}
		f(\bv{M}) \eqdef \bv{V}f(\bv{\Lambda})\bv{V}^T,
	\end{align*}
where $f(\bv{\Lambda})$ is a diagonal matrix obtained by applying $f$ independently to each eigenvalue on the diagonal of $\bv{\Lambda}$ (including any $0$ eigenvalues).
\end{definition}

\medskip
\noindent\textbf{Other} \hspace{.5em}
For a vector $\bv{x}$, $\norm{\bv{x}}$ denotes the Euclidean norm. For a matrix $\bv{M}$, $\norm{\bv{M}}$ denotes the spectral norm and $\kappa(\bv{M}) = \norm{\bv{M}} \norm{\bv{M}^{-1}}$ the condition number. We denote the eigenvalues of a symmetric matrix $\bv{M}\in \R^{n\times n}$ by $\lambda_1(\bv{M}) \ge \lambda_2(\bv{M}) \ge \ldots \ge \lambda_n(\bv{M})$, often writing $\lmax(\bv{M}) \eqdef \lambda_1(\bv{M})$ and $\lmin(\bv{M}) \eqdef \lambda_n(\bv{M})$.
 $\nnz(\bv{M})$ denotes the number of non-zero entries in $\bv{M}$.

\section{The Lanczos method in exact arithmetic}
\label{sec:exact_arthmetic}
We begin by presenting the classic Lanczos method and demonstrate how it can be used to approximate $f(\bv{A}) \bv{x}$ for any function $f$ and vector $\bv{x}$ when computations are performed in \emph{exact arithmetic}. While the results in this section are well known, we include an analysis that will mirror and inform our eventual finite precision analysis.
%
%
%

\begin{algorithm} 
\caption{Lanczos Method for Computing Matrix Functions}
{\bf input}: symmetric $\bv{A} \in \mathbb{R}^{n \times n}$, $\#$ of iterations $k$, vector $\bv{x} \in \R^n$, function $f:\R\rightarrow\R$\\
{\bf output}: vector $\bv{y}\in \R^n$ which approximates $f(\bv{A})\bv{x}$
\begin{algorithmic}[1]
\STATE{$\bv{q}_{0} = \bv{0}$, $\bv{q}_{1} = \bv{x}/\|\bv{x}\|$, $\beta_1 = 0$}
\FOR{$i \in {1,\ldots,k}$} 
\STATE {$\bv{q}_{i+1} \gets \bv{A}\bv{q}_i  - \beta_{i}\bv{q}_{i-1}$}
\STATE {$\alpha_i \gets \langle \bv{q}_{i+1}, \bv{q}_i\rangle$}
\STATE {$\bv{q}_{i+1} \gets \bv{q}_{i+1} - \alpha_i \bv{q}_i$} 
\STATE {$\beta_{i+1} \gets \norm{\bv{q}_{i+1}}$}
 \IF{$\beta_{i+1} == 0$} \STATE {break loop}\ENDIF
\STATE {$\bv{q}_{i+1} \gets \bv{q}_{i+1}/\beta_{i+1}$}
 \ENDFOR
 \vspace{.5em}
\STATE{$\bv{T} \gets \left[ \begin{matrix}
				\alpha_1 & \beta_2 & & 0\\
				\beta_2 & \alpha_2 & \ddots & \\
				& \ddots & \ddots & \beta_{k} \\
				0 & & \beta_{k} & \alpha_k 
				\end{matrix} \right]$, \hspace{1em} $\bv{Q} \gets \left[\begin{matrix} \bv{q}_1 & \ldots & \bv{q}_k \end{matrix} \right]$, 
				}
 \vspace{.5em}
\RETURN{$\bv{y} = \|\bv{x}\|\cdot\bv{Q}f(\bv{T})\bv{e}_1$} \label{final_step}
\end{algorithmic}
\label{alg:lanczos}
\end{algorithm}


We study the standard implementation of Lanczos described in Algorithm \ref{alg:lanczos}.
In exact arithmetic, the algorithm computes an orthonormal matrix $\bv{Q}$ with $\bv{q}_1 = \bv{x}/\|\bv{x}\|$ as its first column such that for all $j \leq k$, $[\bv{q}_1, \bv{q}_2, \ldots, \bv{q}_j]$ spans the rank-$j$ Krylov subspace:
\begin{align}\label{krylovDef}
	\mathcal{K}_j = [\bv{x}, \bv{A}\bv{x}, \bv{A}^2\bv{x}, \ldots, \bv{A}^{j-1}\bv{x}].
\end{align}
The algorithm also computes symmetric tridiagonal $\bv{T} \in \R^{k\times k}$ such that $\bv{T} =  \bv{Q}^T\bv{A}\bv{Q}$.
\footnote{
For conciseness, we ignore the case when the algorithm terminates early because $\beta_{i+1} = 0$. In this case, either $\bv{A}$ has rank $i$ or $\bv{x}$ only has a non-zero projection onto $i$ eigenvectors of $\bv{A}$. Accordingly, for any $j \ge 1$ $\mathcal{K}_j$ is spanned by $\mathcal{K}_i$ so there is no need to compute additional vectors beyond $\bv{q}_i$: any polynomial $p(\bv{A})\bv{x}$ can be formed by recombining vectors in $[\bv{q}_1, \bv{q}_2, \ldots, \bv{q}_i]$. It is tedious but not hard to check that our proofs go through in this case.
}

While the Krylov subspace interpretation of the Lanczos method is useful in understanding the function approximation guarantees that we will eventually prove, there is a more succinct way of characterizing the algorithm's output that doesn't use the notion of Krylov subspaces. It has been quite useful in analyzing the algorithm since the work of Paige \cite{paigeThesis}, and will be especially useful when we study the algorithm's behavior in finite arithmetic. 

\begingroup
\makeatletter
\apptocmd{\thetheorem}{\unless\ifx\protect\@unexpandable@protect\protect \,\,-- Exact Arithmetic\fi}{}{}
\makeatother
\begin{claim}[Lanczos Output Guarantee]
\label{claim:exact_lanc_guarantee}
Run for $k\leq n$ iterations using exact arithmetic operations, the Lanczos algorithm (Algorithm \ref{alg:lanczos}) computes $\bv{Q} \in \R^{n\times k}$, an additional column vector $\bv{q}_{k+1} \in \R^n$, 
a scalar $\beta_{k+1}$, and a symmetric tridiagonal matrix $\bv{T} \in \R^{k\times k}$ such that:
	\begin{align}\label{eq:exact_lanc_guarantee}
		\bv{A}\bv{Q} = \bv{Q}\bv{T} + \beta_{k+1}\bv{q}_{k+1}\bv{e}_k^T,
	\end{align}
and
	\begin{align}\label{spectral_norm_bound_on_q}
		\left[\bv{Q},\bv{q}_{k+1}\right]^T\left[\bv{Q},\bv{q}_{k+1}\right] = \bv{I}.
	\end{align}
Together \eqref{eq:exact_lanc_guarantee} and \eqref{spectral_norm_bound_on_q} also imply that:
\begin{align}
\label{exact_arith_eig_bound}
\lmin(\bv{T}) &\geq \lmin(\bv{A}) &&\text{and} &\lmax(\bv{T}) &\leq  \lmax(\bv{A}).
\end{align}
When run for $k \ge n$ iterations, the algorithm terminates at the $n^{th}$ iteration with $\beta_{n+1} = 0$.
\end{claim}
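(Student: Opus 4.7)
The plan is to verify equations \eqref{eq:exact_lanc_guarantee} and \eqref{spectral_norm_bound_on_q} directly from the algorithm's three-term recurrence, and then derive the eigenvalue interlacing and termination statements as corollaries.

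First I would unroll the algorithm to extract its three-term recurrence. Lines 3, 5, 6, 9 of Algorithm~\ref{alg:lanczos} together give, for each $1 \le i \le k$, the identity $\beta_{i+1}\bv{q}_{i+1} = \bv{A}\bv{q}_i - \alpha_i \bv{q}_i - \beta_i \bv{q}_{i-1}$, or equivalently $\bv{A}\bv{q}_i = \beta_i \bv{q}_{i-1} + \alpha_i \bv{q}_i + \beta_{i+1}\bv{q}_{i+1}$ (with the convention $\bv{q}_0 = \bv{0}$ and $\beta_1 = 0$). Reading these as the columns of a single matrix equation, and noting that the contribution $\beta_{k+1}\bv{q}_{k+1}$ only appears in the $k$-th column, yields exactly \eqref{eq:exact_lanc_guarantee}. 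The tridiagonal structure of $\bv{T}$ is immediate from this: the only entries on column $i$ are $\beta_i, \alpha_i, \beta_{i+1}$, and symmetry comes from the shared $\beta_i$ between columns $i-1$ and $i$.

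Next I would prove \eqref{spectral_norm_bound_on_q} by induction on $i$, showing that $\bv{q}_1,\dots,\bv{q}_{i+1}$ form an orthonormal set. Unit length is immediate from the normalization step (assuming $\beta_{i+1}\ne 0$; the case $\beta_{i+1}=0$ is handled separately in step 4 below). For orthogonality, the new vector $\bv{q}_{i+1}$ (pre-normalization) equals $\bv{A}\bv{q}_i - \alpha_i\bv{q}_i - \beta_i\bv{q}_{i-1}$. Orthogonality to $\bv{q}_i$ follows from the choice $\alpha_i = \langle \bv{A}\bv{q}_i - \beta_i\bv{q}_{i-1},\bv{q}_i\rangle$ combined with the inductive orthogonality $\langle \bv{q}_{i-1},\bv{q}_i\rangle = 0$. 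For orthogonality to any earlier $\bv{q}_j$ with $j<i$, I would use the symmetry of $\bv{A}$ to move it to the other side: $\langle \bv{A}\bv{q}_i,\bv{q}_j\rangle = \langle \bv{q}_i,\bv{A}\bv{q}_j\rangle$, and then substitute the three-term recurrence for $\bv{A}\bv{q}_j = \beta_j\bv{q}_{j-1}+\alpha_j\bv{q}_j+\beta_{j+1}\bv{q}_{j+1}$ from an earlier iteration. For $j \le i-2$ all resulting inner products vanish by induction; for $j = i-1$ the surviving term is $\beta_i\langle \bv{q}_i,\bv{q}_i\rangle = \beta_i$, which is exactly cancelled by the $-\beta_i\bv{q}_{i-1}$ term. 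This is the technical heart of the claim and the step I expect to be the main obstacle.

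For the eigenvalue bound \eqref{exact_arith_eig_bound}, I would left-multiply \eqref{eq:exact_lanc_guarantee} by $\bv{Q}^T$. Orthonormality of $[\bv{Q},\bv{q}_{k+1}]$ gives $\bv{Q}^T\bv{Q} = \bv{I}$ and $\bv{Q}^T\bv{q}_{k+1} = \bv{0}$, so $\bv{T} = \bv{Q}^T\bv{A}\bv{Q}$. Then for any unit $\bv{z}\in\R^k$, the vector $\bv{Q}\bv{z}\in\R^n$ is also unit, and $\bv{z}^T\bv{T}\bv{z} = (\bv{Q}\bv{z})^T\bv{A}(\bv{Q}\bv{z}) \in [\lmin(\bv{A}),\lmax(\bv{A})]$. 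The Rayleigh quotient characterization of extreme eigenvalues then gives the desired interlacing.

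Finally, for the termination statement when $k \ge n$: after $n$ iterations without early exit, $\bv{q}_1,\dots,\bv{q}_n$ form an orthonormal basis of $\R^n$ by \eqref{spectral_norm_bound_on_q}. The vector $\bv{A}\bv{q}_n - \alpha_n\bv{q}_n - \beta_n\bv{q}_{n-1}$ is, by the orthogonality argument above, orthogonal to every $\bv{q}_j$ for $j \le n$, and hence must be the zero vector, giving $\beta_{n+1} = 0$ and triggering the break in the algorithm.
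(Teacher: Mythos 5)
Your proof is correct and follows essentially the same route as the paper's: derive the three-term recurrence and read it off columnwise to get \eqref{eq:exact_lanc_guarantee}, prove orthonormality by induction using the symmetry of $\bv{A}$ to reduce $\langle\bv{A}\bv{q}_i,\bv{q}_j\rangle$ to inner products covered by the inductive hypothesis, then obtain $\bv{T} = \bv{Q}^T\bv{A}\bv{Q}$ and apply the Rayleigh-quotient/Courant--Fischer characterization for \eqref{exact_arith_eig_bound}. The only stylistic difference is that you spell out the cancellation at $j=i-1$ explicitly, whereas the paper phrases the inductive step around $\beta_{k-1} = (\bv{q}_{k-1}^T\bv{A})\bv{q}_{k-2}$; the substance is the same.
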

\endgroup

We include a brief proof in Appendix \ref{omitted_proofs} for completeness. The formulation of Claim \ref{claim:exact_lanc_guarantee} is valuable because it allows use to analyze how Lanczos applies polynomials via the following identity:
\begin{align}
\label{telescope}
	\bv{A}^q\bv{Q} -  \bv{Q}\bv{T}^q &= \sum_{i=1}^q\bv{A}^{q-i}\left(\bv{A}\bv{Q} -  \bv{Q}\bv{T}\right)\bv{T}^{i-1}.
\end{align}
In particular, \eqref{eq:exact_lanc_guarantee} gives an explicit expression for $\left(\bv{A}\bv{Q} -  \bv{Q}\bv{T}\right)$. Ultimately, our finite precision analysis is based on a similar expression for this central quantity.

\subsection{Function approximation in exact arithmetic}

We first show that Claim \ref{claim:exact_lanc_guarantee} can be used to prove \eqref{introBound}: Lanczos approximates matrix functions essentially as well as the best degree $k$ polynomial approximates the corresponding scalar function on the range of $\bv{A}$'s eigenvalues. We begin with a statement that applies for any function $f(x)$:

\begingroup
\makeatletter
\apptocmd{\thetheorem}{\unless\ifx\protect\@unexpandable@protect\protect \,\,-- Exact Arithmetic\fi}{}{}
\makeatother
\begin{theorem}[Approximate Application of Matrix Functions]\label{exact_lanczos_final_theorem}
Suppose $\bv{Q} \in \R^{n\times k}$, $\bv{T} \in \R^{k\times k}$, $\beta_{k+1}$, and $\bv{q}_{k+1}$ are computed by the Lanczos algorithm (Algorithm \ref{alg:lanczos}), run with exact arithmetic on inputs $\bv{A}$ and $\bv{x}$. Let
\begin{align*}
\delta_k =  \min_{\substack{\text{polynomial $p$} \\ \text{w/ degree $<k$}}}\left(\max_{x \in [\lmin(\bv{A}),\lmax(\bv{A})]}\left|f(x) - p(x)\right|	\right).
\end{align*}
Then the output $\bv{y} = \|\bv{x}\|\cdot\bv{Q}f(\bv{T})\bv{e}_1$ satisfies:
	\begin{align} \label{eq:approx_function_app}
		\|f(\bv{A})\bv{x} - \bv{y}\| \leq 2\delta_k \|\bv{x}\|.
		\end{align}
\end{theorem}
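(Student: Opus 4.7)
The plan is to reduce the theorem to two facts: (i) Lanczos applies any polynomial of degree less than $k$ \emph{exactly}, in the sense that $\|\bv{x}\|\cdot \bv{Q} p(\bv{T})\bv{e}_1 = p(\bv{A})\bv{x}$; and (ii) the eigenvalues of $\bv{T}$ lie in $[\lmin(\bv{A}),\lmax(\bv{A})]$ by \eqref{exact_arith_eig_bound}, so that replacing $f$ by a good uniform approximation $p$ on that interval incurs only a small error when we form $f(\bv{T})$ versus $p(\bv{T})$.

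Granting (i) and (ii), let $p^\star$ be a polynomial of degree $<k$ achieving the minimum in the definition of $\delta_k$. By (i), $\|\bv{x}\|\cdot \bv{Q} p^\star(\bv{T})\bv{e}_1 = p^\star(\bv{A})\bv{x}$, so by the triangle inequality
\begin{align*}
\|f(\bv{A})\bv{x} - \bv{y}\|
\;\le\; \|f(\bv{A})\bv{x} - p^\star(\bv{A})\bv{x}\| \;+\; \|\bv{x}\|\cdot \|\bv{Q}(f(\bv{T}) - p^\star(\bv{T}))\bv{e}_1\|.
\end{align*}
The first term is at most $\delta_k\|\bv{x}\|$ because $\bv{A}$ is symmetric with eigenvalues in $[\lmin(\bv{A}),\lmax(\bv{A})]$, so $\|f(\bv{A})-p^\star(\bv{A})\| = \max_i |f(\lambda_i(\bv{A})) - p^\star(\lambda_i(\bv{A}))| \le \delta_k$. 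For the second term, since $\bv{T}$ is symmetric with eigenvalues in $[\lmin(\bv{A}),\lmax(\bv{A})]$ by \eqref{exact_arith_eig_bound}, we similarly have $\|f(\bv{T})-p^\star(\bv{T})\| \le \delta_k$; combined with $\|\bv{Q}\|\le 1$ and $\|\bv{e}_1\|=1$ from \eqref{spectral_norm_bound_on_q}, this gives a $\delta_k\|\bv{x}\|$ bound for the second term as well. Summing yields \eqref{eq:approx_function_app}.

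It remains to establish (i), which is the only real content of the proof. By linearity it suffices to show $\bv{A}^q \bv{Q}\bv{e}_1 = \bv{Q}\bv{T}^q \bv{e}_1$ for every $0 \le q \le k-1$. I would apply the telescoping identity \eqref{telescope} and substitute $\bv{A}\bv{Q} - \bv{Q}\bv{T} = \beta_{k+1} \bv{q}_{k+1}\bv{e}_k^T$ from \eqref{eq:exact_lanc_guarantee}, obtaining
\begin{align*}
(\bv{A}^q \bv{Q} - \bv{Q}\bv{T}^q)\bv{e}_1 \;=\; \beta_{k+1}\sum_{i=1}^q \bv{A}^{q-i}\bv{q}_{k+1}\,\bigl(\bv{e}_k^T \bv{T}^{i-1}\bv{e}_1\bigr).
\end{align*}
Because $\bv{T}$ is tridiagonal, $\bv{T}^{i-1}\bv{e}_1$ is supported on the first $i$ coordinates, so $\bv{e}_k^T \bv{T}^{i-1}\bv{e}_1 = 0$ whenever $i-1 < k-1$. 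For every $i$ in the range $1 \le i \le q \le k-1$, this condition holds, so the entire sum vanishes. Noting $\bv{Q}\bv{e}_1 = \bv{x}/\|\bv{x}\|$ and multiplying through by $\|\bv{x}\|$ then gives $\|\bv{x}\|\cdot \bv{Q} p(\bv{T})\bv{e}_1 = p(\bv{A})\bv{x}$ for any degree-$<k$ polynomial $p$, completing (i).

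The main subtlety to handle carefully is the cutoff: the identity only holds up to degree $k-1$, because once $q = k$ the term $\bv{e}_k^T \bv{T}^{k-1}\bv{e}_1$ is nonzero (the $(k,1)$ entry of $\bv{T}^{k-1}$ equals the product of subdiagonal entries $\beta_2 \cdots \beta_k$, which is generically positive). This is precisely the reason one must take $\deg p < k$ in the definition of $\delta_k$, and it is the only place where the tridiagonal structure of $\bv{T}$ enters the argument. Everything else is linearity and the norm inequality $\|\bv{Q}\bv{M}\bv{e}_1\| \le \|\bv{M}\|$ from column-orthonormality of $\bv{Q}$.
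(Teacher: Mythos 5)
Your proof is correct and follows essentially the same route as the paper: you first establish the exact polynomial-application identity $\|\bv{x}\|\bv{Q}p(\bv{T})\bv{e}_1 = p(\bv{A})\bv{x}$ via the telescoping sum \eqref{telescope}, \eqref{eq:exact_lanc_guarantee}, and the tridiagonal support argument (this is the paper's Lemma \ref{exact_lanczos}), then apply the triangle inequality with the optimal $p^\star$ and bound the two resulting pieces using the eigenvalue containment \eqref{exact_arith_eig_bound} for $\bv{T}$ and $\|\bv{Q}\|=1$; the paper's decomposition groups the $\pm$ terms slightly differently but the two are algebraically identical.
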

\endgroup

Theorem \ref{exact_lanczos_final_theorem} is proven from the following lemma, which says that the Lanczos algorithm run for $k$ iterations can \emph{exactly} apply any matrix polynomial with degree $< k$.

\begingroup
\makeatletter
\apptocmd{\thetheorem}{\unless\ifx\protect\@unexpandable@protect\protect \,\,-- Exact Arithmetic\fi}{}{}
\makeatother
\begin{lemma}[Exact Application of Polynomials]\label{exact_lanczos}
If $\bv{A}$, $\bv{Q}$, $\bv{T}$, $\beta_{k+1}$, and $\bv{q}_{k+1}$ satisfy  \eqref{eq:exact_lanc_guarantee} of Claim \ref{claim:exact_lanc_guarantee} (e.g. because they are computed with the Lanczos method), then for any polynomial $p$ with degree $<k$:
	\begin{align*}
		p(\bv{A})\bv{q}_1 = \bv{Q}p(\bv{T})\bv{e}_1.
	\end{align*}
\end{lemma}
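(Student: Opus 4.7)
The plan is to reduce the claim to the statement about monomials: if we can show that $\bv{A}^j\bv{q}_1 = \bv{Q}\bv{T}^j\bv{e}_1$ for every $j$ with $0 \le j < k$, then the lemma follows by linearity, since an arbitrary polynomial of degree less than $k$ is a linear combination of such monomials. Note also that $\bv{q}_1$ is by construction the first column of $\bv{Q}$, so $\bv{q}_1 = \bv{Q}\bv{e}_1$, which handles the base case $j=0$.

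For the inductive step, the natural tool is exactly the telescoping identity \eqref{telescope} that was highlighted just before the statement. Substituting the explicit expression $\bv{A}\bv{Q} - \bv{Q}\bv{T} = \beta_{k+1}\bv{q}_{k+1}\bv{e}_k^T$ from \eqref{eq:exact_lanc_guarantee} and then right-multiplying by $\bv{e}_1$ gives
\begin{align*}
\bv{A}^q\bv{q}_1 - \bv{Q}\bv{T}^q\bv{e}_1 \;=\; \beta_{k+1}\sum_{i=1}^{q} \bv{A}^{q-i}\bv{q}_{k+1}\,(\bv{e}_k^T \bv{T}^{i-1}\bv{e}_1).
\end{align*}
The goal is then to show that the scalar $\bv{e}_k^T\bv{T}^{i-1}\bv{e}_1 = (\bv{T}^{i-1})_{k,1}$ vanishes for every $i \le q$ when $q < k$.

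The key observation, and the only real content of the proof, is that since $\bv{T}$ is tridiagonal its $m$-th power $\bv{T}^m$ has bandwidth at most $m$: the entry $(\bv{T}^m)_{j,\ell}$ expands as a sum over length-$m$ walks on a path graph of $k$ nodes and is therefore zero whenever $|j-\ell| > m$. Applied to $j=k$, $\ell=1$, $m=i-1$, this says $(\bv{T}^{i-1})_{k,1} = 0$ whenever $k - 1 > i - 1$, i.e. whenever $i < k$. Since $i \le q < k$, every term in the sum above vanishes, and we conclude $\bv{A}^q\bv{q}_1 = \bv{Q}\bv{T}^q\bv{e}_1$ for all $q < k$, as required.

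There is no real obstacle here; the entire argument rests on the bandwidth-$m$ property of $\bv{T}^m$ for the tridiagonal matrix $\bv{T}$, which is the precise reason the Krylov subspace interpretation works. The only thing to be mildly careful about is the edge case where Algorithm \ref{alg:lanczos} terminates early with $\beta_{i+1} = 0$ for some $i \le k$; in that situation $\bv{Q}$ effectively has fewer columns but the same identity $\bv{A}\bv{Q} - \bv{Q}\bv{T} = \beta_{k+1}\bv{q}_{k+1}\bv{e}_k^T$ (with $\beta_{k+1}=0$) still holds, and the same induction goes through. For the main lemma statement this case is handled implicitly.
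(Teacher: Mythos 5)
Your proof is correct and follows essentially the same route as the paper: reduce to monomials by linearity, apply the telescoping identity \eqref{telescope} with the explicit form of $\bv{A}\bv{Q}-\bv{Q}\bv{T}$ from \eqref{eq:exact_lanc_guarantee}, and kill each term by noting that $\bv{e}_k^T\bv{T}^{i-1}\bv{e}_1 = 0$ for $i < k$ because $\bv{T}$ is tridiagonal. The paper phrases the last step as ``$\bv{T}^{i-1}\bv{e}_1$ has support only in its first $i$ entries'' while you phrase it as ``$\bv{T}^m$ has bandwidth $\le m$,'' but these are the same observation.
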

\endgroup
Recall that in Algorithm \ref{alg:lanczos}, we set $\bv{q}_1 = \bv{x}/\|\bv{x}\|$, so the above trivially gives $p(\bv{A}) \bv{x} = \norm{\bv{x}} \bv{Q} p(\bv{T}) \bv{e}_1$.
\begin{proof}
We show that for any integer $1 \le q < k$:
\begin{align}\label{exact_power_app}
		\bv{A}^q\bv{q}_1 = \bv{Q}\bv{T}^q\bv{e}_1.
\end{align}
The lemma then follows by linearity as any polynomial $p$ with degree $< k$ can be written as the sum of these monomial terms.
To prove \eqref{exact_power_app}, we appeal to the telescoping sum in \eqref{telescope}. Specifically, since $\bv{q}_1 = \bv{Q}\bv{e}_1$, \eqref{exact_power_app}  is equivalent to:
\begin{align}\label{exact_power_app_rearranged}
	\left(\bv{A}^q\bv{Q} -  \bv{Q}\bv{T}^q\right) \bv{e}_1 = \bv{0}.
\end{align}
For $q \geq 1$, \eqref{telescope} let's us write:
\begin{align*}
	\left(\bv{A}^q\bv{Q} -  \bv{Q}\bv{T}^q\right) \bv{e}_1 &= \left(\sum_{i=1}^q\bv{A}^{q-i}\left(\bv{A}\bv{Q} -  \bv{Q}\bv{T}\right)\bv{T}^{i-1}\right)\bv{e}_1.
\end{align*}
Substituting in \eqref{eq:exact_lanc_guarantee}: 
\begin{align} \label{simplified_telescope_sum}
	\left(\bv{A}^q\bv{Q} - \bv{Q}\bv{T}^q\right) \bv{e}_1 =
	\beta_{k+1}\sum_{i=1}^q\bv{A}^{q-i}\bv{q}_{k+1}\bv{e}_k^T\bv{T}^{i-1}\bv{e}_1.
\end{align}
Since $\bv{T}$ is tridiagonal, $\bv{T}^{i-1}\bv{e}_1$ is zero everywhere besides its first $i$ entries. So, as long as $q < k$, $\bv{e}_k^T\bv{T}^{i-1}\bv{e}_1 = 0$ for all $i \leq q$. Accordingly, \eqref{simplified_telescope_sum} evaluates to $\bv{0}$, proving \eqref{exact_power_app_rearranged} and Lemma \ref{exact_lanczos}.
\end{proof}

With Lemma \ref{exact_lanczos} in place, Theorem \ref{exact_lanczos_final_theorem} intuitively follows because Lanczos always applies the ``low degree polynomial part'' of $f(\bv{A})$. The proof is a simple application of triangle inequality.

\begin{proof}[Proof of Theorem \ref{exact_lanczos_final_theorem}]

\begin{align}
\label{taking_norm_out}
\|f(\bv{A})\bv{x} - \bv{y}\| = \|f(\bv{A})\bv{q}_1 - \bv{Q}f(\bv{T})\bv{e}_1\| \cdot  \|\bv{x}\|
\end{align}
For any polynomial $p$, we can write:
\begin{align}
\|f(\bv{A})\bv{q}_1 - \bv{Q}f(\bv{T})\bv{e}_1\|  &\leq \|p(\bv{A})\bv{q}_1 - \bv{Q}p(\bv{T})\bv{e}_1\| + \|\left[f(\bv{A}) - p(\bv{A})\right]\bv{q}_1 - \bv{Q}\left[f(\bv{T}) - p(\bv{T})\right]\bv{e}_1\| \nonumber \\
&\leq 0 + \|\left[f(\bv{A}) - p(\bv{A})\right]\bv{q}_1\| + \|\bv{Q}\left[f(\bv{T}) - p(\bv{T})\right]\bv{e}_1\| \nonumber\\
\label{main_tri_split}
&\leq \|f(\bv{A}) - p(\bv{A})\| + \|\bv{Q}\|\|f(\bv{T}) - p(\bv{T})\|.
\end{align} 
In the second step we use triangle inequality, in the third we use Lemma \ref{exact_lanczos} and triangle inequality, and in the fourth we use submultiplicativity of the spectral norm and the fact that $\norm{\bv{q}_1} = \norm{\bv{e}_1} = 1$. 

$f(\bv{A}) - p(\bv{A})$ is symmetric and has an eigenvalue equal to $f(\lambda) - p(\lambda)$ for each eigenvalue $\lambda$ of $\bv{A}$. Accordingly:
\begin{align*}
\|f(\bv{A}) - p(\bv{A})\| \leq \max_{x \in [\lmin(\bv{A}),\lmax(\bv{A})]} |f(x) - p(x)|.
\end{align*}
Additionally, by \eqref{exact_arith_eig_bound} of Claim \ref{claim:exact_lanc_guarantee}, for any eigenvalue $\lambda(\bv{T})$ of $\bv{T}$, $\lmin(\bv{A}) \leq \lambda(\bv{T})  \leq \lmax(\bv{A})$ so:
\begin{align*}
\|f(\bv{T}) - p(\bv{T})\| \leq \max_{x \in [\lmin(\bv{A}),\lmax(\bv{A})]} |f(x) - p(x)|.
\end{align*}
Plugging both bounds into \eqref{main_tri_split}, along with the fact that $\|\bv{Q}\| = 1$ and that these statements hold for \emph{any} polynomial with degree $<k$ gives $\norm{f(\bv{A})\bv{x} - \bv{y}} \le 2 \delta_k \norm{\bv{x}}$ after rescaling via \eqref{taking_norm_out}.
\end{proof}

As discussed in the introduction, Theorem \ref{exact_lanczos_final_theorem} can be tightened in certain special cases, including when $\bv{A}$ is positive definite and $f(\bv{A}) = \bv{A}^{-1}$. We defer consideration of this point to Section \ref{sec:lower}.

\section{Finite precision preliminaries}
\label{sec:fpprelim}

Our goal is to understand how Theorem \ref{exact_lanczos_final_theorem} and related bounds translate from exact arithmetic to finite precision. 
In particular, our results apply to machines that employ floating-point arithmetic. We use $\mach$ to denote the relative precision of the floating-point system. 
An algorithm is generally considered ``stable" if it runs accurately when $1/\mach$ is bounded by some polynomial in the input parameters, i.e., when the number of bits required is logarithmic in these parameters.

We say a machine has precision $\mach$ if it can perform computations to relative error $\mach$, which necessarily requires that it can represent numbers to relative precision $\mach$ -- i.e., it has  $\geq \log_2 (1/\mach)$ bits in its floating point significand. To be precise, we require:

\begin{requirement}[Accuracy of floating-point arithmetic]\label{req1}
Let $\circ$ denote any of the four basic arithmetic operations ($+$, $-$, $\times$, $\div$) and let  $\fl(x \circ y)$ denote the result of computing $x \circ y$. Then a machine with precision  $\mach$ must be able to compute:
\begin{align*}
\fl(x \circ y) &= (1+\delta) (x \circ y) & &\text{where} & |\delta| &\le \mach
\end{align*}
and
\begin{align*}
\fl(\sqrt{x}) &= (1+\delta) \sqrt{x} & &\text{where} & |\delta| &\le \mach.
\end{align*}
\end{requirement}
Requirement \ref{req1} is satisfied by any computer implementing the IEEE 754 standard for floating-point arithmetic \cite{IEEE} with $\geq \log_2 (1/\mach)$ bits of precision, as long as  operations do not overflow or underflow\footnote{Underflow is only a concern for $\times$ and $\div$ operations. On any computer implementing gradual underflow and a guard bit, Requirement \ref{req1} always holds for $+$ and $-$, even when underflow occurs. $\sqrt{x}$ cannot underflow or overflow.}. Underflow or overflow occur when $(1+\delta) (x \circ y)$ cannot be represented in finite precision for any $\delta$ with $|\delta| \le \mach$, either because $x \circ y$ is so large that it exceeds the maximum expressible number on the computer or because it is so small that expressing the number to relative precision would require a negative exponent that is larger in magnitude than that supported by the computer.
As is typical in stability analysis, we will ignore the possibility of overflow and underflow because doing so significantly simplifies the presentation of our results \cite{higham2002accuracy}. 

However, because the version of Lanczos studied normalizes vectors at each iteration, it is not hard to check that our proofs, and the results of Paige, and Gu and Eisenstat that we rely on, go through with overflow and underflow accounted for. To be more precise, overflow does not occur as long as all numbers in the input (and their squares) are at least a $\poly(k,n,C)$ factor smaller than the maximum expressible number (recall that in Theorem \ref{mainLemmaFullRuntime}, $C$ is an upper bound on $|f(x)|$ over our eigenvalue range). That is, overflow is avoided if we assume the exponent in our floating-point system has $\Omega(\log\log(kn\cdot\max(C,1)))$ bits overall and $\Omega(1)$ bits more than what is needed to express the input. This ensures, for example, that the computation of $\norm{\bv{x}}$ does not overflow and that the multiplication $\bv{A}\bv{w}$ does not overflow for any unit norm $\bv{w}$. 

To account of underflow, Requirement \ref{req1} can be modified by including additive error $\gmach$ for $\times$ and $\div$ operations, where $\gmach$ denotes the smallest expressible positive number on our floating-point machine. The additive error carries through all calculations, but will be swamped by multiplicative error as long as we assume that $\norm{\bv{A}}$, $\norm{\bv{x}}$, $\mach$, and our function upper bound $C$ are larger than $\gmach$ by a $\poly(k,n, 1/\mach)$ factor. This ensures, e.g., that $\bv{x}$ can be normalized stably and, as we will discuss, allows for accurate multiplication of the input matrix $\bv{A}$ any vector.

In addition to Requirement \ref{req1}, we also require the following of matrix-vector multiplications involving our input matrix $\bv{A}$:
\begin{requirement}[Accuracy of matrix multiplication]\label{req2}
Let $\fl(\bv{A}\bv{w})$ denote the result of computing $\bv{A}\bv{w}$ on our floating-point computer. Then a computer with precision  $\mach$ must be able to compute, for any $\bv{w}\in \R^{n}$,
\begin{align*}
\|\fl(\bv{A}\bv{w}) - \bv{A}\bv{w}\| \leq 2n^{3/2}\|\bv{A}\|\|\bv{w}\| \mach.
\end{align*}
\end{requirement}
If $\bv{A}\bv{w}$ is computed explicitly, as long as $n\mach \leq \frac{1}{2}$ (which holds for all of our results), any computer satisfying Requirement \ref{req1} also satisfies Requirement \ref{req2} \cite{Wilkinson,higham2002accuracy}. We list Requirement \ref{req2} separately to allow our analysis to apply in situations where $\bv{A}\bv{w}$ is computed approximately for reasons other than rounding error. For example, in many applications where $\bv{A}$ cannot be accessed explicitly, $\bv{A}\bv{w}$ is approximated with an iterative method \cite{frostig2016principal,matrixExp}. As long as this computation is performed to the precision specified in Requirement \ref{req2}, then our analysis holds.

As mentioned, when $\bv{A}\bv{w}$ is computed explicitly, underflow could occur during intermediate steps on a finite precision computer. This will add an error term of $2n^{3/2}\gmach$ to $\|\fl(\bv{A}\bv{w}) - \bv{A}\bv{w}\|$. However, under our assumption that $\mach \|\bv{A}\| \gg \gmach$, this term is subsumed  by the $2n^{3/2}\|\bv{A}\|\|\bv{w}\| \mach$ term whenever $\|\bv{w}\|$ is not tiny (in Algorithm \ref{alg:lanczos}, $\|\bv{w}\|$ is always very close to 1).

Finally, we mention that, in our proofs, we typically show that operations incur error $\mach \cdot F$ for some value $F$ that depends on problem parameters. Ultimately, to obtain error $0 < \epsilon \leq 1$ we then require that $\mach \leq \epsilon/F$. Accordingly, during the course of a proof we will often assume that $\epsilon\cdot F \leq 1$. Additionally, all runtime bounds are for the unit-cost RAM model: we assume that computing $\fl(x \circ y)$ and $\fl(\sqrt{x})$ require $O(1)$ time. For simplicity, we also assume that the scalar function $f$ we are interested in applying to $\bv{A}$ 
 can be computed to relative error $\mach$ in $O(1)$ time.
%
\section{Lanczos in finite precision}
\label{lanczos_in_finite_precision}
The most notable issue with the Lanczos algorithm in finite precision is 
that $\bv{Q}$'s column vectors lose the mutual orthogonality property of \eqref{spectral_norm_bound_on_q}. In practice, this loss of orthogonality is quite severe: $\bv{Q}$ will often have numerical rank $\ll k$.  Naturally, $\bv{Q}$'s column vectors will thus also fail to span the Krylov subspace $\mathcal{K}_k = [\bv{q}_1, \bv{A}\bv{q}_1, \ldots, \bv{A}^{k-1}\bv{q}_1]$, and so we do not expect to be able to accurately apply all degree $< k$ polynomials.  Surprisingly, this does not turn out to be much of a problem!



\subsection{Starting point: Paige's results}
In particular, a seminal result of Paige shows that while \eqref{spectral_norm_bound_on_q}  falls apart under finite precision calculations, \eqref{eq:exact_lanc_guarantee} of Claim \ref{claim:exact_lanc_guarantee} still holds, up to small error. In particular, in \cite{paige1976error} he proves 
that:
\begin{theorem}[Lanczos Output in Finite Precision,
 \cite{paige1976error}]
\label{thm:paige_main}
Run for $k$ iterations
 on a computer satisfying Requirements \ref{req1} and \ref{req2} with relative precision $\mach$, the Lanczos algorithm (Algorithm \ref{alg:lanczos})  computes $\bv{Q} \in \R^{n\times k}$, an additional column vector $\bv{q}_{k+1} \in \R^n$, 
a scalar $\beta_{k}$, and a symmetric tridiagonal matrix $\bv{T} \in \R^{k\times k}$ such that:
\begin{align}
\label{eq:finite_lanc_guarantee}
\bv{A}\bv{Q} = \bv{Q}\bv{T} + \beta_{k+1}\bv{q}_{k+1}\bv{e}_k^T + \bv{E},
\end{align}
and
\begin{align}
\label{finite_spectral_norm_bound_on_q}
\|\bv{E}\| &\leq k(2n^{3/2}+7)\norm{\bv{A}}\epsilon_{mach}, \\
\left |\|\bv{q}_i\| -1 \right |&\leq (n + 4)\mach \text{\hspace{.5em}  for all $i$}.
\end{align}
In \cite{paige1980error} (see equation 3.28), it is shown that together, the above bounds also imply:
\begin{align}
\label{old_corr}
\lmin(\bv{A}) - \epsilon_1 \leq \lambda(\bv{T}) \leq \lmax(\bv{A}) + \epsilon_1
\end{align}
where $\epsilon_1  = k^{5/2}\|\bv{A}\|\left(68 +17n^{3/2}\right)\epsilon_{mach}$.
\end{theorem}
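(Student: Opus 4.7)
}

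The plan is to perform a careful line-by-line roundoff analysis of Algorithm \ref{alg:lanczos}, tracking the error introduced at each step of the three-term recurrence. The goal is to show that although orthogonality among the $\bv{q}_i$ is destroyed by rounding, the local recurrence relation is still preserved to tiny additive error. Since the columns of $\bv{E}$ in \eqref{eq:finite_lanc_guarantee} are exactly the local residuals of the recurrence, it suffices to bound each such residual and then sum columnwise.

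First I would write, for each iteration $i$, the computed vectors as $\wh{\bv{q}}_{i+1}, \wh{\alpha}_i, \wh{\beta}_{i+1}$ and expand each floating point operation using Requirement \ref{req1}. The step $\bv{q}_{i+1} \gets \bv{A}\bv{q}_i - \beta_i \bv{q}_{i-1}$ introduces error bounded by $2n^{3/2}\|\bv{A}\|\mach$ (by Requirement \ref{req2}) plus an $O(n\mach \|\bv{A}\|)$ term from the subtraction, since all $\|\bv{q}_i\|$ remain close to 1 inductively and all $|\beta_i|, |\alpha_i|$ are bounded in terms of $\|\bv{A}\|$. The inner product defining $\wh{\alpha}_i$ is a sum of $n$ products, so it is computed to error $O(n \mach \|\bv{A}\|)$, and the subsequent subtraction and normalization each introduce $O(\mach \|\bv{A}\|)$ further error. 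Stitching these together yields, for each column,
\begin{align*}
\bv{A}\wh{\bv{q}}_i \;=\; \wh{\beta}_{i+1}\wh{\bv{q}}_{i+1} + \wh{\alpha}_i \wh{\bv{q}}_i + \wh{\beta}_i \wh{\bv{q}}_{i-1} + \bv{f}_i,
\qquad \|\bv{f}_i\|\le (2n^{3/2}+7)\|\bv{A}\|\mach.
\end{align*}
Collecting the $\bv{f}_i$ as the columns of $\bv{E}$ gives the matrix identity \eqref{eq:finite_lanc_guarantee} with $\|\bv{E}\| \le \sum_{i=1}^k \|\bv{f}_i\| \le k(2n^{3/2}+7)\|\bv{A}\|\mach$ via a simple triangle-inequality bound on the spectral norm of $\bv{E}$ in terms of its columns.

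For the near-unit-norm bound on $\bv{q}_i$, I would analyze the normalization $\bv{q}_{i+1} \gets \bv{q}_{i+1}/\wh{\beta}_{i+1}$: since $\wh{\beta}_{i+1} = \fl(\|\bv{q}_{i+1}\|)$ is computed by taking a square root of a sum of $n$ squares, its relative error is at most $(n+3)\mach$ by standard summation bounds together with the $\sqrt{\cdot}$ clause of Requirement \ref{req1}; the division adds one more $\mach$, and $|\|\wh{\bv{q}}_i\|-1|\le (n+4)\mach$ follows. The eigenvalue bound \eqref{old_corr} follows by perturbation theory: for any unit eigenpair $(\lambda,\bv{v})$ of $\bv{T}$, left-multiply \eqref{eq:finite_lanc_guarantee} by $\bv{v}$ to get $\bv{A}(\bv{Q}\bv{v}) = \lambda(\bv{Q}\bv{v}) + \beta_{k+1}(\bv{e}_k^T \bv{v})\bv{q}_{k+1} + \bv{E}\bv{v}$. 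Then the Rayleigh quotient $(\bv{Q}\bv{v})^T \bv{A}(\bv{Q}\bv{v})/\|\bv{Q}\bv{v}\|^2$ equals $\lambda$ up to a perturbation controlled by $\|\bv{E}\|$ and by how far $\bv{Q}^T\bv{Q}$ deviates from $\bv{I}$. The latter deviation is controlled, off-diagonal entries by Paige's refinement analyzing loss of local orthogonality $\langle \bv{q}_i,\bv{q}_{i+1}\rangle$ (which contributes the extra $k^{5/2}$-type factor in $\epsilon_1$), and diagonal entries by the near-unit-norm bound.

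The main obstacle is the off-diagonal local orthogonality estimate needed for \eqref{old_corr}. The three-term recurrence guarantees $\langle \bv{q}_i, \bv{q}_{i+1}\rangle$ is small, but making the bound uniform across $i$ and aggregating it into a spectral-norm statement on $\bv{I} - \bv{Q}^T\bv{Q}$ (strong enough to translate a Rayleigh quotient bound on $\bv{Q}\bv{v}$ into an eigenvalue containment for $\bv{T}$) requires Paige's careful induction; this is where the nontrivial $k^{5/2}$ factor enters. Everything else is routine rounding bookkeeping using Requirements \ref{req1} and \ref{req2}.
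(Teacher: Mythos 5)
This theorem is \emph{not} proven in the paper; it is imported wholesale from Paige's work, with \eqref{eq:finite_lanc_guarantee} and the residual/norm bounds coming from \cite{paige1976error} and the eigenvalue containment \eqref{old_corr} coming from equation~3.28 of \cite{paige1980error}. The paper simply restates and cites these results, so strictly there is no ``paper's own proof'' to compare against. What can be said is how faithfully your sketch tracks what Paige actually does.

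Your first two paragraphs --- the column-wise rounding analysis yielding a local three-term residual $\bv{f}_i$ with $\|\bv{f}_i\| \le (2n^{3/2}+7)\|\bv{A}\|\mach$, the assembly of these into $\bv{E}$, and the square-root/division bookkeeping for $\bigl|\|\bv{q}_i\|-1\bigr|\le(n+4)\mach$ --- are the correct approach and mirror the structure of Paige's argument. (Minor remark: bounding $\|\bv{E}\|$ by $\sum_i\|\bv{f}_i\|$ is valid but loose; the Frobenius route gives $\sqrt{k}\max_i\|\bv{f}_i\|$, though the stated theorem only claims the $k$ factor, so your route still delivers it.)

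The gap is in the third paragraph, in the argument for \eqref{old_corr}. You write the Rayleigh-quotient identity and then say the perturbation ``is controlled by $\|\bv{E}\|$ and by how far $\bv{Q}^T\bv{Q}$ deviates from $\bv{I}$,'' and that the off-diagonal deviation of $\bv{Q}^T\bv{Q}$ is ``controlled \ldots by Paige's refinement analyzing loss of local orthogonality $\langle\bv{q}_i,\bv{q}_{i+1}\rangle$.'' This is where the proposed route fails. Local orthogonality of \emph{adjacent} Lanczos vectors is preserved to $O(\mach)$, but Paige's central phenomenon is precisely that \emph{global} orthogonality of non-adjacent columns collapses completely: $\bv{Q}^T\bv{Q}$ can be arbitrarily far from $\bv{I}$ and $\bv{Q}$ can have numerical rank $\ll k$, as the paper itself emphasizes in Section~\ref{lanczos_in_finite_precision}. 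So you cannot bound $\bv{I}-\bv{Q}^T\bv{Q}$ (even in the $O(\poly(k)\mach)$ range needed for a Rayleigh-quotient perturbation bound) from local orthogonality --- the quantity $\|\bv{Q}\bv{v}\|$ in your denominator is not controllable that way, nor is the term $(\bv{Q}\bv{v})^T\bv{q}_{k+1}$. Paige's actual derivation of \eqref{old_corr} does not pass through a statement that $\bv{Q}^T\bv{Q}\approx\bv{I}$; it is a considerably more delicate argument that exploits the structure of how orthogonality is lost (essentially only in the directions of converged Ritz vectors) rather than a direct near-orthonormality estimate. As stated, your sketch of \eqref{old_corr} would not go through, and the $k^{5/2}$ factor in $\epsilon_1$ does not arise for the reason you indicate.
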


Paige was interested in using Theorem \ref{thm:paige_main} to understand how $\bv{T}$ and $\bv{Q}$ can be used to compute approximate eigenvectors and values  for $\bv{A}$. His bounds are quite strong: for example, \eqref{old_corr} shows that 
\eqref{exact_arith_eig_bound} still holds up to tiny additive error, even though establishing that result for exact arithmetic relied heavily on the orthogonality of $\bv{Q}$'s columns. 


\subsection{Finite precision lanczos for applying polynomials}

Theorem \ref{thm:paige_main} allows us to give a finite precision analog of Lemma \ref{exact_lanczos} for polynomials with magnitude $|p(x)|$ bounded on a small extension of the 
eigenvalue range $[\lmin(\bv{A}),\lmax(\bv{A})]$. 
\begin{lemma}[Lanczos Applies Bounded Polynomials]\label{poly_with_error}  Suppose $\bv{Q} \in \R^{n\times k}$ and $\bv{T} \in \R^{k\times k}$ are computed by the Lanczos algorithm
 on a computer satisfying Requirements \ref{req1} and \ref{req2} with relative precision $\mach$, and thus these matrices satisfy the bounds of Theorem \ref{thm:paige_main}. For any $\eta \ge  85 n^{3/2}k^{5/2} \norm{\bv{A}}\mach$, if $p$ is a polynomial with degree $<k$ and $|p(x)| \leq C$ for all $x\in \left[\lmin(\bv{A})-\eta, \lmax(\bv{A}) + \eta \right]$ then:
\begin{align}
\label{smooth_poly_error_bound}
\|\bv{Q}p(\bv{T})\bv{e}_1 - p(\bv{A})\bv{q}_1\| 
 \leq  \frac{4Ck^3 \norm{\bv{E}}}{\lmax(\bv{A}) - \lmin(\bv{A})  + 2\eta}\end{align}
where $\bv{E}$ is the error matrix defined in Theorem \ref{thm:paige_main}.
\end{lemma}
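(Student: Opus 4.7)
Following the sketch in the introduction, I would reduce the problem to the case of Chebyshev polynomials of the first kind (for which stability follows from the well-known stability of the three-term Chebyshev recurrence) and then upgrade to an arbitrary bounded $p$ via a Chebyshev expansion. Set $c = (\lmax(\bv{A})+\lmin(\bv{A}))/2$ and $s = (\lmax(\bv{A})-\lmin(\bv{A}))/2 + \eta$, so that the map $x\mapsto (x-c)/s$ sends $[\lmin(\bv{A})-\eta,\lmax(\bv{A})+\eta]$ onto $[-1,1]$. Writing $\bv{A}_s = (\bv{A}-c\bv{I})/s$ and $\bv{T}_s = (\bv{T}-c\bv{I})/s$, the eigenvalues of $\bv{A}_s$ lie in $[-1,1]$ by construction, and the hypothesis $\eta \ge 85 n^{3/2}k^{5/2}\norm{\bv{A}}\mach \ge \epsilon_1$ together with \eqref{old_corr} does the same for $\bv{T}_s$. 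Subtracting $c\bv{Q}$ from both sides of \eqref{eq:finite_lanc_guarantee} and dividing by $s$ yields the perturbed commutation identity $\bv{A}_s\bv{Q} - \bv{Q}\bv{T}_s = (\beta_{k+1}\bv{q}_{k+1}\bv{e}_k^T + \bv{E})/s$.

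For $j=0,1,\ldots,k-1$, let $\bv{r}_j = T_j(\bv{A}_s)\bv{q}_1 - \bv{Q}T_j(\bv{T}_s)\bv{e}_1$, where $T_j$ is the $j$-th Chebyshev polynomial of the first kind. Applying the scalar recurrence $T_{j+1}(x)=2xT_j(x)-T_{j-1}(x)$ to both terms and substituting the perturbed commutation identity yields
\begin{equation*}
\bv{r}_{j+1} \;=\; 2\bv{A}_s\bv{r}_j - \bv{r}_{j-1} + \bv{g}_j, \qquad \bv{g}_j \;=\; \tfrac{2}{s}\bigl(\beta_{k+1}\bv{q}_{k+1}\bv{e}_k^T + \bv{E}\bigr)T_j(\bv{T}_s)\bv{e}_1,
\end{equation*}
with $\bv{r}_0=\bv{0}$ and $\bv{r}_1=\bv{E}\bv{e}_1/s$. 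Crucially---exactly as in Lemma \ref{exact_lanczos}---the vector $T_j(\bv{T}_s)\bv{e}_1$ is supported on the first $j+1$ coordinates because $\bv{T}_s$ is tridiagonal, so $\bv{e}_k^T T_j(\bv{T}_s)\bv{e}_1 = 0$ whenever $j\le k-2$. Since only $\bv{g}_0,\ldots,\bv{g}_{k-2}$ feed into the recurrence up through $\bv{r}_{k-1}$, the $\beta_{k+1}\bv{q}_{k+1}\bv{e}_k^T$ contribution vanishes entirely from the forcing, leaving $\|\bv{g}_j\|\le 2\|\bv{E}\|/s$ via $\|T_j(\bv{T}_s)\bv{e}_1\|\le 1$. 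Solving this forced Chebyshev recurrence in closed form via the Chebyshev polynomials of the second kind $U_m$ (which obey the same recurrence with $U_0=1$, $U_1=2x$) produces
\begin{equation*}
\bv{r}_j \;=\; U_{j-1}(\bv{A}_s)\,\bv{E}\bv{e}_1/s \;+\; \sum_{i=1}^{j-1} U_{j-1-i}(\bv{A}_s)\,\bv{g}_i,
\end{equation*}
and the classical bound $|U_m(y)|\le m+1$ on $[-1,1]$ lifts to $\|U_m(\bv{A}_s)\|\le m+1$, yielding $\|\bv{r}_j\|\le j^2\|\bv{E}\|/s$.

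To finish, I would expand $p(sy+c) = \sum_{j=0}^{k-1} a_j T_j(y)$ in the shifted Chebyshev basis. Orthogonality of $\{T_j\}$ against the weight $(1-y^2)^{-1/2}$, together with $|p|\le C$ on the extended range, gives $|a_j|\le 2C$, so linearity and the triangle inequality produce
\begin{equation*}
\|\bv{Q}p(\bv{T})\bv{e}_1 - p(\bv{A})\bv{q}_1\| \;\le\; \sum_{j=0}^{k-1}|a_j|\,\|\bv{r}_j\| \;\le\; \tfrac{2C\|\bv{E}\|}{s}\sum_{j=1}^{k-1} j^2 \;\le\; \tfrac{2Ck^3\|\bv{E}\|}{\eta},
\end{equation*}
after using $s\ge\eta$. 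The main obstacle is the tight control of $\|T_j(\bv{T}_s)\bv{e}_1\|$ and $\|U_m(\bv{A}_s)\|$: both grow exponentially in the degree the moment the argument strays outside $[-1,1]$, so the $\eta$-padding must be chosen just large enough to absorb the $O(n^{3/2}k^{5/2}\norm{\bv{A}}\mach)$ slack in Paige's eigenvalue inclusion bound \eqref{old_corr}. A secondary subtlety is the bookkeeping that lets the $\beta_{k+1}\bv{q}_{k+1}\bv{e}_k^T$ forcing be killed by tridiagonality at precisely the right degree, which is what forces the hypothesis $\deg p < k$.
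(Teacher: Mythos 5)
Your proof is correct and follows essentially the same route as the paper: you apply the Chebyshev three-term recurrence to the error vectors, kill the $\beta_{k+1}\bv{q}_{k+1}\bv{e}_k^T$ forcing via tridiagonality of $\bv{T}$, solve the forced recurrence in closed form using Chebyshev polynomials of the second kind, bound via $|U_m|\le m+1$ and $|T_j|\le 1$ on $[-1,1]$, and then lift to general bounded $p$ through a Chebyshev expansion with coefficients $|a_j|\le 2C$. Your affine normalization $(c,s)$ is the same map as the paper's $(\delta,\rmin)$, written slightly differently, and your constants come out a shade tighter only because you retain the $(j-i)$ factors rather than upper-bounding them by $j$.
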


\subsubsection*{Finite precision Lanczos applies Chebyshev polynomials}

It is not immediately clear how to modify the proof of Lemma \ref{exact_lanczos} to handle the error $\bv{E}$ in \eqref{eq:finite_lanc_guarantee}. Intuitively, any bounded polynomial cannot have too large a derivative by the Markov brothers' inequality \cite{markovBrothers}, and so we expect $\bv{E}$  to have a limited effect. However, we are not aware of a way to make this reasoning formal for matrix polynomials and arbitrary error matrix $\bv{E}$. 

As illustrated in \cite{matrixExp}, there is a natural way to prove \eqref{smooth_poly_error_bound} for the monomials $\bv{A}, \bv{A}^2, \ldots, \bv{A}^{k-1}$. The bound can then be extended to all polynomials via triangle inequality, but error is amplified by the coefficients of each monomial component in $p(\bv{A})$. Unfortunately, there are polynomials that are uniformly bounded by $C$ (and thus have bounded derivative) even though their monomial components can have coefficients much larger than $C$. The ultimate effect is that the approach taken in \cite{matrixExp} would incur an exponential dependence on $k$ on the right hand side of \eqref{smooth_poly_error_bound}.
 
To obtain our stronger polynomial dependence, we proceed with a different two-part analysis. We first show that \eqref{smooth_poly_error_bound} holds for any \emph{Chebyshev polynomial} with degree $< k$ that is appropriately stretched and shifted to the range $\left[\lmin(\bv{A})-\eta, \lmax(\bv{A}) + \eta \right]$. Chebyshev polynomials have magnitude much smaller than that of their monomial components, but because they can be formed via a well-behaved recurrence, we can show that they are stable to the perturbation $\bv{E}$. We can then obtain the general result of Lemma \ref{poly_with_error} because \emph{any} bounded polynomial can be written as a weighted sum of such Chebyshev polynomials, with bounded weights.

Let $T_0,T_1, \ldots, T_{k-1}$ be the first $k$ Chebyshev polynomials of the first kind, defined recursively:
\begin{align}\label{recurrence}
	T_0(x) &= 1, \nonumber\\
	T_1(x) &= x, \nonumber\\
	T_i(x) &= 2xT_{i-1}(x) - T_{i-2}(x).
\end{align}
The roots of the Chebyshev polynomials lie in $[-1,1]$ and this is precisely the range where they remain ``well behaved'': for $|x| > 1$, $T_i(x)$ begins to grow quite quickly. 
Define 
\begin{align*}
\rmax &\eqdef \lmax + \eta & &\text{and} & \rmin &\eqdef \lmin -\eta
\end{align*}
and
%
\begin{align}\label{tprime}
	\delta &= \frac{2}{\rmax - \rmin} &&\text{and} 	& \ol{T}_i(x) &= T_i\left(\delta(x-\rmin) - 1\right).
\end{align}
$\ol{T}_i(x)$ is the $i^{th}$ Chebyshev polynomial stretched and shifted so that $\ol{T}_i(\rmin) = T_i(-1) $ and $\ol{T}_i(\rmax) = T_i(1)$. We prove the following:


\begin{lemma}[Lanczos Applies Chebyshev Polynomials Stably]\label{cheby_poly_with_error} Suppose $\bv{Q} \in \R^{n\times k}$ and $\bv{T} \in \R^{k\times k}$ are computed by the Lanczos algorithm
 on a computer satisfying Requirements \ref{req1} and \ref{req2} with relative precision $\mach$ and thus these matrices satisfy the bounds of Theorem \ref{thm:paige_main}. For any $\eta \ge 85 n^{3/2}k^{5/2}\norm{\bv{A}} \mach$, for all $i < k$,
\begin{align}
\label{cheby_poly_error_bound}
\|\bv{Q}\ol{T}_i(\bv{T})\bv{e}_1 - \ol{T}_i(\bv{A})\bv{q}_1\| 
\leq \frac{4i^2\cdot\|\bv{E}\|}{\lmax(\bv{A}) - \lmin(\bv{A})  + 2\eta}
\end{align}
where $\bv{E}$ is the error matrix in Theorem \ref{thm:paige_main} and $\ol{T}_i$ is the $i^{th}$ shifted Chebyshev polynomial of \eqref{tprime}.
\end{lemma}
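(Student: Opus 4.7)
The plan is to track the deviation $\bv{w}_i := \bv{Q}\ol{T}_i(\bv{T})\bv{e}_1 - \ol{T}_i(\bv{A})\bv{q}_1$ through the Chebyshev three-term recurrence and to exploit the stability of that recurrence on $[-1,1]$ to avoid exponential error accumulation. To set up the recurrence cleanly, I would introduce the shifted operators $\bv{M}_A := \delta(\bv{A} - \rmin\bv{I}) - \bv{I}$ and $\bv{M}_T := \delta(\bv{T} - \rmin\bv{I}) - \bv{I}$, so that $\ol{T}_i(\bv{A}) = T_i(\bv{M}_A)$ and $\ol{T}_i(\bv{T}) = T_i(\bv{M}_T)$. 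Paige's perturbed invariant \eqref{eq:finite_lanc_guarantee} then gives $\bv{Q}\bv{M}_T = \bv{M}_A\bv{Q} - \delta\beta_{k+1}\bv{q}_{k+1}\bv{e}_k^T - \delta\bv{E}$. Substituting this into the recurrence $T_i(x) = 2xT_{i-1}(x) - T_{i-2}(x)$, together with the observation that $T_{i-1}(\bv{M}_T)\bv{e}_1$ is supported on the first $i$ coordinates because $\bv{M}_T$ is tridiagonal, makes the $\beta_{k+1}$ term drop out whenever $i < k$ --- exactly as in the exact-arithmetic proof of Lemma \ref{exact_lanczos}. This yields the clean perturbed recurrence
\begin{align*}
\bv{w}_i = 2\bv{M}_A\bv{w}_{i-1} - \bv{w}_{i-2} + \bv{g}_i, \qquad \bv{g}_i := -2\delta\,\bv{E}\,T_{i-1}(\bv{M}_T)\bv{e}_1,
\end{align*}
for $2 \le i < k$, with $\bv{w}_0 = \bv{0}$ and $\bv{w}_1 = -\delta\bv{E}\bv{e}_1$.

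The main obstacle is that a naive triangle-inequality iteration of this recurrence yields error growth like $3^i\|\bv{E}\|$, which is useless. The crucial observation is that the fundamental solutions of the homogeneous Chebyshev recurrence are the Chebyshev polynomials of the second kind $U_n$, which satisfy the same three-term recurrence and obey the uniform bound $|U_n(x)| \le n+1$ on $[-1,1]$. Using discrete variation of parameters I would write the closed form
\begin{align*}
\bv{w}_i \;=\; U_{i-1}(\bv{M}_A)\bv{w}_1 + \sum_{j=2}^{i} U_{i-j}(\bv{M}_A)\,\bv{g}_j,
\end{align*}
which replaces the apparent exponential in $i$ with a polynomial one --- the same mechanism underlying the well-known stability of the Clenshaw algorithm, and precisely the point alluded to in the paper's high-level discussion.

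Finally, to turn this identity into the stated operator-norm bound, I would verify that the spectra of the symmetric matrices $\bv{M}_A$ and $\bv{M}_T$ both lie in $[-1,1]$. For $\bv{M}_A$ this is immediate from the definitions of $\rmin, \rmax, \delta$; for $\bv{M}_T$ it follows from Paige's eigenvalue containment \eqref{old_corr} together with the hypothesis $\eta \ge 85 n^{3/2}k^{5/2}\|\bv{A}\|\mach \ge \epsilon_1$. These spectral containments yield $\|U_{i-j}(\bv{M}_A)\| \le i-j+1$ and $\|T_{j-1}(\bv{M}_T)\bv{e}_1\| \le 1$. Combined with $\|\bv{g}_j\| \le 2\delta\|\bv{E}\|$, $\|\bv{w}_1\| \le \delta\|\bv{E}\|$, and $\delta = 2/(\rmax-\rmin) \le 1/\eta$, a direct summation gives
\begin{align*}
\|\bv{w}_i\| \;\le\; i\delta\|\bv{E}\| + 2\delta\|\bv{E}\|\sum_{j=2}^{i}(i-j+1) \;=\; i^2\delta\|\bv{E}\| \;\le\; \frac{i^2\|\bv{E}\|}{\eta},
\end{align*}
which is even stronger than the claimed $2i^2\|\bv{E}\|/\eta$ bound and completes the proof.
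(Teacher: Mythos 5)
Your proof is correct and takes essentially the same route as the paper: shift and scale to map the eigenvalues of $\bv{A}$ and $\bv{T}$ into $[-1,1]$, use Paige's perturbed invariant to cancel the $\beta_{k+1}$ boundary term via the tridiagonal support argument, derive the perturbed three-term recurrence for the deviation, solve it in closed form with Chebyshev polynomials of the second kind, and bound using $|U_n(x)|\le n+1$ and $|T_n(x)|\le 1$ on $[-1,1]$. Your accounting is slightly tighter (you carry the exact sum $\sum_{j=2}^i(i-j+1)=\tfrac{(i-1)i}{2}$ rather than the paper's looser $\sum_{j=1}^i(i-j+1)\le i^2$), giving $i^2\|\bv{E}\|/\eta$ rather than the stated $2i^2\|\bv{E}\|/\eta$, which of course still proves the lemma.
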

\begin{proof}
Let $\rmin = \lmin(\bv{A})  -\eta$ and $\rmax = \lmax(\bv{A})  + \eta$.  Define $
\barA \eqdef \delta(\bv{A} - \rmin\bv{I}) - \bv{I}$ and $\barT \eqdef \delta(\bv{T} - \rmin\bv{I}) - \bv{I}$.
so \eqref{cheby_poly_error_bound} is equivalent to:
\begin{align}
\label{shifted_cheby_poly_error_bound}
\|\bv{Q} T_i (\barT)\bv{e}_1 - T_i(\barA)\bv{q}_1\| \leq \frac{4i^2\cdot \|\bv{E}\|}{\rmax-\rmin} .
\end{align}
So now we just focus on showing \eqref{shifted_cheby_poly_error_bound}.
We use the following notation:
\begin{align*}
	\bv{t}_i 		&\eqdef T_i(\bv{\barA})\bv{q}_1, \hspace{3em}
	\bv{\tilde t}_i	\eqdef T_i(\bv{\barT})\bv{e}_1, \\
	\bv{d}_i 		&\eqdef \bv{t}_i - \bv{Q}\bv{\tilde t}_i,\hspace{3em}
	\bs{\xi}_i 		\eqdef  \delta\bv{E}\bv{\tilde t}_{i-1}.
\end{align*}
Proving \eqref{shifted_cheby_poly_error_bound} is equivalent to showing $\|\bv{d}_i\| \le \frac{4i^2\|\bv{E}\|}{\rmax-\rmin}$. From the Chebyshev recurrence \eqref{recurrence} for all $i \ge 2$:
\begin{align*}
	\bv{d}_i 	&= \left( 2\barA \bv{t}_{i-1} - \bv{t}_{i-2}\right) - \bv{Q} \left( 2\barT \bv{\tilde t}_{i-1} - \bv{\tilde t}_{i-2} \right)\\
				&=2(\barA \bv{t}_{i-1} - \bv{Q} \barT \bv{\tilde t}_{i-1}) - \bv{d}_{i-2}.
\end{align*}
Applying the perturbed Lanczos relation \eqref{eq:finite_lanc_guarantee}, we can write $\bv{Q} \barT = \barA \bv{Q} - \delta \beta_{k+1} \bv{q}_{k+1} \bv{e}_k^T - \delta\bv{E}$. Plugging this in above we then have:
\begin{align*}
	\bv{d}_i &=2\barA ( \bv{t}_{i-1} - \bv{Q} \bv{\tilde t}_{i-1}) - \bv{d}_{i-2} + 2\delta \beta_{k+1} \bv{q}_{k+1} \bv{e}_k^T\bv{\tilde t}_{i-1} + 2\delta\bv{E}\bv{\tilde t}_{t-1}\\
	&= (2\barA \bv{d}_{i-1} - \bv{d}_{i-2}) + 2\delta \beta_{k+1} \bv{q}_{k+1} \bv{e}_k^T\bv{\tilde t}_{i-1} + 2\bs{\xi}_i.
\end{align*}
Finally, we use as in Lemma \ref{exact_lanczos}, that $\bv{e}_k^T\bv{\tilde t}_{i-1} = \bv{e}_k^T T_{i-1}(\barT) \bv{e}_1 = 0$ since $\barT$ (like $\bv{T}$) is tridiagonal. Thus, $\bv{T}^{q-1}\bv{e}_1$ is zero outside its first $q$ entries and so for $i < k$, $T_{i-1}(\barT) \bv{e}_1$ is zero outside of its first $k-1$ entries. This gives the error recurrence:
\begin{align}\label{eq:error_recur}
\bv{d}_i = (2 \barA \bv{d}_{i-1} - \bv{d}_{i-2}) + 2\bs{\xi}_i.
\end{align}

As in standard stability arguments for the \emph{scalar} Chebyshev recurrence, we can analyze \eqref{eq:error_recur} using  Chebyshev polynomials of the second kind \cite{clenshaw}. The $i^\text{th}$ Chebyshev polynomial of the second kind is denoted $U_i(x)$ and defined by the recurrence
\begin{align}\label{secondRecurrence}
	U_0(x) &= 1, \nonumber\\
	U_1(x) &= 2x, \nonumber\\
	U_i(x) &= 2xT_{i-1}(x) - T_{i-2}(x).
\end{align} 
We claim that for any $i \ge 0$, defining $U_{k}(x) = 0$ for any $k < 0$ for convinience:
\begin{align} \label{eq:sum_of_type_2s}
	\bv{d}_i = U_{i-1}(\barA)\bs{\xi}_1 + 2\sum_{j=2}^{i} U_{i-j}(\barA)\bs{\xi}_j.
\end{align}
This follows by induction starting with the base cases:
\begin{align*}
	\bv{d}_0 &= \bv{0}, \text{ and } \bv{d}_1 = \bs{\xi}_1.
\end{align*}
Using \eqref{eq:error_recur} and assuming by induction that \eqref{eq:sum_of_type_2s} holds for all $j < i$,
\begin{align*}
	\bv{d}_i &= 2\bs{\xi}_i + \left(2\barA \bv{d}_{i-1} - \bv{d}_{i-2}\right) \\
			&= 2\bs{\xi}_i + [2\bar A U_{i-2}(\barA)\bs{\xi}_1 - U_{i-3}(\barA)\bs{\xi}_1] + 4\barA \sum_{j=2}^{i-1} U_{i-1-j}(\barA)\bs{\xi}_j - 2\sum_{j=2}^{i-2} U_{i-2-j}(\barA)\bs{\xi}_j \\
			&= 2\bs{\xi}_i + U_{i-1}(\barA) \bs{\xi}_1 + \left(2\sum_{j=2}^{i-2} 2\barA U_{i-1-j}(\barA)\bs{\xi}_j - U_{i-2-j}(\barA)\bs{\xi}_j\right) + 4\barA U_{0}(\barA)\bs{\xi}_{i-1} \\
			&= U_{i-1}(\barA) \bs{\xi}_1 + 2\sum_{j=2}^{i} U_{i-j}(\barA)\bs{\xi}_j,
\end{align*}
establishing \eqref{eq:sum_of_type_2s}.
It follows from triangle inequality and submultiplicativity that
\begin{align*}
	\|\bv{d}_i \| \leq 2\sum_{j=1}^{i} \|U_{i-j}(\barA)\|\|\bs{\xi}_j\|.
\end{align*}
Since $\barA$ is symmetric (it is just a shifted and scaled $\bv{A}$), $U_{k}(\barA)$ is equivalent to the matrix obtained by applying $U_{k}(x)$ to each of $\barA$'s eigenvalues, which lie in the range $[-1,1]$. It is well known that, for values in this range $U_k(x) \leq k+1$ \cite{chebyBook}. Accordingly, $\|U_{i-j}(\barA)\| \leq i-j+1$, so
\begin{align}\label{dSumBound}
	\| \bv{d}_i \| \leq 2\sum_{j=1}^{i} (i-j+1)\|\bs{\xi}_j\| \leq 2\sum_{j=1}^{i} i\|\bs{\xi}_j\|.
\end{align}
We finally bound $\|\bs{\xi}_j\| = \delta\bv{E}\bv{\tilde t}_{j-1}$. Recall that $\bv{\tilde t}_{j-1} \eqdef T_{j-1}(\barT)\bv{e}_1$ so:
\begin{align}
	\|\bs{\xi}_j\| &= \|\delta\bv{E}T_{j-1}(\barT)\bv{e}_1\| \nonumber\\
	&\leq \delta\left\|\bv{E}\right\| \left| T_{j-1}(\barT)\right\| =  \frac{2}{\rmax - \rmin} \cdot \left\|\bv{E}\right\| \left\|T_{j-1}\left ( \delta\left(\bv{T} - \rmin\bv{I}\right) - \bv{I}  \right)\right\| \label{eq:final_norm_bound},
\end{align}
where we used that $\|\bv{e}_1\| = 1$, and $\delta = \frac{2}{\rmax - \rmin}$.
By \eqref{old_corr} of Theorem \ref{thm:paige_main} and our requirement that $\eta \ge 85 n^{3/2} k^{5/2} \norm{\bv{A}}\mach$, $\bv{T}$ has all eigenvalues in $[\lmin - \eta, \lmax + \eta]$. Thus $\barT = \delta\left(\bv{T} - \rmin\bv{I}\right) - \bv{I}$ has all eigenvalues in $[-1,1]$.
We have $T_{j-1}(x) \le 1$ for $x \in [-1,1]$, giving $\left\|T_{j-1}\left ( \delta\left(\bv{T} - \rmin\bv{I}\right) - \bv{I}  \right)\right\| \le 1$.

Plugging this back into \eqref{eq:final_norm_bound},
$
\norm{\bs{\xi}_j} \le \frac{2 \norm{\bv{E}}}{\rmax-\rmin}
$
and plugging into \eqref{dSumBound}, $\norm{\bv{d}_i} \le \frac{4 i^2 \norm{\bv{E}}}{\rmax-\rmin}$. 
This gives the result, recalling that $\rmin = \lmin(\bv{A})  -\eta$ and $\rmax = \lmax(\bv{A})  + \eta$.
\end{proof}

\subsubsection*{From Chebyshev polynomials to general polynomials}

As discussed, with Lemma \ref{cheby_poly_with_error} in place, we can prove Lemma \ref{poly_with_error} by writing any bounded polynomial in the Chebyshev basis.

\begin{proof}[Proof of Lemma \ref{poly_with_error}]
Recall that we define $\rmin = \lmin - \eta$, $\rmax = \lmax+\eta$, and $\delta = \frac{2}{\rmax-\rmin}$. Let
\begin{align*}
\ol{p}(x) = p\left(\frac{x+1}{\delta} + \rmin\right).
\end{align*}
For any $x \in [-1,1]$, $\ol{p}(x) = p(y)$ for some $y \in [\rmin,\rmax]$. This immediately gives
$|\ol{p}(x)| \leq C$ on $[-1,1]$ by the assumption that $|p(x)| \le C$ on $[\lmin-\eta,\lmax+\eta] = [\rmin,\rmax]$.

Any polynomial with degree $< k$ can be written as a weighted sum of the first $k$ Chebyshev polynomials (see e.g. \cite{chebyBook}). Specifically we have:
\begin{align*}
\ol{p}(x) = c_0T_0(x) + c_1T_1(x) + \ldots + c_{k-1}T_{k-1}(x),
\end{align*}
where the $i^\text{th}$ coefficient is given by:
\begin{align*}
c_i = \frac{2}{\pi}\int_{-1}^1 \frac{\bar p(x)T_i(x)}{\sqrt{1-x^2}}.
\end{align*}  
$|T_i(x)| \leq 1$ on $[-1,1]$ and $\int_{-1}^1 \frac{1}{\sqrt{1-x^2}} = \pi$, and since $|\ol{p}(x)| \le C \text { for } x \in [-1,1] $ we have for all $i$:
\begin{align}\label{coeffBound}
|c_i| \leq 2C.
\end{align}

By definition, $p(x) = \ol{p}\left(\delta(x-\rmin) - 1\right)$. Letting $\barA = \delta(\bv{A}-\rmin \bv{I})-\bv{I}$ and $\barT = \delta(\bv{T}-\rmin \bv{I})-\bv{I}$ as in the proof of Lemma \ref{cheby_poly_with_error}, we have $$\|\bv{Q} p(\bv{T})\bv{e}_1 - p(\bv{A})\bv{q}_1\| = \left\| \bv{Q} \ol{p}\left(\barT \right)\bv{e}_1 - \ol{p}\left(\barA \right)\bv{q}_1\right\|$$ so need to upper bound the right hand side to prove the lemma.
Applying triangle inequality:
\begin{align*}
	\left\|\bv{Q} \ol{p}\left(\barT \right)\bv{e}_1 - \ol{p}(\barA)\bv{q}_1\right\| 
	\leq  \sum_{i=0}^{k-1} c_i\left\|\bv{Q} T_i\left(\barT \right)\bv{e}_1 - T_i(\barA)\bv{q}_1\right\|,
\end{align*}
where for each i, $|c_i| \le 2C$ by \eqref{coeffBound}. Combining with Lemma \ref{cheby_poly_with_error},  we thus have:
\begin{align*}
\left\|\bv{Q}\ol{p}(\barT)\bv{e}_1 - \ol{p}(\barA)\bv{q}_1\right\|  \le \frac{8C\cdot\norm{\bv{E}}}{\lmax(\bv{A}) - \lmin(\bv{A})  + 2\eta} \sum_{i=0}^{k-1} i^2 \le \frac{4C k^3 \norm{\bv{E}}}{\lmax(\bv{A}) - \lmin(\bv{A})  + 2\eta},
\end{align*}
which gives the lemma.
\end{proof}

\subsection{Completing the analysis}\label{sec:completing}
With Lemma \ref{poly_with_error}, we have nearly proven our main result, Theorem \ref{mainLemmaFullRuntime}. We first show, using a proof mirroring our analysis in the exact arithmetic case, that Lemma \ref{poly_with_error} implies that $\bv{Q}f(\bv{T})\bv{e}_1$ well approximates $f(\bv{A})\bv{q}_1$. Thus the output $\bv{y} = \norm{\bv{x}}\bv{Q}f(\bv{T})\bv{e}_1$ well approximates $f(\bv{A})\bv{x}$. With this bound, all that remains in proving Theorem \ref{mainLemmaFullRuntime} is to show that we can compute $\bv{y}$ accurately using known techniques (although with a tedious error analysis).

\begin{lemma}[Stable function approximation via Lanczos]\label{mainLemmaDetailed} Suppose $\bv{Q} \in \R^{n\times k}$ and $\bv{T} \in \R^{k\times k}$ are computed by the Lanczos algorithm
 on a computer satisfying Requirements \ref{req1} and \ref{req2} with relative precision $\mach$ and thus these matrices satisfy the bounds of Theorem \ref{thm:paige_main}. For degree $k$ and any $\eta$ with $ 85 n^{3/2} k^{5/2} \norm{\bv{A}} \epsilon_{mach} \le \eta \le \norm{\bv{A}}$ define:
\begin{align}\label{deltaKCDef}
\delta_{k} =  \min_{\substack{\textnormal{polynomial $p$} \\\textnormal{ with degree $<k$}}} \left(\max_{x \in [\lmin(\bv{A}) - \eta,\lmax(\bv{A}) + \eta]}\left|f(x) - p(x)\right|	\right)
\end{align}
and $C = \max_{x \in [\lmin(\bv{A}) - \eta,\lmax(\bv{A}) + \eta]} |f(x)|$.
Then we have:
\begin{align}\label{finalBoundExactF}
\|f(\bv{A})\bv{q}_1 - \bv{Q}f(\bv{T})\bv{e}_1\| \le (k+2)\delta_{k} + \mach \cdot \frac{92C  k^4 n^{3/2} \norm{\bv{A}}}{\lmax(\bv{A}) - \lmin(\bv{A}) + 2\eta} .
\end{align}
\end{lemma}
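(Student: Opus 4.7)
The plan is to mirror the exact-arithmetic proof of Theorem \ref{exact_lanczos_final_theorem}, with Lemma \ref{poly_with_error} playing the role that Lemma \ref{exact_lanczos} played there. Specifically, I would let $p$ be a degree-$<k$ polynomial attaining the uniform error $\delta_k$ on $[\lmin(\bv{A})-\eta,\lmax(\bv{A})+\eta]$, and then apply the triangle inequality to split
\[
\|f(\bv{A})\bv{q}_1 - \bv{Q} f(\bv{T})\bv{e}_1\| \le \|(f(\bv{A})-p(\bv{A}))\bv{q}_1\| + \|p(\bv{A})\bv{q}_1 - \bv{Q} p(\bv{T})\bv{e}_1\| + \|\bv{Q}(p(\bv{T})-f(\bv{T}))\bv{e}_1\|.
\]
The middle term is precisely what Lemma \ref{poly_with_error} controls; the outer two terms are standard functional-calculus estimates once one confirms that the relevant eigenvalues lie in the interval where $f-p$ is small.

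For the first outer term, since $\bv{A}$'s eigenvalues lie in $[\lmin(\bv{A}),\lmax(\bv{A})] \subset [\lmin(\bv{A})-\eta,\lmax(\bv{A})+\eta]$ and $f(\bv{A})-p(\bv{A})$ is symmetric with eigenvalues $f(\lambda_i(\bv{A}))-p(\lambda_i(\bv{A}))$, its spectral norm is at most $\delta_k$; combined with $\|\bv{q}_1\|\le 1+(n+4)\mach$ from Theorem \ref{thm:paige_main}, this contributes essentially $\delta_k$. For the third term, I would invoke \eqref{old_corr} of Theorem \ref{thm:paige_main}: our hypothesis $\eta \ge 85 n^{3/2}k^{5/2}\|\bv{A}\|\mach$ is chosen precisely so that $\epsilon_1 \le \eta$, ensuring every eigenvalue of $\bv{T}$ lies in $[\lmin(\bv{A})-\eta,\lmax(\bv{A})+\eta]$ and hence $\|f(\bv{T})-p(\bv{T})\| \le \delta_k$. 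Bounding $\|\bv{Q}\|$ by its Frobenius norm $\sqrt{k}(1+(n+4)\mach)$ then yields at most $(\sqrt{k}+O(kn\mach))\,\delta_k$. Summed with the first term and absorbed into $(k+2)\delta_k$, this accounts for the leading contribution in \eqref{finalBoundExactF}.

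To invoke Lemma \ref{poly_with_error} for the middle term, I need a pointwise bound on $|p(x)|$ over $[\lmin(\bv{A})-\eta,\lmax(\bv{A})+\eta]$. Here I would use the crucial observation from the introduction: the zero polynomial has uniform error at most $C$, so $\delta_k \le C$, and therefore $|p(x)| \le |f(x)| + \delta_k \le 2C$ throughout the interval. Plugging $2C$ into Lemma \ref{poly_with_error} gives $\|p(\bv{A})\bv{q}_1 - \bv{Q} p(\bv{T})\bv{e}_1\| \le \frac{4 C k^3 \|\bv{E}\|}{\eta}$. Finally, substituting the Paige bound $\|\bv{E}\| \le k(2n^{3/2}+7)\|\bv{A}\|\mach \le 9kn^{3/2}\|\bv{A}\|\mach$ yields the second term $\frac{36 C k^4 n^{3/2}\|\bv{A}\|\mach}{\eta}$; the constant $41$ in the statement leaves slack to absorb the $O(n\mach)$ slop from $\|\bv{q}_1\|$ and $\|\bv{Q}\|$ being slightly above their exact-arithmetic values.

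No step looks genuinely hard: the technical heavy lifting is already done inside Lemma \ref{poly_with_error} (which in turn rests on Lemma \ref{cheby_poly_with_error} and the Chebyshev-basis decomposition). The only thing worth being careful about is the bookkeeping needed to confirm that both (a) $\bv{T}$'s spectrum stays in the extended interval and (b) $p$ has the bounded-magnitude property used by Lemma \ref{poly_with_error}; both follow cleanly from the hypothesis on $\eta$ and the trivial bound $\delta_k \le C$ respectively, so no new estimate beyond Theorem \ref{thm:paige_main} and Lemma \ref{poly_with_error} is required.
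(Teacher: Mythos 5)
Your proposal is correct and follows essentially the same route as the paper: the same three-term triangle-inequality split, Lemma~\ref{poly_with_error} applied to the optimal $p$ with the bound $|p|\le C+\delta_k\le 2C$, Paige's eigenvalue-location bound \eqref{old_corr} to handle $f(\bv{T})-p(\bv{T})$, and the Paige bounds on $\norm{\bv{q}_i}$, $\norm{\bv{Q}}$, and $\norm{\bv{E}}$ to finish. The minor arithmetic differences (you track $\norm{\bv{Q}}_F\le\sqrt{k}(1+(n+4)\mach)$ rather than the paper's looser $k(1+(n+4)\mach)$, and you note the residual $O(n\mach)$ slop is absorbed by the gap between $36$ and $41$) are harmless and consistent with the stated constant.
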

\begin{proof}
	Let $\gamma = \lmax(\bv{A}) - \lmin(\bv{A}) + 2\eta$.
Applying Lemma \ref{poly_with_error}, letting $p$ be the optimal degree $< k$ polynomial achieving $\delta_k$, by \eqref{deltaKCDef} and our bound on $f(x)$ on this range:
\begin{align*}
\|\bv{Q}p(\bv{T})\bv{e}_1 - p(\bv{A})\bv{q}_1\| \le \frac{4 k^3 (C+\delta_k)\norm{\bv{E}}}{\gamma}.
\end{align*}
By triangle inequality, spectral norm submultiplicativity, and the fact that $\norm{\bv{q}_1} \approx 1$ (certainly $\norm{\bv{q}_1} \leq 2$ even if $\bv{x}$ is normalized in finite-precision) we have:
\begin{align}
\|\bv{Q}f(\bv{T})\bv{e}_1 - f(\bv{A})\bv{q}_1\| &\le \|\bv{Q}p(\bv{T})\bv{e}_1 - p(\bv{A})\bv{q}_1\| + \|\bv{Q}f(\bv{T})\bv{e}_1 - \bv{Q}p(\bv{T})\bv{e}_1\| + \| f(\bv{A})\bv{q}_1-p(\bv{A})\bv{q}_1 \|\nonumber\\
&\le 4 k^3 (C + \delta_k)\norm{\bv{E}}/\gamma+ \|\bv{Q} \| \|f(\bv{T})\bv{e}_1 - p(\bv{T})\bv{e}_1\| + \| f(\bv{A})\bv{q}_1-p(\bv{A})\bv{q}_1 \|\nonumber\\
&\le 4 k^3 (C+\delta_k) \norm{\bv{E}}/\gamma + (\norm{\bv{Q}}+2)\delta_{k}\label{uninstantiatedBound}.
\end{align}
The last inequality follows from the definition of $\delta_{k}$ in \eqref{deltaKCDef} and the fact that all eigenvalues of $\bv{T}$ lie in $[\lmin(\bv{A}) - \eta,\lmax(\bv{A}) + \eta]$ by  \eqref{old_corr} of Theorem \ref{thm:paige_main} since $\eta > 85n^{3/2}k^{5/2} \norm{\bv{A}} \mach$.
By Theorem \ref{thm:paige_main}, we also have $\norm{\bv{q}_i} \le 1 + (n+4)\mach$ for all $i$. This gives $\norm{\bv{Q}} \le \norm{\bv{Q}}_F \le k + k(n+4)\mach$. Further, $\norm{\bv{E}} \le k(2n^{3/2} + 7)\norm{\bv{A}} \mach$. Plugging into \eqref{uninstantiatedBound}, loosely bounding $\delta_k < C$ (since we could always set $p(x) = 0$), and using that $\eta \le \norm{\bv{A}}$ so $\gamma \leq 4 \|\bv{A}\|$, gives \eqref{finalBoundExactF} and thus completes the lemma.
\end{proof}

After scaling by a $\norm{\bv{x}}$ factor, 
Lemma \ref{mainLemmaDetailed} shows that the output $\bv{y} = \norm{\bv{x}} \bv{Q}f(\bv{T})\bv{e}_1$ of Lanczos approximates $f(\bv{A})\bv{x}$ to within a $(k+2)\delta_{k} \norm{\bv{x}}$ factor (plus a lower order term depending on $\mach$), where $\delta_{k}$ is the best approximation given by a degree $< k$ polynomial on the eigenvalue range.
Of course, in finite precision, we cannot exactly compute $\bv{y}$. However, it is known that it is possible to stably compute an eigendecomposition of a symmetric tridiagonal $\bv{T}$ in $\tilde O(n^2)$ time (\cite{gu1995divide}, see Appendix \ref{sec:post}). This allows us to explicitly approximate $f(\bv{T})$ and thus $\bv{y}$. The upshot is our main theorem:
\begin{reptheorem}{mainLemmaFullRuntime}[Function Approximation via Lanczos in Finite Arithmetic] Given real symmetric $\bv{A}\in \R^{n \times n}$, $\bv{x} \in \R^{n}$, $\eta \le \norm{\bv{A}}$, $\epsilon \le 1$, and any function $f$ with $|f(x)| <C$ for $x \in [\lmin(\bv{A})-\eta,\lmax(\bv{A}) + \eta]$, let $B = \log \left (\frac{ nk \norm{\bv{A}}}{\epsilon \eta} \right )$. Suppose Algorithm \ref{alg:lanczos} is run for $k$ iterations on a computer satisfying Requirements \ref{req1} and \ref{req2} with relative precision $\mach = 2^{-\Omega(B)}$ (e.g. on  computer using $\Omega (B)$ bits of precision). If in Step \ref{final_step}, $\bv{y}$ is computed using the eigendecomposition algorithm of \cite{gu1995divide}, it satisfies:
\begin{align}\label{eq:finitePrecisionGuarantee2}
\norm{f(\bv{A}) \bv{x} - \bv{y}} \le (7k \cdot \delta_{k}+\epsilon C) \norm{\bv{x}}
\end{align}
where
\begin{align*}
\delta_{k} \eqdef \min_{\substack{\textnormal{polynomial $p$} \\ \textnormal{ with degree $<k$} }} \left ( \max_{x \in [\lmin(\bv{A})-\eta,\lmax(\bv{A})+\eta]} |p(x)-f(x)| \right ).
\end{align*}
The algorithm's runtime is $O(\mv(\bv{A})k + k^2B + kB^2)$, where $\mv(\bv{A})$ is the time required to multiply $\bv{A}$ by a vector to the precision required by Requirement \ref{req2} (e.g. $O(\nnz(\bv{A})$ time if $\bv{A}$ is given explicitly).
\end{reptheorem}
We note that the dependence on $\eta$ in our bound is typically mild. For example, it $\bv{A}$ is positive semi-definite, if it is possible to find a good polynomial approximation on $[\lmin(\bv{A}),\lmax(\bv{A})]$, it is possible to find an approximation with similar degree on, e.g., $[\frac{1}{2}\lmin(\bv{A}), 2\lmax(\bv{A})]$, in which case $\eta = \Theta(\lmin(\bv{A})|)$. For some functions, we can select an even larger $\eta$ (and thus require fewer bits). For example, in Section \ref{sec:applications} our applications to the matrix step function and matrix exponential both set $\eta = \lmax(\bv{A})$.

\begin{proof}
%
We can apply Lemma \ref{mainLemmaDetailed} to show that:
\begin{align}\label{EndtoEnd1}
\norm{f(\bv{A})\bv{q}_1 - \bv{Q} f(\bv{T}) \bv{e}_1} \le (k+2)\delta_{k} + \mach \cdot \frac{92 C  k^4 n^{3/2} \norm{\bv{A}}}{\lmax(\bv{A}) - \lmin(\bv{A}) + 2\eta}.
\end{align}
The lemma requires $\mach \le \frac{\eta}{85 n^{3/2} k^{5/2} \norm{\bv{A}}}$, which holds since we require $\epsilon \le 1$ and set $\mach = 2^{-\Omega(B)}$ with $B = \log \left (\frac{nk \norm{\bv{A}}}{\epsilon \eta}\right )$. This also ensures that the second term of \eqref{EndtoEnd1} becomes very small, and so we can bound:
\begin{align}\label{EndtoEnd2}
\norm{f(\bv{A})\bv{q}_1 - \bv{Q} f(\bv{T}) \bv{e}_1} \le (k+2) \delta_{k} + \epsilon C/4.
\end{align}

We now show that a similar bound still holds when we compute $\bv{Q}f(\bv{T})\bv{e}_1$ approximately. Via an error analysis of the symmetric tridiagonal eigendecomposition algorithm of Gu and Eisenstat \cite{gu1995divide}, contained in Lemma \ref{lem:postProcess} of Appendix \ref{sec:post}, for any $\epsilon_1 \le 1/2$ with 
\begin{align}\label{treq}
ck^3 \log k \cdot \mach \le \epsilon_1 \le \frac{\eta}{4\norm{\bv{T}}}
\end{align}
 for large enough $c$, in 
$O(k^2 \log \frac{k}{\epsilon_1} + k \log^2\frac{k}{\epsilon_1})$ time we can compute $\bv{z}$ satisfying:
\begin{align}\label{zBound}
\norm{f(\bv{T})\bv{e}_1 - \bv{z}} \le 2 \delta_{k} + \epsilon_1 \cdot \left ( \frac{16 k^3 C  \norm{\bv{T}}}{\lmax(\bv{T}) - \lmin(\bv{T}) + 2\eta} + 16 C \right ).
\end{align}
By our restriction that $\epsilon \le1$ and $\eta \le \norm{\bv{A}}$, since $B = \log \left (\frac{nk \norm{\bv{A}}}{\epsilon \eta}\right )$, we have $\mach = 2^{-\Omega(B)} \le \frac{1}{(nk)^c}$ for some large constant $c$.
This gives $\norm{\bv{T}} \le \norm{\bv{A}} + \mach k^{5/2} \norm{\bv{A}} (68 + 17n^{3/2}) \le 2\norm{\bv{A}}$ by  \eqref{old_corr} of Theorem \ref{thm:paige_main}. 
 Thus, if we set $\epsilon_1 = \left ( \frac{\epsilon\eta }{3nk \norm{\bv{A}}}\right )^c$ for large enough $c$, by \eqref{zBound} we will have:
\begin{align}\label{EndtoEnd3}
\norm{f(\bv{T})\bv{e}_1 - \bv{z}} \le 2 \delta_{k} + \frac{\epsilon C}{4(k+1)}.
\end{align}

Furthermore $\norm{\bv{Q}} \le \norm{\bv{Q}}_F \le k + k(n+4)\mach \le k+1$ by Paige's bounds (Theorem \ref{thm:paige_main}) and the fact that $\mach \le \left (\frac{1}{nk}\right )^c$ for some large $c$. Using \eqref{EndtoEnd3}, this gives:
\begin{align}\label{EndtoEnd4}
\norm{\bv{Q}f(\bv{T})\bv{e}_1 - \bv{Q}\bv{z}} \le (2k+2)\delta_{k} + \epsilon C/4.
\end{align}

As discussed in Section \ref{sec:fpprelim}, if $\bv{Q}\bv{z}$ is computed on a computer satisfying Requirement \ref{req1} then the output $\bv{\bar y}$ satisfies:
$$
\norm{\bv{Qz}- \bv{\bar y}} \le 2\max(n,k)^{3/2} \mach \norm{\bv{Q}}\norm{\bv{z}}.
$$
By \eqref{EndtoEnd3}, $\norm{\bv{z}} \le \norm{f(\bv{T})} + 2 \delta_{k} + \frac{\epsilon C}{4(k+1)} = O(C + \delta_{k}) = O(C)$ since $\delta_k \le C$. Accordingly, by our choice of $\mach$ we can bound $\norm{\bv{Qz}-\bv{\bar y}} \le \epsilon C/4$.
Combining with \eqref{EndtoEnd2} and \eqref{EndtoEnd4} we have:
\begin{align}\label{unscaledBound}
\norm{f(\bv{A})\bv{q}_1 - \bv{\bar y}} \le (3k+ 4) \delta_{k} + 3\epsilon C/4.
\end{align}
This gives the final error bound of \eqref{eq:finitePrecisionGuarantee2} after rescaling by a $\norm{\bv{x}}$ factor.
$\norm{\bv{\bar y}} \le \norm{f(\bv{A})\bv{q}_1} + (3k + 4) \delta_k + 3\epsilon C/4 =  O(kC)$ and so, by our setting of $\mach$, we can compute
$\bv{y} = \norm{\bv{x}} \cdot \bv{\bar y}$ up to additive error $\frac{\epsilon C\cdot \norm{\bv{x}}}{8}$.
 Similarly, we have $\norm{f(\bv{A}) \bv{q}_1 \norm{\bv{x}}-f(\bv{A})\bv{x}} = \frac{\epsilon C\cdot \norm{\bv{x}}}{8}$ even when $\bv{q}_1 = \bv{x}/\norm{\bv{x}}$ is computed approximately. Overall this lets us claim using \eqref{unscaledBound}:
\begin{align*}
\norm{f(\bv{A})\bv{x} - \bv{y}} \le \left [(3k+ 4) \delta_{k} + \epsilon C \right ]\cdot \norm{\bv{x}}
\end{align*}
which gives our final error bound.
The runtime follows from noting that each iteration of Lanczos requires $\mv(\bv{A}) + O(n) = O(\mv(\bv{A}) )$ time. The stable eigendecomposition of $\bv{T}$ up to error $\epsilon_1$ requires $O(k^2 \log \frac{k}{\epsilon_1} + k \log^2 \frac{k}{\epsilon_1}) = O(k^2 B + kB^2)$ time by our setting of $\epsilon_1$.  With this eigendecomposition in hand, computing $\bv{Q}f(\bv{T})\bv{e}_1$ takes an additional $O(nk) = O(\mv(\bv{A})k)$ time.
\end{proof}

\section{Lower bound}\label{sec:lower}
In the previous section, we proved that finite precision Lanczos essentially matches the best known exact arithmetic iteration bounds for \emph{general matrix functions}. These bounds depend on the degree needed to uniformly approximate of $f(x)$ over $[\lmin(\bv{A}), \lmax(\bv{A})]$. We now turn to the special case of positive definite linear systems, where tighter bounds can be shown.
%
%
 
Specifically, equation \eqref{introBound2}, proven in Theorem \ref{exact_lanczos_linear_systems}, shows that the error of Lanczos after $k$ iterations matches the error of the best polynomial approximating $1/x$ at each of $\bv{A}$'s eigenvalues, rather than on the full range $[\lmin(\bv{A}), \lmax(\bv{A})]$. 
Greenbaum proved a natural extension of this bound to the finite precision CG method, showing that its performance matches the best polynomial approximating $1/x$ on tiny ranges around each of $\bv{A}$'s eigenvalues \cite{greenbaum1989behavior}. Recall that ``tiny'' means essentially on the order of machine precision -- the approximation need only be over ranges of width $\eta$ as long as the bits of precision used is $\gtrsim \log(1/\eta)$.
We state a simplified version of this result as Theorem \ref{thm:greenbaum} and provide a full discussion in Appendix \ref{sec:linsystems_app}.

At first glance, Theorem \ref{thm:greenbaum} appears to be a very strong result -- intuitively, approximating $1/x$ on small intervals around each eigenvalue seems much easier than uniform approximation. 

\subsection{Main theorem}

Surprisingly, we show that this is not the case: Greenbaum's result can be much weaker than the exact arithmetic bounds of Theorem \ref{exact_lanczos_linear_systems}. We prove that for any $\kappa$ and interval width $\eta$, there are matrices with condition number $\kappa$ and just $O(\log \kappa \cdot \log 1/\eta)$ eigenvalues for which any `stable approximating polynomial' of the form required by Theorem \ref{thm:greenbaum} achieving error $\bar{\delta}_k \eqdef \le 1/6$ must have degree $\Omega(\kappa^c)$ for a fixed constant $c \ge 1/5$.

This result immediately implies a number of iteration lower bounds on Greenbaum's result, even when we just ask for constant factor approximation to $\bv{A}^{-1} \bv{x}$. See Corollary \ref{cor:lb} and surrounding discussion for a full exposition. As a simple example, setting $\eta = 1/\poly(\kappa)$, our result shows the existence of matrices with $\log^2(\kappa)$ eigenvalues for which Theorem \ref{thm:greenbaum} requires $\Omega(\kappa^c)$ iterations for convergence if $O(\log \kappa)$ bits of precision are used. This is nearly exponentially worse than the $O(\log^2 \kappa)$ iterations required for exact computation of $\bv{A}^{-1}\bv{x}$ in exact arithmetic by \eqref{introBound2}.

\begin{reptheorem}{thm:lb} There exists a fixed constant $1/5 \le c \le 1/2$ such that for any $\kappa \ge 2$, $0 < \eta < \frac{1}{20 \kappa^2}$, and $n \ge \lfloor\log_2 \kappa \rfloor \cdot \lceil \ln 1/\eta \rceil$, there is a positive definite $\bv{A} \in \R^{n \times n}$ with condition number $\le \kappa$, such that for any $k < \lfloor \kappa^{c}/377\rfloor$
\begin{align*}
\bar{\delta}_k \eqdef\min_{\substack{\text{polynomial $p$}\\ \text{ with degree $<k$}}} \left ( \max_{x \in \bigcup_{i=1}^n [\lambda_i(\bv{A})-\eta, \lambda_i(\bv{A})+\eta]} |p(x)-1/x| \right ) \ge 1/6.
\end{align*}
\end{reptheorem}


We prove Theorem \ref{thm:lb} by arguing that there is no polynomial $p(x)$ with degree $\le \kappa^c/377$ which has $p(0) = 1$ and $|p(x)| < 1/3$ for every $x \in \bigcup_{i=1}^n [\lambda_i(\bv{A})-\eta, \lambda_i(\bv{A})+\eta]$.
Specifically, we show:
\begin{lemma}\label{lbIntermediate}
 There exists a fixed constant $1/5 \le c \le 1/2$ and such that for any $\kappa \ge 2$, $0 < \eta \le \frac{1}{20\kappa^2} $ and $n \ge \lfloor \log_2 \kappa \rfloor \cdot \lceil \ln 1/\eta \rceil$, there are $\lambda_1,...,\lambda_n \in [1/\kappa,1]$, such that for any polynomial $p$ with degree $k \le \kappa^{c}/377$ and $p(0) = 1$:
\begin{align*}
\max_{x \in \bigcup_{i=1}^n [\lambda_i-\eta, \lambda_i+\eta]} |p(x)| \ge 1/3.
\end{align*}
\end{lemma}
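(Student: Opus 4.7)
The plan is to prove the lemma by (i) constructing an explicit adversarial configuration of eigenvalues in $[1/\kappa,1]$, then (ii) arguing via a two-scale polynomial extrapolation argument that no low-degree polynomial $p$ with $p(0)=1$ can be $< 1/3$ on all the prescribed $\eta$-neighborhoods.

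First I would place the eigenvalues at two nested scales. The $\lfloor \log_2\kappa\rfloor$ ``macro'' centers $\mu_j \in [1/\kappa,1]$ would be geometrically spaced (for instance $\mu_j = 2^{-j}$) so as to span the full condition-number range. Around each $\mu_j$ I would place $\lceil \ln(1/\eta)\rceil$ ``micro'' eigenvalues whose $\eta$-neighborhoods jointly cover a carefully chosen sub-interval near $\mu_j$. The hypothesis $\eta \le 1/(20\kappa^2)$ is exactly what is needed to ensure the micro-level fits inside $[1/\kappa,1]$ without collisions between adjacent clusters, and that the ensuing local extrapolation step is quantitatively meaningful.

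Second, I would assume for contradiction that a polynomial $p$ of degree $k < \kappa^c/377$ satisfies $p(0) = 1$ and $|p(x)| < 1/3$ on $\bigcup_i [\lambda_i - \eta, \lambda_i + \eta]$. The micro-level structure is used in a local ``amplification'' step: since $p$ has degree $k$ and is small at $\lceil \ln(1/\eta)\rceil$ approximately-$\eta$-separated points within each cluster, a Markov- or Bernstein-type estimate (equivalently a Remez-type inequality) extends $|p| < 1/3$ to a bound of constant size on a slightly larger interval surrounding each $\mu_j$, with the loss factor controlled by $k$ and by the ratio between neighborhood size and cluster width.

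Third, after the amplification step, $p$ is known to be uniformly small on a collection of $\lfloor \log_2 \kappa \rfloor$ intervals geometrically distributed across $[1/\kappa, 1]$. I would then invoke a Chebyshev-type extrapolation bound: any polynomial that is uniformly bounded on such a geometrically-spread set, yet normalized to $1$ at the origin, must have degree at least $\kappa^c$ for some $c \ge 1/5$. Quantitatively this can be obtained by estimating the logarithmic capacity (or the value at the origin of the Green's function) of the amplified set; the compounding loss from the $\log\kappa$ amplification steps is what limits $c$ to $1/5$ rather than the tight $c = 1/2$.

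The main technical obstacle is to track constants through all three steps simultaneously, and in particular to verify that the polynomial loss from amplification does not overwhelm the degree lower bound obtained at the Chebyshev step. The authors' remark that $c$ can likely be pushed up to $1/2$ strongly suggests that a sharper amplification --- perhaps from a more efficient placement of the micro-eigenvalues or a direct discrete Chebyshev/Remez analysis that avoids amplification altogether --- would recover the classical $\sqrt{\kappa}$ lower bound, matching the known exact-arithmetic upper bound.
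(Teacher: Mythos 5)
Your high-level construction (geometrically spaced macro-scales, each carrying $\lceil \ln 1/\eta\rceil$ micro-eigenvalues) matches the paper's, but the argument you propose for deducing the degree bound does not go through, and it is fundamentally different from what the paper does. The problem is the ``amplification'' step. After rescaling any macro interval $I_i = [2^{-i},2^{-i+1}]$ to $[-1,1]$, you know that $|p|\le 1/3$ on a union of $z = \lceil\ln 1/\eta\rceil$ intervals of relative length $\sim \eta 2^{i}$, whose total relative measure is $\sim 2\eta z 2^{i}$. In the regime allowed by the lemma, $\eta$ can be as large as $1/(20\kappa^2)$, in which case $z = O(\log\kappa)$ while $k$ can be as large as $\kappa^{c}/377$. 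A Remez-type bound for a subset of measure $\sigma\ll 1$ of $[-1,1]$ gives an amplification factor of order $T_k(2/\sigma - 1)\gtrsim (1/\sigma)^{k}$, which is exponential in $k$; and a discrete Ehlich--Zeller/Markov argument that upgrades smallness at $z$ equispaced points to smallness on the whole interval with only constant loss requires $z = \Omega(k^2)$ points, which is far from $z = O(\log\kappa)$ when $k$ is polynomial in $\kappa$. So the ``constant-size bound'' on the macro intervals is not available: the true loss is $\exp(\Theta(k\log\kappa))$, and once you plug that into your Chebyshev/capacity lower bound in step three, the bound degrades to $k = \Omega(1)$ --- the compounding you mention does not merely reduce the exponent from $1/2$ to $1/5$, it destroys the polynomial lower bound entirely.

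The paper sidesteps amplification completely by never trying to control $|p|$ on all of $I_i$. Instead it works directly with $g(x) = \ln|p(x)| = \sum_{j}\ln|1-x/r_j|$ (after reducing to real roots in $[1/\kappa, 1+\eta]$) and observes that $\max_{x\in\mathcal R} g(x)$ is lower-bounded by any weighted average $\int_{\mathcal R} w\,g / \int_{\mathcal R} w$; with the weight $w(x) = 2^{ic}$ on $\mathcal R_i$, each individual root $r$ contributes at least $-O(1)/\kappa^{c}$ to that average, uniformly in $r$. Summing over $k$ roots forces $k \gtrsim \kappa^{c}$ for $\max|p|$ to drop below $e^{-1} > 1/3$. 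This argument only ever needs bounds on $\int_{\mathcal R_i} w(x)\ln|1-x/r|\,dx$ --- i.e.\ on the behaviour of a \emph{single} root-factor integrated against the weight --- and therefore never incurs an amplification cost that grows with $k$. If you want to pursue your route, you would need to replace the amplification step with something that avoids Remez-type losses, which in effect pushes you toward the paper's root-by-root potential argument anyway.
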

Lemma \ref{lbIntermediate} can be viewed as an extension of the classic Markov brother's inequality \cite{markovBrothers}, which implies that any polynomial with $p(0) = 1$ and $|p(x)| \le 1/3$ for all $x \in [1/\kappa,1]$ must have degree $\Omega(\sqrt{\kappa})$. Lemma \ref{lbIntermediate}  shows that even if we just restrict $|p(x)| \le 1/3$ on a few small subintervals of $[1/\kappa,1]$, $\Omega(\kappa^c)$ degree is still required. We do not carefully optimize the constant $c$, although we believe it should be possible to improve to nearly $1/2$ (see discussion in Appendix \ref{tighterLowerBound}). This would match the upper bound achieved by the Chebyshev polynomials of the first kind, appropriately shifted and scaled.
Given Lemma \ref{lbIntermediate} it is easy to show Theorem \ref{thm:lb}:
\begin{proof}[Proof of Theorem \ref{thm:lb}]
Let $\bv{A} \in \R^{n\times n}$ be any matrix with eigenvalues equal to $\lambda_1,...,\lambda_n$ -- e.g. a diagonal matrix with these values as its entries.
Assume by way of contradiction that there is a polynomial $p(x)$ with degree $k < \lfloor \kappa^c/377 \rfloor$ which satisfies:
\begin{align*}
\max_{x \in \bigcup_{i=1}^n [\lambda_i(\bv{A})-\eta, \lambda_i(\bv{A})+\eta]} |p(x)-1/x| < 1/6.
\end{align*}
Then if we set $\bar{p}(x) = 1-x p(x)$, $\bar{p}(0) = 1$ and for any $x \in \bigcup_{i=1}^n [\lambda_i-\eta, \lambda_i+\eta]$, 
\begin{align*}
|\bar{p}(x)| \le | x p(x) - 1| \le |x| \cdot |p(x)-1/x| < \frac{|x|}{6} \le \frac{1}{3}
\end{align*}
since $|x| \le 2$ when $\eta \le 1$. Since $\bar{p}(x)$ has degree $k+1 \le \kappa^c/377$, it thus contradicts Lemma \ref{lbIntermediate}.
\end{proof}

\subsection{Hard instance construction}
\label{spectrum_construction}
We begin by describing the ``hard'' eigenvalue distribution that is used to prove Lemma \ref{lbIntermediate} for any given condition number $\kappa \geq 2$ and range radius $\eta$. Define $\lfloor \log_2(\kappa) \rfloor$ intervals: 
$$I_i \eqdef \left [\frac{1}{2^{i}},\frac{1}{2^{i-1}} \right] \text{ for } i = 1,\ldots, \lfloor \log_2(\kappa) \rfloor. $$
In each interval $I_i$ we place $z$ evenly spaced eigenvalues, where:
\begin{align*}
z = \lceil \ln 1/\eta \rceil.
\end{align*}
That is, the eigenvalues in interval $I_i$ are set to:
\begin{align}\label{eigSpacing}
\lambda_{i,j} = \frac{1}{2^{i}} + \frac{j}{z2^{i}} \text{ for } j = 1, \ldots, z.
\end{align}
Thus, our construction uses $\lfloor \log_2 \kappa \rfloor \cdot \lceil \ln 1/\eta \rceil$ eigenvalues total. The smallest is $> \frac{1}{\kappa}$ and  the largest is $\leq 1$, as required in the statement of Lemma \ref{lbIntermediate}. For convenience, we also define:
\begin{align*}
\mathcal{R}_{i,j} &\eqdef [\lambda_{i,j} - \eta, \lambda_{i,j}+\eta] &\mathcal{R}_{i} &\eqdef \bigcup_{j}\mathcal{R}_{i,j} &&\text{ and } &\mathcal{R} &\eqdef \bigcup_{i}\mathcal{R}_{i}.
\end{align*}
By the assumption of Lemma \ref{lbIntermediate} that $\eta \le \frac{1}{20 k^2}$, we have $\eta z = \eta\lceil\ln \frac{1}{\eta}\rceil \leq \sqrt{\eta} + \eta \leq \frac{1}{4\kappa}$.  So none of the $\mathcal{R}_{i,j}$ overlap and in fact are distance at least $\frac{1}{2z\kappa}$ apart (since the eigenvalues themselves have spacing at least $\frac{1}{z\kappa}$ by \eqref{eigSpacing}). An illustration is included in Figure \ref{fig:lower_bound_spect}.


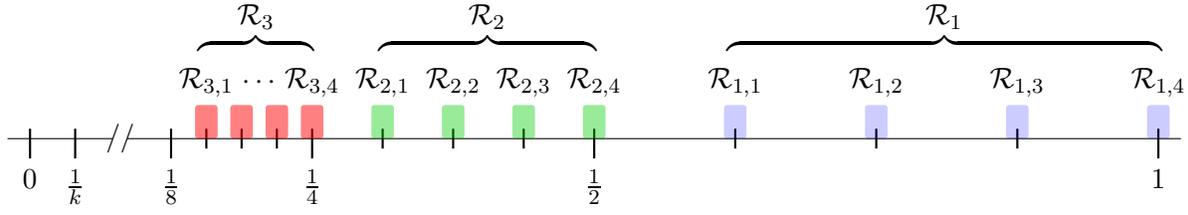
\begin{figure}[H]
\centering
\begin{tikzpicture}[scale=15,par/.style={sloped,fill=white,inner sep=-.6ex}]
\draw (-0.02,0) -- node[par]{//} (.18,0) -- (1.02,0);
\foreach \x/\xtext in {0/0,0.04/$\frac{1}{k}$,0.125/$\frac{1}{8}$,0.25/$\frac{1}{4}$,.5/$\frac{1}{2}$,1/1}
    \draw[thick] (\x,0.25pt) -- (\x,-0.5pt) node[below] {\xtext};   
\foreach \x/\xtext in {0.15625/,0.1875/,0.21875/,0.3125/,0.375/,0.4375/,.625/,.75/,.875/}
    \draw[thick] (\x,0.25pt) -- (\x,-0.25pt) node[below] {\xtext};
\foreach \x/\xtext in {0.15625/$\mathcal{R}_{3,1}$,0.1875/,0.21875/,0.25/$\mathcal{R}_{3,4}$}
{
    \fill[opacity = 0.5, red,rounded corners=.25ex] ({\x-.01},0ex) -- ({\x+.01}, 0ex) -- ({\x+.01}, .18ex) -- ({\x-.01},.18ex) -- cycle;
    \draw[thick] (\x,0.8pt)  node[above] {\xtext};
} 
\draw[thick] (0.20,1.9pt) node[above] {$\overbrace{\hspace{4em}}$};
\draw[thick] (0.20,2.5pt) node[above] {$\mathcal{R}_{3}$};
\draw[thick] (0.205,1pt)  node[above] {$\cdots$};
\foreach \x/\xtext in {0.3125/$\mathcal{R}_{2,1}$,0.375/$\mathcal{R}_{2,2}$,0.4375/$\mathcal{R}_{2,3}$,.5/$\mathcal{R}_{2,4}$}
{
    \fill[opacity = 0.4, black!20!green,rounded corners=.25ex] ({\x-.01},0ex) -- ({\x+.01}, 0ex) -- ({\x+.01}, .18ex) -- ({\x-.01},.18ex) -- cycle;
    \draw[thick] (\x,0.8pt)  node[above] {\xtext};
}  
\draw[thick] (0.405,1.9pt) node[above] {$\overbrace{\hspace{7.5em}}$};
\draw[thick] (0.405,2.5pt) node[above] {$\mathcal{R}_{2}$};
\foreach \x/\xtext in {.625/$\mathcal{R}_{1,1}$,.75/$\mathcal{R}_{1,2}$,.875/$\mathcal{R}_{1,3}$,1/$\mathcal{R}_{1,4}$}
{
    \fill[opacity = 0.2, blue,rounded corners=.25ex] ({\x-.01},0ex) -- ({\x+.01}, 0ex) -- ({\x+.01}, .18ex) -- ({\x-.01},.18ex) -- cycle;
    \draw[thick] (\x,0.8pt)  node[above] {\xtext};
}   
\draw[thick] (0.81,1.9pt) node[above] {$\overbrace{\hspace{15em}}$};
\draw[thick] (0.81,2.5pt) node[above] {$\mathcal{R}_{1}$};
\end{tikzpicture}
\caption{A sample ``hard'' distribution of eigenvalues with $z = 4$. The width of each range $\mathcal{R}_{i,j}$ is over-exaggerated for illustration -- in reality each interval has width $2\eta$, where $\eta \leq \frac{1}{4z\kappa}$.} 
\label{fig:lower_bound_spect}
\end{figure}

\subsection{Outline of the argument}
\label{sec:lb_outline}
Let $p$ be any polynomial with degree $k$ that satisfies $p(0) = 1$. 
To prove Lemma \ref{lbIntermediate} we need to show that we cannot have $|p(x)| \leq 1/3$ for all $x \in \mathcal{R}$ unless $k$ is relatively high (i.e. $\geq \kappa^c$). Let $r_1,\ldots, r_k$
denote $p$'s $k$ roots. So $|p(x)|  = \prod_{i=1}^k | 1 - \frac{x}{r_i}|$. Then define
\begin{align}
\label{g_sum_form}
g(x) \eqdef \ln|p(x)| = \sum_{i=1}^k \ln \left| 1 - \frac{x}{r_i} \right|.
\end{align}
To prove that $|p(x)| \geq 1/3$ for some $x \in \mathcal{R}$, it suffices to show that,
\begin{align}
\label{log_form_goal}
\max_{x \in \mathcal{R}} g(x) \geq -1.
\end{align}
We establish \eqref{log_form_goal} via a potential function argument. For any positive weight function $w(x)$,
\begin{align*}
\max_{x\in\mathcal{R}} g(x) \geq \frac{\int_{\mathcal{R}} w(x) g(x) dx}{\int_{\mathcal{R}} w(x) dx}.
\end{align*}
I.e., any weighted average lower bounds the maximum of a function. From \eqref{g_sum_form}, we have:
\begin{align}
\label{cont_lower_bound_main_disc}
\frac{1}{k}\max_{x\in\mathcal{R}} g(x) \geq \frac{1}{k}\cdot\frac{\int_{\mathcal{R}} w(x) g(x) dx}{\int_{\mathcal{R}} w(x) dx} \geq  \min_{r} \frac{\int_{\mathcal{R}} w(x) \ln|1 - x/r| dx}{\int_{\mathcal{R}} w(x) dx}.
\end{align}
We focus on bounding this last quantity. More specifically, we set $w(x)$ to be:
\begin{align*}
w(x) \eqdef 2^{ic} \text{\hspace{.5em} for \hspace{.5em}} x \in \mathcal{R}_i.
\end{align*}
The weight function increases from a minimum of $\sim 2^c$ to a maximum of $\sim \kappa^c$ as $x \in \mathcal{R}$ decreases from 1 towards $1/\kappa$.
With this weight function, we will be able prove that \eqref{cont_lower_bound_main_disc} is lower bounded by $-O(\frac{1}{\kappa^c})$. 
It will then follow that \eqref{log_form_goal} holds for any polynomial with degree $k = O(\kappa^c)$.


\subsection{Initial Observations}
Before giving the core argument, we make an initial observation that simplifies our analysis:
\begin{claim}
\label{initial_lb_assume}
If Lemma \ref{lbIntermediate} holds for the hard instance described in Section \ref{spectrum_construction} and all real rooted polynomials with roots on the range $[1/\kappa,1 +\eta]$, then it holds for all polynomials.
\end{claim}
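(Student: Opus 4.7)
The plan is to reduce an arbitrary polynomial $p$ of degree $k$ satisfying $p(0) = 1$ to a polynomial $q$ with $q(0) = 1$, degree at most $k$, all roots real and lying in $[1/\kappa, 1+\eta]$, and $|q(x)| \leq |p(x)|$ for every $x \in \mathcal{R}$. Once we have such a $q$, the assumed restricted version of Lemma~\ref{lbIntermediate} applied to $q$ gives $\max_{\mathcal{R}} |q| \geq 1/3$, and therefore $\max_{\mathcal{R}} |p| \geq 1/3$ as required. To build $q$, I would factor $p(x) = \prod_{j} (1 - x/r_j)$ over $\mathbb{C}$ and sequentially apply three root transformations, each preserving the normalization at $0$ and only decreasing $|q|$ pointwise on $\mathcal{R}$.

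First, for each complex conjugate pair of roots $(r, \bar r)$ with $\mathrm{Im}(r) \neq 0$, replace both by a double real root at $|r|$. A direct expansion yields
\[
  (1 - x/r)(1 - x/\bar r) - (1 - x/|r|)^2 \;=\; \frac{2x\bigl(|r| - \mathrm{Re}(r)\bigr)}{|r|^2} \;\geq\; 0 \qquad \text{for every } x > 0,
\]
so the new quadratic factor is pointwise dominated by the old one on $\mathcal{R} \subset (0,\infty)$. Second, for each non-positive real root $r_j \leq 0$, the factor $|1 - x/r_j| = 1 + x/|r_j| \geq 1$ for $x > 0$, so simply deleting it reduces the degree and strictly decreases $|q|$ while preserving $q(0) = 1$. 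Third, for any remaining positive real root $r_j$ outside $[1/\kappa, 1+\eta]$, relocate it: move each $r_j \in (0, 1/\kappa)$ to $1/\kappa$ and each $r_j > 1+\eta$ to $1+\eta$. Since the function $r \mapsto |1 - x/r|$ is decreasing on $(0, x)$ and increasing on $(x, \infty)$, both relocations pointwise decrease the corresponding factor on $\mathcal{R}$, provided the inclusion $\mathcal{R} \subseteq [1/\kappa, 1+\eta]$ holds.

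The only bookkeeping step is verifying this last inclusion. From the construction of Section~\ref{spectrum_construction}, the largest eigenvalue is $\lambda_{1,z} = 1$ and the smallest is $\lambda_{\lfloor \log_2 \kappa\rfloor, 1} = (1+1/z)/2^{\lfloor \log_2 \kappa \rfloor} \geq (1+1/z)/\kappa$, so $\lambda_{\min} - \eta \geq 1/\kappa$ as soon as $\eta \leq 1/(z\kappa)$. A short check shows that this follows from the hypotheses $\eta \leq 1/(20\kappa^2)$, $z = \lceil \ln(1/\eta) \rceil$, and $\kappa \geq 2$: writing $u = 1/\eta$, the inequality $\eta z \kappa \leq 1$ is equivalent to $u \geq \kappa(\ln u + 1)$, which holds at $u = 20\kappa^2$ and continues to hold for larger $u$ since $u - \kappa(\ln u + 1)$ is increasing on $[\kappa, \infty)$. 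After applying the three classes of transformations, $q$ has the required structure and $|q| \leq |p|$ on $\mathcal{R}$, completing the reduction. The only conceptually non-trivial step is the complex-conjugate replacement via the identity above; everything else is monotonicity of $|1-x/r|$ in $r$ combined with the observation that $\mathcal{R}$ sits in $(0,\infty)$.
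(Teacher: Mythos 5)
Your proof is correct and follows essentially the same three-part root transformation as the paper: replace complex-conjugate pairs by a pointwise-smaller double real root, delete non-positive roots, and clamp positive out-of-range roots to the nearest endpoint of $[1/\kappa,1+\eta]$ (after noting $\mathcal{R}\subseteq[1/\kappa,1+\eta]$). The only cosmetic difference is your choice of replacement root for a complex pair $(r,\bar r)$: you use the modulus $|r|$, whereas the paper parametrizes $1/r = a+bi$ and sets $b=0$, yielding $1/a = 1/\mathrm{Re}(1/r)$; both make the quadratic factor pointwise smaller in magnitude on $(0,\infty)$, so the argument goes through either way.
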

\begin{proof} We first show that we can consider just real rooted polynomials, before arguing that we can also assume their roots are within the range $[1/\kappa,1+\eta]$.

\medskip
\noindent\textbf{Real rooted:}
If there is any polynomial equal to $1$ at $x=0$ with magnitude $\leq 1/3$ for $x\in \mathcal{R}$, then there must be a real polynomial (i.e. with real coefficients) of the same degree that only has smaller magnitude on $\mathcal{R}$. So we focus on $p(x)$ with real coefficients.
Letting the roots of $p(x)$ be $r_1,\ldots,r_k$ and using that $p(0) = 1$, we can write:
\begin{align}
\label{mult_form_p}
p(x) = \prod_{i=1}^k (1-x/r_i).
\end{align}
By the complex conjugate root theorem, any polynomial with real coefficients and a complex root must also have its conjugate as a root. Thus, if $p(x)$ has root $\frac{1}{a + bi}$ for some $a,b$, the above product contains a term of the form: 
\begin{align*}
 (1- x (a-bi))(1- x(a+bi)) = 1 - 2 a x + a^2 x^2 + b^2 x^2.
\end{align*}
If we just set $b = 0$ (i.e. take the real part of the root), $1 - 2 a x + a^2 x^2 + b^2 x^2$ decreases for all $x >0$. In fact, since $(1-2ax + a^2 x^2) = (1-ax)^2 > 0$, the absolute value $|1 - 2 a x + a^2 x^2 + b^2 x^2|$ decreases if we set $b = 0$. Accordingly, by removing the complex part of $p$'s complex root, we obtain a polynomial of the same degree that remains $1$ at $0$, but has smaller magnitude everywhere else.

\medskip
\noindent\textbf{Roots in eigenvalue range:}
First note that we can assume $p$ doesn't have any negative roots: removing a term in \eqref{mult_form_p} of the form $(1 - x/r_i)$ for $r_i <0$ produces a polynomial with lower degree that is $1$ at $0$ but smaller in magnitude for all $x > 0$. It is not hard to see that by construction $\mathcal{R} \subseteq [1/\kappa,1+\eta]$ and thus $x > 0$ for all $x \in \mathcal{R}$. Thus removing a negative root can only lead to smaller maximum magnitude over $\mathcal{R}$.

Now, suppose $p$ has some root $0 < r < 1/\kappa$. For all $x \geq 1/\kappa$, 
\begin{align*}
\left|1-\frac{x}{1/\kappa}\right| < \left|1-\frac{x}{r}\right|.
\end{align*}
Accordingly, by replacing $p$'s root at $r$ with one at $(1/\kappa)$ we obtain a polynomial of the same degree that is smaller in magnitude for all $x \geq 1/\kappa$ and thus for all $x \in \mathcal{R} \subseteq [1/\kappa,1+\eta]$.

Similarly, suppose $p$ has some root $r > 1 + \eta$. For all $x \leq 1+\eta$,
\begin{align*}
\left|1-\frac{x}{1 + \eta}\right| < \left|1-\frac{x}{r}\right|.
\end{align*}
So by replacing $p$'s root at $r$ with a root at $(1+\eta)$, we obtain a polynomial that has smaller magnitude everywhere in $\mathcal{R}$.
\end{proof}

\subsection{Main argument}
With Claim \ref{initial_lb_assume}, we are now ready to prove Lemma \ref{lbIntermediate}, which implies Theorem \ref{thm:lb}.
\begin{proof}[Proof of Lemma \ref{lbIntermediate}]
Since we can restrict our attention to real rooted polynomials with each root $r_i \in [\frac{1}{\kappa}, 1+\eta]$, to prove \eqref{log_form_goal} via \eqref{cont_lower_bound_main_disc} we just need to establish that:
\begin{align}
\label{cont_lower_bound_main_spec}
\min_{r \in [\frac{1}{\kappa}, 1+\eta]} \frac{\int_{\mathcal{R}} w(x) \ln|1 - x/r| dx}{\int_{\mathcal{R}} w(x) dx} \geq -\frac{377}{\kappa^c}.
\end{align}
Consider the denominator of the left hand side:
\begin{align*}
\int_{\mathcal{R}} w(x) dx = \sum_{i = 1}^{\lfloor \log_2(\kappa) \rfloor} \int_{\mathcal{R}_i} 2^{ic} dx =  \sum_{i = 1}^{\lfloor \log_2(\kappa) \rfloor} 2\eta z2^{ic} \leq \eta z \kappa^c.
\end{align*}
With this bound in place, to prove \eqref{cont_lower_bound_main_spec} we need to show:
\begin{align*}
\min_{r \in [\frac{1}{\kappa}, 1+\eta]} \int_{\mathcal{R}} w(x) \ln|1 - x/r| dx \geq - 377\eta z.
\end{align*}
Recalling our definition of $\mathcal{R}$, this is equivalent to showing that:
\begin{align}
\label{cont_lower_bound_main_num_only}
\text{For all }&r \in \left[\frac{1}{\kappa}, 1+\eta\right], & \sum_{i = 1}^{\lfloor \log_2(\kappa) \rfloor} \int_{\mathcal{R}_i} w(x) \ln|1 - x/r| dx &\geq - 377\eta z.
\end{align}
To prove \eqref{cont_lower_bound_main_num_only} we divide the sum into three parts. Letting $\lambda_{\ell,h}$ be the eigenvalue closest to $r$:
\begin{align}
\sum_{i = 1}^{\lfloor \log_2(\kappa) \rfloor} \int_{\mathcal{R}_i} w(x) &\ln|1 - x/r| dx = \nonumber\\
\sum_{i = 1}^{\ell-2} &\int_{\mathcal{R}_i} w(x) \ln|1 - x/r| dx \label{int_high}\\
+ \sum_{i = \ell-1}^{\ell+1} &\int_{\mathcal{R}_i} w(x) \ln |1 - x/r| dx \label{int_mid}\\
+ \sum_{i = \ell+2}^{\lfloor \log_2(\kappa) \rfloor} &\int_{\mathcal{R}_i} w(x) \ln|1 - x/r| dx \label{int_low}.
\end{align} 
Note that when $\ell$ lies towards the limits of $\{1, \ldots, {\lfloor \log_2(\kappa) \rfloor}\}$, the sums in  \eqref{int_low} and \eqref{int_high} may contain no terms and \eqref{int_mid} may contain less than 3 terms. 

To gain a better understanding of each of these terms, consider Figure \ref{region_plot}, which plots $\ln|1 - x/r|$ for an example value of $r$. \eqref{int_high} is a weighted integral over regions $\mathcal{R}_i$ that lie well above $r$. Specifically, for all $x \in \bigcup_{i = 1}^{ \ell-2} \mathcal{R}_i$, $x \geq 2r$ and thus $\ln|1 - x/r|$ is \emph{strictly positive}. Accordingly, \eqref{int_high} is a positive term and will help in our effort to lower bound \eqref{cont_lower_bound_main_num_only}.

On the other hand, \eqref{int_mid} and \eqref{int_low} involve values of $x$ which are close to $r$ or lie below the root. For these values, $\ln|1 - x/r|$ is \emph{negative} and thus \eqref{int_mid} and \eqref{int_low} will hurt our effort to lower bound \eqref{cont_lower_bound_main_num_only}. We need to show that the negative contribution cannot be too large.

\begin{figure}[H]
\centering
\includegraphics[width=.65\textwidth]{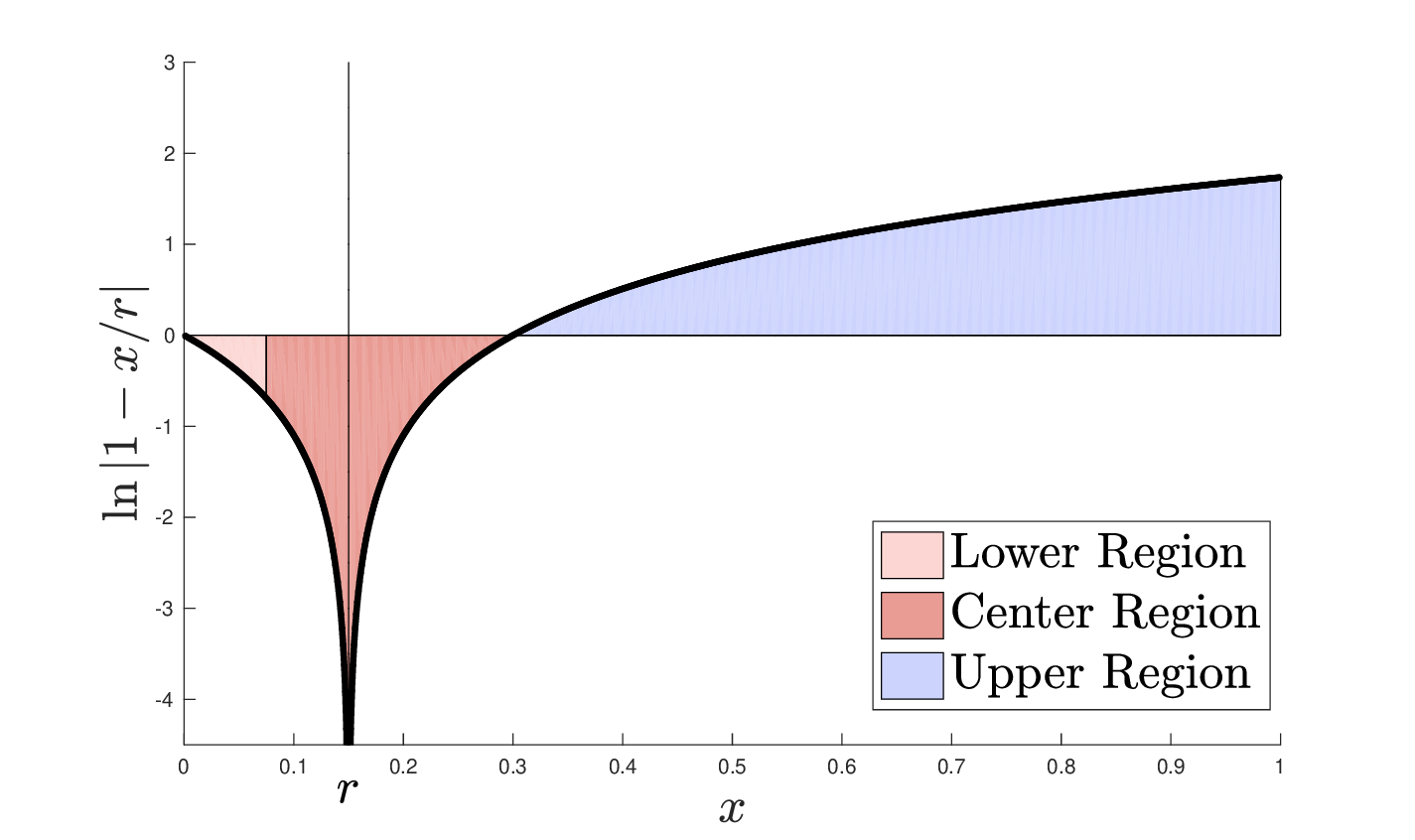} 
\caption{Plot of $\ln|1 - x/r|$ for $r = 1/10$. Proving that \eqref{cont_lower_bound_main_num_only} cannot be too small for any root $r$ requires lower bounding a weighted integral of this function over $\mathcal{R} \subset [1/\kappa,1+\eta]$.}
\label{region_plot}
\end{figure}

\subsubsection{Center region}
We first evaluate \eqref{int_mid}, which is the range containing eigenvalues close to $r$. In particular, we start by just considering $\mathcal{R}_{\ell,h}$, the interval around the eigenvalue nearest to $r$.
\begin{align*}
\int_{\mathcal{R}_{\ell,h}} w(x) \ln|1 - x/r| dx  =  2^{\ell c}\int_{\lambda_{\ell,h} - \eta}^{\lambda_{\ell,h} + \eta} \ln|1 - x/r| \geq 2^{\ell c} \int_{\lambda_{\ell,h} - \eta}^{\lambda_{\ell,h} +\eta} \ln|1 - x/\lambda_{\ell,h}|.
\end{align*} 
The inequality follows because $\ln|1-x/r|$ strictly increases as $x$ moves away from $r$. Accordingly, the integral takes on its minimum value when $r$ is centered in the interval $[\lambda_{\ell,h} -\eta, \lambda_{\ell,h} +\eta]$.  

\begin{align*}
2^{\ell c} \int_{\lambda_{\ell,h} - \eta}^{\lambda_{\ell,h} +\eta} \ln|1 - x/\lambda_{\ell,h}| = 2^{\ell c + 1} \int_{0}^{\eta} \ln \frac{x}{\lambda_{\ell,h}} &= 2^{\ell c + 1} \eta \left(\ln\eta - \ln\lambda_{\ell,h} - 1\right). 
\end{align*}
Since $\ln(\eta) \leq -1$ by the assumption that $\eta \le \frac{1}{20 \kappa^2}$ and since $- \ln\lambda_{\ell,h} \geq 0$ since $\lambda_{\ell,h} \le 1$, we  obtain:
\begin{align}
\label{bound_for_closest_bucket}
\int_{\mathcal{R}_{\ell,h}} w(x) \ln|1 - x/r| dx \geq 4\cdot 2^{\ell c}\eta \ln \eta \geq -4\cdot 2^{\ell c}\eta z.
\end{align}

Now we consider the integral over $\mathcal{R}_{\ell,i}$ for all $i \neq h$ and also over the entirety of $\mathcal{R}_{\ell+1}$ and $\mathcal{R}_{\ell-1}$.  
For all $x \in \left[\mathcal{R}_{\ell+1}\cup(\mathcal{R}_{\ell} \setminus\mathcal{R}_{\ell,h})\cup\mathcal{R}_{\ell-1}\right]$, $w(x) \leq 2^{(\ell+1)c} \leq 2^{1/5} \cdot 2^{lc}$ since $c \ge 1/5$. So we have:
\begin{align}\label{weightBound}
\int_{\mathcal{R}_{\ell+1}\cup(\mathcal{R}_{\ell} \setminus \mathcal{R}_{\ell,h})\cup\mathcal{R}_{\ell-1}} w(x) \ln|1 - x/r| dx &\geq \int_{\mathcal{R}_{\ell+1}\cup(\mathcal{R}_{\ell} \setminus \mathcal{R}_{\ell,h})\cup\mathcal{R}_{\ell-1}} w(x) \min(\ln|1 - x/r|,0) dx\nonumber\\
\geq
&2^{1/5} \cdot 2^{lc} \int_{\mathcal{R}_{\ell+1}\cup(\mathcal{R}_{\ell} \setminus \mathcal{R}_{\ell,h})\cup\mathcal{R}_{\ell-1}} \min(\ln|1 - x/r|,0) dx. 
\end{align}
where the last inequality holds by our bound on $w(x)$ and since $\min(\ln|1 - x/r|,0)$ is nonpositive.

The nearest eigenvalue to $\lambda_{\ell,h}$ is $\frac{1}{2^\ell z}$ away from it. Thus, the second closest eigenvalue to $r$ besides $\lambda_{\ell,h}$ is at least $\frac{1}{2^{\ell+1}z}$ away from $r$. By our assumption that $\eta \le \frac{1}{20\kappa^2}$, as discussed we have $\eta \leq \frac{1}{4\kappa z} \le \frac{1}{2^{\ell+2} z}$. Thus, the closest interval to $r$ besides $\mathcal{R}_{\ell,h}$ is at least $\frac{1}{2^{\ell+1}z} - \frac{1}{2^{\ell+2}z} \ge \frac{r}{8z}$ away.

Thus, using that again that $\ln|1 - x/r| $ is strictly increasing as $x$ moves away from $r$, that there are $3z-1$ eigenvalues in $\mathcal{R}_{\ell+1}\cup(\mathcal{R}_{\ell} \setminus \mathcal{R}_{\ell,h})\cup\mathcal{R}_{\ell-1}$, and \eqref{weightBound} we can lower bound the integral by:
\begin{align*}
\int_{\mathcal{R}_{\ell+1}\cup(\mathcal{R}_{\ell} \setminus \mathcal{R}_{\ell,h})\cup\mathcal{R}_{\ell-1}} w(x) &\ln|1 - x/r| dx \\
&\ge 2^{1/5}\cdot 2^{lc} \cdot 2\eta \sum_{i \in \{-\lfloor 1.5 z \rfloor,...,\lfloor 1.5z \rfloor\}\setminus 0} \min \left (\ln \left | 1 - \frac{r(1+\frac{i}{8z})}{r} \right |,0 \right)\\
&\geq 4\cdot 2^{1/5} \eta \cdot 2^{lc} \sum_{i=1}^{\lfloor 1.5z \rfloor} \min(\ln(i/8z),0) \\
&\geq 4\cdot 2^{1/5} \eta \cdot 2^{lc} \int_{x=0}^{1.5z} \ln(x/8z) dx \\
&\geq - 18.5 \cdot 2^{lc}\eta z.
\end{align*}
This bound combines with \eqref{bound_for_closest_bucket} to obtain a final lower bound on \eqref{int_mid} of:
\begin{align}
\label{int_mid_bound}
\sum_{i = \ell-1}^{\ell+1} \int_{\mathcal{R}_i} w(x) \ln|1 - x/r| dx &\geq -22.5\cdot2^{lc}\eta z.
\end{align}

\subsubsection{Lower region}
Next consider \eqref{int_low}, which involves values that are at least a factor of $2$ smaller than $r$. We have:
\begin{align*}
\text{For }  j \geq 2 \text{ and } x \in \mathcal{R}_{\ell+j} \text{,\hspace{1em}} & \ln \left|1 - \frac{x}{r}\right| \geq \ln\left(1 - \frac{1}{2^{j-1}}\right) \geq - \frac{1.39}{2^{j-1}}.
\end{align*}
For the last bound we use that $\frac{1}{2^{j-1}} \leq \frac{1}{2}$.
It follows that:
\begin{align}\label{556Bound}
\sum_{i = \ell+2}^{\lfloor \log_2(\kappa) \rfloor} &\int_{\mathcal{R}_i} w(x) \ln|1 - x/r| dx \geq \sum_{i = \ell+2}^{\lfloor \log_2(\kappa) \rfloor} - 2.78\eta z\cdot\frac{2^{ic}}{2^{i-\ell-1}} = - 5.56\cdot2^{\ell c}\eta z \sum_{j = 2}^{\lfloor \log_2(\kappa) \rfloor - \ell} \frac{1}{2^{j(1-c)}}.
\end{align}
Since we restrict $c \ge 1/5$, the sum above (which is positive) is at most:
\begin{align*}
 \sum_{j = 2}^{\lfloor \log_2(\kappa) \rfloor - \ell} \frac{1}{2^{j(1-c)}} \leq \frac{1}{2^{8/5}} \cdot \frac{1}{1-\frac{1}{2^{4/5}}} \le .8
\end{align*}
So we conclude using \eqref{556Bound} that:
\begin{align}
\label{int_low_bound}
\sum_{i = \ell+2}^{\lfloor \log_2(\kappa) \rfloor} &\int_{\mathcal{R}_i} w(x) \ln|1 - x/r| dx  > - 4.5 \cdot 2^{\ell c} \eta z.
\end{align}

\subsubsection{Upper region}

From \eqref{int_mid_bound} and \eqref{int_low_bound}, we see that \eqref{int_mid} and \eqref{int_low} sum to $- O(2^{\ell c} \eta z)$. Recall that we wanted the entirety of \eqref{int_high} + \eqref{int_mid} + \eqref{int_low} to sum to something greater than $- O(\eta z)$. For large values of $\ell$ (i.e., when $r$ is small), the $2^{\ell c}$ term is problematic. It could be on the order $- \kappa^c$. If this is the case, we need to rely on a positive value of \eqref{int_high} to cancel out the negative contribution of \eqref{int_mid} and \eqref{int_low}. Fortunately, from the intuition provided by Figure \ref{region_plot}, we expect \eqref{int_high} to increase as $r$ decreases.

We start by noting that:
\begin{align*}
\text{For }  j \geq 2 \text{ and } x \in \mathcal{R}_{\ell-j} \text{,\hspace{1em}} & \ln|1 - \frac{x}{r}| \geq \ln\left(2^{j-1} - 1\right) \geq \frac{j-2}{2}.
\end{align*}
It follows that 
\begin{align}\label{preHighBound}
\sum_{i = 1}^{\ell-2} \int_{\mathcal{R}_i} w(x) \ln|1 - x/r| dx&\geq \sum_{i = 1}^{\ell-2} 2\eta \cdot 2^{ic} \cdot \frac{\ell-i-2}{2} =  2^{\ell c} \eta z  \sum_{i = 1}^{\ell-2} \frac{\ell - i -2}{2^{c \left(\ell - i\right)}}.
\end{align}
By our requirement that $c \ge 1/5$, as long as $\ell \ge 20$ we can explicitly compute:
\begin{align}\label{tightnessSum}
\sum_{i = 1}^{\ell-2} \frac{\ell  - i - 2}{2^{c \left(\ell - i\right)}}  = \frac{1}{2^{3c}} + \frac{2}{2^{4c}} + \ldots +  \frac{\ell - 3}{2^{c(\ell - 1)}}
&\ge 27.4
\end{align}
which finally gives, using \eqref{preHighBound}:
\begin{align}\label{int_high_bound}
\sum_{i = 1}^{\ell-2} \int_{\mathcal{R}_i} w(x) \ln|1 - x/r| dx \ge 27.4 \cdot 2^{\ell c} \eta z.
\end{align}
We note for the interested reader that \eqref{preHighBound} is the reason that we cannot set $c$ too large (e.g. $c \ge 1/2$). If $c$ is too large, the sum in \eqref{tightnessSum} will be small, and will not be enough to cancel out the negative contributions from the center and lower regions.

\subsection{Putting it all together}
We can bound \eqref{cont_lower_bound_main_num_only} using our bounds on the upper region \eqref{int_high} (given in \eqref{int_high_bound}), the center region \eqref{int_mid} (given in \eqref{int_mid_bound}) and the lower region \eqref{int_low} (given in \eqref{int_low_bound}).
As long as $\ell \geq 20$ we have: 
\begin{align*}
\int_{\mathcal{R}} w(x) \ln|1 - x/r| dx \geq (-22.5 - 4.5 + 27.4) \cdot 2^{\ell c} \eta z \ge 0 \geq -\eta z.
\end{align*}

It remains to handle the case of $\ell < 20$. In this case, the concerning $2^{\ell c}$ term is not a problem. Specifically, when $\ell < 20$ we have $2^{\ell c} \le 2^{19/5}.$
%
Even ignoring the positive contribution of \eqref{int_high}, we can thus lower bound \eqref{cont_lower_bound_main_num_only}  using our center and lower region bounds by:
\begin{align*}
 \int_{\mathcal{R}} w(x) \log|1 - x/r| dx \geq (-22.5-4.5) \cdot 2^{19/5} \cdot \eta z \ge -377\eta z
\end{align*}
and it follows that \eqref{cont_lower_bound_main_spec} is lower bounded by
\begin{align*}
\min_{r \in [\frac{1}{\kappa}, 1+\eta]} \frac{\int_{\mathcal{R}} w(x) \log|1 - x/r| dx}{\int_{\mathcal{R}} w(x) dx} \geq -\frac{377}{\kappa^c}.
\end{align*}

Then, by the argument outlined in Section \ref{sec:lb_outline}, for any $k \leq \frac{\kappa^c}{377}$, there is no real rooted, degree $k$ polynomial $p$ with roots in $[\frac{1}{\kappa}, 1+\eta]$ such that:
\begin{align*}
p(0) &= 1 &&\text{and} & \log|p(x)| \leq -1 \text{ for all } x\in \mathcal{R}.
\end{align*}
Finally, applying Claim \ref{initial_lb_assume} proves Lemma \ref{lbIntermediate}, as desired.
\end{proof}

\section{Applications}\label{sec:applications}

In this section, we give example applications of Theorem \ref{mainLemmaFullRuntime} to matrix step function, matrix exponential, and top singular value approximation. We also show how Lanczos can be used to accelerate the computation of any function which is well approximated by a high degree polynomial with bounded coefficients. For each application, we show that Lanczos either improves upon or matches state-of-the-art runtimes, even when computations are performed with limited precision.

\subsection{Matrix step function approximation}
\label{sub:step}

In many applications it is necessary to compute the matrix step function $s_\lambda(\bv{A})$ where 
\begin{align*}
s_\lambda(x) \eqdef
\begin{cases}
0 \text{ for } x < \lambda\\
1 \text{ for } x \ge \lambda.
\end{cases}
\end{align*}
Computing $s_\lambda(\bv{A})\bv{x}$ is equivalent to projecting $\bv{x}$ onto the span of all eigenvectors of $\bv{A}$ with eigenvalue $\ge \lambda$. This projection is useful in data analysis algorithms that preprocess data points by projecting onto the top principal components of the data set -- here $\bv{A}$ would be the data covariance matrix, whose eigenvectors correspond to principal components of the data.
For example, as shown in \cite{frostig2016principal} and \cite{allen2016faster}, an algorithm for approximating $s_\lambda(\bv{A})\bv{x}$ can be used to efficiently solve the \emph{principal component regression} problem, a widely used form of regularized regression.
A projection algorithm can also be used to accelerate spectral clustering methods \cite{tremblay2016compressive}.

The matrix step function
$s_\lambda(\bv{A})$ is also useful because $\tr(s_\lambda(\bv{A}))$ is equal to the number of eigenvalues of $\bv{A}$ which are $ \ge \lambda$. This trace can be estimated up to $(1\pm \epsilon)$ relative error with probability $1-\delta$ by computing $\bv{x}^T s_\lambda(\bv{A}) \bv{x}$ for $O(\log(1/\delta)/\epsilon^2)$ random sign vectors \cite{hutchinson1990stochastic}. By composing step functions at different thresholds and using this trace estimation technique, it is possible to estimate the number of eigenvalues of $\bv{A}$ in any interval $[a,b]$, which is a useful primitive in estimating numerical rank \cite{ubaru2016fast}, tuning eigensolvers and other algorithms \cite{di2016efficient}, and estimating the value of matrix norms \cite{musco2017spectrum}.

\subsubsection{Soft step function application via Lanczos}
Due to its discontinuity at $\lambda$, $s_\lambda(x)$ cannot be uniformly approximated on the range of $\bv{A}$'s eigenvalues by any polynomial. Thus, we cannot apply Theorem \ref{mainLemmaFullRuntime} directly. However, it typically suffices to apply a softened step function that is allowed to deviate from the true step function in a small range around $\lambda$. For simplicity, we focus on applying such a function with $\lambda = 0$. Specifically, we wish to apply $h(\bv{A})$ where:
\begin{align}\label{softSign}
h(x) \in
\begin{cases}
 [0,\epsilon] \text{ for } x < -\gamma\\
   [0,1] \text{ for } x \in [-\gamma,\gamma] \\
 [1 - \epsilon,1] \text{ for } x \ge \gamma.
\end{cases}
\end{align}
For a positive semidefinite $\bv{A}$, by applying $h$ to $\bv{B} = \bv{A}(\bv{A}+\lambda \bv{I})^{-1} - \frac{1}{2}\bv{I}$, we can recover a soft step function at $\lambda$, which, for example, provably suffices to solve principal component regression \cite{frostig2016principal,allen2016faster} and to perform the norm estimation algorithms of \cite{musco2017spectrum}. We just need to apply $\bv{B}$ to the precision specified in Requirement \ref{req2}, which can be done, for example, using a fast iterative linear system solver. 

In  \cite{frostig2016principal}, Corollary 5.4, it is proven that for $q = O(\gamma^{-2} \log(1/\epsilon))$ the polynomial:
\begin{align}\label{slowPolynomial}
p_q(x) = \sum_{i=0}^q \left (x(1-x^2)^i \prod_{j=1}^i \frac{2j-1}{2j} \right )
\end{align}
is a valid softened sign function satisfying \eqref{softSign}. Additionally, it is shown in Lemma 5.5 that there is a lower degree polynomial $p^*(x)$ with degree $O(\gamma^{-1} \log(1/\epsilon\gamma))$ which uniformly approximates $p_q(x)$ to error $\epsilon$ on the range $[-1,1]$. Combining these two results we can apply Theorem \ref{mainLemmaFullRuntime} to obtain:
\begin{theorem}[Approximation of soft matrix sign function]\label{thm:step}
Given $\bv{B} \in \R^{n \times n}$ with $\norm{\bv{B}} \le 1/2$, $\bv{x} \in \R^n$, and $\epsilon < 1$, let $B = \log\left(\frac{n}{\epsilon\gamma}\right)$ and $q = O(\gamma^{-2} \log(1/\epsilon))$. Suppose Algorithm \ref{alg:lanczos} is run with $f(x) = p_q(x)$, which is a function satisfying \eqref{softSign}. After $k = O(\gamma^{-1} \log(1/\epsilon\gamma))$ iterations on a computer satisfying Requirement \ref{req1} and Requirement \ref{req2} for applying $\bv{B}$ to precision $\mach = 2^{-\Omega(B)}$ (e.g. a computer with $\Omega(B)$ bits of precision), the algorithm outputs $\bv{y}$ with $\norm{p_q(\bv{B}) \bv{x} - \bv{y}} \le \epsilon \norm{\bv{x}}$. The total runtime is $O(\mv(\bv{B})k + k^2 B + kB^2)$.
\end{theorem}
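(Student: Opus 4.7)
The plan is to apply Theorem \ref{mainLemmaFullRuntime} directly with $f(x) = p_q(x)$ and input matrix $\bv{B}$. Since $\norm{\bv{B}} \le 1/2$, every eigenvalue of $\bv{B}$ lies in $[-1/2, 1/2]$, so setting $\eta = 1/2$ makes the extended range $[\lmin(\bv{B})-\eta, \lmax(\bv{B})+\eta]$ a subset of $[-1,1]$. By the defining property \eqref{softSign}, we have $|p_q(x)| \le 1$ on $[-1,1]$, so the uniform bound in Theorem \ref{mainLemmaFullRuntime} holds with $C = 1$. Requirements \ref{req1} and \ref{req2} are given by hypothesis (the latter for multiplications with $\bv{B}$, which is all Algorithm \ref{alg:lanczos} needs).

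Next I would bound $\delta_k$, the best degree-$k$ uniform approximation to $p_q$ on $[-1,1]$. Here I would invoke Lemma 5.5 of \cite{frostig2016principal}, which provides a polynomial of degree $O(\gamma^{-1}\log(1/(\tilde{\epsilon}\gamma)))$ that uniformly approximates $p_q$ on $[-1,1]$ to error $\tilde{\epsilon}$. To make the $7k\delta_k$ term in Theorem \ref{mainLemmaFullRuntime} at most $\epsilon/2$, I would set $\tilde{\epsilon} = \epsilon/(14k)$. Since $k$ is itself only $O(\gamma^{-1}\log(1/(\epsilon\gamma)))$, inflating $\tilde{\epsilon}$ by a factor of $1/k$ only adds a $\log\log$ term inside the logarithm, so the resulting degree remains $O(\gamma^{-1}\log(1/(\epsilon\gamma)))$, matching the value of $k$ claimed in the theorem. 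Hence $\delta_k \le \epsilon/(14k)$.

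Finally I would invoke Theorem \ref{mainLemmaFullRuntime} with its internal accuracy parameter set to $\epsilon/2$. Because $\norm{\bv{B}} \le 1/2$, $\eta = 1/2$, and $k$ is polylogarithmic in the relevant parameters, the required bit budget collapses to $B = O(\log(nk\norm{\bv{B}}/(\epsilon\eta))) = O(\log(n/(\epsilon\gamma)))$, as in the statement. Combining the two error contributions,
\[
\norm{p_q(\bv{B})\bv{x} - \bv{y}} \le \left(7k\delta_k + (\epsilon/2)\,C\right)\norm{\bv{x}} \le \epsilon\,\norm{\bv{x}},
\]
establishing the error bound. The stated runtime $O(\mv(\bv{B})k + k^2 B + kB^2)$ is then inherited directly from Theorem \ref{mainLemmaFullRuntime}.

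The main obstacle is simply bookkeeping: confirming that tightening the target approximation error from $\epsilon$ to $\epsilon/(14k)$ inflates the approximating polynomial's degree by only a $\log\log$ factor (so the final $k$ still lies in $O(\gamma^{-1}\log(1/(\epsilon\gamma)))$), and checking that with $\eta = 1/2$ the required precision from Theorem \ref{mainLemmaFullRuntime} collapses to exactly $O(\log(n/(\epsilon\gamma)))$ bits. Every other step is a direct appeal to Theorem \ref{mainLemmaFullRuntime} and Lemma 5.5 of \cite{frostig2016principal}.
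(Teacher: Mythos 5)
Your proof follows essentially the same route as the paper, which is quite terse on this point: both apply Theorem~\ref{mainLemmaFullRuntime} to $\bv{B}$ with $f = p_q$, bound $C$ by the defining property \eqref{softSign}, and bound $\delta_k$ via Lemma~5.5 of \cite{frostig2016principal}. In fact you are more careful than the paper on one point it elides: you correctly observe that to control the $7k\delta_k$ term in Theorem~\ref{mainLemmaFullRuntime} one must tighten the target uniform-approximation error to $\tilde\epsilon = \epsilon/(14k)$, and you check that this only inflates the required degree by a $\log\log$ factor inside the log, so $k$ stays $O(\gamma^{-1}\log(1/(\epsilon\gamma)))$. That bookkeeping is exactly right.

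There is one small technical slip: Theorem~\ref{mainLemmaFullRuntime} requires $\eta \le \norm{\bv{A}}$, so with $\bv{A} = \bv{B}$ your choice $\eta = 1/2$ violates this hypothesis whenever $\norm{\bv{B}} < 1/2$ strictly. The paper sidesteps this by taking $\eta = \Theta(\norm{\bv{B}})$, e.g.\ $\eta = \norm{\bv{B}}$; then $[\lmin(\bv{B})-\eta, \lmax(\bv{B})+\eta] \subseteq [-2\norm{\bv{B}}, 2\norm{\bv{B}}] \subseteq [-1,1]$ still holds since $\norm{\bv{B}} \le 1/2$, and the bit budget $B = \log(nk\norm{\bv{B}}/(\epsilon\eta)) = \log(nk/\epsilon) = O(\log(n/(\epsilon\gamma)))$ comes out the same. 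Swapping in $\eta = \norm{\bv{B}}$ repairs the hypothesis without changing anything else in your argument.
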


Note that the assumption $\norm{\bv{B}} \le \frac{1}{2}$ allows us to set $\eta = \Theta(\norm{\bv{B}})$ and still have $[\lmin(\bv{B}) - \eta ,\lmax(\bv{B}) + \eta] \subseteq [-1,1]$, so we can apply the uniform approximation bound of \cite{frostig2016principal}. If we apply Theorem \ref{thm:step} to $\bv{B} = \bv{A}(\bv{A}+\lambda \bv{I})^{-1} - \frac{1}{2}\bv{I}$ for PSD $\bv{A}$ to compute the step function at $\lambda$, the assumption $\norm{\bv{B}} \le 1/2$ holds.

\subsubsection{Comparision with prior work}
\cite{allen2016faster} shows how to directly apply a polynomial with degree $O(\gamma^{-1} \log(1/\epsilon \gamma))$ which approximates a softened sign function. Furthermore, this application can be made stable using the stable recurrence for Chebyshev polynomial computation, and thus matches Theorem \ref{thm:step}. Both \cite{frostig2016principal} and \cite{allen2016faster} acknowledge Lanczos as a standard method for applying matrix sign functions, but avoid the method due to the lack of a complete theory for its approximation quality. Theorem \ref{thm:step} demonstrates that end-to-end runtime bounds can in fact be achieved for the Lanczos method, matching the state-of-the-art given in \cite{allen2016faster}.

\subsection{Matrix exponential approximation}\label{sec:exponential}

We next consider the matrix exponential, which is applied widely in numerical computation, theoretical computer science, and machine learning. For example, computing $\exp(-\bv{A}) \bv{x}$ for a PSD $\bv{A}$ is an important step in the matrix multiplicative weights method for semidefinite programming \cite{arora2005fast,kale2007efficient} and in the balanced separator algorithm of \cite{matrixExp}. 
When $\bv{A}$ is a graph adjacency matrix, $\tr(\exp(\bv{A}))$ is known as the \emph{Estrada index}. As in the case of the sign function, its value can be estimated to $(1\pm \epsilon)$ multiplicative error with probability $1-\delta$ if $\exp(\bv{A})$ is applied to $O(\log(1/\delta) \epsilon^{-2})$ random vectors  \cite{han2016approximating}.

Approximating the matrix exponential, including via the Lanczos method \cite{saad1992analysis,druskin1998using}, has been widely studied -- see \cite{matExp} for a review. Here we use our results to give general end-to-end runtime bounds for this problem in finite precision, which as far as we know are state-of-the-art.

\subsubsection{Approximation of $\exp(\bv{A})$ for general $\bv{A}$}
We can apply Theorem \ref{mainLemmaFullRuntime} directly to the matrix exponential. $\exp(x)$ can be uniformly approximated up to error $\epsilon$ for $x\in[a,b]$ with a truncated Taylor series expansion at $(b+a)/2$ with degree $k =  O((b-a) + \log(e^{b+a}/\epsilon))$ (see e.g. Lemma 7.5 of \cite{matrixExp} with $\delta$ set to $\delta = \epsilon \cdot e^{-(b+a)/2}$). 
Applying Theorem \ref{mainLemmaFullRuntime} with $\eta = \norm{\bv{A}}$ we have:
\begin{theorem}[General matrix exponential approximation]\label{generalExp}
Given symmetric $\bv{A} \in \R^{n \times n}$, $\bv{x} \in \R^n$, and $\epsilon \le 1$, let $B = \log  \left ( \frac{\max(n,\norm{\bv{A}})}{\epsilon} \right )$. If Algorithm \ref{alg:lanczos} is run with $f(x) = \exp(x)$ for $k = O(\norm{\bv{A}}+ \log(1/\epsilon))$ iterations on a computer satisfying Requirements \ref{req1} and \ref{req2} for precision $\mach = 2^{-\Omega(B)}$ (e.g. a computer using $\Omega(B)$ bits of precision), it outputs $\bv{y}$ satisfying $\norm{\exp(\bv{A}) \bv{x} - \bv{y}} \le \epsilon C \norm{\bv{x}}$ where $C = e^{2 \norm{\bv{A}}}$. The total runtime is $O(\mv(\bv{A}) k + k^2 B + kB^2)$.
\end{theorem}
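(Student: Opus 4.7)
The plan is to apply Theorem \ref{mainLemmaFullRuntime} directly with $f(x) = \exp(x)$ and $\eta = \norm{\bv{A}}$, which satisfies the requirement $\eta \le \norm{\bv{A}}$. Under this choice the extended range $[\lmin(\bv{A})-\eta,\lmax(\bv{A})+\eta]$ is contained in $[-2\norm{\bv{A}},2\norm{\bv{A}}]$, so $|\exp(x)| \le e^{2\norm{\bv{A}}} = C$ on the entire range, giving the function-magnitude bound the theorem needs.

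Next I would invoke the polynomial approximation result of \cite{matrixExp} (Lemma 7.5). For any target uniform error $\delta > 0$ on an interval $[a,b]$, the truncated Taylor expansion of $\exp$ centered at $(a+b)/2$ of degree $O((b-a) + \log(e^{b+a}/\delta))$ achieves error at most $\delta$. Setting $a = \lmin(\bv{A}) - \norm{\bv{A}}$ and $b = \lmax(\bv{A}) + \norm{\bv{A}}$ gives $b-a \le 4\norm{\bv{A}}$ and $e^{b+a} \le e^{2\norm{\bv{A}}} = C$. Theorem \ref{mainLemmaFullRuntime} called with parameter $\epsilon/2$ delivers error at most $(7k\delta_k + (\epsilon/2)C)\norm{\bv{x}}$, so to finish at $\epsilon C \norm{\bv{x}}$ I need $\delta_k \le \epsilon C/(14k)$. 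Feeding this back, the required degree is
\[
k = O\!\left(\norm{\bv{A}} + \log\!\left(e^{b+a}/\delta_k\right)\right) = O\!\left(\norm{\bv{A}} + \log(k/\epsilon)\right),
\]
where the $-2\norm{\bv{A}}$ from $-\log C$ cancels the $b+a$ contribution. Solving the implicit constraint yields $k = O(\norm{\bv{A}} + \log(1/\epsilon))$, since $\log k$ is absorbed into $O(\norm{\bv{A}} + \log(1/\epsilon))$.

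Plugging these parameters into Theorem \ref{mainLemmaFullRuntime} yields the desired error bound, and the precision requirement $B = \log(nk\norm{\bv{A}}/((\epsilon/2)\eta)) = O(\log(nk/\epsilon))$ reduces to $O(\log(\max(n,\norm{\bv{A}})/\epsilon))$ after substituting the value of $k$. The runtime $O(\mv(\bv{A})k + k^2 B + kB^2)$ is inherited verbatim from Theorem \ref{mainLemmaFullRuntime}.

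The argument is essentially routine; the only subtlety I anticipate is the bookkeeping step above, verifying that tightening $\delta_k$ to $\epsilon C/(14k)$ (rather than just $\epsilon$) does not inflate the degree beyond the claimed $O(\norm{\bv{A}} + \log(1/\epsilon))$. This works out precisely because the $-\log C$ contribution from the tightened $\delta_k$ cancels the $(b+a)$ term in the cited Taylor bound, leaving only logarithmic overhead that is absorbed into the dominant terms.
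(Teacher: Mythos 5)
Your proposal is correct and follows essentially the same route as the paper: set $\eta = \norm{\bv{A}}$, bound $|\exp(x)|$ by $C = e^{2\norm{\bv{A}}}$ on the extended range, and invoke the truncated-Taylor-series approximation bound from Lemma 7.5 of \cite{matrixExp} together with Theorem~\ref{mainLemmaFullRuntime}. The paper states this as a one-liner without the bookkeeping; your write-up correctly fills in the missing details, in particular the observation that the $e^{b+a}$ factor from the Taylor bound cancels against the $C = e^{2\norm{\bv{A}}}$ hidden in the target $\delta_k = \epsilon C/(14k)$, so that the implicit constraint resolves to $k = O(\norm{\bv{A}} + \log(1/\epsilon))$, and that the required precision $\log(nk/\epsilon)$ collapses to $O\bigl(\log(\max(n,\norm{\bv{A}})/\epsilon)\bigr)$ after substituting this $k$.
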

 

\subsubsection{Approximation of $\exp(-\bv{A})$ for positive semidefinite $\bv{A}$}

In applications such as to the matrix multiplicative weights update method and the balanced separator algorithm of \cite{matrixExp}, we are interesting in computing $\exp(-\bv{A})$ for positive semidefinite $\bv{A}$. In this case a better bound is achievable. Using Theorem 7.1 of \cite{matrixExp}, the linear dependence on $\norm{\bv{A}}$ in the iterations required for Theorem \ref{generalExp} can be improved to $\tilde O(\sqrt{\norm{\bv{A}}})$. Additionally, since $-\bv{A}$ has only non-positive eigenvalues, we can set $C = O(1)$.

However, the runtime of Lanczos still has a $\tilde O(k^2)$ term. We can significantly reduce $k$ and thus improve this cost via the rational approximation technique used in \cite{matrixExp}. Specifically, $\exp(\bv{A})$ can be approximated via a $k = O(\log(1/\epsilon))$ degree polynomial in $(\bv{I} + \frac{1}{k} \bv{A})^{-1}$. Further, our stability results immediately imply that it suffices to compute an approximation to this inverse, using e.g. the conjugate gradient method. Specifically we have:

\begin{theorem}[Improved matrix exponential approximation]\label{improvedExp}
Given PSD $\bv{A} \in \R^{n \times n}$, $\bv{x} \in \R^n$, and $\epsilon < 1$, let $k = O(\log(1/\epsilon))$, $B = \log \left ( \frac{n \max(\norm{\bv{A}},1)}{\epsilon} \right )$, and $\epsilon_1 = 2^{-\Omega(B)}$. Let $\alga(\bv{A},\bv{w},k,\epsilon_1)$ be an algorithm returning $\bv{z}$ with $\norm{ \left (\bv{I} + \frac{1}{k}\bv{A}\right)^{-1} \bv{w} - \bv{z}} \le \epsilon_1 \norm{\bv{w}}$ for any $\bv{w}$. There is an algorithm running on a computer with $\Omega(B)$ bits of precision that makes $k$ calls to $\alga(\bv{A},\bv{w},k,\epsilon_1)$ and uses $O(nk + kB^2)$ additional time to return $\bv{y}$ satisfying: $\norm{\exp(-\bv{A})\bv{x} - \bv{y}} \le \epsilon \norm{\bv{x}}$.
\end{theorem}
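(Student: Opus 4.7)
The plan is to reduce to Theorem~\ref{mainLemmaFullRuntime} applied to the matrix $\bv{B} \eqdef (\bv{I} + \tfrac{1}{k}\bv{A})^{-1}$, using $\alga$ as the black-box matrix-vector multiplication routine required by Requirement~\ref{req2}. Since $\bv{A}$ is PSD, $\bv{B}$ is PSD with eigenvalues in $(0,1]$; in particular $\norm{\bv{B}} \le 1$, and the eigenvalues of $\bv{B}$ are exactly $y_i = 1/(1 + \lambda_i(\bv{A})/k)$.

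First I would invoke the rational approximation result underlying \cite{matrixExp} (their Theorem~7.1): for $k = O(\log(1/\epsilon))$ there is a polynomial $p$ of degree $< k$ with $|p(y) - \exp(-k(1-y)/y)| \le \epsilon/3$ for all $y \in (0,1]$, and consequently $|p(y)| \le 1 + \epsilon/3 \le 2$ on that range. Since $\exp(-k(1-y_i)/y_i) = \exp(-\lambda_i(\bv{A}))$, this yields $\norm{p(\bv{B}) - \exp(-\bv{A})} \le \epsilon/3$, reducing the problem to approximating $p(\bv{B})\bv{x}$.

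Second I would run Algorithm~\ref{alg:lanczos} on $\bv{B}$ with $f = p$ for $k$ iterations, implementing each multiplication $\bv{B}\bv{w}$ by a call to $\alga(\bv{A},\bv{w},k,\epsilon_1)$. By choosing $\epsilon_1 = 2^{-\Omega(B)}$ small enough, the output of $\alga$ satisfies $\norm{\alga(\bv{A},\bv{w},k,\epsilon_1) - \bv{B}\bv{w}} \le \epsilon_1 \norm{\bv{w}} \le 2 n^{3/2} \norm{\bv{B}} \norm{\bv{w}} \mach$, so Requirement~\ref{req2} holds for $\bv{B}$. Theorem~\ref{mainLemmaFullRuntime} then applies with input matrix $\bv{B}$, function $f = p$, $C = 2$, and $\eta = \lmin(\bv{B})/2$. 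Because $p$ has degree $< k$, we get $\delta_k = 0$, so the returned $\bv{y}$ obeys $\norm{p(\bv{B})\bv{x} - \bv{y}} \le (\epsilon/3) \cdot C \cdot \norm{\bv{x}}$, and the triangle inequality finishes the error bound:
\begin{align*}
\norm{\exp(-\bv{A})\bv{x} - \bv{y}} \le \norm{\exp(-\bv{A})\bv{x} - p(\bv{B})\bv{x}} + \norm{p(\bv{B})\bv{x} - \bv{y}} \le \epsilon \norm{\bv{x}}.
\end{align*}

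For the runtime, Lanczos calls $\alga$ exactly $k$ times; each iteration also costs $O(n)$ for the scalar Lanczos arithmetic. The stable eigendecomposition of the tridiagonal $\bv{T}$ and the final assembly of $\bv{y} = \norm{\bv{x}} \bv{Q} f(\bv{T}) \bv{e}_1$ cost $O(k^2 B + k B^2)$ by the same post-processing argument used in the proof of Theorem~\ref{mainLemmaFullRuntime}. Since $k = O(\log(1/\epsilon)) = O(B)$, the $k^2 B$ term is absorbed into $k B^2$, giving total non-$\alga$ cost $O(nk + kB^2)$.

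The main obstacle is bookkeeping around precision: verifying that $\epsilon_1 = 2^{-\Omega(B)}$ is simultaneously (i)~tight enough that $\alga$ satisfies Requirement~\ref{req2} for $\bv{B}$ with machine precision $\mach$, and (ii)~compatible with the implicit requirement in Theorem~\ref{mainLemmaFullRuntime} that $\eta \ge 85 n^{3/2} k^{5/2} \norm{\bv{B}} \mach$, given that $\eta = \Theta(\lmin(\bv{B})) = \Theta(1/(1 + \norm{\bv{A}}/k))$ may be small. The choice $B = \log(n \max(\norm{\bv{A}},1)/\epsilon)$ in the theorem statement is precisely calibrated to absorb $\log(1/\eta)$, so both conditions hold once the constant hidden in $\mach = 2^{-\Omega(B)}$ is taken large enough.
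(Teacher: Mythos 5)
Your high-level plan is essentially the paper's: reduce to Theorem~\ref{mainLemmaFullRuntime} applied to $\bv{B} = (\bv{I}+\tfrac{1}{k}\bv{A})^{-1}$, implement multiplications by $\bv{B}$ via the black-box $\alga$ so that Requirement~\ref{req2} holds, and invoke the rational approximation of $e^{-x}$ from \cite{matrixExp}. You make one genuine simplification: you set $f = p$ (the approximating polynomial) so that $\delta_k = 0$ and push all the error into the $\epsilon C$ term, whereas the paper sets $f(x) = e^{-k/x + k}$ (the exact function whose composition with $\bv{B}$ gives $\exp(-\bv{A})$) so that $C = e$ is trivially small but $\delta_k$ must be bounded separately. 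Both routes can close, but the $\delta_k = 0$ route places the entire burden on $C$, and that is exactly where your argument has a gap.

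The gap is in the choice $\eta = \lmin(\bv{B})/2$. Theorem~\ref{mainLemmaFullRuntime} requires $C \ge |p(y)|$ on the \emph{extended} range $[\lmin(\bv{B}) - \eta,\; \lmax(\bv{B}) + \eta]$, and since $\lmax(\bv{B})$ can equal $1$, this range includes $[1, 1+\eta]$ --- outside the interval $(0,1]$ on which you have control. Note that $\lmin(\bv{B}) = \tfrac{1}{1 + \norm{\bv{A}}/k}$ can be close to $1$ (when $\norm{\bv{A}}$ is small), so your $\eta$ can be nearly $1/2$. A degree-$<k$ polynomial bounded by $2$ on $(0,1]$ can grow like $e^{\Theta(k\sqrt{\eta})}$ at $1+\eta$, so $C$ could be $\poly(1/\epsilon)$, and then the $\epsilon C\norm{\bv{x}}$ term from Theorem~\ref{mainLemmaFullRuntime} is not $O(\epsilon\norm{\bv{x}})$. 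You correctly flag the \emph{lower} bound on $\eta$ (the $85 n^{3/2}k^{5/2}\norm{\bv{B}}\mach$ requirement) as something to check, but the missing observation is that $\eta$ must also be \emph{upper} bounded by something like $O(1/k^2)$ (the paper uses $\epsilon/(ck^3)$ together with a Markov brothers argument) so that the polynomial cannot blow up on $[1, 1+\eta]$. With $\eta = \Theta(\min(\lmin(\bv{B}), 1/k^2))$ the argument goes through: $C$ stays $O(1)$, and $\log(1/\eta) = O(\log(k) + \log(1 + \norm{\bv{A}}/k))$ is already absorbed by $B = \log\left(\tfrac{n\max(\norm{\bv{A}},1)}{\epsilon}\right)$, so the precision requirement is unchanged.
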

Theorem \ref{improvedExp} can be compared to Theorem 6.1 of \cite{matrixExp}. It has an improved dependence on $k$ since the modified Lanczos algorithm used in \cite{matrixExp} employs reorthogonalization at each iteration and thus incurs a cost of $O(nk^2)$. Additionally, \cite{matrixExp} focuses on handling error due to the approximate application of $(\bv{I} + \frac{1}{k}\bv{A})^{-1}$, but assumes exact arithmetic for all other operations.

\begin{proof}
We apply Theorem \ref{mainLemmaFullRuntime} with matrix $\bv{B} = (\bv{I} + \frac{1}{k} \bv{A})^{-1}$ and $f(x) = e^{-k/x+k}.$ We can write $\bv{B} = g(\bv{A})$ where $g(x) = \frac{1}{1+x/k}$ and thus have $f(\bv{B}) = \exp(-\bv{A})$ since $f(g(x)) = e^{-x}$. 

Additionally, $\bv{B}$'s eigenvalues all fall between $\frac{k}{\norm{\bv{A}}}$ and $1$. Set $\eta = \min(\frac{k}{2\norm{\bv{A}}},\frac{\epsilon}{ck^3}) \le \norm{\bv{B}}$ for sufficiently large constant $c$. Then for all $x \in [\lmin(\bv{B}) - \eta, \lmax(\bv{B}) + \eta]$, we can loosely bound:
\begin{align}\label{magnitudeBound}
|f(x)| \le e^{-k/(1+\eta) + k} \le e^{-k/(1+1/k) + k} \le e.
\end{align}
By Corollary 6.9 of \cite{matrixExp}, there is a degree $k$ polynomial $p^*(x)$ satisfying $p^*(0) = 0$ and:
\begin{align}\label{expPolyApprox}
\sup_{x \in (0,1]} \left | f(x) - p^*(x) \right | = O(k \cdot 2^{-k}).
\end{align}

We need to bound the error of approximation on the slightly extended range $[\lmin(\bv{B})-\eta, \lmax(\bv{B}) + \eta] \subset (0, 1+\eta]$. We do this simply by arguing that $f(x)$ and $p^*(x)$ cannot diverge substantially over the range $[1,1+\eta]$. In this range we can bound $f(x)$: 
\begin{align}\label{fChangeBound}
1 < e^{-k/x+k} \le e^{\frac{-k}{1+\epsilon/ck^3}+k} \le e^{\frac{-ck^4 + ck^4 + k\epsilon}{ck^3+\epsilon}} \le e^{\epsilon/(ck^2)} \le 1+ O \left (\frac{\epsilon}{k^2} \right ).
\end{align}

Additionally, by the Markov brother's inequality, any degree $k$ polynomial $p(x)$ with $|p(x)| \le 1$ for $x \in [-1,1]$ has derivative $p'(x) \le k^2$ on the same range.
By \eqref{expPolyApprox}, if we set $k = c\log(1/\epsilon)$ for large enough constant $c$, we have $\sup_{x \in (0,1]} \left | f(x) - p^*(x) \right | = O\left (\frac{\epsilon}{k} \right )$. Since $f(x) \le 1$ on this range, we thus loosely have $|p^*(x)| \le 2$ for $x \in [0,1]$. We can then claim that $p^*$ changes by at most $O \left ( \frac{\epsilon}{k} \right )$ on  $[1,1+\eta]$, which has width $O \left (\frac{\epsilon}{k^3}\right)$. Otherwise, $p^*$ would have derivative $\ge c k^2$ for some constant $c$ at some point in this range, contradicting Markov's inequality after appropriately shifting and scaling $p^*$ to have magnitude bounded by $1$ on $[-1,1]$.
Overall, combined with \eqref{fChangeBound} we have:
\begin{align*}
\delta_k \le \max_{x \in [\lmin(\bv{B})-\eta, \lmax(\bv{B}) + \eta]} |f(x) - p(x)| = O \left ( \frac{\epsilon}{k} \right ).
\end{align*}
Theorem \ref{mainLemmaFullRuntime} applies with $C = e$ from \eqref{magnitudeBound}, $k = O(\log(1/\epsilon))$ and $\eta = \min(\frac{k}{2\norm{\bv{A}}},\frac{\epsilon}{ck^3})$ as long as we use $\Omega \left (\log \left (\frac{ nk \norm{\bv{B}}}{\epsilon\eta} \right )\right) =  \Omega \left ( \log \left ( \frac{n \max(\norm{\bv{A}},1)}{\epsilon} \right ) \right )$ bits of precision (to satisfy Requirement \ref{req1}) and can compute $\bv{B}\bv{w}$ up to error $\epsilon_1\|\bv{w}\|$ for any $\bv{w}$ (to satisfy Requirement \ref{req2}). Accordingly,
\begin{align*}
\norm{f(\bv{B})\bv{x} - \bv{y}} = \norm{\exp(-\bv{A})\bv{x} - \bv{y}} \le \epsilon \norm{\bv{x}}.
\end{align*}
\end{proof}

For the balanced separator algorithm of \cite{matrixExp}, the linear system solver $\alga(\bv{A},\bv{w},k,\epsilon_1)$ can be implemented used a fast, near linear time Laplacian system solver. For general matrices, it can be implemented via the conjugate gradient method. Applying Theorem \ref{thm:greenbaum} to $\bv{B} = (\bv{I} + \frac{1}{k} \bv{A})$, setting $\eta = \lmin(\bv{B})/2 \ge 1/2$ and $k = O \left (\log(\kappa(\bv{B})/\epsilon_1) \cdot \sqrt{\kappa(\bv{B})} \right)$ ensures that CG computes $\bv{y}$ satisfying $\norm{ \left (\bv{I} + \frac{1}{k}\bv{A}\right)^{-1} \bv{w} - \bv{z}} \le \epsilon_1 \norm{\bv{w}}$ if $\Omega \left (\log (n \kappa(\bv{B})/\epsilon_1)\right )$ bits of precision are used. Additionally, we can multiply a vector by $\bv{B}$ in time $\mv(\bv{B}) = \mv(\bv{A}) + n$.
Plugging in $\kappa(\bv{B}) \le 1+ \norm{\bv{A}}/k$ gives:
\begin{corollary} Given PSD $\bv{A} \in \R^{n \times n}$, $\bv{x} \in \R^n$, and $\epsilon < 1$, there exists an algorithm running on a computer with $B = \Omega \left ( \log \left ( \frac{n \max(\norm{\bv{A}},1)}{\epsilon} \right ) \right )$ bits of precision which returns $\bv{y}$ satisfying $\norm{\exp(-\bv{A})\bv{x} - \bv{y}} \le \epsilon \norm{\bv{x}}$ in $O \left (\left[(\mv(\bv{A})+n) \log \left (\frac{n\max(\norm{\bv{A}},1)}{\epsilon} \right ) \sqrt{\frac{\norm{\bv{A}}}{\log(1/\epsilon)}+1} + \log^2 \left (\frac{n\max(\norm{\bv{A}},1)}{\epsilon} \right )\right]\cdot\log\frac{1}{\epsilon} \right )$ time.
\end{corollary}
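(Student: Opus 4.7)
The plan is to instantiate Theorem~\ref{improvedExp} using the conjugate gradient method of Theorem~\ref{thm:greenbaum} as the required linear system solver $\alga$. Theorem~\ref{improvedExp} reduces computing $\exp(-\bv{A})\bv{x}$ to $k = O(\log(1/\epsilon))$ calls to an approximate solver for $(\bv{I} + \bv{A}/k)^{-1}\bv{w}$ with relative error $\epsilon_1 = 2^{-\Omega(B)}$, plus an additional $O(nk + kB^2)$ overhead. I will run CG on the positive definite matrix $\bv{M} \eqdef \bv{I} + \bv{A}/k$, whose eigenvalues lie in $[1, 1 + \norm{\bv{A}}/k]$, so $\kappa(\bv{M}) \le 1 + \norm{\bv{A}}/k$.

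To bound the number of CG iterations required, I will apply Theorem~\ref{thm:greenbaum} to $\bv{M}$ with interval half-width $\eta' = 1/2 \le \lmin(\bv{M})/2$, so that $\bigcup_i [\lambda_i(\bv{M}) - \eta', \lambda_i(\bv{M}) + \eta'] \subseteq [1/2, 3\kappa(\bv{M})/2]$. A shifted and scaled Chebyshev polynomial of the first kind gives a uniform approximation to $1/x$ on this range of error $\epsilon_1/(4\kappa(\bv{M}))$ at degree $k_{\text{CG}} = O\bigl(\sqrt{\kappa(\bv{M})}\log(\kappa(\bv{M})/\epsilon_1)\bigr)$. Plugging this $\bar{\delta}_{k_{\text{CG}}}$ bound into Theorem~\ref{thm:greenbaum} yields the target relative error $\epsilon_1$, and the $\Omega(\log(n\kappa(\bv{M})/\epsilon_1))$ bit requirement it imposes is subsumed by our choice of $B = \Omega(\log(n \max(\norm{\bv{A}},1)/\epsilon))$.

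Each CG iteration costs $\mv(\bv{M}) + O(n) = \mv(\bv{A}) + O(n)$ time, since multiplying by $\bv{M}$ amounts to one multiplication by $\bv{A}$ plus an $O(n)$ vector update. Using $\kappa(\bv{M}) \le 1 + \norm{\bv{A}}/\log(1/\epsilon)$, a single invocation of $\alga$ therefore runs in $O\bigl((\mv(\bv{A}) + n)\cdot B \cdot \sqrt{\norm{\bv{A}}/\log(1/\epsilon) + 1}\bigr)$ time. Multiplying by the $k = O(\log(1/\epsilon))$ outer calls and adding the $O(nk + kB^2)$ overhead from Theorem~\ref{improvedExp} yields exactly the claimed runtime.

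The argument is essentially bookkeeping: the main thing to check is that the precision $B$ demanded by the outer exponential algorithm also suffices for every CG invocation (including its polynomial-in-$n$ slack terms in Theorem~\ref{thm:greenbaum}), and that the choice $\eta' = 1/2$ is consistent both with Theorem~\ref{thm:greenbaum}'s definition of $\bar{\delta}_{k_{\text{CG}}}$ and with the Chebyshev-based uniform approximation of $1/x$ on $[1/2, 3\kappa(\bv{M})/2]$. No new technical ideas beyond the two cited theorems are needed.
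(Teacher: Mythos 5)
Your proposal is correct and follows essentially the same route as the paper's own argument: instantiate Theorem~\ref{improvedExp} with CG as the solver $\alga$, apply Theorem~\ref{thm:greenbaum} to $\bv{M}=\bv{I}+\bv{A}/k$ with $\eta'=\Theta(1)$, and invoke the standard Chebyshev bound for the degree needed to uniformly approximate $1/x$ on the resulting range. One small imprecision: the extended range $\bigcup_i[\lambda_i(\bv{M})-\tfrac12,\lambda_i(\bv{M})+\tfrac12]$ sits inside $[\tfrac12,\lmax(\bv{M})+\tfrac12]$, whose endpoint ratio is $O(\lmax(\bv{M}))$ rather than $O(\kappa(\bv{M}))$; when $\lmin(\bv{A})>0$ we can have $\kappa(\bv{M})<\lmax(\bv{M})$, so writing the range as $[\tfrac12,3\kappa(\bv{M})/2]$ and the CG degree as $O(\sqrt{\kappa(\bv{M})}\log(\kappa(\bv{M})/\epsilon_1))$ is not literally right. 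This does not affect your final runtime because you (like the paper) ultimately upper-bound both quantities by $1+\norm{\bv{A}}/k$, which is exactly $\lmax(\bv{M})$, so the claimed $\sqrt{\norm{\bv{A}}/\log(1/\epsilon)+1}$ factor is correct.
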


\subsection{Top singular value approximation}
\label{sub:sing}

Beyond applications to matrix functions, the Lanczos method and related Krylov subspace methods are the most common iterative algorithms for computing approximate eigenvectors and eigenvalues of symmetric matrices. Once $\bv{Q}$ and $\bv{T}$ are obtained by Algorithm \ref{alg:lanczos} (or a variant) the \emph{Rayleigh-Ritz method} can be used to find approximate eigenpairs for $\bv{A}$. Specifically, $\bv{T}$'s eigenvalues are taken as approximate eigenvalues and $\bv{Q}\bv{v}_i$ is taken as an approximate eigenvector for each eigenvector $\bv{v}_i$ of $\bv{T}$. For a \emph{non-symmetric matrix} $\bv{B}$, the Lanczos method can be used to find approximate \emph{singular vectors and values} since these correspond to eigenpairs of $\bv{B}^T\bv{B}$ and $\bv{BB}^T$. 

Substantial literature studies the accuracy of these approximations, both under exact arithmetic and finite precision. 
While addressing the stability of the Rayleigh-Ritz method is beyond the scope of this work, it turns out that, unmodified, our Theorem \ref{mainLemmaFullRuntime} can prove the stability of a related algorithm for the common problem of approximating just the top singular value of a matrix. In particular, for error parameter $\Delta$, our goal is to find some vector $\bv{u}$ such that:
\begin{align}
\label{top_singular_guarantee}
\frac{\|\bv{B}\bv{u}\|}{\|\bv{u}\|} \geq (1-\Delta) \max_{\bv{x}} \frac{\|\bv{B}\bv{x}\|}{\|\bv{x}\|}.
\end{align}
Here $\max_{\bv{x}} \frac{\|\bv{B}\bv{x}\|}{\norm{\bv{x}}} = \norm{\bv{B}} = \smax(\bv{B})$ is $\bv{B}$'s top singular value. In addition to being a fundamental problem in its own right, via deflation techniques, an algorithm for approximating the top singular vector of a matrix can also be used for the important problem of finding a nearly optimal low-rank matrix approximation to $\bv{B}$ \cite{NIPS2016_6507}.

Suppose we have $\bv{B} \in \R^{m\times n}$ that we can multiply on the right by a vector in  $\mv(\bv{B})$ time. In \emph{exact arithmetic} a vector $\bv{u}$ satisfying \eqref{top_singular_guarantee} can be found in time (see e.g. \cite{sachdeva2014faster}):
\begin{align*}
O\left(\mv(\bv{B})\sqrt{1/\Delta}\log\frac{n}{\Delta} + \left(\sqrt{1/\Delta}\log\frac{n}{\Delta} \right)^2\right).
\end{align*}
Note that, unlike other commonly stated bounds for singular vector approximation with the Lanczos method, this runtime does not have a dependence on the gaps between $\bv{B}$'s singular values -- i.e. it does not require a sufficiently large gap to obtain high accuracy.
Since the second term is typically dominated by the first, it is an improvement over the $\tilde{O}(\nnz(\bv{B})/\Delta)$ gap-independent runtime required, for example, by the standard power method. 

We can use Theorem \ref{mainLemmaFullRuntime} to prove, to the best of our knowledge, the first rigorous gap-independent bound  for Lanczos that holds in finite precision. It essentially matches the algorithm's exact arithmetic runtime for singular vector approximation.

\begin{theorem}[Approximating the top singular vector and value]\label{thm:top_sing}
Suppose we are given $\bv{B} \in \R^{m\times n}$ and error parameter $\Delta \leq 1/2$. Let $q = \frac{4}{\Delta}\log\frac{n}{\Delta}$, $B = \log \frac{n}{\Delta}$, and let $\bv{z} \in \{-1,1\}^n$ be chosen randomly by selecting each entry to be $1$ with probability $1/2$ and $-1$ otherwise.  If Algorithm \ref{alg:lanczos} is run with $f(x) = x^q$ on $\bv{B}^T\bv{B}$ and input vector $\bv{z}$ for $O(\sqrt{1/\Delta}\log\frac{n}{\Delta})$ iterations on a computer satisfying Requirement \ref{req1} and Requirement \ref{req2} with precision $\mach = 2^{-\Omega(B)}$ (e.g. a computer with $\Omega(B)$ bits of precision), then with probability $\geq 1/2$, $\bv{y} = \bv{Q}\bv{T}^q\bv{e}_1$ satisfies
\begin{align*}
\frac{\|\bv{B}\bv{y}\|}{\|\bv{y}\|} \geq (1-\Delta) \max_{\bv{x}} \frac{\|\bv{B}\bv{x}\|}{\|\bv{x}\|} = (1-\Delta) \norm{\bv{B}}.
\end{align*}
$\bv{y}$ takes $O(\mv(\bv{B}) \sqrt{1/\Delta}\log\frac{n}{\Delta} + \left(\sqrt{1/\Delta}\log\frac{n}{\Delta} \right)^2 B + \left(\sqrt{1/\Delta}\log\frac{n}{\Delta} \right)B^2)$ time to compute. Note that if this randomized procedure is repeated $O(\log(1/\delta))$ times and the $\bv{y}$ maximizing $\|\bv{B}\bv{y}\|/\|\bv{y}\|$ is selected, then it will satisfy the guarantee with probability $(1-\delta)$. 
\end{theorem}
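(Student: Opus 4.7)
Set $\bv{A}\eqdef\bv{B}^T\bv{B}$, which is PSD with eigenvalues $\sigma_1^2 \ge \cdots \ge \sigma_n^2 \ge 0$ and orthonormal eigenbasis $\bv{v}_1,\ldots,\bv{v}_n$. Note that $\|\bv{z}\| = \sqrt n$ deterministically. Running Algorithm \ref{alg:lanczos} with $f(x)=x^q$ on $(\bv{A},\bv{z})$ returns $\sqrt n\cdot\bv{y}$ for $\bv{y} = \bv{Q}\bv{T}^q\bv{e}_1$; since $\|\bv{B}\bv{y}\|/\|\bv{y}\|$ is scale-invariant, we may work with either vector interchangeably. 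The plan is a three-step reduction: (i) apply Theorem \ref{mainLemmaFullRuntime} to show $\sqrt n\cdot\bv{y}$ approximates $\bv{y}^\star\eqdef\bv{A}^q\bv{z}$ in a useful sense; (ii) run the standard power-method analysis on $\bv{y}^\star$ to show $\|\bv{B}\bv{y}^\star\|/\|\bv{y}^\star\| \ge (1-\Delta)\sigma_1$ with constant probability over the random sign $\bv{z}$; and (iii) verify that the additive Lanczos error does not degrade this Rayleigh quotient by more than $O(\Delta)\sigma_1$.

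\textbf{Polynomial approximation and invocation of Theorem \ref{mainLemmaFullRuntime}.} For step (i), choose $\eta = \|\bv{A}\|/2$, so $C\eqdef\max_{x\in[-\eta,\|\bv{A}\|+\eta]}|x^q| \le (\tfrac{3}{2}\|\bv{A}\|)^q$. A classical Chebyshev-truncation bound for the monomial (Sachdeva--Vishnoi; also the polynomial $p^\star$ of \cite{frostig2016principal,sachdeva2014faster}) guarantees that for any $\epsilon'$ there is a polynomial of degree $k = O(\sqrt{q\log(1/\epsilon')})$ with uniform error at most $\epsilon'C$ against $x^q$ on $[-\eta,\|\bv{A}\|+\eta]$, so $\delta_k \le \epsilon' C$. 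Choosing $q = \frac{4}{\Delta}\log(n/\Delta)$ and $\epsilon' = \Theta(\Delta/(k\sqrt n))$ yields $k = O(\sqrt{1/\Delta}\log(n/\Delta))$, matching the claimed iteration count. Theorem \ref{mainLemmaFullRuntime}, applied with its additive parameter also set to $\Theta(\Delta/(k\sqrt n))$, then gives
\begin{align*}
\|\sqrt n\cdot\bv{y} - \bv{y}^\star\| \;\le\; O(k\epsilon' C)\sqrt n,
\end{align*}
and its precision requirement $\Omega(\log(nk\|\bv{A}\|/(\epsilon'\eta))) = \Omega(\log(n/\Delta))$ matches the claimed bit budget $B$.

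\textbf{Power-method core.} For step (ii), write $\bv{z} = \sum_i \alpha_i\bv{v}_i$ with $\sum \alpha_i^2 = n$, so $\bv{y}^\star = \sum_i \alpha_i \sigma_i^{2q}\bv{v}_i$ and
\begin{align*}
\frac{\|\bv{B}\bv{y}^\star\|^2}{\|\bv{y}^\star\|^2} \;=\; \frac{\sum_i \alpha_i^2 \sigma_i^{4q+2}}{\sum_i \alpha_i^2 \sigma_i^{4q}}.
\end{align*}
Split indices into heavy $H = \{i:\sigma_i \ge (1-\Delta/2)\sigma_1\}$ and light; the light contribution to the denominator is at most $n(1-\Delta/2)^{4q}\sigma_1^{4q}$, which is $\le \Delta\cdot\alpha_1^2\sigma_1^{4q}$ once $q \ge \frac{4}{\Delta}\log(n/\Delta)$ and $\alpha_1^2 = \Omega(1)$. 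Since $\mathbb{E}[\alpha_1^2] = \|\bv{v}_1\|^2 = 1$ and $\mathbb{E}[\alpha_1^4]\le 3$ by Khintchine, Paley--Zygmund gives $\alpha_1^2 \ge 1/2$ with probability at least an absolute constant (boosted to the claimed $1/2$ via the constant-trial repetition device mentioned in the theorem, which keeps the total runtime within stated bounds). On this good event a short manipulation of the ratio above yields $\|\bv{B}\bv{y}^\star\|/\|\bv{y}^\star\| \ge (1-\Delta)\sigma_1$.

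\textbf{Perturbation step and main obstacle.} The delicate step is (iii): both $\|\bv{y}^\star\|$ and $C$ are of order $\sigma_1^{2q}$, an exponentially large quantity we do not know a priori, so Lanczos error must be small \emph{relative to} $\|\bv{y}^\star\|$, not in absolute terms. The random sign start supplies exactly this: on the event $\alpha_1^2 \ge 1/2$ we have $\|\bv{y}^\star\|\ge \sigma_1^{2q}/\sqrt 2 = \Omega(C)$, so our choice $\epsilon' = \Theta(\Delta/(k\sqrt n))$ makes $\|\sqrt n\cdot\bv{y} - \bv{y}^\star\| \le c\Delta\|\bv{y}^\star\|$ for small constant $c$. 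A standard first-order Rayleigh-quotient perturbation bound (writing $\sqrt n\cdot\bv{y} = \bv{y}^\star + \bs{\xi}$ and expanding both $\bv{y}^T\bv{A}\bv{y}$ and $\|\bv{y}\|^2$) then shows the Rayleigh quotient drops by at most $O(\Delta)\sigma_1^2$, giving $\|\bv{B}\bv{y}\|/\|\bv{y}\| \ge (1-O(\Delta))\sigma_1$; absorbing the constant into $\Delta$ completes the error bound. The runtime decomposes into $O(\mv(\bv{B})k)$ for the $k$ Lanczos iterations (each multiplying by $\bv{A} = \bv{B}^T\bv{B}$ in $2\,\mv(\bv{B})$ time) plus the $O(k^2B + kB^2)$ cost of the stable $f(\bv{T})$ post-processing from Theorem \ref{mainLemmaFullRuntime}.
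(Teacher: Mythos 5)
Your proof follows the same high-level three-step reduction as the paper — (i) Lanczos applies a degree-$k$ surrogate for $x^q$ to small error, (ii) a power-method analysis shows $\bv{y}^\star = \bv{A}^q\bv{z}$ has a good Rayleigh quotient, and (iii) a Rayleigh-quotient perturbation argument transfers the guarantee from $\bv{y}^\star$ to $\bv{y}$. The paper even isolates step (ii) as its own Lemma (Lemma \ref{power_method_exact}). But there is a genuine gap in your choice of $\eta$, and it breaks step (iii).

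You set $\eta = \|\bv{A}\|/2$, so $C = \max_{x\in[\lmin-\eta,\,\lmax+\eta]}|x^q| \le (\tfrac{3}{2}\|\bv{A}\|)^q = (\tfrac{3}{2})^q\lambda_{\max}^q$. Your perturbation step then asserts $\|\bv{y}^\star\| = \Omega(C)$ on the good event, but on that event you only have $\|\bv{y}^\star\| \ge \lambda_{\max}^q/\sqrt 2$, so $\|\bv{y}^\star\|/C \le (\tfrac{2}{3})^q$, which is \emph{exponentially small} in $q$. With $q = \tfrac{4}{\Delta}\log(n/\Delta)$ this factor is roughly $(n/\Delta)^{-c/\Delta}$, and no $\poly(n,k,1/\Delta)$ choice of $\epsilon'$ in Theorem \ref{mainLemmaFullRuntime} can compensate for it. So the Lanczos error $O(\Delta C)$ you derive is not $O(\Delta)\|\bv{y}^\star\|$, and the Rayleigh-quotient perturbation argument has no foothold. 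The fix is precisely what you flagged as "the delicate step": you must shrink $\eta$ so the extended range barely inflates $\max|x^q|$. The paper takes $\eta = \lambda_{\max}/q$, giving $C \le (1+1/q)^q\lambda_{\max}^q \le e\,\lambda_{\max}^q$, i.e. $C = O(\|\bv{y}^\star\|)$ on the good event, and the precision budget $\log(nk\lambda_{\max}/(\Delta\eta)) = \log(nkq/\Delta) = O(\log(n/\Delta))$ is unaffected.

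A secondary mismatch: your Khintchine/Paley--Zygmund route yields $\alpha_1^2 \ge 1/2$ but only with probability $(1-\tfrac{1}{2})^2/3 = 1/12$, an absolute constant strictly below the $1/2$ claimed in the theorem statement. The paper obtains probability exactly $\ge 1/2$ (at the weaker threshold $(\bv{z}^T\bv{v}_1)^2 \ge 1/n$, which still suffices) via a direct sign-flipping argument on the coordinate of $\bv{v}_1$ with magnitude $\ge 1/\sqrt n$. Your "boost via constant-trial repetition" doesn't match the theorem as stated, which promises probability $\ge 1/2$ for a single run and reserves repetition for boosting $1/2$ to $1-\delta$. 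This is fixable by adopting the paper's sign-flipping argument in place of Paley--Zygmund, and it also avoids having to re-derive the iteration/runtime accounting for the extra trials.
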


Before applying our results on function approximation under finite precision, to prove the theorem we first need to argue that, if computed \emph{exactly}, $\bv{\hat y} = \left(\bv{B}^T\bv{B}\right)^q\bv{z}$ provides a good approximate top eigenvector. Doing so amounts to a standard analysis of the power method with a random starting vector, which we include below:
\begin{lemma}[Power method]\label{power_method_exact}
For any $\bv{B} \in \R^{m\times n}$, $\bv{z} \in \{-1,1\}^n$ a random sign vector as described in Theorem \ref{thm:top_sing}, $\Delta \leq 1/2$, and $q = \frac{4}{\Delta}\log\frac{n}{\Delta}$,  with probability $\geq 1/2$, $\bv{\hat y} = \left(\bv{B}^T\bv{B}\right)^q\bv{z}$ satisfies $\norm{\bv{\hat y}} \ge \frac{\norm{\bv{B}^T\bv{B}}^q}{n}$ and:
\begin{align*}
\frac{\|\bv{B}\bv{\hat y}\|}{\|\bv{\hat y}\|} \geq (1-\Delta/2) \max_{\bv{x}} \frac{\|\bv{B}\bv{x}\|}{\|\bv{x}\|}.
\end{align*}
\end{lemma}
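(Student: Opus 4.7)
The plan is to expand $\bv{z}$ in the eigenbasis of $\bv{B}^T\bv{B}$ and track how the map $\bv{w} \mapsto (\bv{B}^T\bv{B})^q \bv{w}$ amplifies the component along the top eigenvector relative to everything else. Let $\bv{v}_1,\ldots,\bv{v}_n$ be the orthonormal eigenvectors of $\bv{B}^T\bv{B}$ with eigenvalues $\lambda_1 \ge \cdots \ge \lambda_n \ge 0$, and write $\bv{z} = \sum_i c_i \bv{v}_i$ with $c_i = \bv{v}_i^T\bv{z}$, so $\sum_i c_i^2 = \|\bv{z}\|^2 = n$. Then
\begin{align*}
\|\bv{\hat y}\|^2 = \sum_i c_i^2 \lambda_i^{2q}, \qquad \|\bv{B}\bv{\hat y}\|^2 = \bv{\hat y}^T \bv{B}^T\bv{B}\bv{\hat y} = \sum_i c_i^2 \lambda_i^{2q+1},
\end{align*}
and $\max_{\bv{x}} \|\bv{B}\bv{x}\|/\|\bv{x}\| = \sqrt{\lambda_1}$. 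So the lemma reduces to showing that $\|\bv{\hat y}\|^2 \ge \lambda_1^{2q}/n^2$ and that the weighted average $\sum_i c_i^2 \lambda_i^{2q+1}/\sum_i c_i^2 \lambda_i^{2q}$ is at least $(1-\Delta/2)^2 \lambda_1$.

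For the weighted-average bound, I would split the indices into a ``signal'' set $L = \{i : \lambda_i \ge (1-\Delta/2)\lambda_1\}$ and its complement $S$. The signal contribution to the numerator is at least $(1-\Delta/2)\lambda_1 \sum_{i\in L} c_i^2 \lambda_i^{2q} \ge (1-\Delta/2)\lambda_1 \cdot c_1^2 \lambda_1^{2q}$ since $1 \in L$. On $S$ we have $\lambda_i^{2q} \le (1-\Delta/2)^{2q}\lambda_1^{2q} \le e^{-\Delta q}\lambda_1^{2q}$, so combined with $\sum_i c_i^2 = n$ and the choice $q = (4/\Delta)\log(n/\Delta)$, the noise contribution to the denominator is bounded by $n e^{-\Delta q} \lambda_1^{2q} = (\Delta^4/n^3)\lambda_1^{2q}$. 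Dividing numerator by denominator yields a lower bound of $(1-\Delta/2)\lambda_1 \cdot c_1^2/(c_1^2 + \Delta^4/n^3)$, which exceeds $(1-\Delta/2)^2 \lambda_1$ whenever $c_1^2 \ge 2\Delta^3/n^3$.

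It therefore suffices to show that $c_1^2 \ge 1/n$ with probability at least $1/2$, which also handles the norm bound since $\|\bv{\hat y}\|^2 \ge c_1^2 \lambda_1^{2q} \ge \lambda_1^{2q}/n \ge \lambda_1^{2q}/n^2$. This is the main obstacle, since the usual Gaussian anti-concentration argument does not transfer verbatim to a Rademacher start. The plan is a simple conditioning argument: because $\|\bv{v}_1\| = 1$, there is an index $j$ with $|v_{1,j}| \ge 1/\sqrt{n}$. Conditioning on $\{z_k : k \ne j\}$, the value $c_1 = \bv{v}_1^T\bv{z}$ takes exactly two values that differ by $2|v_{1,j}| \ge 2/\sqrt{n}$, each with probability $1/2$. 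At most one of those two values can lie in the interval $(-1/\sqrt{n},\,1/\sqrt{n})$, so conditionally $\Pr[c_1^2 \ge 1/n] \ge 1/2$, and averaging over the remaining coordinates preserves the bound.

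Combining the three pieces closes the argument: on the event $c_1^2 \ge 1/n$, which occurs with probability at least $1/2$, both $\|\bv{\hat y}\| \ge \lambda_1^q/n$ and $\|\bv{B}\bv{\hat y}\|/\|\bv{\hat y}\| \ge (1-\Delta/2)\sqrt{\lambda_1}$ hold. The leftover routine checks are verifying $(1-\Delta/2)^2 \ge 1-\Delta$, confirming $1/n \ge 2\Delta^3/n^3$ under $\Delta \le 1/2$, and taking a square root to pass from $\lambda_1$ to $\sqrt{\lambda_1} = \max_{\bv{x}} \|\bv{B}\bv{x}\|/\|\bv{x}\|$.
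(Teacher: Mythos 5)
Your proof is correct and follows essentially the same route as the paper: expand $\bv{z}$ in the eigenbasis of $\bv{B}^T\bv{B}$, split indices into a signal set $\{i : \lambda_i \ge (1-\Delta/2)\lambda_1\}$ and its complement, use $q = (4/\Delta)\log(n/\Delta)$ to crush the noise contribution, and use the sign-flip pairing argument (you phrase it as conditioning on $\{z_k : k\ne j\}$, the paper phrases it as pairing $\bv{z}$ with $\bv{\bar z}$) to get $c_1^2 \ge 1/n$ with probability at least $1/2$. The only difference is cosmetic: your exponents $\lambda_i^{2q}, \lambda_i^{2q+1}$ are the correct ones (the paper's displayed formulas carry a harmless off-by-factor-of-two typo, writing $\lambda_i^q$ and $\lambda_i^{q+1}$), and your closing algebra lands at $(1-\Delta/2)^2\lambda_1$ directly rather than via the paper's $\frac{1-\Delta/2}{1+\Delta/2}\lambda_1$ intermediate, but both yield the stated $(1-\Delta/2)\sqrt{\lambda_1}$.
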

\begin{proof}
Let $\bv{B}^T\bv{B} = \bv{V}\bs{\Lambda}\bv{V}^T$ be an eigendecomposition of the PSD matrix $\bv{B}^T\bv{B}$.  $\bv{V}$ is orthonormal with columns $\bv{v}_1,\ldots,\bv{v}_n$ and $\bs{\Lambda}$ is a positive diagonal matrix with entries $\lambda_1 \geq \lambda_2 \geq \ldots \geq \lambda_n$.
\begin{align*}
\|\bv{\hat y}\|^2 &= \bv{z}^T\bv{V}\bs{\Lambda}^{q}\bv{V}^T\bv{z} = \sum_{i=1}^n \lambda_i^{q} \left(\bv{z}^T\bv{v}_i\right)^2 & &\text{and} & \|\bv{B}\bv{\hat y}\|^2 &= \bv{z}^T\bv{V}\bs{\Lambda}^{q+1}\bv{V}^T\bv{z} = \sum_{i=1}^n \lambda_i^{q+1} \left(\bv{z}^T\bv{v}_i\right)^2.
\end{align*}
Let $\lambda_r$ be the smallest eigenvalue with $\lambda_r \geq (1-\Delta/2)\lambda_1$.
Then note that $\|\bv{B}\bv{\hat y}\|^2 = \sum_{i=1}^n \lambda_i^{q+1} \left(\bv{z}^T\bv{v}_i\right)^2 \geq  \sum_{i=1}^r \lambda_i^{q+1} \left(\bv{z}^T\bv{v}_i\right)^2 \geq (1-\Delta/2)\lambda_1\sum_{i=1}^r \lambda_i^{q}\left(\bv{z}^T\bv{v}_i\right)^2$. It follows that:
\begin{align}
\label{big_exact_frac}
\frac{\|\bv{B}\bv{\hat y}\|^2}{\|\bv{\hat y}\|^2}
&\geq (1-\Delta/2)\lambda_1 \frac{\sum_{i=1}^r\lambda_i^{q}\left(\bv{z}^T\bv{v}_i\right)^2}{\sum_{i=1}^r \lambda_i^{q} \left(\bv{z}^T\bv{v}_i\right)^2 + \sum_{i=r+1}^n \lambda_i^{q} \left(\bv{z}^T\bv{v}_i\right)^2}.
\end{align}
We want to show that $\sum_{i=r+1}^n \lambda_i^{q} \left(\bv{z}^T\bv{v}_i\right)^2$ is small in comparison to $\sum_{i=1}^r \lambda_i^{q}\left(\bv{z}^T\bv{v}_i\right)^2 $ so that the entire fraction in \eqref{big_exact_frac} is not much smaller than 1. In fact, we will show that the first quantity is small in comparison to $\lambda_1^{q}\left(\bv{z}^T\bv{v}_i\right)^2$, which is sufficient.

Since $q = \frac{4}{\Delta}\log \frac{n}{\Delta}$ and $\frac{\lambda_i}{\lambda_1} < (1 - \Delta/2)$ for $i \geq r+1$, using the fact that $(1-x)^{1/x} \leq 1/e$ for $x \in [0,1]$, it is not hard to check that:
\begin{align}
\label{pushed_down}
\left(\frac{\lambda_i}{\lambda_1}\right)^{q}  \leq \frac{\Delta^2}{n^2}.
\end{align}
Additionally, since $\bv{z}$ is chosen randomly, with good probability, we don't expect that $\sum_{i=r+1}^n \left(\bv{z}^T\bv{v}_i\right)^2$ will be much larger than $\left(\bv{z}^T\bv{v}_1\right)^2$. 
Since $\|\bv{v}_1\|$ is a unit vector, it must have some entry $i$ with absolute value $\geq \frac{1}{\sqrt{n}}$. For any randomly drawn sign vector $\bv{z}$, let $\bv{\bar z}$ be the same vector, but with the sign of this $i^{\text{th}}$ entry flipped. Since $\bv{v}_1$'s $i^{\text{th}}$ entry has magnitude $\geq \frac{1}{\sqrt{n}}$, it holds that:
\begin{align*}
|\bv{z}^T\bv{v}_1 - \bv{\bar z}^T\bv{v}_1| \geq \frac{2}{\sqrt{n}}.
\end{align*}
Accordingly, for any $\bv{z}$, either $|\bv{z}^T\bv{v}_1|$ or $|\bv{\bar z}^T\bv{v}_1|$ is $\geq \frac{1}{\sqrt{n}}$. We conclude that for a \emph{randomly drawn} $\bv{z}$, with probability $\geq 1/2$,
\begin{align*}
\left(\bv{z}^T\bv{v}_1\right)^2 \geq \frac{1}{n}.
\end{align*}
This immediately gives by our formula for $\norm{\bv{\hat y}}$ our first claim that $\norm{\bv{\hat y}} \ge \lambda_1^q/n$.
Furthermore,
\begin{align*}
\sum_{i=r+1}^n \left(\bv{z}^T\bv{v}_i\right)^2 \leq \sum_{i=1}^n \left(\bv{z}^T\bv{v}_i\right)^2 = \|\bv{V}\bv{z}\|^2 = \|\bv{z}\|^2 = n
\end{align*}
so we can conclude that 
\begin{align}
\label{minimum_dot}
\left(\bv{z}^T\bv{v}_1\right)^2 \geq \frac{1}{n^2}\sum_{i=r+1}^n \left(\bv{z}^T\bv{v}_i\right)^2.
\end{align}

Combining \eqref{pushed_down} and \eqref{minimum_dot} and noting that $\frac{\Delta^2}{n^2} \leq  \frac{\Delta}{2n^2}$ for $\Delta \leq \frac{1}{2}$, we have that:
\begin{align*}
\sum_{i=r+1}^n \lambda_i^{q} \left(\bv{z}^T\bv{v}_i\right)^2 \leq n^2\cdot\frac{\Delta}{2n^2} \lambda_1^{q} \left(\bv{z}^T\bv{v}_1\right)^2 \leq \frac{\Delta}{2} \sum_{i=1}^r \lambda_i^{q} \left(\bv{z}^T\bv{v}_i\right)^2
\end{align*}
Plugging into \eqref{big_exact_frac}  , we have that 
\begin{align*}
\frac{\|\bv{B}\bv{\hat y}\|^2}{\|\bv{\hat y}\|^2} &\geq (1-\Delta/2) \lambda_1\frac{\sum_{i=1}^r\lambda_i^q\left(\bv{z}^T\bv{v}_i\right)^2}{\sum_{i=1}^r \lambda_i^q\left(\bv{z}^T\bv{v}_i\right)^2 + \frac{\Delta}{2}\sum_{i=1}^r  \lambda_i^{q} \left(\bv{z}^T\bv{v}_i\right)^2}  \geq \frac{1-\Delta/2}{1+\Delta/2} \lambda_1.
\end{align*}
Since $\sqrt{\frac{1-\Delta/2}{1+\Delta/2}} \geq (1-\Delta/2)$, we conclude that $\frac{\|\bv{B}\bv{\hat y}\|}{\|\bv{\hat y}\|} \geq  (1-\Delta/2)\sqrt{\lambda_1}$ and the lemma follows since $\sqrt{\lambda_1} = \max_{\bv{x}} \frac{\|\bv{B}\bv{x}\|}{\|\bv{x}\|}$.
\end{proof}
With Lemma \ref{power_method_exact} in place, we prove our main result on approximating the top singular vector.

\begin{proof}[Proof of Theorem \ref{thm:top_sing}]
%
We begin with Theorem 3.3 of \cite{sachdeva2014faster}, which says that for any $q$, there is a polynomial $p$ with degree $k = \left\lceil\sqrt{2q\log(2/\delta)}\right\rceil$ such that:
\begin{align}
\label{monomial approx}
\text{For all }& x\in [-1,1] & |x^q - p(x)| &\leq \delta.
\end{align}
Denote $\lmax \eqdef \lmax(\bv{B}^T\bv{B}) = \norm{\bv{B}^T\bv{B}}$. If we set $\eta = \frac{\lmax}{q}$, after scaling, \eqref{monomial approx} shows that there exists a degree $k$ polynomial $p'(x)$ satisfying: $$|x^q - p'(x)| \le [(1+1/q)\lmax]^q \cdot \delta \le e \delta \lambda_{\max}^q$$
 on the range $[\lmin - \eta, \lmax + \eta]$. 
 
 Set $q = \frac{4}{\Delta} \log \frac{n}{\Delta}$ as in Theorem \ref{power_method_exact} and $k = \Theta \left (\min \left (\sqrt{q\log(2qn/\Delta)},q \right)\right) = O(\sqrt{1/\Delta} \log \frac{n}{\Delta})$. Theorem \ref{mainLemmaFullRuntime} applied with $\bv{A} = \bv{B}^T \bv{B}$, $\eta = \frac{\lmax}{q}$, $\delta_k = O \left (\frac{\Delta}{n^{3/2}k}\lambda_{\max}^q \right)$, and $C = \lambda_{\max}^q$ shows that a computer with $B = \Omega \left (\log \left (\frac{nk \lmax}{\Delta \eta} \right ) \right ) = \Omega \left ( \log \frac{n}{\Delta} \right )$ bits of precision can compute $\bv{y}$ satisfying:
 \begin{align}\label{LanczosEigenApprox}
 \norm{(\bv{B}^T\bv{B})^q \bv{z} - \bv{y}} \le \frac{\Delta \lambda_{\max}^q}{4n^{3/2}} \norm{\bv{z}} \le \frac{\Delta \lambda_{\max}^q}{4n},
 \end{align}
 since $\norm{\bv{z}} = \sqrt{n}$.
 We note that we can multiply by $\bv{B}^T\bv{B}$ accurately, so Requirement \ref{req2} holds as required by Theorem \ref{mainLemmaFullRuntime}. Specifically, it is easy to check that if Requirement \ref{req2} holds for $\bv{B}$ with precision $2^{-cB}$ for some constant $c$, it holds with precision $O(2^{-cB})$ for $\bv{B}^T \bv{B}$.
 
Combining \eqref{LanczosEigenApprox} with the bound of Theorem \ref{power_method_exact} that $\norm{\bv{\hat y}} \ge \frac{\lambda_{\max}^q}{n}$ we first have:
\begin{align*}
\norm{\bv{\hat y} - \bv{y}} \le \frac{\Delta}{4} \norm{\bv{\hat y} }.
\end{align*}
It follows that
\begin{align*}
\frac{\norm{\bv{By}}}{\norm{\bv{y}}} &\geq \frac{\norm{\bv{B\hat y}} - \frac{\Delta}{4} \norm{\bv{B}}\norm{\bv{\hat y} } }{(1+\frac{\Delta}{4})\norm{\bv{\hat y}}} \\
&\geq \left(1-\frac{\Delta}{4}\right)\frac{\norm{\bv{B\hat y}}}{\norm{\bv{\hat y}}} - \frac{\Delta}{4}\norm{\bv{B}} \\
&\geq \left(1-\Delta\right)\norm{\bv{B}},
\end{align*}
where the last step follows from Theorem \ref{power_method_exact}'s claim that $\frac{\norm{\bv{B\hat y}}}{\norm{\bv{\hat y}}} \geq \left(1-\frac{\Delta}{2}\right)\norm{\bv{B}}$.
This proves the theorem, with runtime bounds following from Theorem \ref{mainLemmaFullRuntime}. 
\end{proof}

\medskip \noindent \textbf{Remark. \hspace{.25em}}
As discussed in Section \ref{sec:fpprelim}, computations in Algorithm \ref{alg:lanczos} won't overflow or lose accuracy due to underflow as long as the exponent in our floating point system has at $\Omega(\log\log(knC))$ bits. This is typically a very mild assumption. However, in Theorem \ref{thm:top_sing}, $C = \lmax(\bv{B}^T\bv{B})^q$ so we need $O(\log q + \log\log \norm{\bv{B}}) = \Omega(\log\frac{1}{\Delta})$ bits for our exponent. This may not be a reasonable assumption for some computers -- we might want $\Delta \approx \mach$ and in typical floating point systems fewer bits are allocated for the exponent than for the significand. This issue can be avoided in a number of ways. One simple approach is to instead apply $f(x) = \frac{1}{\lmax(\bv{T})^q}x^q$, which also satisfies the guarantees of Lemma \ref{power_method_exact}. Doing so avoids overflow or problematic underflow in Algorithm \ref{alg:lanczos} as long as we have $\Omega(\log\log(kn))$ exponent bits. It could lead to underflow when applying $f(x)$ to $\bv{T}$'s smaller eigenvalues, but this won't affect the outcome of the theorem -- as discussed in Section \ref{sec:fpprelim} the tiny additive error incurred from underflow when applying $f(\bv{T})$ is swamped by multiplicative error terms.

\subsection{Generic polynomial acceleration}
Our applications to the matrix step function and to approximating the top singular vector in Sections \ref{sub:step}  and \ref{sub:sing} share a common approach: in both cases, Lanczos is used to apply a function that is \emph{itself} a polynomial, one that is simple to describe and evaluate, but has high degree. We then claim that this polynomial can be approximated by a lower degree polynomial, and the number of iterations required by Lanczos depends on this lower degree. In both cases, it is possible to improve a degree $q$ polynomial to degree roughly $\sqrt{q}$ -- a significant gain for the applications. To use the common term from convex optimization, Lanczos provides a way of ``accelerating'' the computation of some high degree matrix polynomials. 

In fact, it turns out that \emph{any} degree $q$ polynomial with bounded coefficients in the monomial basis (or related simple bases) can be accelerated in a similar way to our two examples. 
To see this, we begin with the following result of \cite{frostig2016principal}:
\begin{lemma}[Polynomial Acceleration]
\label{accel_poly_lemma}
Let $p$ be an $O(k)$ degree polynomial that can be written as
\[
p(x) = \sum_{i = 0}^{k} f_i(x) \left( g_i(x)\right)^i
\]
where $f_i(x)$ and $g_i(x)$ are $O(1)$ degree polynomials satisfying $|f_i(x)| \leq a_i$ and $|g_i(x)| \leq 1$ for all $x \in [-1, 1]$. Then, there exists polynomial $q(x)$ of degree $O(\sqrt{k \log(A/\epsilon))}$ where $A = \sum_{i = 0}^{k} a_i$ such that $|p(x) - q(x)| \leq \epsilon$ for all $x \in [-1, 1]$.
\end{lemma}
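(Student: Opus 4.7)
The plan is to reduce the lemma to the standard Chebyshev-based approximation of monomials that already appears in Theorem~3.3 of \cite{sachdeva2014faster} (and is invoked, e.g., in the proof of Theorem~\ref{thm:top_sing}). That result states that for every integer $i \ge 0$ and every $\delta > 0$, there is a polynomial $\tilde p_i(y)$ of degree $O(\sqrt{i \log(1/\delta)})$ such that $|y^i - \tilde p_i(y)| \le \delta$ for all $y \in [-1,1]$. My goal is to use one such approximation inside each summand of $p(x)$.

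First I would set the per-term target error to $\delta = \epsilon / A$, where $A = \sum_{i=0}^{k} a_i$, and let $\tilde p_i$ be the approximating polynomial guaranteed above, so that $\deg(\tilde p_i) = O(\sqrt{i \log(A/\epsilon)}) = O(\sqrt{k \log(A/\epsilon)})$. Now define
\begin{align*}
    q(x) \eqdef \sum_{i=0}^{k} f_i(x)\, \tilde p_i\bigl(g_i(x)\bigr).
\end{align*}
Since $\deg(f_i) = O(1)$, $\deg(g_i) = O(1)$, and $\deg(\tilde p_i) = O(\sqrt{k \log(A/\epsilon)})$, we have $\deg(q) = O(\sqrt{k \log(A/\epsilon)})$, matching the target bound.

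Next I would bound the pointwise error on $[-1,1]$. Fix any $x \in [-1,1]$. Because $|g_i(x)| \le 1$, the monomial approximation gives $|g_i(x)^i - \tilde p_i(g_i(x))| \le \epsilon/A$. Multiplying by $|f_i(x)| \le a_i$ and summing over $i$,
\begin{align*}
    \bigl|p(x) - q(x)\bigr|
    \le \sum_{i=0}^{k} |f_i(x)| \cdot \bigl|g_i(x)^i - \tilde p_i(g_i(x))\bigr|
    \le \sum_{i=0}^{k} a_i \cdot \frac{\epsilon}{A} = \epsilon,
\end{align*}
which is the desired uniform error bound.

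The proof is essentially routine once the right per-term accuracy is chosen, so I do not anticipate a real obstacle; the only subtlety is checking that $\delta = \epsilon/A$ is the right setting so that the $a_i$-weighted telescoping sums cleanly to $\epsilon$, and that plugging $g_i(x) \in [-1,1]$ into $\tilde p_i$ preserves the scalar approximation guarantee (which it does because the bound on $\tilde p_i - y^i$ holds uniformly on $[-1,1]$). A minor bookkeeping point is that one should verify the $O(\sqrt{k \log(A/\epsilon)})$ degree bound under mild conventions (e.g.\ $A \ge 1$ and $\epsilon \le 1$, both of which can be assumed without loss of generality by trivially padding the decomposition or shrinking $\epsilon$).
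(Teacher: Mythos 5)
The paper does not actually prove Lemma~\ref{accel_poly_lemma}; it is quoted as a result of \cite{frostig2016principal}, so there is no in-paper proof to compare against. That said, your argument is correct and is essentially the canonical route: replace each inner monomial $y^i$ by the Chebyshev-based low-degree approximant of Theorem~3.3 of \cite{sachdeva2014faster} (the same ingredient invoked in the proof of Theorem~\ref{thm:top_sing}), compose with $g_i$, reweight by $f_i$, and sum. The per-term error budget $\delta = \epsilon/A$ is exactly right so that the $a_i$-weighted errors telescope to $\epsilon$, and the degree bookkeeping works out because $\deg(\tilde p_i \circ g_i) = \deg(\tilde p_i)\cdot\deg(g_i) = O(\sqrt{i\log(A/\epsilon)})\cdot O(1)$ and $i \le k$. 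Your caveat about assuming $A \ge 1$ and $\epsilon \le 1$ (so that $\log(A/\epsilon)$ is nonnegative and the $O(\cdot)$ bound is meaningful) is the right detail to flag; both are WLOG since one can pad the decomposition with a zero term of weight $1$ and shrink $\epsilon$. This is almost certainly the argument in \cite{frostig2016principal} as well, so the proof should be regarded as matching the intended one rather than a genuinely different route.
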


Setting $g_i(x) = x$ and $f_i(x) = c_i$ for example, lets us accelerate any degree $k$ polynomial $p(x) = c_0 + c_1x + \ldots + c_k x^k$ with bounded coefficients. Lemma \ref{accel_poly_lemma} yields the following result, which generalizes Theorems \ref{thm:step} and \ref{thm:top_sing}:

\begin{theorem}[Application of Accelerated Polynomial]\label{thm:gen_accel}
Consider $\bv{A} \in \R^{n\times n}$ with $\norm{\bv{A}} \le 1$, $\bv{x}\in \R^n$, $\epsilon \leq 1$, and any degree $O(k)$ polynomial $p(x)$ which can be written as in Lemma \ref{accel_poly_lemma}. Let $B = \log \left (\frac{nk}{\epsilon} \right )$. If Algorithm \ref{alg:lanczos} is run with $f(x) = p(x)$ for $q = O(\sqrt{k \log(kA/\epsilon)})$ iterations on a computer satisfying Requirements \ref{req1} and \ref{req2} for precision $\mach = 2^{-\Omega(B)}$ (e.g. a computer using $\Omega(B)$ bits of precision), it outputs $\bv{y}$ satisfying $\norm{p(\bv{A}) \bv{x} - \bv{y}} \le \epsilon A \norm{\bv{x}}$. The total runtime is $O(\mv(\bv{A}) q + q^2 B + qB^2)$.
\end{theorem}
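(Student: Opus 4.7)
The plan is to combine Lemma~\ref{accel_poly_lemma} with the main finite-precision guarantee Theorem~\ref{mainLemmaFullRuntime}. First, I will invoke Lemma~\ref{accel_poly_lemma} on $p$ with an internal accuracy parameter $\epsilon' = \Theta(\epsilon A / q)$, set self-referentially with respect to the output degree $q$. This produces a polynomial $\wt p$ of degree $q = O(\sqrt{k\log(A/\epsilon')}) = O(\sqrt{k \log(k A/\epsilon)})$ with $|p(x) - \wt p(x)| \le \epsilon'$ for $x \in [-1,1]$. The assumed decomposition of $p$ also yields $|p(x)| \le \sum_i a_i = A$ on $[-1,1]$.

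Next, I will upgrade these bounds from $[-1,1]$ to the slightly enlarged interval $[\lmin(\bv{A})-\eta, \lmax(\bv{A})+\eta] \subseteq [-1-\eta, 1+\eta]$ required by Theorem~\ref{mainLemmaFullRuntime}. Set $\eta = \Theta(\min(\norm{\bv{A}}, 1/k^2))$. By the standard Chebyshev growth bound $T_k(1+\eta) \le \exp(k\sqrt{2\eta}) = O(1)$ for this choice of $\eta$, any polynomial of degree $\le k$ bounded on $[-1,1]$ can grow by at most a constant factor on $[-1-\eta, 1+\eta]$. Applying this to $p$ and to the degree-$O(k)$ polynomial $p - \wt p$ yields $|p(x)| = O(A)$ and $|p(x) - \wt p(x)| = O(\epsilon')$ on the enlarged interval. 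In the notation of Theorem~\ref{mainLemmaFullRuntime}, this gives $C = O(A)$ and $\delta_q \le O(\epsilon') = O(\epsilon A / q)$.

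Finally, I will apply Theorem~\ref{mainLemmaFullRuntime} with $f = p$, iteration count $q$, extension radius $\eta$, and input precision parameter $\Theta(\epsilon)$. The guarantee becomes
\[
\norm{p(\bv{A})\bv{x} - \bv{y}} \le \bigl(7q\cdot\delta_q + \Theta(\epsilon)\cdot C\bigr)\norm{\bv{x}} = O(\epsilon A)\norm{\bv{x}},
\]
which is the desired bound after adjusting constants. The required bit count is $B = \log\bigl(nk\norm{\bv{A}}/(\Theta(\epsilon)\cdot\eta)\bigr) = O(\log(nk/\epsilon))$, matching the statement, and the runtime $O(\mv(\bv{A})q + q^2 B + q B^2)$ follows directly from Theorem~\ref{mainLemmaFullRuntime}.

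The main obstacle is the $\eta$-extension step: Lemma~\ref{accel_poly_lemma} only controls accuracy on $[-1,1]$, whereas Theorem~\ref{mainLemmaFullRuntime} needs control on a slightly enlarged interval so that Paige's tridiagonal $\bv{T}$ inherits a well-behaved spectrum. Choosing $\eta$ inversely polynomial in the degree, together with the standard Chebyshev extremal bound, keeps the relevant polynomials from blowing up outside $[-1,1]$ and resolves the issue cleanly.
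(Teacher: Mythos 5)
Your proof is correct and follows essentially the same structure as the paper's: bound $|p|\le A$ on $[-1,1]$ via the decomposition, invoke Lemma~\ref{accel_poly_lemma} with a rescaled error target to obtain a low-degree approximant, extend both bounds to $[-1-\eta,1+\eta]$ for $\eta=\Theta(\min(\norm{\bv{A}},1/k^2))$, and then apply Theorem~\ref{mainLemmaFullRuntime} with $C=O(A)$ and small $\delta_q$. The only stylistic divergence is in the $\eta$-extension step: you use the Chebyshev extremal growth bound $|p(x)|\le A\,|T_{\deg p}(x)|$ for $|x|>1$ together with $T_m(1+\eta)=O(1)$ when $\eta=O(1/m^2)$, while the paper argues by contradiction via the Markov brothers' derivative bound after rescaling. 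Both are standard and yield the same constant-factor control; your Chebyshev route is arguably cleaner since it does not require the shift-and-scale bookkeeping the paper glosses over.
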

\begin{proof}
The proof follows from Theorem \ref{mainLemmaFullRuntime}. Since $p(x)$ can be written as in Lemma \ref{accel_poly_lemma}, it is not hard to see that $|p(x)| \le A$ for $x \in [-1,1]$. If we set $\eta = \Theta \left ( \min(\|\bv{A}\|,\frac{1}{k^2}) \right)$ we can also claim that $|p(x)| = O(A)$ on $[-1-\eta,1+\eta] \supseteq [\lmin(\bv{A}) - \eta,\lmax(\bv{A}) + \eta]$ (we bound $\eta \le \|\bv{A}\|$ to satisfy the requirement of Theorem \ref{mainLemmaFullRuntime}). This is a consequence of the Markov Brother's inequality. Let $p(x)$ have degree $ck$ and choose  $\eta < \frac{1}{2c^2k^2}$. Suppose, for the sake of contradiction that  $p(x) \ge 2A$ for some $x \in [1,1+\eta]$. Then $p(x)$ must have derivative $> 2Ac^2k^2$ somewhere in $[1,x]$. This would contradict the Markov Brother's inequality. An identical bound can be shown for the range $[-1-\eta,-1]$, overall allowing us to set $C = O(A)$ in applying Theorem \ref{mainLemmaFullRuntime}.

By Lemma \ref{accel_poly_lemma} there is an $O(\sqrt{k \log(kA/\epsilon)})$ polynomial $q(x)$ with $|p(x) - q(x)| = O(\epsilon/k)$ for all 
$x \in [-1,1] \supseteq [ \lmin(\bv{A}),\lmax(\bv{A})]$. We need to extend this approximation guarantee to all $x \in [-1-\eta,1+\eta]$. To do so, we first note that $p(x)-q(x)$ is an $O(k)$ degree polynomial -- we can assume that $q$ has degree at most that of $p$ or else we can just set $q(x) = p(x)$ achieving $\delta_k = 0$. Then, again by using the Markov brother's inequality, since $\eta = O(1/k^2)$ we have $|p(x) - q(x)| = O(\epsilon/k)$ for all $x \in [-1-\eta,1+\eta]$.
We can thus apply Theorem \ref{mainLemmaFullRuntime} with $\delta_k = O(\epsilon/k)$, giving the result.
\end{proof}

\section{Conclusions and future work}

In this work we study the stability of the Lanczos method for approximating matrix functions. We show that the method's finite arithmetic performance for many functions essentially matches the strongest known exact arithmetic bounds. At the same time, for the special case of linear systems, known techniques give finite precision bounds which are much weaker than what is known in exact arithmetic.

The most obvious question we leave open is understanding if our lower bound against Greenbaum's results for approximating $\bv{A}^{-1}\bv{x}$ in fact gives a lower bound on the number of iterations required by the Lanczos and CG algorithms. Alternatively, it is possible that an improved analysis could lead to stronger error bounds for finite precision Lanczos that actually match the guarantees available in exact arithmetic. It seems likely that such an analysis would have to go beyond the view of Lanczos as applying a single near optimal approximating polynomial, and thus could provide significant new insight into the behavior of the algorithm. 

Understanding whether finite precision Lanczos can match the performance of non-uniform approximating polynomials is also interesting beyond the case of positive definite linear systems. For a number of other functions, it is possible to prove stronger bounds than Theorem \ref{exact_lanczos_final_theorem} in exact arithmetic. In some of these cases, including for the matrix exponential, such results can be extended to finite precision in an analogous way to Greenbaum's work on linear systems \cite{golub1994estimates,druskin1998using}. It would be interesting to explore the strength of these bounds for functions besides $1/x$.

Finally, investigating the stability of Lanczos method for other tasks besides of the widely studied problem of eigenvector computation would be interesting. Block variants of Lanczos, or Lanczos with reorthogonalization, have recently been used to give state-of-the-art runtimes for low-rank matrix approximation \cite{rokhlin2009randomized,blanczos}. The analysis of these methods relies on the ability of Lanczos to apply optimal approximating polynomials and understanding the stability of this analysis is an interesting question. It has already been addressed for the closely related but slower block power method \cite{hardt,balcan16a}. 

\section*{Acknowledgements}
We would like to thank Roy Frostig for valuable guidance at all stages of this work and for his close collaboration on the projects that lead to this research. We would also like to thank Michael Cohen for the initial idea behind the lower bound proof and Jon Kelner and Richard Peng for a number of helpful conversations.

\bibliographystyle{alpha}
\bibliography{krylov_analysis}

\appendix

\section{Stability of post-processing for Lanczos}\label{sec:post}
In this section, we show that the final step in Algorithm \ref{alg:lanczos}, computing $\bv{Q}f(\bv{T})\bv{e}_1$, can be performed stably in $\tilde O(k^2)$ time since $\bv{T}$ is a $k \times k$ symmetric tridiagonal matrix. This claim relies on a $\tilde O(k^2)$ time backwards stable algorithm for computing a tridiagonal eigendecomposition, which was developed by Gu and Eisenstat \cite{gu1995divide}. Given an accurate eigendecomposition of $\bv{T}$, we can explicitly compute an approximation to $f(\bv{T})$ and thus to $\bv{Q} f(\bv{T}) \bv{e}_1$. Of course, since small error in computing the eigendecomposition can be amplified, this technique only gives an accurate result when $f$ is sufficiently smooth. In particular, we will show that as long as $f(x)$ is well approximated by a degree $k$ polynomial, then $f(\bv{T})$ can be applied stably. This characterization of smoothness is convenient because our accuracy bounds for Lanczos already depend on the degree to which $f(x)$ can be approximated by a polynomial.

\subsection{Stable symmetric tridiagonal eigendecomposition}
We first characterize the performance of Gu and Eisenstat's divide-and-conquer eigendecomposition algorithm for symmetric tridiagonal $\bv{T}$. We work through the error analysis carefully here, however we refer readers to \cite{gu1995divide} for a full discussion of the computations involved.
\begin{lemma}[Divide-and-Conquer Algorithm of \cite{gu1995divide}]\label{lem:guEisenstat} Given symmetric tridiagonal $\bv{T} \in \R^{k \times k}$ and error parameter $\epsilon$ with $ck^3 \log k \cdot \mach \le \epsilon \le 1/2$ for fixed constant $c$,
there is an algorithm running in $O(k^2 \log \frac{k}{\epsilon} + k \log^2 \frac{k}{\epsilon})$ time on a computer satisfying Requirements \ref{req1} and \ref{req2} with machine precision $\epsilon_{mach}$ which outputs $\bv{\tilde V} \in \mathbb{R}^{k \times k}$ and diagonal $\bs{\tilde \Lambda} \in \mathbb{R}^{k \times k}$ satisfying:
\begin{align*}
\norm{\bv{\tilde V} \bs{\tilde \Lambda} \bv{\tilde V}^T - \bv{T}} \le \epsilon \norm{\bv{T}} \text{ and } \norm{\bv{\tilde V}^T\bv{\tilde V} - \bv{I}} \le \epsilon.
\end{align*}  
\end{lemma}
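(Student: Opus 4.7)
The plan is to invoke the divide-and-conquer symmetric tridiagonal eigensolver of Gu and Eisenstat \cite{gu1995divide} run at a simulated precision $\mach'$ chosen to match our target $\epsilon$. Gu and Eisenstat show that, at working precision $\mach'$ on a $k \times k$ symmetric tridiagonal $\bv{T}$, their algorithm returns $\bv{\tilde V}$ and diagonal $\bs{\tilde \Lambda}$ satisfying a residual bound $\norm{\bv{T} \bv{\tilde V} - \bv{\tilde V} \bs{\tilde \Lambda}} \le c_1 k^3 \log k \cdot \mach' \cdot \norm{\bv{T}}$ together with a near-orthogonality bound $\norm{\bv{\tilde V}^T \bv{\tilde V} - \bv{I}} \le c_2 k^3 \log k \cdot \mach'$, for fixed constants $c_1, c_2$. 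Here the $\log k$ factor comes from the depth of the divide-and-conquer recursion tree and the $k^3$ crudely bounds the per-level work in the secular-equation solves; the precise polynomial factor can be extracted by tracing through the analysis in their Section~5.

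First I would set $\mach' = \epsilon / (c k^3 \log k)$ for $c = \max(c_1, c_2)$. The hypothesis $\epsilon \ge c k^3 \log k \cdot \mach$ guarantees $\mach' \ge \mach$, so this finer precision can be faithfully simulated on the $\mach$-precision machine using standard software floating-point arithmetic with $B = O(\log(k/\epsilon))$ bits per number. Plugging this $\mach'$ into the Gu--Eisenstat residual bound gives $\norm{\bv{T} \bv{\tilde V} - \bv{\tilde V} \bs{\tilde \Lambda}} \le \epsilon \norm{\bv{T}}$, and into the orthogonality bound gives $\norm{\bv{\tilde V}^T \bv{\tilde V} - \bv{I}} \le \epsilon$, as required by the second claim of the lemma.

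Next I would convert the residual bound into the stated backward stability bound. Writing $\bv{\tilde V}^T \bv{\tilde V} = \bv{I} + \bv{E}$ with $\norm{\bv{E}} \le \epsilon \le 1/2$, we have $\norm{\bv{\tilde V}} \le \sqrt{1+\epsilon}$ and $\norm{\bv{\tilde V}\bv{\tilde V}^T - \bv{I}} \le O(\epsilon)$. Right-multiplying the residual equation by $\bv{\tilde V}^T$ and rearranging yields $\bv{\tilde V} \bs{\tilde \Lambda} \bv{\tilde V}^T - \bv{T} = (\bv{\tilde V}\bs{\tilde \Lambda} - \bv{T}\bv{\tilde V})\bv{\tilde V}^T + \bv{T}(\bv{\tilde V}\bv{\tilde V}^T - \bv{I})$, and both terms are bounded by $O(\epsilon \norm{\bv{T}})$ using submultiplicativity. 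Absorbing the constants into $c$ gives the first claimed inequality.

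Finally, for the runtime, Gu and Eisenstat's algorithm performs $O(k^2)$ basic arithmetic operations across the whole recursion together with $O(k)$ secular-equation solves at the merge nodes. In a $B$-bit software floating-point system, each basic arithmetic operation costs $O(B)$ time, contributing $O(k^2 B) = O(k^2 \log(k/\epsilon))$; each secular-equation solve, carried out to precision $\mach'$ via a Newton-type scheme with a safeguarded initialization, contributes an additional $O(B^2)$ factor, giving $O(k B^2) = O(k \log^2(k/\epsilon))$. The main obstacle is not any individual step but rather the first: the polynomial-in-$k$ dependence of the Gu--Eisenstat error bounds must be extracted explicitly from their analysis, which is presented with a backward-stability flavor but not with clean polynomial constants. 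Once this accounting is in hand, the rest of the proof is routine precision selection and operation counting.
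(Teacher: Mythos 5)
Your overall strategy---treat the Gu--Eisenstat divide-and-conquer solver as a backward-stable black box and match its internal accuracy to the target $\epsilon$---is in the right spirit, but there are two genuine problems, one structural and one in the mechanics of the runtime.

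First, the ``software floating-point simulation'' framing is internally inconsistent and, once untangled, does not explain where the $\epsilon$-dependence of the runtime actually comes from. You set $\mach' = \epsilon/(c k^3 \log k)$ and observe $\mach' \ge \mach$, then refer to this as a ``finer'' precision to be simulated; it is coarser, and since the hardware is already at least as precise, no simulation is needed at all---the bounds at precision $\mach'$ follow trivially from monotonicity. But then each arithmetic operation costs $O(1)$ in the unit-cost model, not $O(B)$, so your runtime accounting collapses to $O(k^2)$ plus whatever the iterative subroutines cost. Those subroutines are exactly the missing piece: the $O(\log(k/\epsilon))$ and $O(\log^2(k/\epsilon))$ factors in the stated runtime come from bisection root-finding for the secular equations (which needs $O(\log(1/\epsilon_1))$ iterations to reach relative accuracy $\epsilon_1$) and from the Fast Multipole Method used to evaluate the characteristic polynomial and to form $\bv{\tilde V}$ to entrywise accuracy $\Theta(\epsilon_1)$, which costs an extra $O(\log(1/\epsilon_1))$ factor per invocation. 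Crucially, these iterative pieces must be \emph{stopped early} at accuracy $\epsilon_1 = \epsilon/(ck^3\log k)$ rather than run to full machine precision $\mach$; otherwise, when $\mach \ll \epsilon$, the bisection alone would take $\Theta(\log(1/\mach))$ iterations and blow the time bound. Your argument never identifies an accuracy-tunable component of the algorithm, so it cannot produce the $\epsilon$-dependent runtime.

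Second, the Gu--Eisenstat bound you quote (residual $\le c_1 k^3 \log k \cdot \mach' \norm{\bv{T}}$ and near-orthogonality $\le c_2 k^3 \log k \cdot \mach'$) is not stated in their paper in that clean parametrized form; you acknowledge this yourself, but the gap is load-bearing. To get the $k^3 \log k$ factor one has to track the error through the $O(\log k)$ levels of the divide-and-conquer recursion: at each merge, the sub-eigendecompositions are only approximate, the arrowhead matrix $\bv{H}$ is formed with rounding error, its eigenvalues are found only to $\Theta(\epsilon_1 k)$, its eigenvector matrix $\bv{\hat U}$ is near-orthogonal only up to $O(\mach k^2)$, and the product $\bv{\tilde Z}\bv{\hat U}$ is itself computed to entrywise accuracy $\Theta(\epsilon_1)$ via FMM. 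These errors both compound multiplicatively and accumulate additively across levels, and showing that the final $\delta_T$ and $\delta_I$ stay at $O(\epsilon)$ requires an explicit recursion $\delta_T \mapsto (1 + O(\sqrt{k}\mach + \epsilon_1 k^3))\delta_T + O(\delta_I + \epsilon_1 k^3)\norm{\bv{T}}$, etc., together with the choice $\epsilon_1 \le \epsilon/(ck^3\log k)$. Without that recursion the $k^3\log k$ is a guess, not a bound. Your algebra converting the residual $\norm{\bv{T}\bv{\tilde V} - \bv{\tilde V}\bs{\tilde\Lambda}}$ to $\norm{\bv{\tilde V}\bs{\tilde\Lambda}\bv{\tilde V}^T - \bv{T}}$ is fine (noting that $\norm{\bv{\tilde V}\bv{\tilde V}^T - \bv{I}} = \norm{\bv{\tilde V}^T\bv{\tilde V} - \bv{I}}$ for square $\bv{\tilde V}$), but it is downstream of the step that is missing.
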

\begin{proof}
The algorithm of \cite{gu1995divide} is recursive, partitioning $\bv{T}$ into two blocks $\bv{T}_1 \in m \times m$ and $\bv{T}_2 \in (N-m-1) \times (N-m-1)$ where $m = \lfloor k/2 \rfloor$ such that:
\begin{align*}
\bv{T} = \begin{bmatrix}
    \bv{T}_1  & \beta_{m+1} \bv{e}_m & \bv{0} \\
    \beta_{m+1}\bv{e}_m^T    & \alpha_{m+1} & \beta_{m+2} \bv{e}_1^T\\
    \bv{0}    & \beta_{m+2} \bv{e}_1 & \bv{T}_2
\end{bmatrix}.
\end{align*}
Note that $\alpha_{i},\beta_{j}$ are the corresponding entries of $\bv{T}$ in the notation of Algorithm \ref{alg:lanczos}.
Let $\bv{T}_i = \bv{V}_i \bs{\Lambda}_i \bv{V}_i$ be the eigendecomposition of $\bv{T}_i$ for $i=1,2$.
We can see that $\bv{T} = \bv{Z}\bv{H} \bv{Z}^T$
where:
\begin{align*}
\bv{H} &= \begin{bmatrix}
    \alpha_{m+1}   & \beta_{m+1} \bv{l}_1^T & \beta_{m+2} \bv{f}_2^T \\
    \beta_{m+1} \bv{l}_1    & \bs{\Lambda_1} & \bv{0}\\
    \beta_{m+2} \bv{f}_2    & \bv{0} & \bs{\Lambda}_2
\end{bmatrix} && \text{ and }  & \bv{Z} &= \begin{bmatrix}
    \bv{0}    & \bv{V}_1 & \bv{0} \\
    1    & \bv{0} & \bv{0}\\
    \bv{0}    & \bv{0} & \bv{V}_2
\end{bmatrix}.
\end{align*}
Here $\bv{l}_1^T$ is the last row of $\bv{V}_1$ and $\bv{f}_2^T$ is the first row of $\bv{V}_2$. $\bv{H}$ is a symmetric arrowhead matrix and a primary contribution of \cite{gu1995divide} is showing that it can be eigendecomposed stably in $\tilde O(k^2)$ time. Writing the eigendecomposition $\bv{H} = \bv{U}\bs{\Lambda} \bv{U}^T$, the eigendecomposition of $\bv{T}$ is then given by $\bv{T} = \bv{Z}\bv{U}\bs{\Lambda} \bv{U}^T \bv{Z}^T$.

We now proceed with the error analysis of this method.
Assume by induction that for $\bv{T}_i$ we compute an approximate eigendecomposition $\bv{\tilde V}_i \bs{\tilde \Lambda}_i \bv{\tilde V}_i^T$ satisfying:
\begin{align}\label{recursiveAssump}
\bv{\tilde V}_i \bs{\tilde \Lambda}_i \bv{\tilde V}_i^T = \bv{\tilde T}_i
\text{ where }\norm{\bv{T}_i-\bv{\tilde T}_i} \le \delta_T
\text{ and }\norm{\bv{\tilde V}_i^T \bv{\tilde V}_i - \bv{I}} \le \delta_I.
\end{align}
In the base case, $\bv{T}_i$ is just a single entry and so \eqref{recursiveAssump} holds trivially for $\delta_T = \delta_I = 0$.
We define:
\begin{align*}
\bv{\tilde T} = \begin{bmatrix}
    \bv{\tilde T}_1  & \beta_{m+1} \bv{e}_m & \bv{0} \\
    \beta_{m+1}\bv{e}_m^T    & \alpha_{m+1} & \beta_{m+2} \bv{e}_1^T\\
    \bv{0}    & \beta_{m+2} \bv{e}_1 & \bv{\tilde T}_2
\end{bmatrix}
\end{align*}
and have $\norm{\bv{T} - \bv{\tilde T}} \le \delta_T$ by \eqref{recursiveAssump} since $\bv{T}-\bv{\tilde T}$ is block diagonal with blocks $\bv{T}_i - \bv{\tilde T}_i$. Define:
\begin{align*}
\bv{\tilde H}_{\text{exact}} &= \begin{bmatrix}
    \alpha_{m+1}   & \beta_{m+1} \bv{\tilde l}_1^T & \beta_{m+2} \bv{\tilde f}_2^T \\
    \beta_{m+1} \bv{\tilde l}_1    & \bs{\tilde \Lambda_1} & \bv{0}\\
    \beta_{m+2} \bv{\tilde f}_2    & \bv{0} & \bs{\tilde \Lambda}_2
\end{bmatrix} && \text{ and }  & \bv{\tilde Z} &= \begin{bmatrix}
    \bv{0}    & \bv{\tilde V}_1 & \bv{0} \\
    1    & \bv{0} & \bv{0}\\
    \bv{0}    & \bv{0} & \bv{\tilde V}_2
\end{bmatrix}
\end{align*}
where $\bv{\tilde l}_1$ and $\bv{\tilde f}_2$ are the last and first rows of $\bv{\tilde V}_1$ and $\bv{\tilde V}_2$ respectively. Let
$
\bv{\tilde H} = \fl(\bv{\tilde H}_{\text{exact}})
$
be the result of computing $\bv{\tilde H}_{\text{exact}}$ in finite precision. 
By inspection we see that:
\begin{align*}
\bv{\tilde T} - \bv{\tilde Z} \bv{\tilde H}_{\text{exact}} \bv{\tilde Z}^T &= \begin{bmatrix}
    \bv{0}    & \beta_{m+1} (\bv{e}_m - \bv{\tilde V}_1 \bv{\tilde l}_1)   & \bv{0} \\
    \beta_{m+1} (\bv{e}_m - \bv{\tilde V}_1 \bv{\tilde l}_1)^T   & 0 & \beta_{m+2} (\bv{e}_1 - \bv{\tilde V}_2 \bv{\tilde f}_2)^T\\
    \bv{0} & \beta_{m+2} (\bv{e}_1 - \bv{\tilde V}_2 \bv{\tilde f}_2)  & \bv{0}
    \end{bmatrix}.
\end{align*}
By our inductive assumption that $\norm{\bv{\tilde V}_i^T \bv{\tilde V}_i - \bv{I}} \le \delta_I$ \eqref{recursiveAssump}, we have $\norm{\bv{\tilde Z}^T \bv{\tilde Z} - \bv{I}} \le \delta_I$ and further:
\begin{align*}
\norm{\bv{\tilde T} - \bv{\tilde Z}\bv{\tilde H}_{\text{exact}} \bv{\tilde Z}^T } &\le \beta_{m+1} \norm{\bv{e}_m - \bv{\tilde V}_1 \bv{\tilde l}_1} + \beta_{m+2} \norm{\bv{e}_1 - \bv{\tilde V}_2 \bv{\tilde f}_2}\\
&\le \beta_{m+1} \norm{\bv{\tilde V}_1 \bv{\tilde V}_1^T - \bv{I}} \norm{\bv{e}_k} +  \beta_{m+2} \norm{\bv{\tilde V}_2 \bv{\tilde V}_2^T - \bv{I}} \norm{\bv{e}_1}\\
& \le \delta_I (\beta_{m+1} + \beta_{m+2}).
\end{align*}
Then, by the triangle inequality we can loosely bound 
\begin{align}\label{tbound}
\norm{\bv{T} - \bv{\tilde Z} \bv{\tilde H}_{\text{exact}}\bv{\tilde Z}^T } &\le \norm{\bv{T} - \bv{\tilde T}} + \norm{\bv{\tilde T} -\bv{\tilde Z} \bv{\tilde H}_{\text{exact}}\bv{\tilde Z}^T }\nonumber \\ 
&\le \delta_T + \delta_I (\beta_{m+1} + \beta_{m+2}) \nonumber \\
&\le \delta_T + 2\delta_I\norm{\bv{T}}.
\end{align}
Finally, using Requirement \ref{req1}, $\bv{\tilde H}$'s entries are within relative error $\mach$ of the entries in $\bv{\tilde H}_{\text{exact}}$ so
\begin{align}
\label{h_norm_1}
\|\bv{\tilde H} - \bv{\tilde H}_{\text{exact}}\| \leq \norm{\bv{\tilde H} - \bv{\tilde H}_{\text{exact}}}_F \le \mach \norm{\bv{\tilde H}_\text{exact}}_F \le
\mach \sqrt{k}\|\bv{\tilde H}_\text{exact}\|.
\end{align}
Using \eqref{tbound} and submultiplicativity we can obtain a loose bound on $\|\bv{\tilde H}_{\text{exact}}\|$ of
\begin{align}
\label{h_norm_2}
 \norm{\bv{\tilde H}_{\text{exact}}} &\leq \norm{\bv{\tilde Z}^{-1}}^2\left(\norm{\bv{T}} + \delta_T + 2\delta_I\norm{\bv{T}}\right) \nonumber\\
 &\leq 8\norm{\bv{T}}  + 4\delta_T
\end{align}
as long as $\delta_I \leq  1/2$ and so $\norm{\bv{\tilde Z}^{-1}} \le 2$. We finally conclude that:
\begin{align*}
\norm{ \bv{\tilde Z} \bv{\tilde H}_{\text{exact}}\bv{\tilde Z}^T -  \bv{\tilde Z} \bv{\tilde H}\bv{\tilde Z}^T } \leq \norm{ \bv{\tilde Z}}^2  \norm{\bv{\tilde H}_{\text{exact}} -   \bv{\tilde H} }  \leq  18 \sqrt{k}\mach\norm{\bv{T}} + 9\sqrt{k} \mach \delta_T.
\end{align*}
Combined with \eqref{tbound}, we have that:
\begin{align}\label{tbound_final}
\norm{\bv{T} - \bv{\tilde Z} \bv{\tilde H}\bv{\tilde Z}^T } \le (1 + 9\sqrt{k}\mach) \delta_T + 2\delta_I\norm{\bv{T}} + 18\sqrt{k}\mach\norm{\bv{T}}.
\end{align}

We now discuss the eigendecomposition of $\bv{\tilde H}$. Gu and Eisenstat show that for some error parameter $\epsilon_1$ with $\mach \le \epsilon_1 \le \frac{1}{2k}$, it is possible to compute $\wh{\lambda_i}$  satisfying 
\begin{align}\label{eigBound}
\left |\wh{\lambda_i} -  \lambda_i(\bv{\tilde H}) \right |  \le \epsilon_1 k \norm{\bv{\tilde H}}\text{ for all } i
\end{align}
 in $O(k \log^2(1/\epsilon_1))$ time.
To do this they assume the existence of a root finder giving relative accuracy solutions to the roots of the characteristic polynomial of $\bv{\tilde H}$ (its eigenvalues). 
While significant work has studied efficient root finders for these polynomials (see e.g. \cite{bunch1978rank}), we can just use of a simple bisection method that converges to relative accuracy $(1\pm \epsilon_1)$ in $\log(1/\epsilon_1)$ iterations, as it will not significantly affect our final asymptotic runtime bounds. The second $\log(1/\epsilon_1)$ factor in the runtime bound above comes from the use of the Fast Multipole Method \cite{greengard1987fast} to evaluate the characteristic polynomial efficiently at each iteration of the bisection method.

Gu and Eisenstat next show that given $\wh{\lambda}_i$ satisfying \eqref{eigBound}, and letting $\bs{\hat \Lambda} = \diag(\wh{\lambda}_1,...,\wh{\lambda}_k)$, it is possible to compute $\bv{\hat{U}}$ such that $\bv{\hat{H}} \eqdef \bv{\hat{U}} \bs{\hat{\Lambda}} \bv{\hat{U}}^T$ approximates $\bv{\tilde H}$ up to additive error $O( \epsilon_1 k^2 \norm{\bv{\tilde H}})$ on each entry. This gives:
\begin{align}\label{hbound}
\norm {\bv{\tilde H} - \bv{\hat{H}}} = c_1\epsilon_1 k^3 \norm{\bv{\tilde H}}
\end{align}
for some constant $c_1$.
They further show  that $\norm{\bv{\hat{U}}^T \bv{\hat{U}} - \bv{I}} \le c_2\mach k^2$ for constant $c_2$.

Using the fast multipole method, they show how to compute $\bv{\tilde V}$ which approximates $\bv{\tilde Z} \bv{\hat{U}}$ to entrywise relative accuracy $\Theta(\epsilon_1)$ in $O(k^2 \log(1/\epsilon_1))$ time. 
Note that as long as $\delta_I \le 2$ and $\mach \le \frac{1}{ck^2}$ for sufficiently large $c$, then each entry $(\bv{\tilde Z} \bv{\hat U})^{(i,j)}$ is upper bounded by a fixed constant since these matrices are both near orthogonal. Thus $\bv{\tilde V}$ actually approximates $ \bv{\tilde Z}\bv{\hat{U}}$ up to $O(\epsilon_1)$ entrywise additive error and correspondingly
$\bv{\tilde V}^T\bv{\tilde V}$ approximates $ \bv{\hat{U}}^T\bv{\tilde Z} ^T\bv{\tilde Z} \bv{\hat{U}}$ up to entrywise additive error $O(\epsilon_1 k)$. Thus
$\norm{\bv{\tilde V}^T\bv{\tilde V} -  \bv{\hat{U}}^T\bv{\tilde Z}^T \bv{\tilde Z}\bv{\hat{U}}} = O(\epsilon_1k^2)$. 
This gives:
\begin{align}\label{orthCond}
\norm{\bv{\tilde V}^T\bv{\tilde V} - \bv{I}} &\le \norm{\bv{\tilde V}^T\bv{\tilde V} -  \bv{\hat{U}}^T\bv{\tilde Z}^T\bv{\tilde Z}\bv{\hat{U}} } + \norm{\bv{\hat{U}}^T\bv{\tilde Z}^T\bv{\tilde Z}\bv{\hat{U}} - \bv{I}}\nonumber\\
&\le \norm{\bv{\tilde V}^T\bv{\tilde V} - \bv{\hat{U}}^T\bv{\tilde Z}^T\bv{\tilde Z}\bv{\hat{U}}} + \norm{\bv{\hat U}^T \bv{\hat U}-\bv{I}} + \norm{\bv{\hat U}}^2 \norm{\bv{\tilde Z}^T \bv{\tilde Z}-\bv{I}}\nonumber\\ 
&\le c_3\epsilon_1 k^2 + (1+c_2 \mach k^2) \delta_I
\end{align}
for some constants $c_2,c_3$. In the last step we use that $\norm{\bv{\hat U}}^2 \le (1+c_2 \mach k^2)$, since as mentioned $\norm{\bv{\hat{U}}^T \bv{\hat{U}} - \bv{I}} \le c_2\mach k^2$.

$\bv{\tilde V} \bs{\hat{\Lambda}} \bv{\tilde V}^T$ approximates 
$\bv{\tilde Z} \bv{\hat{U}} \bs{\hat{\Lambda}} \bv{\hat{U}}^T \bv{\tilde Z}^T$ to entrywise additive error $O(\epsilon_1 k \norm{\bs{\hat{\Lambda}}}) = O(\epsilon_1 k \norm{\bv{\tilde H}})$ giving:
\begin{align}\label{qBound}
\norm{\bv{\tilde V} \bs{\hat{\Lambda}} \bv{\tilde V}^T - \bv{\tilde Z} \bv{\hat{U}} \bs{\hat{\Lambda}} \bv{\hat{U}}^T \bv{\tilde Z}^T} \le c_4\epsilon_1 k^2 \norm{\bv{\tilde H}}
\end{align}
for some constant $c_4$. 
Combining \eqref{hbound} and \eqref{qBound} with the recursive error bound \eqref{tbound_final}:
\begin{align}\label{normCond}
\norm{\bv{T} &- \bv{\tilde V} \bs{\hat{\Lambda}} \bv{\tilde V}^T} \le \norm{\bv{T} - \bv{\tilde Z} \bv{\tilde H} \bv{\tilde Z}^T } + \norm{\bv{\tilde Z} \bv{\tilde H} \bv{\tilde Z}^T - \bv{\tilde V} \bs{\hat{\Lambda}} \bv{\tilde V}^T}\nonumber\\
&\le (1 + 9\sqrt{k}\mach) \delta_T + 2\norm{\bv{T}}\delta_I + 18\sqrt{k}\mach\norm{\bv{T}} +  \norm{ \bv{\tilde Z} (\bv{\tilde H} - \bv{\hat{U}} \bs{\hat{\Lambda}} \bv{\hat{U}}^T) \bv{\tilde Z}^T} + \norm{ \bv{\tilde Z} \bv{\hat{U}} \bs{\hat{\Lambda}} \bv{\hat{U}}^T \bv{\tilde Z}^T-\bv{\tilde V} \bs{\hat{\Lambda}} \bv{\tilde V}^T}\nonumber\\
&\le (1 + 9\sqrt{k}\mach) \delta_T + 2 \norm{\bv{T}} \delta_I + 18\sqrt{k}\mach\norm{\bv{T}} + c_1 \epsilon_1 k^3 \norm{\bv{\tilde H}}\norm{\bv{\tilde Z}}^2 + c_4 \epsilon_1 k^2 \norm{\bv{\tilde H}}\nonumber\\
&\le (1 + 9\sqrt{k}\mach + c_5\epsilon_1k^3) \delta_T + c_6 (\delta_I + \epsilon_1 k^3)\norm{\bv{T}}   
\end{align}
for fixed constants $c_5,c_6$. In the last bound we use that $\norm{\bv{\tilde H}} = O(\norm{\bv{T}} + \delta_T)$, which follows from combining \eqref{h_norm_1} and \eqref{h_norm_2} and assuming $\mach \le \frac{1}{\sqrt{k}}$. We also use that $\norm{\bv{\tilde Z}} = O(1)$. Both of these bounds hold assuming $\delta_T,\delta_I \le 1/2$. 

Finally, over $\log k$ levels of recursion, as long as we set $\epsilon_1 \le \frac{\epsilon}{ck^3 \log k}$ for sufficiently large $c$, in the end, by \eqref{orthCond} and \eqref{normCond} we will have $\norm{\bv{\tilde V}^T\bv{\tilde V} - \bv{I}} \le \epsilon$ and $\norm{\bv{T}-\bv{\tilde V} \bs{\hat{\Lambda}} \bv{\tilde V}^T} \le \epsilon \norm{\bv{T}}$.

Specifically, by \eqref{orthCond}, $\delta_I$ just grows by a $c_3\epsilon_1 k^2$ additive factor and a $(1+c_2 \mach k^2)$ multiplicative factor at each level. Assuming $\mach < \frac{1}{c k^2 \log k}$, the multiplicative factor, even after compounding over $\log k$ levels, can be bounded by $(1 + c_2 \mach k^2)^{\log k} \le 2$ if $c$ is set large enough. Along with the 
the setting of $\epsilon_1$, this ensures that $\delta_I \le \frac{2c_3\epsilon}{c k}$ at each level. Thus $\delta_I \le \epsilon$ if $c$ is set large enough.

Similarly, 
$\delta_T$ grows according to \eqref{normCond}, increasing by an additive factor of $c_6(\delta_I + \epsilon_1 k^3)\norm{\bv{T}}$ and multiplicative factor of $(1 + 9\sqrt{k}\mach + c_5\epsilon_1k^3)$. By our setting of $\epsilon_1$ and assuming $\mach < \frac{1}{c \sqrt{k}\log k}$, the multiplicative factor can be bounded by $(1 + \frac{9 + c_5}{c\log k})^{\log k} \le 2$ if we set $c$ large enough. Additionally, the additive factor can be upper bounded by $c_6 \left ( \frac{2c_3\epsilon}{ck} + \frac{\epsilon}{c \log k} \right ) \norm{ \bv{T}}$ which is less than $\frac{\epsilon}{\log k} \norm{\bv{T}}$ by any constant factor if $c$ is set large enough. Thus, even when accounting for the multiplicative error and summing over $\log k$ levels, we have $\delta_T \le \epsilon\norm{\bv{T}}$.

Our final runtime bound follows from adding the $O(k^2 \log (1/\epsilon_1))$ cost of computing $\bv{\tilde V}$ to the $O(k \log^2(1/\epsilon_1))$ cost of computing $\bs{\hat{\Lambda}}$ and setting $\epsilon_1 = \frac{\epsilon}{ck^3 \log k}$
 so $\log(1/\epsilon_1) = O(\log(k/\epsilon))$.
\end{proof}

\subsection{Stable function application}
With Lemma \ref{lem:guEisenstat} in place we can complete the analysis of the Lanczos post processing step.
\begin{lemma}[Stable Post-Processing]\label{lem:postProcess}
Suppose we are given a symmetric tridiagonal $\bv{T} \in \R^{k \times k}$, $\bv{x} \in \R^k$ with $\norm{\bv{x}} = 1$\footnote{The theorem also holds with different constant factors when $\norm{\bv{x}} = O(1)$. We prove it for the case when the norm is exactly 1 because our broader analysis only applies it to truly unit norm vectors (i.e. to the basis vector $\bv{e}_1$).}, function $f$, $\eta \ge 0$, and error parameter $\epsilon \le 1/2$ with $ck^3 \log k \cdot \mach  \le \epsilon \le \frac{\eta}{4 \norm{\bv{T}}}$ for sufficiently large constant $c$.
Define $C = \max_{x \in [\lmin(\bv{T}) -\eta,\lmax(\bv{T}) + \eta]} |f(x)|$ and let
\begin{align*}
\delta_{q} =   \min_{\substack{\textnormal{polynomial $p$} \\\textnormal{ with degree $<q$}}} \left(\max_{x \in [\lmin(\bv{T}) - \eta,\lmax(\bv{T}) + \eta]}\left|f(x) - p(x)\right|	\right).
\end{align*}
There is an algorithm running in $O(k^2 \log \frac{k}{\epsilon} + k\log^2  \frac{k}{\epsilon})$ time on a computer satisfying Requirements \ref{req1} and \ref{req2} with relative precision $\epsilon_{mach}$ which returns $\bv{y}$ satisfying:
\begin{align*}
\norm{f(\bv{T})\bv{x} - \bv{y}} &\le 2\delta_{q} + \epsilon \cdot \left ( \frac{16 q^3 C  \norm{\bv{T}}}{\lmax(\bv{T}) - \lmin(\bv{T}) + 2\eta} + 16 C \right ).
\end{align*}
\end{lemma}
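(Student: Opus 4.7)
The plan is to follow the eigendecomposition approach sketched in Section~\ref{sec:completing}. First, invoke the Gu--Eisenstat algorithm of Lemma~\ref{lem:guEisenstat} with precision parameter $\epsilon' = \Theta(\epsilon)$ to obtain $\bv{\tilde V}\in\R^{k\times k}$ and diagonal $\bs{\tilde\Lambda}\in\R^{k\times k}$ satisfying $\|\bv{\tilde V}\bs{\tilde\Lambda}\bv{\tilde V}^T - \bv{T}\|\le \epsilon'\|\bv{T}\|$ and $\|\bv{\tilde V}^T\bv{\tilde V}-\bv{I}\|\le\epsilon'$, in the target runtime $O(k^2\log(k/\epsilon)+k\log^2(k/\epsilon))$. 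The output is then $\bv{y}=\bv{\tilde V}f(\bs{\tilde\Lambda})\bv{\tilde V}^T\bv{x}$, computed in three steps: apply $\bv{\tilde V}^T$ to $\bv{x}$, apply $f$ pointwise to the diagonal of $\bs{\tilde\Lambda}$, and multiply by $\bv{\tilde V}$. These three steps cost only $O(k^2)$, which is absorbed into the Gu--Eisenstat runtime.

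The bulk of the work is bounding the idealized error $\|f(\bv{T})\bv{x} - \bv{\tilde V}f(\bs{\tilde\Lambda})\bv{\tilde V}^T\bv{x}\|$ in exact arithmetic. I would first observe, via the SVD of $\bv{\tilde V}$, that $\|\bv{\tilde V}^T\bv{\tilde V}-\bv{I}\|\le\epsilon'$ also yields $\|\bv{\tilde V}\bv{\tilde V}^T-\bv{I}\|\le\epsilon'$ and $\|\bv{\tilde V}\|\le\sqrt{1+\epsilon'}$. Standard symmetric perturbation theory then shows every diagonal entry of $\bs{\tilde\Lambda}$ lies within $O(\epsilon'\|\bv{T}\|)\le\eta$ of some eigenvalue of $\bv{T}$, so $\bs{\tilde\Lambda}$'s spectrum is contained in $[\lmin(\bv{T})-\eta,\lmax(\bv{T})+\eta]$. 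Next, let $p$ be the optimal degree $<q$ polynomial with $\max|f-p|\le\delta_q$ on this extended range, and split $f = p + (f-p)$. The $(f-p)$ part contributes at most $(1+\|\bv{\tilde V}\|^2)\delta_q$ to the error since both $\bv{T}$ and $\bs{\tilde\Lambda}$ have spectrum on the range where $|f-p|\le\delta_q$; this matches the lemma's $2\delta_q$ term up to an $O(\epsilon C)$ slack that is absorbed elsewhere.

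The main obstacle is controlling the polynomial contribution $\|p(\bv{T})\bv{x}-\bv{\tilde V}p(\bs{\tilde\Lambda})\bv{\tilde V}^T\bv{x}\|$. Telescoping in the monomial basis would blow up exponentially in $q$ because $p$'s monomial coefficients may be much larger than $\|p\|_\infty$. The idea is to reuse the Chebyshev-recurrence argument from Lemma~\ref{cheby_poly_with_error}. Expand $p$ in the shifted Chebyshev basis $\{\ol{T}_i\}$ over $[\lmin(\bv{T})-\eta,\lmax(\bv{T})+\eta]$; since $|p|\le 2C$ on that range, the coefficients satisfy $|c_i|\le 4C$. Right-multiplying the Gu--Eisenstat relation $\bv{\tilde V}\bs{\tilde\Lambda}\bv{\tilde V}^T=\bv{T}+\bv{E}_1$ by $\bv{\tilde V}$ and using near-orthogonality of $\bv{\tilde V}$ produces a perturbed intertwining $\bv{T}\bv{\tilde V} = \bv{\tilde V}\bs{\tilde\Lambda} + \bv{E}_2$ with $\|\bv{E}_2\|=O(\epsilon'\|\bv{T}\|)$, an exact analog of the Paige relation used in Lemma~\ref{cheby_poly_with_error}. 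Setting $\bv{d}_i = \ol{T}_i(\bv{T})\bv{x} - \bv{\tilde V}\ol{T}_i(\bs{\tilde\Lambda})\bv{\tilde V}^T\bv{x}$ and unfolding the shifted Chebyshev recurrence gives $\bv{d}_i = 2\bigl(\delta(\bv{T}-\rmin\bv{I})-\bv{I}\bigr)\bv{d}_{i-1}-\bv{d}_{i-2}+2\delta\bv{E}_2\ol{T}_{i-1}(\bs{\tilde\Lambda})\bv{\tilde V}^T\bv{x}$ for $i\ge 2$, with base cases $\|\bv{d}_0\|=\|(\bv{\tilde V}\bv{\tilde V}^T-\bv{I})\bv{x}\|\le\epsilon'$ and $\|\bv{d}_1\|=O(\epsilon'\|\bv{T}\|/\eta)$. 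Since every eigenvalue of $\bs{\tilde\Lambda}$ lies in $[-1,1]$ after the Chebyshev shift, $\|\ol{T}_{i-1}(\bs{\tilde\Lambda})\|\le 1$, and solving the recurrence via Chebyshev polynomials of the second kind exactly as in Lemma~\ref{cheby_poly_with_error} (plus an additional $(i-1)\|\bv{d}_0\|+i\|\bv{d}_1\|$ term from the nonzero base case) yields $\|\bv{d}_i\|=O(i^2\epsilon'\|\bv{T}\|/\eta)$. Summing against $|c_i|\le 4C$ over $i<q$ gives a polynomial-part error of $O(Cq^3\epsilon'\|\bv{T}\|/\eta)$, which matches $\epsilon\cdot 8q^3C\|\bv{T}\|/\eta$ for an appropriate choice of the constant in $\epsilon'$.

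Finally, I would account for floating-point roundoff in the explicit computation $\bv{\tilde V}\bigl(f(\bs{\tilde\Lambda})(\bv{\tilde V}^T\bv{x})\bigr)$ via Requirement~\ref{req1}. Since $\|f(\bs{\tilde\Lambda})\|\le C$, $\|\bv{x}\|=1$, and $\|\bv{\tilde V}\|\le\sqrt{1+\epsilon'}$, standard matrix-vector error analysis bounds the cumulative roundoff by $O(\mach k^{3/2}C)$, which fits inside the remaining $O(C\epsilon)$ slack thanks to the hypothesis $\mach\le c\epsilon/(k^3\log k)$. Combined with the idealized bound from the previous paragraph and the $(f-p)$ contribution, this yields the stated error $2\delta_q+\epsilon\bigl(8q^3C\|\bv{T}\|/\eta+16C\bigr)$ and runtime.
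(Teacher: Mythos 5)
Your plan diverges from the paper's route in one substantive way. The paper first constructs an \emph{exactly orthonormal} $\bv{\bar V}$ with $\|\bv{\tilde V}-\bv{\bar V}\|\le\epsilon$ (from the SVD of $\bv{\tilde V}$), sets $\bv{\bar T}=\bv{\bar V}\bs{\tilde\Lambda}\bv{\bar V}^T$, and splits $\|f(\bv{T})\bv{x}-\bv{y}\|\le\|f(\bv{T})\bv{x}-f(\bv{\bar T})\bv{x}\|+\|f(\bv{\bar T})\bv{x}-\bv{y}\|$. The first term is handled by the symmetric-perturbation Lemma~\ref{polyErrorGeneral}; the second by noting $f(\bv{\bar T})\bv{x}=\bv{\bar V}f(\bs{\tilde\Lambda})\bv{\bar V}^T\bv{x}$ exactly and bounding the roundoff in a single pass, which yields the clean $16\epsilon C$ term with no dependence on $q$. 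You instead keep the near-orthogonal $\bv{\tilde V}$ throughout and re-run the Chebyshev recurrence of Lemma~\ref{cheby_poly_with_error} against the intertwining relation $\bv{T}\bv{\tilde V}=\bv{\tilde V}\bs{\tilde\Lambda}+\bv{E}_2$. That is a legitimate structural analogue (with $\bv{\tilde V}$ playing $\bv{Q}$ and $\bs{\tilde\Lambda}$ playing $\bv{T}$), and your derivation of the recurrence is correct.

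The gap is in your handling of the base case. In Lemma~\ref{cheby_poly_with_error}, the base case $\bv{d}_0=\bv{q}_1-\bv{Q}\bv{e}_1$ vanishes identically. Here $\bv{d}_0=(\bv{I}-\bv{\tilde V}\bv{\tilde V}^T)\bv{x}$ is nonzero with $\|\bv{d}_0\|\le\epsilon'$. Your bound $\|\bv{d}_i\|\le(i-1)\|\bv{d}_0\|+i\|\bv{d}_1\|+O(i^2\epsilon'\|\bv{T}\|/\eta)$, after summing against $|c_i|\le 4C$ over $i<q$, produces an $O(Cq^2\epsilon')$ contribution from $\bv{d}_0$ that is \emph{not} subsumed by the lemma's $16\epsilon C$ budget when $\epsilon'=\Theta(\epsilon)$, and is also not of the form $\epsilon\cdot q^3C\|\bv{T}\|/\eta$ unless one assumes $\eta\lesssim\|\bv{T}\|$, which the lemma's hypothesis does not guarantee. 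Even the sharper homogeneous-solution bound $\|\bv{d}_i^{(h)}\|\le\|\bv{d}_0\|+i\|\bv{d}_1-\barT\bv{d}_0\|$ (where $\bv{d}_1-\barT\bv{d}_0=\delta\bv{E}_2\bv{\tilde V}^T\bv{x}$ has norm $O(\epsilon'\|\bv{T}\|/\eta)$) still leaves a naive $\sum_i|c_i|\|\bv{d}_0\|=O(Cq\epsilon')$ contribution. To close the gap along your route you must not take absolute values on the $\bv{d}_0$ piece but instead sum it coherently: $\sum_{i<q}c_iT_i(\barT)\bv{d}_0=\bar p(\barT)\bv{d}_0=p(\bv{T})\bv{d}_0$, and since $|p(x)|\le 2C$ on $[\rmin,\rmax]\supseteq$ spectrum of $\bv{T}$, this contributes only $O(\epsilon'C)$ with no $q$ factor. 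The paper avoids having to make this observation at all, because with the exactly orthonormal $\bv{\bar V}$ the $\bv{d}_0$-analogue is identically zero, and the discrepancy between $\bv{\tilde V}$ and $\bv{\bar V}$ is absorbed once as a matrix-vector perturbation in the floating-point computation step rather than iterated $q$ times through the recurrence.
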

\begin{proof}
Assume that we have $\bv{\tilde V}, \bs{\tilde \Lambda}$ satisfying the guarantees of Lemma \ref{lem:guEisenstat} for error $\epsilon$. Let $\bv{\tilde T} = \bv{\tilde V} \bs{\tilde \Lambda} \bv{\tilde V}^T$ and let $\bv{y} = \fl(\bv{\tilde V} f(\bs{\tilde \Lambda}) \bv{\tilde V}^T\bv{x})$
be the result of computing $\bv{\tilde V} f(\bs{\tilde \Lambda}) \bv{\tilde V}^T\bv{x}$ in finite precision. 


We introduce an orthonormal matrix $\bv{\bar V}$ such that $\norm{\bv{\tilde V}-\bv{\bar V}} \le \epsilon$. To see that such a $\bv{\bar V}$ exists, by the condition of Lemma \ref{lem:guEisenstat} we can write $\bv{\tilde V}^T \bv{\tilde V} = \bv{I} + \bs{\Delta}$ for some symmetric $\bs{\Delta}$ with $\norm{\bs{\Delta}} \le \epsilon$. Writing the eigendecomposition $\bs{\Delta} = \bv{Z} \bv{S} \bv{Z}^T$, we have $\bv{\tilde V}^T \bv{\tilde V} = \bv{Z} (\bv{I} + \bv{S}) \bv{Z}^T$. So for some orthonormal matrix $\bv{M}$, $\bv{\tilde V} = \bv{M} (\bv{I}+\bv{S})^{1/2} \bv{Z}^T$ and thus:
\begin{align*}
\bv{\tilde V} = \bv{M}\bv{Z}^T + \bv{M} \bv{\hat S} \bv{Z}^T
\end{align*}
where $\|\bv{\hat S}\| \leq \epsilon$. We can then define the orthonormal matrix $\bv{\bar V} \eqdef \bv{M} \bv{Z}^T$ and have 
\begin{align}
\label{orth_distance}
\norm{\bv{\tilde V} - \bv{\bar V}} \le \epsilon.
\end{align}
 Let $\bv{\bar T} = \bv{\bar V} \bs{\tilde \Lambda}\bv{\bar V}^T$. Using Lemma \ref{lem:guEisenstat}, we have that:
\begin{align*}
\norm{\bv{T}-\bv{\bar T}} &\le \norm{\bv{T}-\bv{\tilde T}} + \norm{\bv{\tilde T}-\bv{\bar T}}\\
&\le \epsilon \norm{\bv{T}} + \norm{\bv{\tilde V} \bs{\tilde \Lambda} \bv{\tilde V}^T - \bv{\bar V} \bs{\tilde \Lambda} \bv{\tilde V}^T} + \norm{\bv{\bar V} \bs{\tilde \Lambda} \bv{\tilde V}^T - \bv{\bar V} \bs{\tilde \Lambda} \bv{\bar V}^T}\\
&\le \epsilon \norm{\bv{T}}  + (\norm{\bv{\tilde V}} + 1)\norm{\bv{\tilde V} - \bv{\bar V}} \norm{\bs{\tilde \Lambda}}\\
&\le 4\epsilon \norm{\bv{T}}.
\end{align*}
 We now have by triangle inequality:
\begin{align}\label{fullFBound}
\norm{f(\bv{T})\bv{x} - \bv{y}} \le \norm{f(\bv{T})\bv{x} - f(\bv{\bar T})\bv{x}} + \norm{f( \bv{\bar T})\bv{x} -  \bv{y}} .
\end{align}
The first norm is small since $\norm{\bv{T}-\bv{\bar T}} \le 4\epsilon \norm{\bv{T}} \le \eta$. Specifically, we can apply Lemma \ref{polyErrorGeneral} in Appendix \ref{sec:perturbation}, which uses  an argument similar to that in Lemma \ref{poly_with_error} to prove that for any $p$ with degree $< q$ and $|p(x)| \le C$ on $[\lmin(\bv{T})-\eta,\lmax(\bv{T})+\eta]$:
\begin{align*}
\norm{p(\bv{T})\bv{x} - p(\bv{\bar T})\bv{x}} \le \frac{4 q^3 C \norm{\bv{T}-\bv{\bar T}}}{\lmax(\bv{T}) - \lmin(\bv{T}) + 2\eta} \le \frac{16 q^3 C \epsilon \norm{\bv{T}}}{\lmax(\bv{T}) - \lmin(\bv{T}) + 2\eta}.
\end{align*}
Using triangle inequality and again that all eigenvalues of $\bv{\bar T}$ lie in $[\lmin(\bv{T})-4\epsilon\norm{\bv{T}},\lmax(\bv{T})+4\epsilon\norm{\bv{T}}]$ and thus in $[\lmin(\bv{T}) -\eta,\lmax(\bv{T})+\eta]$ by our assumption that $\epsilon \le \frac{\eta}{4\norm{\bv{T}}}$ gives:
\begin{align}\label{firstFBound}
\norm{f(\bv{T})\bv{x} - f(\bv{\bar T})\bv{x}} &\le \norm{f(\bv{T})\bv{x} - p(\bv{T})\bv{x}} + \norm{f(\bv{\bar T})\bv{x} - p(\bv{\bar T})\bv{x}} + \norm{p(\bv{T})\bv{x} - p(\bv{\bar T})\bv{x}}\nonumber\\
&\le 2 \delta_{q} + \frac{16 q^3 C \epsilon \norm{\bv{T}}}{\lmax(\bv{T}) - \lmin(\bv{T}) + 2\eta}.
\end{align}

We now bound the second term of \eqref{fullFBound}: $\norm{f( \bv{\bar T})\bv{x} -  \bv{y}}$. First note that, but our assumption in Section \ref{sec:fpprelim} that $f$ can be computed to relative error $\mach$, 
\begin{align}
\label{f_bound}
\|\fl(f(\bs{\tilde \Lambda})) - f(\bs{\tilde \Lambda})\| \leq \mach \norm{f(\bs{\tilde \Lambda})} \leq \mach C
\end{align}
 since $\bs{\tilde \Lambda}$'s eigenvalues lie in $[\lmin(\bv{T}) -\eta,\lmax(\bv{T})+\eta]$. Additionally, by 
Requirement \ref{req2}, for any square matrix $\bv{B}$ and vector $\bv{w}$, there is some matrix $\bv{E}$ such that:
\begin{align}
\label{back_mult}
\fl(\bv{B}\bv{w}) &= \left(\bv{B} + \bv{E}\right)\bv{w} & &\text{and} & \|\bv{E}\| &\leq 2k^{3/2}\mach\norm{\bv{B}}.
\end{align}

Accordingly, we can simply write $\bv{y} = \fl(\bv{\tilde V} f(\bs{\tilde \Lambda}) \bv{\tilde V}^T\bv{x})$ as:
\begin{align*}
\bv{y} =  \left(\bv{\bar V} + \bv{E}_{1}\right)\left(f(\bs{\tilde \Lambda})+ \bv{E}_{2}\right)\left(\bv{\bar V}^T + \bv{E}_{3}\right)\bv{x}.
\end{align*}
We can show that $\|\bv{E}_{1}\|$ and $\norm{\bv{E}_{3}}$ are upper bounded by $\epsilon + 4k^{3/2}\mach$ using \eqref{orth_distance}, \eqref{back_mult}, and the loose bound $\norm{\bv{\tilde V}} \leq 2$. We can show that $\norm{\bv{E}_{2}}$ is upper bounded by $4 \mach Ck^{3/2}$ using \eqref{f_bound} and \eqref{back_mult}. Since $4 \mach k^{3/2} \leq \epsilon$, $\|\bv{E}_{1}\|, \|\bv{E}_{3}\| \leq 2\epsilon$ and $\|\bv{E}_{2}\| \leq \epsilon C$.
Using that $\epsilon \leq 1/2$, it follows that:
\begin{align*}
\norm{f( \bv{\bar T})\bv{x} - \bv{y}} &= \norm{\bv{\bar V}f( \bs{\tilde \Lambda})\bv{\bar V}^T\bv{x} - \bv{y}} \\
&\leq \norm{\bv{E}_1}(C + \epsilon C)(1 + 2\epsilon) + \norm{\bv{E}_3}(C + \epsilon C)(1 + 2\epsilon) + \norm{\bv{E}_2}(1 + 2\epsilon)^2 \\
&\leq 16\epsilon C.
\end{align*}
Plugging this bound and \eqref{firstFBound} into \eqref{fullFBound} gives the lemma.
\end{proof}

\section{Tighter results for linear systems}\label{sec:linsystems_app}

In this section we discuss how bounds on function approximation via Lanczos can be improved for the special case of $f(\bv{A}) = \bv{A}^{-1}$ when $\bv{A}$ is positive definite, both in exact arithmetic and finite precision. 
In particular, we provide a short proof of the exact arithmetic bound presented in  \eqref{introBound2} and discuss Greenbaum's analogous result for finite precision conjugate gradient (Theorem \ref{thm:greenbaum}) in full detail. Ultimately, our lower bound in Section \ref{sec:lower} shows that, while Theorem \ref{thm:greenbaum} is a natural extension of \eqref{introBound2} to finite precision, it actually gives much weaker iteration bounds.

One topic which we do not discuss in depth is that, besides tighter approximation bounds, the computational cost of the Lanczos method can be somewhat improved when solving linear systems. Specifically, it is possible to compute the approximation $\bv{y} = \|\bv{x}\| \bv{Q}\bv{T}^{-1}\bv{e}_1$ ``on-the-fly'', without explicitly storing $\bv{Q}$ or $\bv{T}$. While this does not improve on the asymptotic runtime complexity of Algorithm \ref{alg:lanczos}, it improves the space complexity from $O(kn)$ to simply order $O(n)$ .

Such space-optimized methods yield the popular conjugate gradient algorithm (CG) and its relatives. In fact, Greenbaum's analysis applies to a variant of CG (Algorithm \ref{alg:green_cg}). Like all variants, it computes an approximation to $\bv{A}^{-1}\bv{x}$ that, at least in exact arithmetic, is equivalent to $\|\bv{x}\|\bv{Q}\bv{T}^{-1}\bv{e}_1$, the approximation obtained from the Lanczos method (Algorithm \ref{alg:lanczos}). The finite precision behavior of Greenbaum's conjugate gradient implementation is also very similar to the finite precision behavior of the Lanczos method we study. In fact, her work is based on the same basic results of Paige that we depend on in Section \ref{lanczos_in_finite_precision}.

\subsection{Linear systems in exact arithmetic}
We begin by proving \eqref{introBound2}, showing that the approximation quality of Lanczos in exact arithmetic (Theorem \ref{exact_lanczos_final_theorem}) can be improved when our goal is to approximate $\bv{A}^{-1}\bv{x}$ for positive definite $\bv{A}$. It is not hard to see that an identical bound holds when $\bv{A}$ is positive semidefinite (i.e. may be singular) and $f(\bv{A}) = \bv{A}^+$ is the pseudoinverse. That is, $f(x) = 1/x$ for $x > 0$ and $0$ for $x = 0$. However, we restrict our attention to full rank matrices for simplicity, and since it is for these matrices which Greenbaum's finite precision bounds hold.

\begingroup
\makeatletter
\apptocmd{\thetheorem}{\unless\ifx\protect\@unexpandable@protect\protect \,\,-- Exact Arithmetic\fi}{}{}
\makeatother
\begin{theorem}[Approximate Application of $\bv{A}^{-1}$]\label{exact_lanczos_linear_systems}
Suppose $\bv{Q} \in \R^{n\times k}$, $\bv{T} \in \R^{k\times k}$, $\beta_{k+1}$, and $\bv{q}_{k+1}$ are computed by the Lanczos algorithm (Algorithm \ref{alg:lanczos}), run with exact arithmetic on positive definite $\bv{A} \in \R^{n \times n}$ and $\bv{x} \in \R^n$ for $k \le n$ iterations. Let
\begin{align*}
\bar\delta_{k} =  \min_{\substack{\textnormal{polynomial $p$} \\ \textnormal{w/ degree $<k$}}}\left(\max_{x \in \{ \lambda_1(\bv{A}),\lambda_2(\bv{A}),...,\lambda_n(\bv{A})\}}\left| 1/x - p(x)\right|	\right).
\end{align*}
Then if we approximate $\bv{A}^{-1}\bv{x}$ by $\bv{y}_k = \|\bv{x}\| \bv{Q} \bv{T}^{-1}\bv{e}_1$, we are guaranteed that:
	\begin{align} \label{eq:approx_linear_system_app}
		\|\bv{A}^{-1}\bv{x} - \bv{y}_k\| \leq \sqrt{\kappa(\bv{A})} \bar\delta_{k} \|\bv{x}\|, 
		\end{align}
where $\kappa(\bv{A})$ is the condition number $\lmax(\bv{A})/\lmin(\bv{A})$.
\end{theorem}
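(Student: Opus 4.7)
The plan is to argue that $\bv{y}_k$ coincides with the $\bv{A}$-norm projection of $\bv{A}^{-1}\bv{x}$ onto the Krylov subspace $\mathcal{K}_k$, and then exploit that Lemma \ref{exact_lanczos} guarantees $p(\bv{A})\bv{x} \in \mathcal{K}_k$ for every polynomial $p$ of degree $<k$. Converting between the $\bv{A}$-norm and the Euclidean norm produces the $\sqrt{\kappa(\bv{A})}$ factor.

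\medskip\noindent\textbf{Step 1: Reinterpret $\bv{y}_k$ as a Galerkin projection.}
Using $\bv{q}_1 = \bv{x}/\|\bv{x}\|$ together with the orthonormality $\bv{Q}^T\bv{Q} = \bv{I}$ from Claim \ref{claim:exact_lanc_guarantee}, we have $\bv{Q}^T\bv{x} = \|\bv{x}\|\bv{e}_1$, and by construction $\bv{T} = \bv{Q}^T\bv{A}\bv{Q}$. Consequently,
\begin{align*}
\bv{y}_k \;=\; \|\bv{x}\|\,\bv{Q}\bv{T}^{-1}\bv{e}_1 \;=\; \bv{Q}(\bv{Q}^T\bv{A}\bv{Q})^{-1}\bv{Q}^T\bv{x},
\end{align*}
and a direct check gives the Galerkin condition $\bv{Q}^T(\bv{A}\bv{y}_k - \bv{x}) = \bv{T}\bv{T}^{-1}\bv{e}_1\|\bv{x}\| - \bv{e}_1\|\bv{x}\| = \bv{0}$. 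Since $\mathrm{colspan}(\bv{Q}) = \mathcal{K}_k$, this identifies $\bv{y}_k$ with the unique minimizer $\arg\min_{\bv{y}\in\mathcal{K}_k}\|\bv{A}^{-1}\bv{x} - \bv{y}\|_{\bv{A}}$.

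\medskip\noindent\textbf{Step 2: Compare against any approximating polynomial.}
By Lemma \ref{exact_lanczos}, for every polynomial $p$ with $\deg p < k$, the vector $p(\bv{A})\bv{x}$ lies in $\mathcal{K}_k$. Therefore
\begin{align*}
\|\bv{A}^{-1}\bv{x} - \bv{y}_k\|_{\bv{A}} \;\le\; \|\bv{A}^{-1}\bv{x} - p(\bv{A})\bv{x}\|_{\bv{A}} \;=\; \|(\bv{A}^{-1} - p(\bv{A}))\bv{x}\|_{\bv{A}}.
\end{align*}
Diagonalizing $\bv{A} = \bv{V}\bv{\Lambda}\bv{V}^T$, both $\bv{A}^{-1}$ and $p(\bv{A})$ act diagonally in the eigenbasis, so expanding $\bv{x} = \sum_i (\bv{v}_i^T\bv{x})\bv{v}_i$ yields
\begin{align*}
\|(\bv{A}^{-1} - p(\bv{A}))\bv{x}\|_{\bv{A}}^2 \;=\; \sum_{i=1}^n \lambda_i(\bv{A})\bigl(1/\lambda_i(\bv{A}) - p(\lambda_i(\bv{A}))\bigr)^2 (\bv{v}_i^T\bv{x})^2 \;\le\; \Bigl(\max_i\bigl|1/\lambda_i(\bv{A}) - p(\lambda_i(\bv{A}))\bigr|\Bigr)^{2}\,\|\bv{x}\|_{\bv{A}}^2.
\end{align*}
Minimizing over $p$ gives $\|\bv{A}^{-1}\bv{x} - \bv{y}_k\|_{\bv{A}} \le \bar\delta_k\,\|\bv{x}\|_{\bv{A}}$.

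\medskip\noindent\textbf{Step 3: Convert to the Euclidean norm.}
Finally, since $\bv{A}$ is positive definite, $\|\bv{z}\|_2 \le \|\bv{z}\|_{\bv{A}}/\sqrt{\lmin(\bv{A})}$ for all $\bv{z}$, and $\|\bv{x}\|_{\bv{A}} \le \sqrt{\lmax(\bv{A})}\,\|\bv{x}\|$. Combining these bounds yields
\begin{align*}
\|\bv{A}^{-1}\bv{x} - \bv{y}_k\| \;\le\; \frac{1}{\sqrt{\lmin(\bv{A})}}\cdot\bar\delta_k\cdot\sqrt{\lmax(\bv{A})}\,\|\bv{x}\| \;=\; \sqrt{\kappa(\bv{A})}\,\bar\delta_k\,\|\bv{x}\|,
\end{align*}
as claimed. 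The only nontrivial step is Step 1 — identifying the explicit Lanczos output with the $\bv{A}$-norm optimal element of the Krylov subspace; everything else is a one-line spectral computation and a change of norms.
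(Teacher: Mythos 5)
Your proof is correct and follows essentially the same route as the paper's. The paper sets up the least-squares problem $\min_{\bv{y}}\|\bv{A}^{-1/2}\bv{q}_1 - \bv{A}^{1/2}\bv{Q}\bv{y}\|$ and solves its normal equations to conclude $\bv{y}^* = \bv{T}^{-1}\bv{e}_1$, whereas you verify the Galerkin condition $\bv{Q}^T(\bv{A}\bv{y}_k - \bv{x}) = \bv{0}$ directly; these are the same fact stated two ways, and both then invoke Lemma \ref{exact_lanczos} to compare against $p(\bv{A})\bv{x}\in\mathcal{K}_k$ and perform the identical $\sqrt{\lmin}/\sqrt{\lmax}$ norm conversion to obtain the $\sqrt{\kappa(\bv{A})}$ factor.
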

\endgroup
\begin{proof}
Let $\bv{A} = \bv{V}\bs{\Lambda}\bv{V}^T$ be $\bv{A}$'s eigendecomposition. Let $\bv{A}^{1/2} = \bv{V}\bs{\Lambda}^{1/2}\bv{V}^T$ and $\bv{A}^{-1/2} = \bv{V}\bs{\Lambda}^{-1/2}\bv{V}^T$. Since $\bv{A}$ is positive semidefinite, $\bs{\Lambda}$ has no negative entries, so both of these matrices are real.
Recall that $\bv{q}_1 = \bv{x}/\|\bv{x}\|$ and consider the minimization problem:
\begin{align*}
\bv{y}^* = \argmin_\bv{y} \|\bv{A}^{-1/2}\bv{q}_1 - \bv{A}^{1/2}\bv{Q}\bv{y}\|.
\end{align*}
This is a standard linear regression problem and is solved by
\begin{align*}
\bv{y}^* =  \left(\bv{Q}^T\bv{A}\bv{Q}\right)^{-1}\left(\bv{Q}^T\bv{A}^{1/2}\right)\bv{A}^{-1/2}\bv{x} =  \left(\bv{Q}^T\bv{A}\bv{Q}\right)^{-1}\bv{Q}^T\bv{q}_1.
\end{align*}
From Claim \ref{claim:exact_lanc_guarantee} we have that $\bv{Q}^T\bv{A}\bv{Q} = \bv{T}$ and that $\bv{q}_1$ is orthogonal to all other columns in $\bv{Q}$. Thus,
\begin{align*}
\bv{y}^* =  \bv{T}^{-1}\bv{e}_1.
\end{align*}
Since $p(\bv{A})\bv{q}_1$ can be written as $\bv{Q}\bv{y}$ for any polynomial $p$ with degree $< k$, it follows that 
\begin{align}
\label{opt_cg_first_bound}
\|\bv{A}^{-1/2}\bv{q}_1 - \bv{A}^{1/2}\bv{Q}\bv{T}^{-1}\bv{e}_1\| \leq \min_{\substack{\text{polynomial $p$} \\ \text{w/ degree $<k$}}} \|\bv{A}^{-1/2}\bv{q}_1 - \bv{A}^{1/2}p(\bv{A})\bv{q}_1\|.
\end{align}
As an aside, if we scale by $\norm{\bv{x}}$ and define the $\bv{A}$-norm $\norm{\bv{v}}_\bv{A} \eqdef \bv{v}^T\bv{A}\bv{v}$, then this can be rewritten:
\begin{align*}
\|\bv{A}^{-1}\bv{x} - \left(\norm{\bv{x}}\bv{Q}\bv{T}^{-1}\bv{e}_1\right)\|_{\bv{A}} \leq \min_{\substack{\text{polynomial $p$} \\ \text{w/ degree $<k$}}} \|\bv{A}^{-1}\bv{x} - p(\bv{A})\bv{x}\|_\bv{A}.
\end{align*}
So, \eqref{opt_cg_first_bound} is equivalent to the perhaps more familiar statement that, ``the Lanczos approximation to $\bv{A}^{-1}\bv{x}$ is optimal with respect to the $\bv{A}$-norm amongst all degree $<k$ matrix polynomials $p(\bv{A})\bv{x}$."
Returning to \eqref{opt_cg_first_bound},
\begin{align}
\|\bv{A}^{-1/2}\bv{q}_1 - \bv{A}^{1/2}\bv{Q}\bv{T}^{-1}\bv{e}_1\| &= \|\bv{A}^{1/2}\left(\bv{A}^{-1}\bv{q}_1 - \bv{Q}\bv{T}^{-1}\bv{e}_1\right)\| \nonumber \\
&\geq \sqrt{\lmin(\bv{A})}\|\bv{A}^{-1}\bv{q}_1 - \bv{Q}\bv{T}^{-1}\bv{e}_1\|. \label{anorm_lower}
\end{align}
Additionally,
\begin{align}
\|\bv{A}^{-1/2}\bv{q}_1 - \bv{A}^{1/2}p(\bv{A})\bv{q}_1\| &= \|\bv{A}^{1/2}\left(\bv{A}^{-1}\bv{q}_1 - p(\bv{A})\bv{q}_1\right)\| \nonumber\\
&\leq \sqrt{\lmax(\bv{A})}\|\bv{A}^{-1}\bv{q}_1 - p(\bv{A})\bv{q}_1\|. \label{anorm_upper}
\end{align}
Plugging \eqref{anorm_lower} and \eqref{anorm_upper} into \eqref{opt_cg_first_bound}, we see that 
\begin{align*}
\|\bv{A}^{-1}\bv{q}_1 - \bv{Q}\bv{T}^{-1}\bv{e}_1\|\ &\leq \sqrt{\kappa(\bv{A})} \min_{\substack{\text{polynomial $p$} \\ \text{w/ degree $<k$}}}  \|\bv{A}^{-1}\bv{q}_1 - p(\bv{A})\bv{q}_1\| \\
&\leq \sqrt{\kappa(\bv{A})} \min_{\substack{\text{polynomial $p$} \\ \text{w/ degree $< k$}}}\left(\max_{x \in \{ \lambda_1(\bv{A}),\ldots,\lambda_n(\bv{A})\}}\left| 1/x - p(x)\right|	\right).
\end{align*}
Theorem \ref{exact_lanczos_linear_systems} follows from scaling both sides by $\|\bv{x}\|$.

\end{proof}

\subsection{Linear systems in finite precision: Greenbaum's Analysis}
As discussed in the Section \ref{intro_linear_systems}, Greenbaum proves a natural extension of Theorem \ref{exact_lanczos_linear_systems} for finite precision computations in \cite{greenbaum1989behavior}. She studies a standard implementation of the conjugate gradient method, included here as Algorithm \ref{alg:green_cg}. This method only requires $O(n)$ space, in contrast to the $O(nk)$ space required by the more general Lanczos method.

\begin{algorithm}
\caption{Conjugate Gradient Method}
{\bf input}: positive semidefinite $\bv{A} \in \mathbb{R}^{n \times n}$, $\#$ of iterations $k$, vector $\bv{x} \in \R^n$\\
{\bf output}: vector $\bv{y}\in \R^{n}$ that approximates $\bv{A}^{-1}\bv{x}$
\begin{algorithmic}[1]
\STATE{$\bv{y} = \bv{0}$, $\bv{r} = \bv{b}$, $\bv{p} = \bv{b}$}
\FOR{$i \in {1,\ldots,k}$} 
\STATE {$\alpha \gets \|\bv{r}\|/\langle \bv{r}, \bv{A}\bv{p}\rangle$}
\STATE {$\bv{y} \gets \bv{y} + \alpha\bv{p}$} 
\STATE {$\bv{r}_{new} \gets \bv{r} - \alpha\bv{A}\bv{p}$} 
\STATE {$\beta \gets -\norm{\bv{r}_{new}}/\norm{\bv{r}}$}
 \IF{$\beta == 0$} \STATE {break loop}\ENDIF
 \STATE {$\bv{p} \gets \bv{r}_{new} - \beta\bv{p}$} 
\STATE {$\bv{r} \gets \bv{r}_{new}$}
 \ENDFOR
\RETURN{$\bv{y}$}
\end{algorithmic}
\label{alg:green_cg}
\end{algorithm}

Although it's not computed explicitly, just as in the Lanczos algorithm, the changing coefficients $\alpha$ and $\beta$ generated by  Algorithm \ref{alg:green_cg} can be used to form a tridiagonal matrix $\bv{T}$. Furthermore, since each $\beta$ shows how the norm of the residual $\bv{r} = \bv{b} - \bv{A}\bv{y}$ decreases over time, $\bv{T}$'s entries uniquely determine the error of the conjugate gradient iteration at any step $k$.

At a high level, Greenbaum shows that the $\bv{T}$ produced by a finite precision CG implementation is \emph{equivalent} to the $\bv{T}$ that would be formed by a running CG on a larger matrix, $\bar{\bv{A}}$, who's eigenvalues all lie in small intervals around the eigenvalues of $\bv{A}$.
She can thus characterize the performance of CG in finite precision on $\bv{A}$ by the performance of CG in exact arithmetic on $\bar{\bv{A}}$. In particular, Theorem 3 in \cite{greenbaum1989behavior} gives:

\begin{theorem}[Theorem 3 in \cite{greenbaum1989behavior}, simplified]\label{greenBaumSimplified}
Let $\bv{y}$ be the output of Algorithm \ref{alg:green_cg} run for $k$ iterations on positive definite $\bv{A} \in \R^{n \times n}$ and $\bv{x} \in \R^n$, with computations performed with $\Omega \left (\log \frac{nk(\|\bv{A}\|+1)}{\min(\eta,\lmin(\bv{A}) )} \right)$ bits of precision. Let $\Delta = \min(\eta,\lmin(\bv{A})/5)$ There exists a matrix $\bar{\bv{A}}$ who's eigenvalues all lie in $\bigcup_{i=1}^n [\lambda_i(\bv{A})-\Delta, \lambda_i(\bv{A})+\Delta]$ and a vector $\bar{\bv{x}}$ with $\norm{\bar{\bv{x}}}_{\bar{\bv{A}}} = \norm{\bv{x}}_{\bv{A}}$ such that, if Algorithm \ref{alg:green_cg} is run for $k$ iterations on $\bar{\bv{A}}$ and $\bar{\bv{x}}$ in \emph{exact arithmetic} to produce $\bar{\bv{y}}$, then:
\begin{align}
\label{coarse_gbaum}
\norm{\bv{A}^{-1}\bv{x} - \bv{y}}_{\bv{A}} \leq 1.2\norm{\bar{\bv{A}}^{-1}\bar{\bv{x}} - \bar{\bv{y}}}_{\bar{\bv{A}}}.
\end{align}
\end{theorem}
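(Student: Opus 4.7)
The plan is to adapt Greenbaum's argument from \cite{greenbaum1989behavior} to derive this simplified statement. First I would establish a CG analog of Paige's Theorem~\ref{thm:paige_main}: the coefficients $\alpha_1,\beta_1,\alpha_2,\beta_2,\ldots$ produced by Algorithm~\ref{alg:green_cg} in finite precision define a symmetric tridiagonal matrix $\bv{T}$, and the associated residual vectors $\bv{r}_1,\ldots,\bv{r}_{k+1}$ (after proper normalization) satisfy a perturbed three-term recurrence $\bv{A}\bv{Q} = \bv{Q}\bv{T} + \beta_{k+1}\bv{q}_{k+1}\bv{e}_k^T + \bv{E}$ with $\|\bv{E}\|$ polynomially small in $\mach$ and the input parameters. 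This is essentially Paige's analysis transcribed to the CG coupled-recurrence formulation; the choice of $\Omega\!\bigl(\log\tfrac{nk(\|\bv{A}\|+1)}{\min(\eta,\lmin(\bv{A}))}\bigr)$ precision bits is dictated by insisting this error be $\lesssim \Delta$.

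Next, I would invoke the core Greenbaum construction, which takes the perturbed recurrence above and produces an exact Lanczos recurrence on a larger matrix. Concretely, one builds $\bar{\bv{A}}$ of dimension at most $nk$ by augmenting each approximate eigendirection of $\bv{A}$ with a handful of additional orthogonal directions whose eigenvalues lie within $\Delta$ of the corresponding $\lambda_i(\bv{A})$; the slack in these extra directions is used to ``absorb'' the error matrix $\bv{E}$ so that on $\bar{\bv{A}}$ the three-term recurrence holds with no error term. One must then pick a starting vector $\bar{\bv{x}}$ whose projection onto the eigenvectors of $\bar{\bv{A}}$ is consistent both with the recurrence and with the $\bv{A}$-norm normalization $\|\bar{\bv{x}}\|_{\bar{\bv{A}}}=\|\bv{x}\|_{\bv{A}}$. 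By construction, running exact-arithmetic CG on $(\bar{\bv{A}},\bar{\bv{x}})$ produces exactly the same tridiagonal matrix $\bv{T}$ and coefficients as the finite-precision run on $(\bv{A},\bv{x})$.

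Finally, I would relate the two errors via $\bv{T}$. Both the finite-precision CG error $\|\bv{A}^{-1}\bv{x}-\bv{y}\|_{\bv{A}}$ and the exact-arithmetic error $\|\bar{\bv{A}}^{-1}\bar{\bv{x}}-\bar{\bv{y}}\|_{\bar{\bv{A}}}$ can be expressed through the same formulas in the quantities $\|\bv{x}\|,\alpha_1,\beta_1,\ldots,\alpha_k,\beta_{k+1}$ (in exact arithmetic this is just the standard identity $\|\bv{A}^{-1}\bv{x}-\bv{y}_j\|_{\bv{A}}^2 = \|\bv{x}\|^2\bigl(\bv{e}_1^T\bv{T}^{-1}\bv{e}_1 - \bv{e}_1^T\bv{T}_j^{-1}\bv{e}_1\bigr)$, where $\bv{T}_j$ is the leading $j\times j$ block). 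In finite precision these formulas hold only up to multiplicative and additive rounding errors that are polynomially small in $\mach$; bounding these slop terms using Paige-style estimates and using $\mach \ll 1/(nk(\|\bv{A}\|+1)/\Delta)$ gives the $1.2$ constant in \eqref{coarse_gbaum}.

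The main obstacle is the augmentation step: one needs to certify that the extra eigenvalues of $\bar{\bv{A}}$ really can be chosen within the prescribed intervals $[\lambda_i(\bv{A})-\Delta,\lambda_i(\bv{A})+\Delta]$. This is where Paige's bound on $\|\bv{E}\|$ is used in combination with interlacing-style arguments to show that the perturbation needed in each direction is at most $\Delta$, which in turn is what drives the precision requirement stated in the theorem. The other subtle point is verifying $\|\bar{\bv{x}}\|_{\bar{\bv{A}}}=\|\bv{x}\|_{\bv{A}}$ exactly, which just requires scaling $\bar{\bv{x}}$ appropriately after the construction.
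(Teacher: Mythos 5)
The paper does not prove this theorem: it is quoted as a restatement of Theorem~3 in Greenbaum~\cite{greenbaum1989behavior}, and Appendix~\ref{sec:linsystems_app} treats it as a black box (the only thing proved there is the derivation of Theorem~\ref{thm:greenbaum} \emph{from} it). So there is no paper proof to compare your proposal against; what you have written is a reconstruction of Greenbaum's original argument. As a summary of that argument your outline is accurate at a high level: (i) finite-precision CG produces coefficients satisfying a perturbed three-term recurrence $\bv{A}\bv{Q} = \bv{Q}\bv{T} + \beta_{k+1}\bv{q}_{k+1}\bv{e}_k^T + \bv{E}$ with $\|\bv{E}\|$ controlled by Paige-style bounds; (ii) a larger matrix $\bar{\bv{A}}$ with tightly clustered eigenvalues is constructed so that exact CG on $(\bar{\bv{A}},\bar{\bv{x}})$ generates the \emph{same} tridiagonal matrix; (iii) both errors are determined by $\bv{T}$ via the standard Jacobi-matrix identity, so the comparison reduces to bounding rounding slop. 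This is indeed how Greenbaum proceeds.

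Two caveats worth flagging. First, the specific constants — the factor $1.2$ in \eqref{coarse_gbaum} and the interval half-width $\Delta = \min(\eta, \lmin(\bv{A})/5)$ — are not ``free'' outputs of the sketch; they come from carefully instantiating the $(1 + O(\mach\cdot\mathrm{poly}))$ factors and interval widths in Greenbaum's Theorem~3 under the stated precision requirement, which your outline waves through. Second, the claim that $\|\bar{\bv{x}}\|_{\bar{\bv{A}}} = \|\bv{x}\|_{\bv{A}}$ can simply be arranged ``by scaling afterward'' deserves more care, because in Greenbaum's construction $\bar{\bv{x}}$ is pinned down by the requirement that it reproduce the recurrence coefficients, so the normalization must be built in rather than patched at the end. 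Neither issue is an error in approach, but both are places where a full proof would require importing the precise statements from~\cite{greenbaum1989behavior} rather than re-deriving them.
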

Note that for any positive definite $\bv{M}$, and $\bv{z}$ we define $\norm{\bv{z}}_\bv{M} \eqdef \bv{z}^T\bv{M}\bv{z}$. $\bar{\bv{A}}$ is positive definite since $\Delta \le \lmin(\bv{A})/5$.
 
Theorem \ref{greenBaumSimplified} implies the version of Greenbaum's results stated in Theorem \ref{thm:greenbaum}. 
\begin{proof}[Proof of Theorem \ref{thm:greenbaum}]
From our proof of Theorem \ref{exact_lanczos_linear_systems} we have that:
\begin{align*}
 \norm{\bar{\bv{A}}^{-1}\bar{\bv{x}} - \bar{\bv{y}}}_{\bar{\bv{A}}} \leq \sqrt{\lmax(\bar{\bv{A}})} \min_{\substack{\text{polynomial $p$}\\ \text{ with degree $<k$} }} \left ( \max_{x \in \bigcup_{i=1}^n [\lambda_i(\bv{A})-\Delta, \lambda_i(\bv{A})+\Delta]} |p(x)-1/x| \right )\|\bar{\bv{x}}\|.
\end{align*}
Additionally, 
\begin{align*}
\norm{\bv{A}^{-1}\bv{x} - \bv{y}}_{\bv{A}} \geq \sqrt{\lmin(\bv{A})}\norm{\bv{A}^{-1}\bv{x} - \bv{y}}.
\end{align*}
Since $\Delta \leq \lmin(\bv{A})/5$, $\lmax(\bar{\bv{A}}) \leq 1.2\lmax(\bv{A})$. Accordingly, \eqref{coarse_gbaum} simplifies to
\begin{align}
\label{gbaum_second_to_last}
\norm{\bv{A}^{-1}\bv{x} - \bv{y}_{\bv{A}}}  \leq 1.44\sqrt{\kappa(\bv{A})} \min_{\substack{\text{polynomial $p$}\\ \text{ with degree $<k$} }} \left ( \max_{x \in \bigcup_{i=1}^n [\lambda_i(\bv{A})-\Delta, \lambda_i(\bv{A})+\Delta]} |p(x)-1/x| \right )\|\bar{\bv{x}}\|.
\end{align}
Finally,
\begin{align*}
\|\bar{\bv{x}}\| \leq \sqrt{\frac{1}{\lmin(\bar{\bv{A}})}} \|\bar{\bv{x}}\|_{\bar{\bv{A}}} = \sqrt{\frac{1}{\lmin(\bar{\bv{A}})}} \|\bv{x}\|_{\bv{A}} \leq \sqrt{\frac{\lmax{\bv{A}}}{\lmin(\bar{\bv{A}})}} \|\bv{x}\| \leq 1.25 \sqrt{\kappa(\bv{A})} \|\bv{x}\|.
\end{align*}
Plugging in \eqref{gbaum_second_to_last} yields Theorem \ref{thm:greenbaum}.
\end{proof}

\section{General polynomial perturbation bounds}\label{sec:perturbation}
Here we prove that bounded polynomials are generally stable under small perturbations of the input matrix $\bv{A}$. This result is used in proving Lemma \ref{lem:postProcess}, which guarantees that the final step in Algorithm \ref{alg:lanczos} can be performed stably in finite precision. We focus on symmetric perturbations (of symmetric matrices) although the analysis can be extended to general asymmetric perturbations.



\begin{lemma}\label{polyErrorGeneral}  Given symmetric $\bv{A}\in \R^{n \times n}$, symmetric $\bv{E} \in \R^{n \times n}$, $\bv{x}$ with $\norm{\bv{x}} = 1$, and $\eta \ge \norm{\bv{E}}$
if $p$ is a polynomial with degree $<k$ and $|p(x)| \leq C$ for all $x\in [\lmin(\bv{A}) - \eta,\lmax(\bv{A})+\eta]$ then:
\begin{align}
\label{smoothBoundGeneral}
\|p(\bv{A}-\bv{E})\bv{x}- p(\bv{A})\bv{x} \| \leq \frac{4 k^3 C}{\lmax(\bv{A}) - \lmin(\bv{A}) + 2\eta} \norm{\bv{E}}.
\end{align}
\end{lemma}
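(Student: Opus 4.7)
The plan is to mirror the two-step proof of Lemma \ref{poly_with_error}: first establish the claim for shifted Chebyshev polynomials in the spirit of Lemma \ref{cheby_poly_with_error}, then extend to arbitrary bounded polynomials via a Chebyshev expansion. The setup here is cleaner than in Lemma \ref{cheby_poly_with_error} because we have an exact expression for the perturbation ($\bv{A}-\bv{E}$ vs.\ $\bv{A}$) rather than an implicit Lanczos-style relation.

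Set $\rmin = \lmin(\bv{A}) - \eta$, $\rmax = \lmax(\bv{A}) + \eta$, $\delta = 2/(\rmax-\rmin)$, $\barA = \delta(\bv{A} - \rmin\bv{I}) - \bv{I}$, and $\barA' = \barA - \delta\bv{E}$. Bounding $\|\bar T_i(\bv{A}-\bv{E})\bv{x} - \bar T_i(\bv{A})\bv{x}\|$ then reduces to bounding $\|\bv{d}_i\|$ where $\bv{t}_i := T_i(\barA)\bv{x}$, $\bv{t}_i' := T_i(\barA')\bv{x}$, and $\bv{d}_i := \bv{t}_i - \bv{t}_i'$. The three-term Chebyshev recurrence gives $\bv{d}_0 = \bv{0}$, $\bv{d}_1 = \delta\bv{E}\bv{x}$, and
\begin{align*}
\bv{d}_i = 2\barA\bv{d}_{i-1} - \bv{d}_{i-2} + 2\delta\bv{E}\bv{t}_{i-1}', \qquad i \ge 2.
\end{align*}
This is precisely the error recurrence analyzed in Lemma \ref{cheby_poly_with_error}, with forcing $\bs{\xi}_j := \delta\bv{E}\bv{t}_{j-1}'$ playing the analogous role. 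Unrolling via Chebyshev polynomials of the second kind yields $\bv{d}_i = U_{i-1}(\barA)\bs{\xi}_1 + 2\sum_{j=2}^i U_{i-j}(\barA)\bs{\xi}_j$, and since $\barA$ is symmetric with spectrum in $[-1,1]$, $\|U_{i-j}(\barA)\| \le i-j+1$.

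It remains to bound $\|\bs{\xi}_j\| = \delta\|\bv{E}\|\|\bv{t}_{j-1}'\|$. Because $\|\bv{E}\| \le \eta$ by hypothesis, Weyl's inequality places the eigenvalues of $\bv{A}-\bv{E}$ inside $[\rmin,\rmax]$, so the spectrum of $\barA'$ lies in $[-1,1]$. In the intended application (Lemma \ref{lem:postProcess}) $\bv{E} = \bv{T} - \bv{\bar T}$ is symmetric, hence so is $\barA'$, and then $\|T_{j-1}(\barA')\bv{x}\| \le 1$; this gives $\|\bs{\xi}_j\| \le \delta\|\bv{E}\|$ and summing the unrolled identity yields $\|\bv{d}_i\| \le 2i^2\delta\|\bv{E}\| \le 2i^2\|\bv{E}\|/\eta$, exactly matching the Chebyshev bound from Lemma \ref{cheby_poly_with_error}. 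I then extend to any degree-$<k$ polynomial $p$ with $|p| \le C$ on $[\rmin,\rmax]$ by writing $p = \sum_{i=0}^{k-1} c_i \bar T_i$ with $|c_i| \le 2C$ (the coefficient bound used in the proof of Lemma \ref{poly_with_error}), applying the per-Chebyshev bound, and summing: $\|p(\bv{A}-\bv{E})\bv{x} - p(\bv{A})\bv{x}\| \le \sum_{i=0}^{k-1} 2C \cdot 2i^2\|\bv{E}\|/\eta \le 2k^3 C\|\bv{E}\|/\eta$.

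The main obstacle is handling possibly asymmetric $\bv{E}$: then $\barA'$ need not be normal, its numerical range may escape $[-1,1]$, and the naive bound $\|T_{j-1}(\barA')\bv{x}\| \le 1$ can fail. For the intended use this issue does not arise. For a fully general statement, I would close a Gronwall-style induction using $\|\bv{t}_{j-1}'\| \le 1 + \|\bv{d}_{j-1}\|$, valid in the regime where the claimed bound is non-vacuous (i.e., when $k^2\|\bv{E}\|/\eta$ is sufficiently small); when the bound is already vacuous it follows trivially from the same Chebyshev coefficient expansion applied to estimate $\|p(\bv{A}-\bv{E})\bv{x}\|$ by a constant times $C$.
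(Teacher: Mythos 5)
Your proof is essentially identical to the paper's own (Appendix~\ref{sec:perturbation}): the paper also proves the shifted-Chebyshev case first as Lemma~\ref{chebyErrorGeneral}, deriving the same recurrence $\bv{d}_i = 2\barA\bv{d}_{i-1} - \bv{d}_{i-2} + 2\bs{\xi}_i$, unrolling it via second-kind Chebyshev polynomials so $\|\bv{d}_i\|\le 2\sum_{j\le i} i\|\bs{\xi}_j\|$, bounding $\|\bs{\xi}_j\|\le\delta\|\bv{E}\|$ using $\|T_{j-1}(\barA-\delta\bv{E})\|\le 1$, and then extending to any bounded $p$ via the Chebyshev-coefficient expansion with $|c_i|\le 2C$. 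Your flag on asymmetric $\bv{E}$ is in fact sharper than the paper's own treatment: the preamble to Appendix~\ref{sec:perturbation} claims the result holds ``even if these perturbations are asymmetric,'' yet the proof of Lemma~\ref{chebyErrorGeneral} justifies $\|T_{j-1}(\barA-\delta\bv{E})\|\le 1$ by asserting that $\barA-\delta\bv{E}$ is ``a symmetric matrix with all eigenvalues in $[-1,1]$'' -- exactly the step you identify -- so the lemma as actually proved (and as used in Lemma~\ref{lem:postProcess}, where $\bv{E}=\bv{T}-\bv{\bar T}$ is symmetric) silently relies on symmetry of $\bv{E}$, and your caution about the fully general statement is warranted.
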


As in Lemma \ref{poly_with_error}, the result follows by writing $p$ in the Chebyshev basis. Letting $T_i$ be the $i^{th}$
Chebyshev polynomial of the first kind (see definition in \eqref{recurrence}), define $\rmax = \lmax(\bv{A}) + \eta$, $\rmin = \lmin(\bv{A}) -\eta$, and:
%
\begin{align}\label{tprimeGeneral}
	\delta &= \frac{2}{\rmax - \rmin} &&\text{and} 	& \ol{T}_i(x) &= T_i\left(\delta(x-\rmin) - 1\right).
\end{align}
In this way, we have $\ol{T}(\rmin) = T(-1) $ and $\ol{T}(\rmax) = T(1)$. 

\begin{lemma}\label{chebyErrorGeneral} Given symmetric $\bv{A}\in \R^{n \times n}$, symmetric $\bv{E} \in \R^{n \times n}$, $\bv{x}$ with $\norm{\bv{x}} = 1$, and $\eta \ge \norm{\bv{E}}$, for all $i < k$,
\begin{align}
\label{chebyErrorGeneralEq}
\|\ol{T}_i(\bv{A} - \bv{E})\bv{x} - \ol{T}_i(\bv{A})\bv{x}\| \leq \frac{4i^2}{\rmax-\rmin}  \|\bv{E}\|, 
\end{align}
where $\rmax = \lmax(\bv{A}) + \eta$ and $\rmin = \lmin(\bv{A}) -\eta$, as above.
\end{lemma}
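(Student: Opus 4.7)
The plan is to mimic the proof of Lemma \ref{cheby_poly_with_error} almost verbatim, with the perturbation $\delta \bv{E}$ playing the role that the Lanczos error matrix played there (the $\beta_{k+1}\bv{q}_{k+1}\bv{e}_k^T$ term is absent here, which actually simplifies matters by removing any need to invoke tridiagonality). First I would reduce to the standardized Chebyshev range by setting $\barA \eqdef \delta(\bv{A}-\rmin\bv{I}) - \bv{I}$ and $\barA' \eqdef \delta((\bv{A}-\bv{E})-\rmin\bv{I})-\bv{I} = \barA - \delta\bv{E}$, so that $\ol{T}_i(\bv{A}) = T_i(\barA)$ and $\ol{T}_i(\bv{A}-\bv{E}) = T_i(\barA')$; it then suffices to bound $\|T_i(\barA)\bv{x} - T_i(\barA')\bv{x}\|$.

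Writing $\bv{t}_i \eqdef T_i(\barA)\bv{x}$, $\bv{\tilde t}_i \eqdef T_i(\barA')\bv{x}$, and $\bv{d}_i \eqdef \bv{t}_i - \bv{\tilde t}_i$, subtracting the two three-term Chebyshev recurrences yields
\begin{align*}
\bv{d}_i = 2\barA\bv{d}_{i-1} - \bv{d}_{i-2} + 2\bs{\xi}_i, \qquad \bs{\xi}_i \eqdef \delta\bv{E}\bv{\tilde t}_{i-1},
\end{align*}
with base cases $\bv{d}_0 = \bv{0}$ and $\bv{d}_1 = \bs{\xi}_1 = \delta\bv{E}\bv{x}$. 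This matches \eqref{eq:error_recur} exactly, so the same induction via Chebyshev polynomials of the second kind gives $\bv{d}_i = U_{i-1}(\barA)\bs{\xi}_1 + 2\sum_{j=2}^{i} U_{i-j}(\barA)\bs{\xi}_j$. Since $\barA$ is symmetric with spectrum in $[-1,1]$, $\|U_{i-j}(\barA)\| \le i-j+1$, so triangle inequality gives $\|\bv{d}_i\| \le 2\sum_{j=1}^{i}(i-j+1)\|\bs{\xi}_j\|$.

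To finish, I would bound $\|\bs{\xi}_j\| \le \delta\|\bv{E}\|\cdot\|\bv{\tilde t}_{j-1}\|$, invoke $\|\bv{\tilde t}_{j-1}\| \le 1$ (discussed in the next paragraph), and use $\delta = 2/(\rmax-\rmin) \le 1/\eta$ together with $\sum_{j=1}^i(i-j+1) \le i^2$ to conclude $\|\bv{d}_i\| \le 2i^2\|\bv{E}\|/\eta$.

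The main obstacle is establishing $\|T_{j-1}(\barA')\bv{x}\| \le 1$, which is the analogue of the bound $\|T_{j-1}(\barT)\| \le 1$ used in Lemma \ref{cheby_poly_with_error}. In the intended application (Lemma \ref{lem:postProcess}), the perturbation $\bv{E} = \bv{T} - \bv{\bar T}$ is symmetric, being a difference of two symmetric matrices. In that case, Weyl's inequality applied with $\|\bv{E}\| \le \eta$ places the eigenvalues of $\bv{A}-\bv{E}$ in $[\lmin(\bv{A})-\eta,\lmax(\bv{A})+\eta]$, which after the change of variables puts the spectrum of $\barA'$ in $[-1,1]$, so symmetry and $|T_{j-1}(x)| \le 1$ on $[-1,1]$ yield the claim. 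I will therefore state and prove the lemma under the implicit assumption that $\bv{E}$ is symmetric (or equivalently, that $\bv{A} - \bv{E}$ has real spectrum in the extended interval), which is exactly what the downstream use requires.
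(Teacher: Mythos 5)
Your proof is correct and follows the paper's argument almost exactly: you mirror the proof of Lemma \ref{cheby_poly_with_error} with $\delta\bv{E}$ in place of the Lanczos error term, derive the same recurrence $\bv{d}_i = 2\barA\bv{d}_{i-1}-\bv{d}_{i-2}+2\bs{\xi}_i$, resolve it with Chebyshev polynomials of the second kind, and bound $\|U_{i-j}(\barA)\| \le i-j+1$ and $\|\bs{\xi}_j\| \le \delta\|\bv{E}\|$. You have also caught a genuine subtlety that the paper glosses over: the lemma statement places no symmetry hypothesis on $\bv{E}$ (and the preamble of the appendix even advertises robustness to asymmetric perturbations), yet the paper's own proof asserts that $\barA - \delta\bv{E}$ ``is a symmetric matrix with all eigenvalues in $[-1,1]$,'' which is only true when $\bv{E}$ is symmetric -- exactly the condition needed to control $\|T_{j-1}(\barA-\delta\bv{E})\|$. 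Your decision to state and use the lemma under the assumption that $\bv{A}-\bv{E}$ is symmetric (equivalently, $\bv{E}$ symmetric) is the honest fix, and it is all that the downstream application in Lemma \ref{lem:postProcess} requires, since there $\bv{E}=\bv{T}-\bar{\bv{T}}$ is a difference of symmetric matrices.
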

\begin{proof}

Define:
$
\barA \eqdef \delta(\bv{A} - \rmin\bv{I}) - \bv{I}
$
so \eqref{chebyErrorGeneralEq} is equivalent to:
\begin{align}
\label{shifted_cheby_poly_error_bound_general}
\|T_i(\barA - \delta\bv{E})\bv{x} - T_i(\barA)\bv{x}\| \leq \frac{2i^2}{\rmax-\rmin} \|\bv{E}\|.
\end{align}
We use the following notation, mirroring that of Lemma \ref{cheby_poly_with_error}:
\begin{align*}
	\bv{t}_i 		&\eqdef T_i(\barA)\bv{x}, \hspace{3em}
	\bv{\tilde t}_i	\eqdef T_i(\barA - \delta\bv{E})\bv{x}, \\
	\bv{d}_i 		&\eqdef \bv{t}_i - \bv{\tilde t}_i, \hspace{3em}
	\bs{\xi}_i 		\eqdef  \delta\bv{E}\bv{\tilde t}_{i-1}.
\end{align*}
Obtaining \eqref{shifted_cheby_poly_error_bound_general} requires showing $\|\bv{d}_i\| \le \frac{2i^2 \|\bv{E}\|}{\rmax-\rmin}$. From the Chebyshev recurrence \eqref{recurrence} for all $i \ge 2$:
\begin{align}
	\bv{d}_i 	&= \left( 2\barA\bv{t}_{i-1} - \bv{t}_{i-2}\right) - \left( 2\left( \barA - \delta\bv{E} \right)\bv{\tilde t}_{i-1} - \bv{\tilde t}_{i-2} \right) \nonumber\\
				&=2\bs{\xi}_i + \left(2\barA\bv{d}_{i-1} - \bv{d}_{i-2}\right) \label{eq:error_recur_general}.
\end{align}
Let $U_i$ be the $i^{th}$ Chebyshevy polynomial of the second kind (see definition in \eqref{secondRecurrence}). Using the same argument as used to show \eqref{eq:sum_of_type_2s} in the proof of Lemma \ref{poly_with_error} we have, for any $i \ge 0$,
\begin{align} \label{eq:sum_of_type_2s_general}
	\bv{d}_i = U_{i-1}(\barA)\bs{\xi}_1 + 2\sum_{j=2}^{i} U_{i-j}(\barA)\bs{\xi}_j.
\end{align}
where for convenience we define $U_{k}(x) = 0$ for any $k < 0$.
It follows that:
\begin{align*}
	\|\bv{d}_i \| \leq 2\sum_{j=1}^{i} \|U_{i-j}(\barA)\|\|\bs{\xi}_j\|.
\end{align*}
Since all of $\barA$'s eigenvalues lie in $[-1,1]$ and for values in this range $U_k(x) \leq k+1$ \cite{chebyBook}:
\begin{align}\label{dSumBound_general}
	\| \bv{d}_i \| \leq 2\sum_{j=1}^{i} (i-j+1)\|\bs{\xi}_j\| \leq 2\sum_{j=1}^{i} i\|\bs{\xi}_j\|.
\end{align}

We finally bound $\|\bs{\xi}_j\| = \|\delta\bv{E}\bv{\tilde t}_{j-1}\|$ by:
\begin{align}\label{eq:final_norm_bound_general}
\|\bs{\xi}_j\|  &\le \delta \norm{\bv{E}} \norm{T_{j-1}(\barA-\delta \bv{E})}.
\end{align}
By our requirement that $\eta \ge \norm{\bv{E}}$, $\barA-\delta \bv{E} = \delta\left(\bv{A} -\bv{E} - \rmin\bv{I}\right) - \bv{I}$ is a symmetric matrix with all eigenvalues in $[-1,1]$. 
Therefore,
\begin{align}\label{eq:cheby_upper_bound_general}
\left\|T_{j-1}(\barA-\delta \bv{E})\right\| \le \max_{x\in \left[-1, 1\right]} \left |T_{j-1}(x) \right | \le 1.
\end{align}
%
Plugging this back into \eqref{eq:final_norm_bound_general}, we have
$
\norm{\bs{\xi}_j} \le \delta \norm{\bv{E}}
$
and plugging into \eqref{dSumBound_general}, $\norm{\bv{d}_i} \le \frac{4 i^2}{\rmax-\rmin} \norm{\bv{E}}$.
\end{proof}
Using Lemma \ref{chebyErrorGeneral}, we can prove Lemma \ref{polyErrorGeneral}. The argument is omitted, as it is identical to the proof of Lemma \ref{poly_with_error}.

\section{Potential function proof of Chebyshev polynomial optimality}\label{tighterLowerBound}

Our lower bound in Section \ref{sec:lower} (specifically Lemma \ref{lbIntermediate}) shows that there is a matrix $\bv{A}$ with condition number $ \kappa$, such that any polynomial which has $p(0) = 1$ and $|p(x)| < 1/3$ for $x$ in a small range around each of $\bv{A}$'s eigenvalues must have degree $\Omega(\kappa^c)$ for some constant $1/5 \le c \le 1/2$. We do not carefully optimize the value of $c$, as any $\poly(\kappa)$ iteration bound demonstrates that Greenbaum's finite precision bound (Theorem \ref{thm:greenbaum}) is significantly weaker than the exact arithmetic bound of \eqref{introBound2} (proven in Theorem \ref{exact_lanczos_linear_systems}).

Here we demonstrate that a continuous version of our potential function argument can prove that 
any polynomial which is small on the \emph{entire} interval $[1/\kappa, 1]$ but has $p(0) = 1$ must have degree $ \Omega(\sqrt{\kappa}/\log \kappa)$. This matches the optimal bound achievable via the Chebyshev polynomials up to an $O(\log \kappa)$. While there are many alternative proofs of this fact, we include the argument because we believe that a more careful discretization could lead to a tighter version of Lemma \ref{lbIntermediate}.

\begin{theorem}
There exists a fixed constant $c > 0$ such that for any $\kappa > 1$, any polynomial $p$ with $p(0) = 1$ and $|p(x)| \le 1/3$ for all $x \in [1/\kappa, 1]$ must have degree $k \ge c \sqrt{\kappa}/\log \kappa$.
\end{theorem}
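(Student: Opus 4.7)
The plan is to run a continuous analog of the weighted potential function argument from the proof of Lemma \ref{lbIntermediate}. By repeating the reduction of Claim \ref{initial_lb_assume}, I may assume that $p$ has only real roots $r_1,\ldots,r_k \in [1/\kappa, 1]$: complex roots may be replaced by their real parts, roots in $(0,1/\kappa)$ may be pushed to $1/\kappa$, and roots in $(1,\infty)$ pulled to $1$, each operation only shrinking $|p|$ on $[1/\kappa,1]$ while keeping $p(0)=1$. Writing $p(x)=\prod_i(1-x/r_i)$ and $g(x)\eqdef\ln|p(x)|=\sum_i\ln|1-x/r_i|$, the hypothesis $|p(x)|\le 1/3$ on $[1/\kappa,1]$ becomes $\max_{x\in[1/\kappa,1]} g(x) \le -\ln 3$.

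For any positive integrable weight $w$ on $[1/\kappa,1]$ with total mass $W=\int_{1/\kappa}^1 w$, a weighted average lower-bounds the maximum, yielding
\begin{align*}
-\ln 3 \;\ge\; \max_{x\in[1/\kappa,1]}g(x) \;\ge\; \frac1W\int_{1/\kappa}^1 w(x)g(x)\,dx \;\ge\; \frac{k}{W}\cdot\min_{r\in[1/\kappa,1]}\int_{1/\kappa}^1 w(x)\ln|1-x/r|\,dx.
\end{align*}
Thus it suffices to exhibit a weight $w$ for which
\begin{align*}
\min_{r\in[1/\kappa,1]}\frac{1}{W}\int_{1/\kappa}^1 w(x)\ln|1-x/r|\,dx \;\ge\; -\,\frac{C\log\kappa}{\sqrt\kappa}
\end{align*}
for an absolute constant $C$; rearranging then gives $k\ge(\ln 3/C)\cdot\sqrt\kappa/\log\kappa$, establishing the theorem with $c=\ln 3/C$.

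The candidate weight, mirroring the $2^{ic}$ scheme in Lemma \ref{lbIntermediate}, is a power weight $w(x)=x^{-c}$ for a carefully chosen exponent $c$. Substituting $u=x/r$ transforms the inner integral into $r^{1-c}\int_{1/(r\kappa)}^{1/r}u^{-c}\ln|1-u|\,du$, which I would estimate by partitioning the $u$-range into three regimes paralleling the upper/center/lower decomposition from Lemma \ref{lbIntermediate}: an \emph{upper} regime $u\gg 1$ where $\ln|1-u|>0$ contributes positively, a \emph{center} regime $u\approx 1$ where the logarithmic singularity contributes a bounded negative amount because $\ln|1-u|$ is integrable, and a \emph{lower} regime $u\ll 1$ where $\ln|1-u|$ is small and the contribution is controlled. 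Summing the three contributions against the normalization $W$ and tracking how the singularity $u=1$ sits inside the integration window as $r$ varies over $[1/\kappa,1]$ should produce the desired $-C\log\kappa/\sqrt\kappa$ lower bound.

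The main obstacle is calibrating $w$ so that the integrand extracts a genuine $\sqrt\kappa$ gain rather than merely $\log\kappa$ or a constant. Naive choices produce much weaker bounds: $w(x)=1/x$ yields only $k=\Omega(\log\kappa)$ (the inner integral converges to a dilogarithm-type constant independent of $\kappa$), while the Chebyshev equilibrium measure on $[1/\kappa,1]$ gives only $k=\Omega(1)$ because its potential at interior points is $\log((1-1/\kappa)/4) - \log r = \Theta(1)$. Finding the correct intermediate weight and quantitatively balancing the three regions --- essentially the continuous version of the same tension noted in \eqref{tightnessSum} that prevents the exponent $c$ in Lemma \ref{lbIntermediate} from reaching $1/2$ --- is the crux of the argument; once the weight is identified, the remaining estimates reduce to closed-form or elementary asymptotic integrals.
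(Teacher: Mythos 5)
Your scaffold coincides exactly with the paper's: reduce to real roots in $[1/\kappa,1]$, set $g = \ln|p|$, lower-bound $\max_x g(x)$ by a $w$-weighted average, pull the sum over roots outside the integral, and try to show $\frac{1}{W}\int_{1/\kappa}^1 w(x)\ln|1-x/r|\,dx \geq -O(\log\kappa/\sqrt\kappa)$ uniformly over $r$. But the proposal stops precisely where the proof begins. You explicitly leave the exponent of the power weight unspecified and describe the choice of $w$ and the resulting integral estimates as ``the crux of the argument,'' which is accurate --- but that means the argument is not carried out. Nothing in the proposal pins down the exponent, verifies that the normalization has the right growth, or bounds the numerator integral, so the asserted $-C\log\kappa/\sqrt\kappa$ rate is unsubstantiated.

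To close the gap the way the paper does: take $w(x) = x^{-3/2}$. The exponent is forced once you observe that $W = \int_{1/\kappa}^1 x^{-c}\,dx = \Theta(\kappa^{c-1})$, and you need $W = \Theta(\sqrt\kappa)$, hence $c = 3/2$. The numerator is then handled not by your proposed three-regime discretization but by an exact antiderivative of $x^{-3/2}\ln|1-x/r|$: the piece $\int_{1/\kappa}^r$ contributes approximately $-\tfrac{4\log 2}{\sqrt r}$, the piece $\int_r^1$ contributes $+\tfrac{4\log 2}{\sqrt r} - 2\log(8/r)$ (using the elementary inequality $\tfrac{(1+s)^{s+1}}{(s-1)^{s-1}} \le 8s^2$ for $s\ge 1$), and the large $\Theta(1/\sqrt r)$ terms cancel, leaving $N \ge -2\log(8/r) = -O(\log\kappa)$. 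That cancellation between the two pieces is the whole mechanism, and it is easy to miss in a loose three-region estimate: if you bound the center region crudely against $w(r)\approx r^{-3/2}$ times the integrable logarithmic singularity, you risk losing the compensating positive contribution from $x > r$ and ending up with a bound of order $1/\sqrt r$ rather than $\log(1/r)$, which would not yield any $\kappa$-dependent lower bound on $k$. So the missing idea is not merely ``pick the right $c$''; it is recognizing that the numerator must be evaluated sharply enough to expose the cancellation.
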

\begin{proof}
As in Section \ref{sec:lower}, we can again assume that all roots of $p(x)$, $r_1, \ldots, r_k$, are real and lie in $[1/\kappa, 1]$. Moving a root that is outside this range to the nearest boundary of the range or taking the real part of an imaginary root can only decrease the magnitude of $p$ at $x \in [1/\kappa, 1]$.
We write, using that $p(0) = 1$:
\begin{align*}
g(x) = \log(|p(x)|) = \sum_{i=1}^k \log|1 - x/r_i|.
\end{align*}

We want to lower bound $\max_{x\in[1/\kappa,1]} g(x)$. To do so, we first note that for any positive weight function $w(x)$:
\begin{align}\label{maxBound}
\max_{x\in[1/\kappa,1]} g(x) \geq \frac{\int_{1/\kappa}^1 w(x) g(x) dx}{\int_{1/\kappa}^1 w(x) dx}.
\end{align}
I.e. any weighted average lower bounds the maximum of a function.
We also note that:
\begin{align}
\label{cont_lower_bound_main}
\frac{1}{k} \frac{\int_{1/\kappa}^1 w(x) g(x) dx}{\int_{1/\kappa}^1 w(x) dx} \geq  \min_{r\in [1/\kappa,1]} \frac{\int_{1/\kappa}^1 w(x) \log|1 - x/r| dx}{\int_{1/\kappa}^1 w(x) dx},
\end{align}
So we focus on bounding this second quantity.
We set:
\begin{align*}
w(x) = \frac{1}{x^{1.5}}.
\end{align*}
Under this weighting, the denominator in \eqref{cont_lower_bound_main} evaluates to:
\begin{align}
\label{eq_denom}
\int_{1/\kappa}^1 \frac{1}{x^{1.5}} dx = \left.\frac{-2}{\sqrt{x}} \right\rvert_{1/\kappa}^1 = 2\left(\sqrt{\kappa} - 1\right).
\end{align}
We now consider the numerator, which we denote as $N$.
\begin{align}
\label{numerator_main}
N = \int_{1/\kappa}^1\frac{1}{x^{1.5}} \log|1 - x/r|dx = \int_{1/\kappa}^r \frac{1}{x^{1.5}} \log(1 - x/r) dx + \int_{r}^1 \frac{1}{x^{1.5}} \log(x/r - 1) dx.
\end{align}
We will ultimately show that $\min_{r\in [1/\kappa,1]}\left[ N \right] \geq -c \log \kappa$ for some fixed constant $c $. Then when we divide by the denominator of $2\left(\sqrt{\kappa} - 1\right)$ computed in \eqref{eq_denom}, we see that any root makes very little progress -- just $O(\frac{\log \kappa}{\sqrt{\kappa}})$ -- towards decreasing the weighted average of $g(x)$. Ultimately, we will thus require $k = O(\sqrt{\kappa}/\log \kappa)$ roots for $|p(x)|$ to be bounded for all $x \in [1/\kappa,1]$ (formally shown by applying \eqref{cont_lower_bound_main}).
The first term of the split integral in \eqref{numerator_main} can be evaluated as:
\begin{align}
\label{cont_part_1}
 \int_{1/\kappa}^r \frac{1}{x^{1.5}} \log(1 - x/r) dx &\geq \int_{0}^r \frac{1}{x^{1.5}} \log(1 - x/r) dx \nonumber\\
 &= - 2\left. \log\left(\frac{\left(1 + \sqrt{\frac{x}{r}}\right)^{1/\sqrt{r} + 1/\sqrt{x}}}{\left(1 - \sqrt{\frac{x}{r}}\right)^{1/\sqrt{r} - 1/\sqrt{x}}} \right) \right \rvert_{0}^r \nonumber\\
 &= -2\log\left(2^{2/\sqrt{r}}\right).
\end{align}
The first inequality follows because $r \geq \frac{1}{\kappa}$.
The second term can be evaluated as:
\begin{align}
\label{cont_part_2}
\int_{r}^{1} \frac{1}{x^{1.5}} \log(x/r - 1) dx &= -2 \left. \log\left(\frac{\left(1 + \sqrt{\frac{x}{r}}\right)^{1/\sqrt{r} + 1/\sqrt{x}}}{\left(\sqrt{\frac{x}{r}} - 1\right)^{1/\sqrt{r} - 1/\sqrt{x}}} \right) \right \rvert_{r}^{1} \nonumber\\
 &= -2\log\left(\frac{\left(1 + \sqrt{\frac{1}{r}}\right)^{1/\sqrt{r} +1}}{\left(\sqrt{\frac{1}{r}} - 1\right)^{1/\sqrt{r} - 1}} \right)  + 2\log\left(2^{2/\sqrt{r}}\right) \nonumber \\
 & \geq -2\log\left(8/r\right)  + 2\log\left(2^{2/\sqrt{r}}\right)
 \end{align}
The  inequality follows from noting that $\frac{1}{\sqrt{r}} \geq 1$ and
\begin{align*}
\text{For all } x &\geq 1, &
\frac{(1+x)^{x + 1}}{(x-1)^{x-1}} \leq 8x^2.
\end{align*}
Plugging \eqref{cont_part_1} and \eqref{cont_part_2} in \eqref{numerator_main} we have:
\begin{align*}
N \geq -2\log(8/r) = - O(\log \kappa).
\end{align*}
Returning to \eqref{maxBound}  and \eqref{cont_lower_bound_main}, we can combine this with \eqref{eq_denom} (and an assumption that $\kappa \geq 2$) to conclude that:
\begin{align*}
\max_{x\in[1/\kappa,1]} g(x)  \geq - ck\log(\kappa)/\sqrt{\kappa}
\end{align*}
 for some fixed constant $c > 0$.
Thus, since $g(x) \eqdef \log(|p(x)|)$,
\begin{align*}
\max_{x\in[1/\kappa,1]} |p(x)|  \geq e^{-ck\log(\kappa)/\sqrt{\kappa}}
\end{align*}
and it follows that for $|p(x)|$ to be small (e.g. $\leq 1/3$) for all $x \in [1/\kappa,1]$ we need $k \geq \delta \sqrt{\kappa}/\log(\kappa)$ for some fixed constant $\delta > 0$.
\end{proof}

\section{Other omitted proofs}\label{omitted_proofs}
\begingroup
\makeatletter
\apptocmd{\thetheorem}{\unless\ifx\protect\@unexpandable@protect\protect \,\,-- Exact Arithmetic\fi}{}{}
\makeatother
\begin{repclaim}{claim:exact_lanc_guarantee}[Lanczos Output Guarantee]
Run for $k\leq n$ iterations using exact arithmetic operations, the Lanczos algorithm (Algorithm \ref{alg:lanczos}) computes $\bv{Q} \in \R^{n\times k}$, an additional column vector $\bv{q}_{k+1} \in \R^n$, 
a scalar $\beta_{k+1}$, and a symmetric tridiagonal matrix $\bv{T} \in \R^{k\times k}$ such that:
	\begin{align}
		\bv{A}\bv{Q} = \bv{Q}\bv{T} + \beta_{k+1}\bv{q}_{k+1}\bv{e}_k^T, \tag{\ref{eq:exact_lanc_guarantee}}
	\end{align}
and
	\begin{align}
		\left[\bv{Q},\bv{q}_{k+1}\right]^T\left[\bv{Q},\bv{q}_{k+1}\right] = \bv{I}. \tag{\ref{spectral_norm_bound_on_q}}
	\end{align}
Together \eqref{eq:exact_lanc_guarantee} and \eqref{spectral_norm_bound_on_q} also imply that:
\begin{align}
\lmin(\bv{T}) &\geq \lmin(\bv{A}) &&\text{and} &\lmax(\bv{T}) &\leq  \lmax(\bv{A}). \tag{\ref{exact_arith_eig_bound}}
\end{align}
When run for $k \ge n$ iterations, the algorithm terminates at the $n^{th}$ iteration with $\beta_{n+1} = 0$.
\end{repclaim}
\endgroup

\begin{proof}
\eqref{eq:exact_lanc_guarantee} is not hard to check directly by examining Algorithm \ref{alg:lanczos} (see \cite{parlett1979lanczos} for a full exposition). For \eqref{spectral_norm_bound_on_q}, note that by Step 10 in Algorithm \ref{alg:lanczos}, each $\|\bv{q}_i\| = 1$. So we just need to show that $\left[\bv{q}_1,\ldots,\bv{q}_k\right]$ are mutually orthogonal.

 Assume by induction that $[\bv{q}_1, \ldots, \bv{q}_{k-1}]$ are mutually orthogonal. Now consider the value $\beta_{k-1}$. $\beta_{k-1} = \|\bv{q}_{k-1}\|$ before $\bv{q}_{k-1}$ is normalized in Step 10. Thus by the computation of $\bv{q}_{k-1}$ in Steps 3-5 we have: $\beta_{k-1} = \bv{q}_{k-1}^T\left(\bv{A}\bv{q}_{k-2} - \beta_{k-2}\bv{q}_{k-3}  - \alpha_{k-2}\bv{q}_{k-2}\right)$. By the induction hypothesis, this reduces to $\beta_{k-1} = \left(\bv{q}_{k-1}^T\bv{A}\right)\bv{q}_{k-2}$. 
 
 The above relation should make Steps 3-5 of Algorithm \ref{alg:lanczos} more clear. We set $\bv{q}_k$ to equal $\bv{A}\bv{q}_{k-1}$, and explicitly orthogonalize against $\bv{q}_{k-2}$ (Step 3) and then $\bv{q}_{k-1}$ (Step 4-5). So we have $\bv{q}_k^T\bv{q}_{k-1} = 0$ and $\bv{q}_k^T\bv{q}_{k-2} = 0$. It remains to consider $[\bv{q}_1, \ldots, \bv{q}_{k-3}]$. For $j \geq 3$, $\bv{q}_{k-j}^T\bv{A}$ lies in the span of $[\bv{q}_1, \ldots, \bv{q}_{k-2}]$ and then applying the inductive hypothesis we see that $\bv{q}_{k-j}^T\bv{A}\bv{q}_{k-1} = 0$. So there is no need to explicitly orthogonalize $\bv{A}\bv{q}_{k-1}$ against these vectors when we generate $\bv{q}_k$.
 It follows that $\bv{q}_k$ is orthogonal to \emph{all} vectors in $[\bv{q}_1, \ldots, \bv{q}_{k-1}]$, which proves \eqref{spectral_norm_bound_on_q}. 
 
 If $k = n$, the same argument shows that $\bv{q}_{n+1}$ is orthogonal to $[\bv{q}_1,...,\bv{q}_n]$, which implies that $\bv{q}_{n+1} = \bv{0}$ and so the loop terminates at step (7-8). We thus have $\beta_{n+1} \bv{q}_{n+1}\bv{e}_n^T = \bv{0}$ and $\bv{AQ} = \bv{QT}$. 
 
 Finally, \eqref{exact_arith_eig_bound} follows from the Courant-Fischer min-max principle. In particular, if we multiply \eqref{eq:exact_lanc_guarantee} on the left by $\bv{Q}^T$ and note from \eqref{spectral_norm_bound_on_q} that $\bv{Q}^T\bv{Q}= \bv{I}$ and $\bv{Q}^T\bv{q}_{k+1} = \bv{0}$, then as desired,
 \begin{align*}
 \bv{T} = \bv{Q}^T\bv{A}\bv{Q}.
 \end{align*}
Then since $\|\bv{Q}\bv{y}\| = 1$ for any unit norm $\bv{y}$,
 \begin{align*}
\lmin(\bv{A})  = \min_{\bv{x}\in \R^n :\|\bv{x}\| = 1} \bv{x}^T\bv{A}\bv{x} \hspace{1em}&\leq\hspace{1em} \min_{\bv{y} \in \R^k:\|\bv{y}\| = 1} \bv{y}^T \bv{Q}^T\bv{A}\bv{Q}\bv{y} = \lmin(\bv{T})\text{  and} \\
\lmax(\bv{A})  = \max_{\bv{x}\in \R^n:\|\bv{x}\| = 1} \bv{x}^T\bv{A}\bv{x} \hspace{1em}&\geq\hspace{1em} \max_{\bv{y} \in \R^k:\|\bv{y}\| = 1} \bv{y}^T \bv{Q}^T\bv{A}\bv{Q}\bv{y} = \lmax(\bv{T}).
 \end{align*}
\end{proof}

\end{document}